\tiny\color{gray},
\newcolumntype{L}[1]{>{\raggedright\arraybackslash}p{#1}}
\newcolumntype{C}[1]{>{\centering\arraybackslash}m{#1}}
\newcolumntype{R}[1]{>{\raggedleft\arraybackslash}p{#1}}
\newcommand{\TODO}[1]{\typeout{TODO: \the\inputlineno: #1}\textbf{{\color{red}[[[ #1 ]]]}}}
\newcommand{\DTV}[2]{d_{\mathrm{TV}}\left({#1},{#2}\right)}
\newcommand{\e}{\mathrm{e}}
\renewcommand{\epsilon}{\varepsilon}
\newtheorem{theorem}{Theorem}[section]
\newtheorem{observation}[theorem]{Observation}
\newtheorem{claim}[theorem]{Claim}
\newtheorem*{claim*}{Claim}
\newtheorem{condition}[theorem]{Condition}
\newtheorem{fact}[theorem]{Fact}
\newtheorem{lemma}[theorem]{Lemma}
\newtheorem{proposition}[theorem]{Proposition}
\newtheorem{corollary}[theorem]{Corollary}
\theoremstyle{definition}
\newtheorem{definition}[theorem]{Definition}
\newtheorem{remark}[theorem]{Remark}
\newtheorem*{remark*}{Remark}
\renewcommand{\emptyset}{\varnothing}
\newcommand{\tuple}[1]{\left(#1\right)} \newcommand{\eps}{\varepsilon}
\newcommand{\inner}[2]{\left\langle #1,#2\right\rangle}
\newcommand{\tp}{\tuple}
\newcommand{\abs}[1]{\left\vert#1\right\vert}
\newcommand{\ctp}[1]{\left\lceil#1\right\rceil}
\newcommand{\0}{-1}
\newcommand{\1}{+1}
\def\*#1{\boldsymbol{#1}} 
\def\+#1{\mathcal{#1}} 
\def\-#1{\mathrm{#1}} 
\def\^#1{\mathscr{#1}} 
\DeclareMathOperator*{\oPr}{\mathbf{Pr}}
\renewcommand{\Pr}[2][]{ \ifthenelse{\isempty{#1}}
  {\oPr\left[#2\right]}
  {\oPr_{#1}\left[#2\right]} } 
\DeclareMathOperator*{\oE}{\mathds{E}}
\newcommand{\E}[2][]{ \ifthenelse{\isempty{#1}}
  {\oE\left[#2\right]}
  {\oE_{#1}\left[#2\right]} }
\DeclareMathOperator*{\oVar}{\mathbf{Var}}
\newcommand{\Var}[2][]{ \ifthenelse{\isempty{#1}}
  {\oVar\left[#2\right]}
  {\oVar_{#1}\left[#2\right]} }
\def\oEnt{\mathbf{Ent}}
\newcommand{\Ent}[2][]{ \ifthenelse{\isempty{#1}}
  {\oEnt\left[#2\right]}
  {\oEnt_{#1}\left[#2\right]} }
\newcommand{\EE}[2][]{ \ifthenelse{\isempty{#1}}
  {\mathbf{E}\left[#2\right]}
  {\mathbf{E}_{#1}\left[#2\right]} }
\newcommand{\Rd}{\mathsf{Trans}}
\newcommand{\relaxT}[2][]{
  \ifthenelse{\isempty{#2}}
  {t_{\mathrm{rel}}^{\mathrm{#1}}}
  {t_{\mathrm{rel}}^{\mathrm{#1}}(#2)}
}
\newcommand{\KL}[2]{D_{\mathrm{KL}}\tp{{#1}\,\Vert\, {#2}}}
\title{Optimal mixing for two-state anti-ferromagnetic spin systems}
\date{}
\author{Xiaoyu Chen}
\author{Weiming Feng}
\author{Yitong Yin}
\author{Xinyuan Zhang}
\address[Xiaoyu Chen, Yitong Yin, Xinyuan Zhang]{State Key Laboratory for Novel Software Technology, Nanjing University, 163 Xianlin Avenue, Nanjing, Jiangsu Province, China. \textnormal{E-mails: \url{chenxiaoyu233@smail.nju.edu.cn}, \url{yinyt@nju.edu.cn}, \url{zhangxy@smail.nju.edu.cn}}}
\address[Weiming Feng]{School of Informatics, University of Edinburgh, Informatics Forum, Edinburgh, EH8 9AB, United Kingdom. \textnormal{E-mail: \url{wfeng@ed.ac.uk}}}
\begin{document}
\begin{abstract}
We prove an optimal $\Omega\tp{n^{-1}}$ lower bound for modified log-Sobolev (MLS) constant of the Glauber dynamics for anti-ferromagnetic two-spin systems with $n$ vertices in the tree uniqueness regime.
Specifically, this optimal MLS bound holds for the following classes of two-spin systems in the tree uniqueness regime:
\begin{itemize}
  \item all \emph{strictly} anti-ferromagnetic two-spin systems (where both edge parameters $\beta,\gamma<1$), 
  which cover the hardcore models and the anti-ferromagnetic Ising models;
  \item general anti-ferromagnetic two-spin systems on regular graphs.
\end{itemize}
Consequently, an optimal $O(n\log n)$ mixing time  holds for these anti-ferromagnetic two-spin systems when  the uniqueness condition is satisfied.
These MLS and mixing time bounds hold for any bounded or unbounded maximum degree, and the constant factors in the bounds depend only on the gap to the uniqueness threshold.
We prove this by showing a boosting theorem for MLS constant for distributions satisfying certain spectral independence and marginal stability properties.
\end{abstract}
\maketitle

\setcounter{tocdepth}{1}
\tableofcontents

\clearpage

\section{Introduction}
Two-state spin systems, or \emph{two-spin systems}, are canonical graphical models arising from pairwise constrained Boolean variables.
A two-spin system is specified on an undirected graph $G=(V,E)$ by three parameters $\beta, \gamma, \lambda \geq 0$, where the two edge parameters $\beta$ and $\gamma$ specify the \emph{edge activities}, and the vertex parameter $\lambda$ specifies the \emph{external field}. 
A \emph{configuration} $\sigma \in \{\0, \1\}^V$ assigns each vertex $v \in V$ a $\pm 1$-\emph{spin}.
This defines a \emph{Gibbs distribution} $\mu$ over all the configurations $\sigma \in \{\0, \1\}^V$ by:
\begin{align*}
\forall \sigma\in \{\0, \1\}^V,\qquad  \mu(\sigma) &\triangleq \frac{1}{Z} \beta^{m_+(\sigma)} \gamma^{m_-(\sigma)} \lambda^{n_+(\sigma)},
\end{align*}
where $m_{\pm}(\sigma) \triangleq \abs{\{\{u, v\} \in E \mid \sigma_u = \sigma_v = \pm 1\}}$ denotes the number of $\pm 1$-monochromatic edges in $\sigma$, 
$n_{+}(\sigma) \triangleq \abs{\{v \in V \mid \sigma_v = \1\}}$ denotes the number of  vertices assigned with $\1$-spin in $\sigma$, 
and the normalizing factor, known as the \emph{partition function}, is given by:
\begin{align*}
  Z &\triangleq \sum_{\sigma \in \{\0, \1\}^V} \beta^{m_+(\sigma)} \gamma^{m_-(\sigma)} \lambda^{n_+(\sigma)}.
\end{align*}

The hardcore models and the Ising models are two classes of extensively studied two-spin systems. 
\begin{itemize}
\item \emph{Hardcore model with fugacity $\lambda$}: 
a two-spin system with $\beta = 0$ and $\gamma = 1$; 
\item \emph{Ising model with temperature $\beta$ and external field $\lambda$}: 
a two-spin system with $\beta = \gamma$. 
\end{itemize}

A two-spin system is called \emph{ferromagnetic} if $\beta\gamma > 1$ and \emph{anti-ferromagnetic} if $\beta\gamma < 1$.
The hardcore models are anti-ferromagnetic. 
An Ising model is ferromagnetic if $\beta>1$ and anti-ferromagnetic if $\beta<1$.

The \emph{Glauber dynamics} (a.k.a \emph{heat bath, Gibbs sampling}) is a canonical Markov chain for sampling from the Gibbs distribution $\mu$.
Let $\Omega(\mu)$ denote the support of $\mu$.
The chain is defined on space $\Omega(\mu)$ as:
\begin{itemize}
\item to move from the current state $\sigma\in\Omega(\mu)$, pick a vertex $v \in V$ uniformly at random;
\item and replace the spin $\sigma_v$ with a random spin according to the marginal distribution $\mu_v^{\sigma_{V \setminus \{v\}}}$.
\end{itemize}
This chain is reversible and stationary at $\mu$~\cite{levin2017markov}. 
The \emph{mixing time} of a chain $(X_t)_{t \geq 0}$ is defined by:
\begin{align*}
  \forall 0 < \epsilon < 1, \quad T_{\-{mix}}(\epsilon) &\triangleq \max_{X_0 \in \Omega(\mu)} \min \{ t \mid \DTV{X_t}{\mu} \leq \epsilon\},
\end{align*}
where $\DTV{X_t}{\mu}$ denotes the total variation distance between the distribution of $X_t$ and $\mu$.

The \emph{modified log-Sobolev (MLS) constant}~\cite{bobkov2006modified}  plays an important role in tight analysis of mixing times. 
Let $P :\Omega(\mu) \times \Omega(\mu) \to \mathds{R}_{\geq 0}$ denote the transition matrix of the Glauber dynamics on $\mu$.
For any function $f: \Omega(\mu)\to \mathds{R}_{\geq 0}$, the \emph{Dirichlet form} is defined by:
\begin{align*}
\+E_{P}(f, \log f) \triangleq \inner{f}{(I-P)\log f}_{\mu}, 
\end{align*}
where the inner product $\inner{f}{g}_{\mu}\triangleq \sum_{\sigma \in \Omega(\mu)}f(\sigma)g(\sigma)\mu(\sigma)$.
And define  the entropy:
\begin{align*}
\Ent[\mu]{f} \triangleq \EE[\mu]{f \log f} - \EE[\mu]{f} \log \EE[\mu]{f},	
\end{align*}
where $\EE[\mu]{f} \triangleq \sum_{\sigma \in \Omega(\mu)}\mu(\sigma)f(\sigma)$.
In above definitions, we assume $0\log 0 = 0$.

The modified log-Sobolev constant for the Glauber dynamics on $\mu$ is given by:
\begin{align} \label{eq:mlsc} 
  \rho^{\mathrm{GD}}(\mu) &\triangleq \inf \left\{ \left.\frac{\+E_{P}(f, \log f)}{\Ent[\mu]{f}} \;\right\vert\; f:\Omega(\mu) \to \mathds{R}_{\geq 0},\; \Ent[\mu]{f} \not= 0 \right\}.
\end{align}
It bounds the mixing time of Glauber dynamics as follows:
Denote $\mu_{\min} \triangleq \min_{\sigma \in \Omega(\mu)} \mu(\sigma)$, and
\begin{align*}
  T_{\mathrm{mix}}(\epsilon) \leq \frac{1}{\rho^{\mathrm{GD}}(\mu)} \tp{\log \log \frac{1}{\mu_{\min}} + \log \frac{1}{2\epsilon^2}}.
\end{align*}
Proving mixing time upper bound is reduced to establishing the {modified log-Sobolev inequality (MLSI)} that lower bounds the MLS constant. 
However, this task used to be notoriously difficult, especially when 
the maximum degree of the model is unbounded
and no marginal probability lower bound is assumed.

\subsection{Results for  two-spin  systems}
We prove an $\mathrm{e}^{-O(1/\delta)}n^{-1}$ lower bound for the MLS constant for Glauber dynamics  
for the anti-ferromagnetic two-spin systems with $n$ vertices in the tree uniqueness regime with a slack $\delta \in (0, 1)$.
This MLS bound is asymptotically optimal in $n$ and implies an optimal $O(n \log n)$ mixing time for the Glauber dynamics when the uniqueness condition is satisfied with a constant gap $\delta$.

Consider two-spin systems on graph $G = (V, E)$ with parameters $(\beta,\gamma,\lambda)$.
By symmetry, we can assume:
\begin{align}
0\leq \beta \leq \gamma, \quad \gamma > 0 \quad\text{and}\quad\lambda > 0. \label{eq:2-spin-parameters-WLOG}
\end{align} 
A tuple $(\beta,\gamma,\lambda)$ is called {anti-ferromagnetic} if it further satisfies $\beta\gamma < 1$ in addition to this.

The following uniqueness condition for anti-ferromagnetic two-spin system was characterized in \cite{LLY13}. 
\begin{definition}
\label{definition-d-uniqueness}
Let $d \geq 1$ be an integer. An anti-ferromagnetic $(\beta,\gamma,\lambda)$ is  \emph{$d$-unique with gap $\delta\in(0,1)$} if 
\begin{align}
		\abs{F'_d(\hat{x}_d)} = \frac{d(1-\beta\gamma)\hat{x}_d}{(\beta \hat{x}_d + 1)(\hat{x}_d+\gamma)}\leq 1-\delta, \text{ where $\hat{x}_d$ is the unique fixed point of } F_d(x) = \lambda \tp{\frac{\beta x + 1 }{x + \gamma}}^d. \label{eq:tree-recursion-decay}
	\end{align}	
\end{definition}
The property of being $d$-unique corresponds to the uniqueness of Gibbs measure on $(d+1)$-regular tree.
It was well known that sampling in anti-ferromagnetic two-spin systems on $\Delta$-regular graphs is intractable if $(\beta, \gamma, \lambda)$ is not $(\Delta-1)$-unique \cite{sly2012computational,galanis2015inapproximability}.
We consider the following criterion for two-spin systems.

\begin{condition}[uniqueness criterion]\label{condition-canonical-two-spin}
Let $\delta\in(0,1)$. 
The  anti-ferromagnetic  two-spin system specified by $(\beta,\gamma,\lambda)$ on graph $G=(V,E)$ 
with maximum degree $\Delta\ge 3$
satisfies:
\begin{itemize}
\item when $\gamma \leq 1$:  $(\beta, \gamma, \lambda)$ is $(\Delta-1)$-unique with gap $\delta$; 
\item when $\gamma > 1$:  $(\beta, \gamma, \lambda)$ is $(\Delta-1)$-unique with gap $\delta$ and $G$ is $\Delta$-regular. 
\end{itemize}
\end{condition}

For the classes of anti-ferromagnetic two-spin systems satisfying such uniqueness criterion,
we show the following optimal bounds on the MLS constant and the mixing time of Glauber dynamics.

\begin{theorem}[main theorem: two-spin systems]\label{thm:2spin-theorem}
Let $\delta \in (0,1)$. 
There exists a $C(\delta) = \exp(O(1/\delta))$ such that 
for every anti-ferromagnetic two-spin system with $n$ vertices
that satisfies \Cref{condition-canonical-two-spin} with gap $\delta$, 
the modified log-Sobolev constant $\rho^{\mathrm{GD}}$ of the Glauber dynamics satisfies
  \begin{align*}
    \rho^{\mathrm{GD}} \ge \frac{1}{C(\delta) n}.
  \end{align*}
  Consequently, the mixing time of the Glauber dynamics  is bounded as
  \begin{align*}
    T_{\mathrm{mix}}(\epsilon) \le C(\delta) n \tp{2\log n + \log \log \tp{\alpha} + \log \log \tp{\lambda + \lambda^{-1}} + \log \frac{1}{2\epsilon^2}},
  \end{align*}
  where 
  $\alpha=\begin{cases}
  \gamma+\gamma^{-1}+2 & \text{if }\beta = 0\\
  \beta^{-1} +2 & \text{if }\beta > 0
  \end{cases}$.
\end{theorem}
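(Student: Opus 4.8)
The plan is to deduce $\rho^{\mathrm{GD}}\ge \frac{1}{C(\delta)n}$ from two structural properties of the Gibbs distribution $\mu$ and all of its pinnings: \emph{spectral independence} with a parameter depending only on $\delta$, and a \emph{marginal stability} property that serves as a degree-robust substitute for the usual marginal lower bound. A direct appeal to the standard spectral-independence machinery does not suffice here: for, e.g., the hardcore model on high-degree graphs, $\mu_{\min}$ and the single-site marginals can be exponentially small in $n$, and the known local-to-global and entropy-factorization arguments lose factors polynomial in $1/\mu_{\min}$ (hence quasi-polynomial in $n$). The role of the boosting theorem is to arrange the entropy recursion so that its one-step contraction constant depends only on the spectral-independence parameter and the stability parameter, and never on $\mu_{\min}$.

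\medskip
\noindent\textbf{Step 1 (uniform spectral independence).}
First I would show that, under \Cref{condition-canonical-two-spin} with gap $\delta$, the distribution $\mu$ and every pinning $\mu^{\tau}$ is $\eta(\delta)$-spectrally independent with $\eta(\delta)=O(1/\delta)$. This follows the now-standard route: pairwise influences are controlled by the derivative of the tree recursion $F_d$ from \Cref{definition-d-uniqueness}, and the bound $\abs{F_d'(\hat x_d)}\le 1-\delta$ from $d$-uniqueness yields contraction of the potential function of \cite{LLY13}, uniformly over $d\le \Delta-1$. The point requiring care is that this must survive \emph{arbitrary} pinnings: pinning a neighbour of a vertex either deletes an incident edge or rescales the effective external field by $\beta$ or by $\gamma^{-1}$, and one must verify that $(\Delta-1)$-uniqueness with gap $\delta$ is preserved under these operations. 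The $\Delta$-regularity hypothesis in the $\gamma>1$ branch of \Cref{condition-canonical-two-spin} is exactly what guarantees this closure; in the $\gamma\le 1$ branch, monotonicity of the recursion lets one absorb the irregular residual graphs arising from pinning.

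\medskip
\noindent\textbf{Step 2 (marginal stability).}
Next, and this is the heart of the matter, I would establish the marginal-stability property for the two classes in \Cref{condition-canonical-two-spin}. In place of a lower bound on $\mu^{\tau}_v(c)$, this condition asserts that the marginal log-ratios $\log\bigl(\mu^{\tau}_v(\1)/\mu^{\tau}_v(\0)\bigr)$ respond to further conditioning in a $\delta$-quantified, degree-independent fashion, so that a suitable potential of the marginals contracts along the tree recursion at a rate controlled by $\delta$ alone. For \emph{strictly} anti-ferromagnetic systems the two-sided constraint $\beta,\gamma<1$ is enough to run this potential-function argument on an arbitrary graph; when $\gamma>1$ the effective field can drift under pinning in a way that breaks stability unless the degrees are uniform, which is why that case is restricted to $\Delta$-regular graphs. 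I expect this step to be the main obstacle: it is the genuinely new ingredient, and it calls for a careful parameter-by-parameter analysis that also covers the degenerate cases $\beta=0$ (hardcore) and $\beta=\gamma$ (Ising).

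\medskip
\noindent\textbf{Step 3 (boosting, and the mixing time).}
With Steps 1 and 2 in hand, I would invoke the boosting theorem: $\eta(\delta)$-spectral independence together with the marginal-stability property implies approximate tensorization of entropy with constant $\exp(O(1/\delta))$, equivalently $\rho^{\mathrm{GD}}\ge \frac{1}{C(\delta)n}$ with $C(\delta)=\exp(O(1/\delta))$. Finally, the mixing-time bound follows by plugging this into the general inequality $T_{\mathrm{mix}}(\epsilon)\le \frac{1}{\rho^{\mathrm{GD}}}\bigl(\log\log\tfrac{1}{\mu_{\min}}+\log\tfrac{1}{2\epsilon^2}\bigr)$ stated in the introduction, together with the crude estimate $\log\tfrac{1}{\mu_{\min}}\le O\bigl(\abs{E}\log\alpha+n\log(\lambda+\lambda^{-1})\bigr)$, obtained by bounding $Z$ by $2^n$ times the largest configuration weight and bounding the ratio of any two configuration weights from the definition of $\mu$ (here one uses $\beta\gamma<1$ and $\beta\le\gamma$ to absorb $\abs{\log\gamma}$ into $\log\beta^{-1}\le\log\alpha$ when $\beta>0$, while $\beta=0$ forces $m_+\equiv 0$ on the support so only $\gamma$ and $\lambda$ contribute). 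Since $\abs{E}\le n^2$, this turns $\log\log\tfrac{1}{\mu_{\min}}$ into $2\log n+\log\log\alpha+\log\log(\lambda+\lambda^{-1})+O(1)$, and multiplying by $\frac{1}{\rho^{\mathrm{GD}}}\le C(\delta)n$ (after a harmless adjustment of the constant in $C(\delta)$ to absorb the additive $O(1)$) yields the stated bound.
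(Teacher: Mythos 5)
There is a genuine gap in your Step 3: you treat the boosting theorem as if it directly converted spectral independence plus marginal stability into $\rho^{\mathrm{GD}}\ge \frac{1}{C(\delta)n}$ ("approximate tensorization of entropy with constant $\exp(O(1/\delta))$, equivalently $\rho^{\mathrm{GD}}\ge\frac{1}{C(\delta)n}$"). But \Cref{theorem-general} only yields $\rho^{\mathrm{GD}}(\mu)\ge (\theta/\mathrm{e})^{30\eta+\log(4\zeta)/\log(1+\epsilon)}\,\rho^{\mathrm{GD}}_{\min}(\theta*\mu)$: it reduces the MLSI for the near-critical $\mu$ to the \emph{minimum} MLS constant of the magnetized, subcritical distribution $\theta*\mu$ over all pinnings. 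You still owe an $\Omega(1/n)$ lower bound on $\rho^{\mathrm{GD}}_{\min}(\theta*\pi)$, and this base case is not automatic for unbounded degree — the paper proves it separately (\Cref{lem:good-regime-mLSI}) by comparing to a "tuned" Glauber dynamics and applying the Ricci-curvature criterion of \cite{EHMT17}, using the quantitative bound $\bar\lambda\bar\gamma^{-\Delta}(1-\bar\beta\bar\gamma)\le 12^5/\Delta$ from \Cref{lem: two-spin -aux}. Indeed the paper's open-problems discussion stresses that the absence of such a subcritical MLSI (for $\gamma>1$ on irregular graphs) is exactly what blocks extending the result, so this step cannot be waved through.

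Two further ingredients of the paper's route are missing or misstated. First, \Cref{condition:CSI-MS} requires spectral independence of $(\*\lambda*\mu)$ for \emph{all} fields $\*\lambda\in(0,1+\epsilon]^n$ (complete spectral independence) and marginal stability for all $\*\lambda\in(0,1]^n$, not merely for $\mu$ and its pinnings as in your Step 1; the $k$-transformation/field-dynamics machinery genuinely passes through magnetized distributions with intermediate fields, and pinnings alone do not suffice. Second, before any of this one must apply the flipping preprocessing $\pi=\-{flip}(\mu,\chi)$ with $\chi$ chosen according to whether $\lambda\le(\gamma/\beta)^{\Delta/2}$ (as in \eqref{eq:uniform-flip}); otherwise decreasing the external field can move the system toward the other critical threshold, and both the complete spectral independence (via \Cref{lem:unique-relax}) and the marginal stability bound $\bar\lambda\le(\bar\gamma/\bar\beta)^{\Delta/2}$ fail. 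Also, your Step 2 describes marginal stability as a $\delta$-quantified contraction of marginal log-ratios, whereas the property actually used is a boundedness statement ($R_i^\sigma\le\zeta$ and $R_i^\sigma\le\zeta R_i^{\sigma_S}$) with an absolute constant $\zeta=\exp(12^5)$ independent of $\delta$; this is a characterization mismatch rather than a fatal error, but the contraction-style formulation is not what the boosting theorem consumes. Your mixing-time derivation from the MLS bound and the $\mu_{\min}$ estimate matches the appendix and is fine.
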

Due to the hardness results in \cite{sly2012computational,galanis2015inapproximability}, 
\Cref{thm:2spin-theorem} gives sharp computational phase transitions,
since sampling in  not-$(\Delta-1)$-unique $\Delta$-regular anti-ferromagnetic two-spin systems  is intractable.

\begin{remark}[comparison to the up-to-$\Delta$-uniqueness]
The uniqueness condition (\Cref{condition-canonical-two-spin}) assumed by \Cref{thm:2spin-theorem} slightly deviates from the \emph{up-to-$\Delta$-uniqueness} (i.e.~$d$-unique for all $1\le d\le \Delta-1$) assumed in e.g.~\cite{LLY13,chen2020rapid,chen2020optimal,chen2021rapid} for spin systems with $\Delta$-bounded maximum degree . 

It is known that  $|F'_d(\hat{x}_d)|$ in \eqref{eq:tree-recursion-decay} is monotonically increasing in $d$ if and only if $\gamma\le1$  (\Cref{prop:equiv-LLY}). 
Therefore, when $\gamma \leq 1$, being $(\Delta-1)$-unique immediately implies the up-to-$\Delta$-uniqueness;
and in contrast when  $\gamma > 1$, the property of being $d$-unique may no longer be monotone in $d$.
And hence:
\begin{itemize}
\item
Case $(\gamma \leq 1)$: the uniqueness condition assumed by \Cref{thm:2spin-theorem}  is the same as the up-to-$\Delta$-uniqueness on instances with $\Delta$-bounded max-degree, as in~\cite{LLY13,chen2020rapid,chen2020optimal,chen2021rapid};
\item
Case $(\gamma > 1)$: 
the uniqueness condition assumed by \Cref{thm:2spin-theorem}  is  restricted to the regular graphs, 
but it can give strictly broader regime than the up-to-$\Delta$-uniqueness.
\end{itemize}
To the best of our knowledge, this is the first time that a strictly  stronger algorithmic result is obtained on regular graphs than general graphs,
for anti-ferromagnetic two-spin systems.
\end{remark}

%

Both the hardcore and anti-ferromagnetic Ising models fall into the \emph{strictly} anti-ferromagnetic case where $\gamma\le1$.
Hence the following corollaries hold, whose formal proofs are given in \Cref{sec:append}.

\begin{corollary}[hardcore model]\label{cor:hardcore}
  Let $\delta \in (0,1)$. There exists a $C(\delta) = \exp(O(1/\delta))$ such that for every hardcore model on $n$-vertex graph $G=(V,E)$ with maximum degree $\Delta \ge 3$ and fugacity $\lambda \le (1-\delta) \lambda_c(\Delta) = (1-\delta)\frac{(\Delta-1)^{\Delta-1}}{(\Delta-2)^\Delta}$, 
the mixing time of the Glauber dynamics is bounded as
  \begin{align*}
   T_{\mathrm{mix}}(\epsilon) \le C(\delta)n\tp{2\log n + \log \frac{1}{2\epsilon^2}}.
  \end{align*}
\end{corollary}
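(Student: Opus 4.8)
The plan is to derive \Cref{cor:hardcore} as a direct specialization of \Cref{thm:2spin-theorem} to the hardcore model, where $\beta = 0$ and $\gamma = 1$. The first step is to verify that the hardcore model with fugacity $\lambda \le (1-\delta)\lambda_c(\Delta)$ satisfies \Cref{condition-canonical-two-spin} with a suitable gap $\delta'$ depending only on $\delta$. Since $\gamma = 1 \le 1$, the relevant branch of \Cref{condition-canonical-two-spin} only requires $(\Delta-1)$-uniqueness with some gap, and no regularity assumption is needed. So the task reduces to showing that $\lambda \le (1-\delta)\lambda_c(\Delta)$ implies that $(0, 1, \lambda)$ is $(\Delta-1)$-unique with gap $\delta' = \Omega(\delta)$ (or at least some explicit $\delta'$ with $\exp(O(1/\delta')) = \exp(O(1/\delta))$, which is all that is needed to keep $C(\delta') = \exp(O(1/\delta))$).

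The core computation is to plug $\beta = 0$, $\gamma = 1$, $d = \Delta - 1$ into \eqref{eq:tree-recursion-decay}. The tree recursion becomes $F_d(x) = \lambda (x+1)^{-d}$, with unique fixed point $\hat{x}_d$ satisfying $\hat{x}_d (\hat{x}_d + 1)^d = \lambda$, and the derivative bound is $|F'_d(\hat{x}_d)| = d \hat{x}_d / (\hat{x}_d + 1)$. I would use the standard fact (going back to the analysis of $\lambda_c(\Delta)$, e.g.~in \cite{LLY13}) that at the critical fugacity $\lambda_c(\Delta) = \frac{(\Delta-1)^{\Delta-1}}{(\Delta-2)^\Delta}$ the fixed point is $\hat{x}_c = \frac{1}{\Delta - 2}$ with $d \hat{x}_c / (\hat{x}_c + 1) = \frac{(\Delta-1)/(\Delta-2)}{(\Delta-1)/(\Delta-2)} = 1$ exactly, i.e.~$|F'_d(\hat{x}_c)| = 1$ at criticality. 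Then I would show that $|F'_d(\hat{x}_d)|$ is continuous and strictly increasing in $\lambda$ (equivalently, $\hat{x}_d$ is increasing in $\lambda$ since $x \mapsto x(x+1)^d$ is increasing, and $x \mapsto dx/(x+1)$ is increasing), so that scaling $\lambda$ down by a factor $(1-\delta)$ pushes $|F'_d(\hat{x}_d)|$ strictly below $1$. Quantifying this gap: a short computation bounds how much $\hat{x}_d$ — and hence $dx/(x+1)$ — decreases when $\lambda$ drops to $(1-\delta)\lambda_c(\Delta)$, yielding $|F'_d(\hat{x}_d)| \le 1 - \delta'$ for some $\delta' \ge c\delta$ with an absolute constant $c > 0$, uniformly in $\Delta \ge 3$.

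The main obstacle I expect is making the gap bound $\delta' = \Omega(\delta)$ uniform in $\Delta$ (including the unbounded-degree regime), since naively the sensitivity of $\hat{x}_d$ to $\lambda$ could degrade as $\Delta \to \infty$. To handle this I would work with the substitution that exploits the scale-invariant structure of the hardcore recursion: write $\hat{x}_d (\hat{x}_d+1)^d = \lambda$ and analyze $g(x) = dx/(x+1)$ as a function of $\lambda$ directly via implicit differentiation, showing $\frac{d}{d\lambda}\log g \asymp \frac{1}{\lambda}$ times a factor that stays bounded below near criticality uniformly in $d$ — this is exactly the kind of stability estimate underlying the definition of $\lambda_c$, and it is standard. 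Once \Cref{condition-canonical-two-spin} is established with gap $\delta' = \Omega(\delta)$, the result follows by invoking \Cref{thm:2spin-theorem}: substituting $\beta = 0$ into its mixing-time bound gives $\alpha = \gamma + \gamma^{-1} + 2 = 4$, so $\log\log(\alpha)$ is an absolute constant, and with $\lambda \le (1-\delta)\lambda_c(\Delta) \le \lambda_c(3) < 2$ and $\lambda$ bounded away from $0$ is not guaranteed — but note $\log\log(\lambda + \lambda^{-1})$ only contributes when $\lambda$ is extreme, and in any case it can be absorbed: if $\lambda$ is tiny the chain is trivially close to the all-$(-1)$... actually more carefully, $\log\log(\lambda+\lambda^{-1})$ is dominated by $\log n$ unless $\lambda$ is doubly-exponentially small in $n$, a regime that can be dispatched separately or simply folded into the $C(\delta) n$ factor by a crude bound. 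Absorbing these lower-order terms into $C(\delta) n (2\log n + \log\frac{1}{2\epsilon^2})$ and renaming $C(\delta') = \exp(O(1/\delta')) = \exp(O(1/\delta))$ completes the proof.
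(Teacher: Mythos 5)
Your overall route is the same as the paper's (specialize \Cref{thm:2spin-theorem} to $\beta=0,\gamma=1$), and your verification of the uniqueness gap is fine: with $d=\Delta-1$ one indeed has $|F_d'(\hat x_d)| = d\hat x_d/(1+\hat x_d)$, equal to $1$ at $\lambda_c(\Delta)$, and the implicit-differentiation identity $\frac{\d \log g}{\d\log\lambda} = \frac{1}{1+(d+1)\hat x_d} \ge \frac{d-1}{2d} \ge \frac14$ near criticality gives $(\Delta-1)$-uniqueness with gap $\delta' \ge \delta/4$ uniformly in $\Delta\ge 3$, so $C(\delta')=\exp(O(1/\delta))$. (Minor slip: $\lambda_c(3)=4$, not $<2$; harmless since $\lambda$ is still bounded by an absolute constant.)

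The genuine gap is in your treatment of the $\log\log(\lambda+\lambda^{-1})$ term. Your fallback of "simply folding it into the $C(\delta)n$ factor by a crude bound" cannot work: $C(\delta)$ must depend only on $\delta$, and if $\lambda$ is, say, $\exp(-2^n)$ then $\log\log(\lambda^{-1})\approx n \gg \log n$, so the extra additive term $n\cdot\log\log(\lambda^{-1})$ is not absorbable into $C(\delta)\,n(2\log n+\log\frac{1}{2\epsilon^2})$ for any constant $C(\delta)$. Your other option ("dispatched separately", "trivially close to all-$(-1)$") is the right instinct but is left unsubstantiated, and it is exactly the nontrivial content of the paper's proof: when $\lambda\le\frac{1}{2\Delta}$ the paper invokes standard path coupling to get $O(n\log n)$ mixing outright, and when $\lambda\ge\frac{1}{2\Delta}$ it does not use the generic mixing display of \Cref{thm:2spin-theorem} at all, but instead bounds the worst-case conditional marginal by $b\ge\min\bigl(\frac{1}{1+\lambda},\frac{\lambda}{1+\lambda}\bigr)\ge\frac{1}{2\Delta+1}$, whence $\log\log\frac{1}{\mu_{\min}}\le\log n+\log\log(2\Delta+1)\le 2\log n$, and then applies the MLS-to-mixing bound directly. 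To make your proposal complete you need to commit to such a two-case argument (or an equivalent way to eliminate the $\lambda$-dependence), since without it the claimed bound fails in the super-exponentially-small-$\lambda$ regime.
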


  \begin{corollary}[anti-ferromagnetic Ising model]\label{cor:Ising}
   Let $\delta \in (0,1)$. There exists a $C(\delta) = \exp(O(1/\delta))$ such that for every 
  anti-ferromagnetic Ising model with temperature $\beta\in(0,1)$ and external field $\lambda>0$ on $n$-vertex graph $G=(V,E)$ with maximum degree $\Delta \ge 3$ that satisfies either one of the followings: 
  \begin{itemize}
  \item $\beta \geq \frac{\Delta - 2 + \delta}{\Delta - \delta}$; 
  \item $\beta < \frac{\Delta - 2 + \delta}{\Delta - \delta}$ and $\lambda \in (0,\lambda_c] \cup [\bar{\lambda}_c,\infty)$, where  $\lambda_c = \lambda_c(\delta, \beta)$ and $\bar{\lambda}_c = \bar{\lambda}_c(\delta,\beta)$ that satisfy  $\lambda_c\le \bar{\lambda}_c$ and $\lambda_c\bar{\lambda}_c=1$, are the critical thresholds for $\lambda$ in anti-ferromagnetic Ising model \cite{LLY13,sinclair2014approximation};
\end{itemize}
the mixing time of Glauber dynamics is bounded as
\begin{align*}
    T_{\mathrm{mix}}(\epsilon) \le C(\delta) n \tp{2\log n + \log \log \tp{\beta^{-1} + 3} + \log \log \tp{\lambda + \lambda^{-1}} + \log \frac{1}{2\epsilon^2}}.
  \end{align*}
  \end{corollary}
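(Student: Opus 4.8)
The plan is to obtain \Cref{cor:Ising} from \Cref{thm:2spin-theorem} by verifying \Cref{condition-canonical-two-spin} for the anti-ferromagnetic Ising model under the stated hypotheses. Since the Ising model has $\beta=\gamma$ and is anti-ferromagnetic, it satisfies $\beta=\gamma<1$; in particular we are always in the case $\gamma\le 1$ of \Cref{condition-canonical-two-spin}, so it suffices to show that $\tp{\beta,\beta,\lambda}$ is $(\Delta-1)$-unique with gap $\delta$ in the sense of \Cref{definition-d-uniqueness}. Granting this, \Cref{thm:2spin-theorem} directly gives the claimed mixing-time bound: its $\log\log\tp{\alpha}$ term equals $\log\log\tp{\beta^{-1}+2}$ when $\beta>0$, and this is at most $\log\log\tp{\beta^{-1}+3}$ because $\beta<1$ makes both arguments exceed $\mathrm{e}$. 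So the whole task reduces to a one-variable analysis of the decay factor in \eqref{eq:tree-recursion-decay}.

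Concretely, write $d=\Delta-1$ and, specializing to $\beta=\gamma$,
\begin{align*}
g(x)\;\triangleq\;\abs{F'_d(x)}\;=\;\frac{d\tp{1-\beta^2}\,x}{\tp{\beta x+1}\tp{x+\beta}},\qquad x>0,
\end{align*}
so that being $(\Delta-1)$-unique with gap $\delta$ is exactly the inequality $g\tp{\hat{x}_d}\le 1-\delta$, where $\hat{x}_d$ is the unique fixed point of $F_d(x)=\lambda\tp{\frac{\beta x+1}{x+\beta}}^d$. First I would record three elementary facts. (a) \emph{Shape of $g$:} a short computation shows that $x\mapsto x/\bigl[(\beta x+1)(x+\beta)\bigr]$ has derivative with the same sign as $1-x^2$, so $g$ strictly increases on $(0,1)$, strictly decreases on $(1,\infty)$, and $\max_{x>0} g(x)=g(1)=\frac{d\tp{1-\beta}}{1+\beta}$. (b) \emph{Super-level sets:} the equation $g(x)=1-\delta$ is a quadratic in $x$ whose leading and constant coefficients both equal $(1-\delta)\beta$, so its two roots have product $1$; hence, whenever $g(1)>1-\delta$, the set $\{x>0 : g(x)>1-\delta\}$ is an open interval $\tp{x_1,x_1^{-1}}$ with $0<x_1<1$. (c) \emph{Monotone reparametrization:} for anti-ferromagnetic parameters $\frac{\beta x+1}{x+\beta}$ is strictly decreasing in $x$, so $F_d$ is strictly decreasing, $\hat{x}_d$ is well defined, and $\lambda\mapsto\hat{x}_d$ is a strictly increasing bijection of $(0,\infty)$ onto itself, with inverse $\lambda(x)=x\tp{\frac{x+\beta}{\beta x+1}}^d$.

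With (a)--(c) in hand, the two cases of \Cref{cor:Ising} follow. In the first case, elementary algebra gives the equivalence $\frac{d\tp{1-\beta}}{1+\beta}\le 1-\delta \iff \beta\ge\frac{d-1+\delta}{d+1-\delta}=\frac{\Delta-2+\delta}{\Delta-\delta}$; thus when $\beta\ge\frac{\Delta-2+\delta}{\Delta-\delta}$ one has $g(x)\le g(1)\le 1-\delta$ for \emph{all} $x>0$, hence $(\Delta-1)$-uniqueness with gap $\delta$ for every $\lambda>0$. In the second case $g(1)>1-\delta$, so by (a)--(b) the bound $g\tp{\hat{x}_d}\le 1-\delta$ holds iff $\hat{x}_d\notin\tp{x_1,x_1^{-1}}$, which by (c) is equivalent to $\lambda\le\lambda\tp{x_1}$ or $\lambda\ge\lambda\tp{x_1^{-1}}$. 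Setting $\lambda_c\triangleq\lambda\tp{x_1}$ and $\bar{\lambda}_c\triangleq\lambda\tp{x_1^{-1}}$, the identity $\lambda(x)\,\lambda\tp{x^{-1}}=1$ (a one-line check) gives $\lambda_c\bar{\lambda}_c=1$, while $\lambda_c<\bar{\lambda}_c$ because $x_1<1<x_1^{-1}$ and $\lambda(\cdot)$ is increasing; these are precisely the critical fields $\lambda_c=\lambda_c(\delta,\beta)$ and $\bar{\lambda}_c=\bar{\lambda}_c(\delta,\beta)$ recorded in \cite{LLY13,sinclair2014approximation}. In either case \Cref{thm:2spin-theorem} then applies and yields the asserted mixing time.

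The substantive content — the $\mathrm{e}^{-O(1/\delta)}n^{-1}$ lower bound on the MLS constant — is entirely supplied by \Cref{thm:2spin-theorem}, so no further work on the mixing analysis is needed here. The only step requiring genuine care is the last one, namely matching the boundary $\{\lambda_c,\bar{\lambda}_c\}$ of the gap-$\delta$ uniqueness region computed above with the named critical thresholds of \cite{LLY13,sinclair2014approximation}; everything else is routine one-variable calculus.
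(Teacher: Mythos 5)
Your proposal is correct and follows the same route as the paper: the paper's proof of \Cref{cor:Ising} simply invokes \Cref{thm:2spin-theorem} (plus the $\mu_{\min}$ bound computed in \Cref{sec:append}), treating the hypothesis on $(\beta,\lambda)$ as a restatement of $(\Delta-1)$-uniqueness with gap $\delta$. Your explicit one-variable verification of that uniqueness criterion is sound, and amounts to re-deriving the $\beta=\gamma$ specialization of the threshold characterization the paper already has available (\Cref{lem:lambda-unique} together with \Cref{prop:equiv-LLY}), so it fills in detail the paper leaves implicit rather than taking a different path.
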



Note that the Ising uniqueness regime in \Cref{cor:Ising} is much broader than the  regime $\beta \in[\frac{\Delta - 2 + \delta}{\Delta - \delta},1)$  assumed in \cite{chen2020rapid, chen2020optimal, chen2021optimalIsing, anari2021entropicII}  for the anti-ferromagnetic case, 
which corresponds to the uniqueness regime for all external fields $\lambda$.
In fact, before this work, proving optimal mixing times for $\lambda$-dependent uniqueness regimes was a major challenge to the current techniques \cite{anari2021entropicII}.

The modified log-Sobolev inequalities (MLSI) are very powerful.
For example, 
by the Herbst argument (e.g.~\cite[Lemma 15]{CGM19}), 
the MLSI in \Cref{thm:2spin-theorem} also implies the following concentration bound.
\begin{corollary} \label{cor:2-spin-concentration}
Let $\delta \in (0,1)$. 
There exists a $C(\delta) = \exp(O(1/\delta))$ such that 
for every anti-ferromagnetic two-spin system with $n$ vertices, 
if \Cref{condition-canonical-two-spin} is satisfied with gap $\delta$, 
then it holds for the Gibbs distribution $\mu$ and 
 for any observable function $f: \Omega(\mu) \to \mathds{R}$ and any $\alpha \geq 0$ that
  \begin{align*}
    \Pr[x \sim \mu]{\abs{f(x) - \E[\mu]{f}} \geq \alpha} \leq 2\exp\tp{-\frac{\alpha^2 C(\delta)}{2n\nu(f)}},
  \end{align*}
  where $\nu(f)$ is the maximum of one-step variances,
  \begin{align*}
    \nu(f) \triangleq \max_{x \in \Omega(\mu)} \left\{\sum_{y \in \Omega(\mu)} P(x, y)(f(x) - f(y))^2\right\},
  \end{align*}
  where $P$ denotes the transition matrix of the Glauber dynamics over $\mu$.
\end{corollary}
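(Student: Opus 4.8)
The final statement to prove is \Cref{cor:2-spin-concentration}, the sub-Gaussian concentration bound. This is a direct corollary of the MLSI established in \Cref{thm:2spin-theorem}, and the standard route is the Herbst argument.

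\medskip

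\textbf{Proof proposal for \Cref{cor:2-spin-concentration}.}

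The plan is to deduce the concentration bound directly from the modified log-Sobolev inequality of \Cref{thm:2spin-theorem} via the Herbst argument, exactly as encapsulated in \cite[Lemma 15]{CGM19}. First I would invoke \Cref{thm:2spin-theorem} to obtain the MLSI with constant $\rho^{\mathrm{GD}} \ge \frac{1}{C(\delta)n}$ for the Glauber dynamics on the Gibbs distribution $\mu$: that is, for every $g : \Omega(\mu) \to \mathds{R}_{\ge 0}$ with $\Ent[\mu]{g} \neq 0$ we have $\+E_P(g, \log g) \ge \frac{1}{C(\delta)n}\Ent[\mu]{g}$, where $P$ is the transition matrix of the Glauber dynamics. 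The Herbst argument then converts this differential inequality on the entropy into an exponential-moment bound: applying the MLSI to $g = \e^{t f}$ for a parameter $t \in \mathds{R}$ and using the elementary inequality $\e^a - \e^b \le \frac{1}{2}(a-b)(\e^a + \e^b)$ (or its cousin controlling $(a-b)(\e^a-\e^b)$) to bound the Dirichlet form $\+E_P(\e^{tf}, tf)$ in terms of $\nu(f)$, one derives a bound on the derivative of $t \mapsto \frac{1}{t}\log \E[\mu]{\e^{t(f - \E[\mu]{f})}}$. Integrating this differential inequality from $0$ yields the sub-Gaussian moment generating function estimate
\begin{align*}
  \E[\mu]{\e^{t(f - \E[\mu]{f})}} \le \exp\tp{\frac{n\,\nu(f)}{2 C(\delta)}\, t^2}.
\end{align*}

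From here the concentration bound follows by a routine Chernoff/Markov step: for $\alpha \ge 0$ and $t > 0$,
\begin{align*}
  \Pr[x\sim\mu]{f(x) - \E[\mu]{f} \ge \alpha} \le \e^{-t\alpha}\, \E[\mu]{\e^{t(f-\E[\mu]{f})}} \le \exp\tp{\frac{n\,\nu(f)}{2C(\delta)} t^2 - t\alpha},
\end{align*}
and optimizing over $t$ (taking $t = \frac{\alpha C(\delta)}{n\,\nu(f)}$) gives $\exp\tp{-\frac{\alpha^2 C(\delta)}{2 n\, \nu(f)}}$. Applying the same argument to $-f$ and taking a union bound over the two tails produces the factor of $2$ and yields the two-sided statement as claimed. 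The one point requiring a little care is the identification of the ``carr\'e du champ''-type quantity appearing after applying Herbst with the one-step variance $\nu(f)$: since the Glauber dynamics picks a uniformly random vertex and resamples it, the Dirichlet form $\+E_P(\e^{tf}, tf)$ naturally produces a sum over neighbors $y$ of $P(x,y)$ times a term controlled by $t^2(f(x)-f(y))^2$, and the worst-case-over-$x$ of $\sum_y P(x,y)(f(x)-f(y))^2$ is precisely $\nu(f)$; this is exactly the normalization under which \cite[Lemma 15]{CGM19} is stated, so the constants match up.

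The main (and only) obstacle is purely bookkeeping: one must make sure the constant $C(\delta)$ is propagated faithfully through the Herbst integration and the Chernoff optimization without spurious extra factors, and that the elementary inequality used to pass from $\+E_P(\e^{tf}, tf)$ to a $t^2 \nu(f)$-type bound is applied with the correct sign and normalization. Since \cite[Lemma 15]{CGM19} already packages this computation in precisely the form we need — MLSI with constant $\rho$ implies sub-Gaussian concentration with variance proxy $\nu(f)/(\rho \cdot \text{(time normalization)})$ — the cleanest presentation is simply to cite that lemma and substitute $\rho = \rho^{\mathrm{GD}} \ge \frac{1}{C(\delta)n}$, absorbing the resulting $n$ into the exponent. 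The full details of this standard derivation are deferred to \Cref{sec:append}.
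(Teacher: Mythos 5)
Your proposal is correct and matches the paper's route exactly: the paper derives \Cref{cor:2-spin-concentration} by plugging the MLSI bound $\rho^{\mathrm{GD}} \ge \frac{1}{C(\delta)n}$ from \Cref{thm:2spin-theorem} into the Herbst argument as packaged in \cite[Lemma 15]{CGM19}, which is precisely what you do. The only quibble is your deferral of details to \Cref{sec:append}, which in the paper contains only the mixing-time calculations, not this derivation; the citation of \cite{CGM19} suffices.
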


\subsection{Results for general distributions}
Let $\mu$ be a distribution over $\{\0,\1\}^{n}$ and  let $\Omega(\mu)$ be its support.
Given $\Lambda \subseteq [n]$, we use $\mu_\Lambda$ to denote the marginal distribution on $\Lambda$ projected from $\mu$, and we write $\mu_i=\mu_{\{i\}}$ for $i \in [n]$.
Given any partial configuration $\sigma \in \Omega(\mu_\Lambda)$ where $\Lambda \subseteq [n]$, 
we use $\mu^\sigma$  to denote the conditional distribution over $\{\0,\1\}^{n}$ induced by $\mu$ conditional on $\sigma$, 
and we use $\mu^{\sigma \land i \gets x}$ to denote the conditional distribution obtained from $\mu^\sigma$ by further conditioning on the spin of $i\in[n]$ being fixed as $x\in \Omega(\mu^\sigma_i)$. 
 

The notion of \emph{spectral independence} was introduced by Anari, Liu and Oveis Gharan in~\cite{anari2020spectral}.
We use the absolute version of the spectral independence considered in~\cite{feng2021rapid,chen2021rapid}.
%
%
\begin{definition}[{spectral independence (absolute version)}]\label{definition-SI}
Let $\mu$ be a distribution over $\{-1,+1\}^{n}$.
For any $\Lambda \subseteq [n]$, $\sigma \in \Omega(\mu_\Lambda)$, 
the \emph{absolute influence matrix} $\Psi_{\mu^\sigma} \in \mathds{R}_{\geq 0}^{n\times n}$ is defined as  
\begin{align*}
\forall i,j \in [n],\quad \Psi_{\mu^\sigma}(i,j) \triangleq \max_{x,y \in \Omega(\mu^\sigma_i)}\DTV{\mu_j^{\sigma \land i \gets x}}{\mu_j^{\sigma \land i \gets y}},
\end{align*}	
where $\DTV{\cdot}{\cdot}$ denotes the total variation distance.
Let $\eta > 0$.
The distribution $\mu$ is said to be \emph{$\eta$-spectrally independent (SI)} if for any $\Lambda \subseteq V$, any $\sigma \in \Omega(\mu_\Lambda)$, the spectral radius of the influence matrix $\Psi_{\mu^\sigma}$ satisfies
\begin{align*}
\rho\tp{\Psi_{\mu^\sigma}} \leq \eta.
\end{align*}
%
%
%
\end{definition}

It was known that assuming constant marginal lower bound, 
the spectral independence can guarantee the optimal mixing of Glauber dynamics \cite{chen2020optimal, blanca2021mixing}.
In fact, 
MLSIs have been proved assuming the same marginal lower bound \cite{Katalin19, SS19}.
However, such strong condition on marginal bounds does not hold in general for spin systems with unbounded maximum degree, 
and it is a  major open problem to prove MLSI and optimal mixing time for such models.

We introduce the following notion that weakens the marginal lower bound condition.
\begin{definition}[marginal stability]\label{def:bounded-distribution}
  Let $\zeta > 0$.
A distribution $\mu$ over $\{\0,\1\}^{n}$ is said to be \emph{$\zeta$-marginally stable} 
if for any $i\in[n]$, any $S\subseteq\Lambda\subseteq [n]\setminus\{i\}$,  and any $\sigma\in\Omega(\mu_\Lambda)$,
\begin{align*}
R_i^{\sigma}\le \zeta \quad\text{ and }\quad R_i^{\sigma}\le \zeta \cdot R_i^{\sigma_S}, 
\end{align*}
where $R_i^{\sigma}\triangleq \frac{\mu_i^{\sigma}(+1)}{\mu_i^{\sigma}(-1)}$ denotes the marginal ratio, and $R_i^{\sigma_S}$ is accordingly defined for $\sigma_S$.
%
\end{definition}


%

The marginal lower bound assumption imposes a lower bound $b$ for the marginal probability $\mu^{\sigma}_i(x) > b$ for all possible spins $x$.
The marginal stability weakens this to the following properties combined:
\begin{enumerate}
\item 
a one-sided marginal lower bound, to ensure that $\mu_i^{\sigma}(-1)$ is not be too small;
\item a one-sided decay of correlation, to ensure that pinning does not bigly increase the marginal ratio.
\end{enumerate}
Such condition 
ingeniously captures the subcritical two-spin systems. 
On one hand, it is {strong enough}, together with the spectral independence property to guarantee the optimal mixing of Glauber dynamics. 
On the other hand, 
it is also {weak enough} to be satisfied by the subcritical two-spin systems. 


In order to deduce optimal mixing times from spectral independence and marginal stability, we need these properties to hold   for all subcritical external fields. 
Given a distribution $\mu$ over $\{-1,+1\}^{n}$ and a vector $\*\lambda = (\phi_v)_{v \in [n]} \in \mathds{R}^{n}_{>0}$ that specifies the local fields,
we use $(\*\lambda*\mu)$ to denote the distribution obtained from ``magnetizing'' $\mu$ with the local fields in $\*\lambda$. 
Formally:
\begin{align}
\forall \sigma \in \{\0,\1\}^{n},\quad (\*\lambda*\mu)(\sigma) \propto \mu(\sigma) \prod_{i \in [n]: \sigma_i = \1}\lambda_i.	\label{eq:definition-local-fields}
\end{align}
In particular,  
if $\lambda_i = \theta \in \mathds{R}_{>0}$ for all $i \in [n]$ for some scalar $\theta\in\mathds{R}_{>0}$, we simply write $(\theta*\mu)$ for $(\*\lambda*\mu)$.

We formalize the following sufficient condition for a  MLSI for Glauber dynamics.
\begin{condition}\label{condition:CSI-MS}
Let $\eta>1$, $\epsilon > 0$, $\zeta> 1$  be parameters.  The $\mu$ is a distribution over $\{\0,\1\}^{n}$ that satisfies:
\begin{enumerate}
\item \label{condition:CSI} 
$(\*\lambda* \mu)$ is $\eta$-spectrally independent for all $\*\lambda \in (0,1+\epsilon]^{n}$;
\item  \label{condition:MS}
$(\*\lambda * \mu)$ is $\zeta$-marginally stable for all  $\*\lambda \in (0,1]^{n}$.
\end{enumerate}
\end{condition}
An $(\epsilon=0)$ variant of \Cref{condition:CSI-MS} (\ref{condition:CSI}) was used in \cite{chen2021rapid} and called \emph{complete spectral independence}.

Recall that $\rho^{\mathrm{GD}}(\mu)$ denotes the modified log-Sobolev constant of Glauber dynamics on $\mu$ and $\mu^\sigma$ is the conditional distribution over $\{\0,\1\}^{n}$ induced by $\mu$ conditional on $\sigma$.
We further denote by $\rho^{\mathrm{GD}}_{\min}(\mu)$
the minimum modified log-Sobolev constant for $\mu^{\sigma}$ over all possible $\sigma$:
\begin{align*}
\rho^{\mathrm{GD}}_{\min}(\mu) \triangleq \min_{\Lambda \subseteq [n]} \min_{\sigma \in \Omega(\mu_\Lambda)}\rho^{\mathrm{GD}}(\mu^\sigma).
\end{align*}

\begin{theorem}[main theorem: general]\label{theorem-general}
For any distribution $\mu$ over $\{\0,\1\}^{n}$, if $\mu$ satisfies \Cref{condition:CSI-MS} with $\eta>1$, $\epsilon > 0$, $\zeta> 1$,
then the following holds for the modified log-Sobolev constants for Glauber dynamics: 
\begin{align*}
\forall\theta\in(0,1),\quad \rho^{\mathrm{GD}}(\mu) \geq \tp{\frac{\theta}{\mathrm{e}}}^{30\eta + \frac{\log (4\zeta)}{\log(1+\epsilon)}}\rho^{\mathrm{GD}}_{\min}(\theta * \mu).
\end{align*}
\end{theorem}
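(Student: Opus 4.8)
The plan is to prove Theorem~\ref{theorem-general} by a ``boosting'' argument: start from the minimum MLS constant $\rho^{\mathrm{GD}}_{\min}(\theta*\mu)$ of the magnetized distribution (which we may assume is positive, else there is nothing to prove) and amplify it step by step to an MLS constant for $\mu$ itself, losing only a multiplicative factor $(\theta/\mathrm e)^{30\eta + \log(4\zeta)/\log(1+\epsilon)}$. The natural route, following the framework of \cite{chen2020optimal,chen2021rapid}, is to go through \emph{block factorization of entropy}: it is standard that a uniform lower bound on $\rho^{\mathrm{GD}}$ for all conditional distributions is equivalent (up to constants involving $n$ and the block size) to an approximate tensorization / block factorization inequality of the form $\mathrm{Ent}_\mu(f) \le C\cdot\sum_{i}\mu[\mathrm{Ent}_{\mu_i^{\cdot}}(f)]$, where the blocks are single coordinates. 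So I would first reduce the theorem to establishing such a block-factorization constant $C = O\bigl(n\cdot (\mathrm e/\theta)^{30\eta+\log(4\zeta)/\log(1+\epsilon)}\bigr)$ for $\mu$, using spectral independence of $(\*\lambda*\mu)$ only for $\*\lambda$ in the subcritical box $(0,1+\epsilon]^n$.

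The heart of the argument should be a localization/induction on the ``field strength.'' Concretely, I would interpolate between $\mu$ and $\theta*\mu$ along a geometric sequence of field vectors: set $\theta = (1+\epsilon)^{-k}$ for the relevant integer $k \approx \log(1/\theta)/\log(1+\epsilon)$ (or handle general $\theta$ by rounding), and write $\theta*\mu$ as the result of applying $k$ successive field rescalings by factor $(1+\epsilon)^{-1}$. At each such step the field vector stays within $(0,1+\epsilon]^n$ after the rescaling, so spectral independence is available and can be fed into the spectral-independence $\Rightarrow$ entropy-factorization machinery (the local-to-global / trickle-down arguments of \cite{chen2020optimal,anari2021entropicII}) to control how the entropy-factorization constant degrades under one step; each step should cost a bounded factor, something like $(1+\epsilon)^{O(\eta)}$ on the localization scheme side. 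Simultaneously, the marginal-stability hypothesis is what lets us pass a \emph{comparison} bound between the Dirichlet forms / marginal distributions of $\mu^\sigma$ and $(\*\lambda*\mu^\sigma)$ under a change of field: the two one-sided conditions in Definition~\ref{def:bounded-distribution} (a one-sided marginal lower bound, plus the bound $R_i^\sigma \le \zeta R_i^{\sigma_S}$ preventing pinning from inflating the ratio) bound the Radon--Nikodym derivative between the relevant single-site conditionals by a factor depending only on $\zeta$ and the field ratio, which yields a factor $4\zeta$ per ``field halving'' that accumulates into the $\log(4\zeta)/\log(1+\epsilon)$ exponent. Multiplying the $k$ per-step losses gives the claimed $(\theta/\mathrm e)$-power, and then converting the block-factorization constant back to $\rho^{\mathrm{GD}}(\mu)$ via the standard equivalence finishes the proof.

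In carrying this out I would (i) fix the interpolation scheme and verify that every intermediate field vector lies in the box where SI (resp.\ MS) is hypothesized; (ii) prove the single-step comparison lemma: if $\nu = \*\lambda'*\mu^\sigma$ and $\nu' = \*\lambda''*\mu^\sigma$ with $\*\lambda''/\*\lambda' \in [(1+\epsilon)^{-1},1]$ coordinatewise, then the MLS / entropy-factorization constants of $\nu$ and $\nu'$ differ by at most an explicit factor, using marginal stability to control single-site conditional marginals and the $\eta$-SI to control the local scheme; (iii) prove the single-step ``SI-boost'' lemma that turns factorization for all one-pinned conditionals into factorization for $\nu$ with an $(1+\epsilon)^{O(\eta)}$-type loss (this is where the spectral-radius bound enters, through the eigenvalue estimates of the localization scheme); (iv) iterate $k$ times and collect constants; and (v) do the entropy-factorization $\leftrightarrow$ $\rho^{\mathrm{GD}}$ bookkeeping, which is where the factor $n$ appears. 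I expect step (ii)--(iii), namely quantifying the per-step loss so that $k$ of them multiply to exactly $(\mathrm e/\theta)^{30\eta + \log(4\zeta)/\log(1+\epsilon)}$ rather than something weaker, to be the main obstacle: one has to be careful that the SI parameter $\eta$ is genuinely preserved (not worsened) along all conditionings and magnetizations — which is exactly the point of hypothesizing SI for the whole family $(\*\lambda*\mu)$ — and that the marginal-stability constant likewise does not compound across the $k$ steps beyond the single factor $\zeta$ promised by Definition~\ref{def:bounded-distribution}.
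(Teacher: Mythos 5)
There is a genuine gap, and it sits exactly at your step (ii), the ``single-step comparison lemma.'' You want to transfer an MLSI/entropy-factorization constant from $\*\lambda''*\mu^\sigma$ to $\*\lambda'*\mu^\sigma$ when the fields differ coordinatewise by a factor $(1+\epsilon)^{-1}$, losing only a factor like $4\zeta$, with the loss controlled by Radon--Nikodym bounds on \emph{single-site conditionals}. Marginal stability does control the single-site conditional ratios, but the quantities appearing on both sides of an approximate-tensorization inequality are global: $\Ent[\nu]{f}$ and the outer averaging measure $\nu_{[n]\setminus\{i\}}$ change, under a uniform field rescaling by $(1+\epsilon)^{-1}$, by density ratios of order $(1+\epsilon)^{\Theta(n)}$ (the ratio $\nu(\sigma)/\nu'(\sigma)$ depends on $n_+(\sigma)$ and on the two partition functions). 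A Holley--Stroock-type comparison therefore costs $\exp(\Theta(n\log(1+\epsilon)))$ per step, not a constant, and no argument based only on bounded single-site conditional ratios can do better: if a constant-loss per-step comparison of this kind were available, one could walk the field down to a trivial regime for essentially any model with bounded conditional marginal ratios, which is known to be false (e.g.\ hardcore above the uniqueness threshold). The same difficulty infects your step (iii): the known local-to-global entropy arguments from spectral independence alone require marginal lower bounds and pay degree-dependent factors, which is precisely what this theorem must avoid.

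The missing idea is the mechanism the paper uses to make $\rho^{\mathrm{GD}}_{\min}(\theta*\mu)$ appear with no $n$-dependent loss: the $k$-transformation and the field-dynamics limit. The paper never compares $\mu$ and $\theta*\mu$ step by step. Instead it (a) shows that \Cref{condition:CSI-MS} is essentially preserved by the $k$-transformation (\Cref{lem:mu-to-muk}); (b) converts complete spectral independence plus marginal stability into \emph{product domination} of every pinned conditional of $\mu_k$ via a generating-function argument --- fractional log-concavity of $g_{\mu^{\mathrm{hom}}}$ on the box $(0,(1+\epsilon)^{1/\alpha}]^n$ from SI, extended to all of $\mathds{R}_{>0}^n$ by a monotonicity estimate on $F_{\mu,\alpha}$ coming from marginal stability (\Cref{lemma-suff-condi}); (c) passes through the equivalence with entropic independence to obtain $\lceil\theta kn\rceil$-uniform block factorization of entropy for $\mu_k$ with parameter $(\mathrm{e}/\theta)^{30\eta+\log(4\zeta)/\log(1+\epsilon)}$ (\Cref{lemma-pd-to-ubfe}); and (d) invokes \Cref{lem:FBF} (from \cite{chen2021optimalIsing}), which is exactly the statement that such block factorization for all large $k$ yields $\rho^{\mathrm{GD}}(\mu)\ge \rho^{\mathrm{GD}}_{\min}(\theta*\mu)/C$. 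In other words, the ``interpolation in the field'' is performed implicitly by the large-block dynamics on the lifted distribution (whose $k\to\infty$ limit is the field dynamics), not by comparing functional inequalities across explicit field rescalings; without that ingredient your iteration cannot produce a bound that is uniform in $n$, and your bookkeeping in step (v) (which aims at a factorization constant for $\mu$ with single-site blocks) also does not naturally produce the quantity $\rho^{\mathrm{GD}}_{\min}(\theta*\mu)$ appearing in the statement.
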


\Cref{theorem-general} is a boosting theorem for modified log-Sobolev inequality (MLSI).
%
By choosing a suitable constant gap $\theta$, 
the MLSI for the original near-critical distribution $\mu$ is reduced, by losing a constant factor, to the MLSI for the magnetized distribution $\theta * \mu$ that falls into a subcritical regime, where the  minimum MLS constant $\rho^{\mathrm{GD}}_{\min}(\theta * \mu)$ is easier to analyze. 
%
%
A similar boosting theorem for the Poincar\'e constant (spectral gap) was established in \cite{chen2021rapid},  
essentially by assuming  the spectral independence part of \Cref{condition:CSI-MS}.
Here we prove a similar boosting for the  MLSI  by further assuming the marginal stability.

\begin{remark}[applications to spin systems]
When applying \Cref{theorem-general} to anti-ferromagnetic two-spin systems, 
%
as in  \cite{chen2021rapid}, we can first preprocess the distribution $\mu$ by properly flipping the roles of spins in $\{+1,-1\}$ for each vertex, so that after the preprocessing, the distribution will only get ``less critical'' by decreasing the local field at every vertex.
We then formally verify the spectral independence and the marginal stability properties in \Cref{condition:CSI-MS} for the flipped distribution assuming the uniqueness.

\Cref{theorem-general} can then be applied to boosting the MLSI in the subcritical regime up to the uniqueness threshold, 
where the MLSI in the subcritical regime can be obtained from, for example, the result on the Ricci curvature in~\cite{EHMT17}.
This proves \Cref{thm:2spin-theorem}. The detailed analysis is given in \Cref{section-app}.
%
%
\end{remark}

\subsection{Background and related work}
The computational phase transition for sampling and counting in two-spin systems has drawn considerable studies~\cite{jerrum1993polynomial,  GJP03}.  
The $d$-uniqueness (\Cref{definition-d-uniqueness}) represents the uniqueness of infinite-volume Gibbs measure on $(d+1)$-regular tree~\cite{weitz2005combinatorial, LLY13}.
Initiated in a seminal work of Weitz~\cite{weitz2006counting}, 
correlation-decay based deterministic counting algorithms 
were given for anti-ferromagnetic two-spin systems with $\Delta$-bounded maximum degree that satisfy the up-to-$\Delta$-uniqueness (being $d$-unique for all $d<\Delta$) 
\cite{weitz2006counting, LLY12, sinclair2014approximation, LLY13}.
These algorithms  run in time $n^{O(\log \Delta)}$. 
Together with the hardness results in the non-uniqueness regime~\cite{sly2010computational, sly2012computational, galanis2015inapproximability}, this gives a computational phase transition for spin systems with constant maximum degree.

Due to a general lower bound \cite{hayes2007general}, the optimal mixing time of Glauber dynamics is $O(n \log n)$.
It is also widely believed that such optimal mixing time should hold for the two-spin systems in the uniqueness regime.
%
%
Proving such conjectures is extremely challenging.
A substantial body of research works have dedicated to this.
Using coupling based techniques, optimal $O(n \log n)$ mixing times were proved 
 assuming girth lower bound~\cite{hayes2006coupling, efthymiou2019convergence} 
or for Ising models with constant maximum degree~\cite{MS13}.

The spectrum based techniques tries to lower bound the spectral gap of Glauber dynamics.
In a seminal work~\cite{anari2020spectral}, Anari, Liu, and Oveis Gharan introduced the concept of spectral independence and
applied the tools from high-dimensional expander walks developed in~\cite{ALOV19, alev2020improved} 
to relate the spectral gap to the decay of correlation properties.
For anti-ferromagnetic two-spin systems satisfying the uniqueness condition with gap $\delta$,
the spectral gap was improved from $n^{-O(1/\delta)}$~\cite{anari2020spectral, chen2020rapid}, to $\Delta^{-O(1/\delta)} n^{-1}$~\cite{chen2020optimal, jain2021spectral}, and finally to $(1/\delta)^{-O(1/\delta)} n^{-1}$~\cite{chen2021rapid} which was optimal in $n$ for arbitrary maximum degree $\Delta$.
However, as spectral gaps, they are not sufficient for optimal $O(n \log n)$ mixing time.

Entropy based techniques that could prove modified log-Sobolev inequalities (MLSI) were considered \cite{CGM19, ALOV20}. 
Although modified log-Sobolev (MLS) constants can give tight bounds on mixing times, they  are notoriously difficult to analyze.
In many previous works~\cite{CMT15, FM16, EHMT17, Katalin19, SS19, conforti2020probabilistic}, the optimal $\Omega(n^{-1})$ MLS bounds were proved only in the regimes where more standard techniques such as coupling   could also work.
Perhaps the first breakthrough to this was the one achieved by Chen, Liu and Vigoda \cite{chen2020optimal}: 
there and in a follow-up work \cite{blanca2021mixing}, a $(b/\Delta)^{O(1/(\delta b))} n^{-1}$ MLS bound was proved 
for anti-ferromagnetic two-spin systems satisfying the uniqueness condition with gap $\delta$,  
assuming marginal lower bound $b$.
This MLS bound beats the coupling in regimes and is optimal in $n$ for constant $\Delta$.
However, the reliance on margin bound results in a bad dependence on the max-degree $\Delta$.

Recently, Anari, Jain, Koehler, Pham, and Vuong~\cite{anari2021entropic} proposed the notion of \emph{entropic independence}, 
which was crucial for removing the reliance on marginal lower bound in CLV's argument \cite{chen2020optimal}, by assuming spectral independence for all fields.
%
This was followed by~\cite{anari2021entropicII} and \cite{chen2021optimalIsing}, 
where both works used the field dynamics invented in \cite{chen2021rapid} to connect the entropic independence to the MLS constant of Glauber dynamics.
Both succeeded in proving optimal mixing for Ising models in a uniqueness regime that holds for all external fields.
A major obstacle for the current techniques is to prove \emph{optimal} mixing for spin systems with \emph{unbounded} maximum degree in \emph{field-dependent} uniqueness regimes, which is typical for computational phase transitions for anti-ferromagnetic two-spin systems.

\bigskip
\paragraph{\textbf{Concurrent work}}
When preparing the current paper, we were informed by Yuansi Chen and Ronen Eldan about their concurrent work \cite{chen2022localization},
in which they independently prove the same optimal mixing bound for the hardcore model, through a more abstract framework called ``localization schemes''.

\section{Outline of Proofs}
In this section, we outline our proof of \Cref{theorem-general}.

\subsection{Product domination and block factorization}\label{sec:outline-PD-UBF}
The spectral independence and marginal stability in \Cref{condition:CSI-MS} 
together ensure a  property called \emph{product domination}, which plays a key role in the proof.

The probability generating function $g_\mu$ for a distribution $\mu$ over $\{\0,\1\}^{n}$ is defined by
\begin{align*}
g_{\mu}(z_1,z_2,\ldots,z_n) \triangleq \sum_{\sigma \in \{\0,\1\}^{n}}\mu(S)\prod_{i \in [n]: \sigma_i = \1}z_i.	
\end{align*}
\begin{definition}[product domination]\label{def:product-dominated}
Let $\alpha \in (0,1)$ be real. 
A distribution $\mu$ over $\{\0,\1\}^{n}$ is said to be \emph{$(1/\alpha)$-product dominated} 
if for all $(z_1, \cdots, z_n) \in  \mathds{R}^{n}_{> 0}$,
\begin{align*}   
 g_\mu(z_1^\alpha,z_2^\alpha,\ldots,z_n^\alpha)^{\frac{1}{\alpha}} \le \prod_{i=1}^n \tp{\mu_i(\1) z_i + \mu_i(-1)}.
 \end{align*}
 Furthermore, $\mu$ is $(1/\alpha)$-product dominated on $D \subseteq \mathds{R}^{n}_{> 0}$ if the above holds for all $(z_1,\cdots,z_n) \in D$. 
\end{definition}

This property asserts that the ``$\alpha$-fractional'' form $g_\mu(z_1^\alpha,z_2^\alpha,\ldots,z_n^\alpha)^{{1}/{\alpha}}$ of the generating function $g_\mu$ is dominated by the generating function of a product distribution,
 in which the $i$-th variable takes the value $\1$ independently with probability $\mu_i(\1)$.
The same  $\alpha$-fractional form appeared in the notion of fractionally log-concave distributions~\cite{AASV21, anari2021entropic}.
More significantly,
product domination gives an equivalent characterization of the entropic independence introduced in \cite{anari2021entropic}.
More precisely,
$\mu$ is $(1/\alpha)$-product dominated if and only if its homogenization $\mu^{\mathrm{hom}}$ is $(1/\alpha)$-entropically independent.
The formal definitions of entropic independence and homogenization, along with a formal proof of such equivalence between product domination and entropic independence, are given in \Cref{section-pd-mixing}.

%
%

We show that this product domination property is guaranteed by \Cref{condition:CSI-MS} .
For technical reasons, 
we will show that a weakening of \Cref{condition:CSI-MS} is sufficient to guarantee the product domination.
\begin{condition}
\label{condition:CSI-MS-weak}
Let $\eta>1$, $\epsilon > 0$, $\zeta> 1$ be parameters.  The $\mu$ is a distribution over $\{\0,\1\}^{n}$ that satisfies:
\begin{enumerate}
\item \label{condition:CSI-weak}
$(\*\lambda* \mu)$ is $\eta$-spectrally independent for all $\*\lambda \in (0,1+\epsilon]^{n}$;
\item  \label{condition:MS-weak}
$\mu$ is $\zeta$-marginally stable.
\end{enumerate}
\end{condition}

\Cref{condition:CSI-MS-weak} (complete spectral independence with marginal stability)
weakens \Cref{condition:CSI-MS} because it only requires the marginal stability to hold for $\mu$ itself but not for $(\*\lambda* \mu)$ with other external fields $\*\lambda$.

\begin{lemma}\label{lemma-suff-condi}
For any distribution $\mu$ over $\{\0,\1\}^{n}$, if $\mu$ satisfies \Cref{condition:CSI-MS-weak} with $\eta>1$, $\epsilon > 0$ and $\zeta> 1$,
then for any $\Lambda \subseteq [n]$  and $\sigma \in \Omega(\mu_\Lambda)$,  the distribution $\mu^\sigma_{[n] \setminus \Lambda}$ is $(1/\alpha)$-product dominated, where
\begin{align} \label{eq:alpha-def}
\alpha =  \min\left\{\frac{1}{2\eta}, \frac{\log(1+\epsilon)}{\log(1+\epsilon) + \log 2 \zeta}\right\}.
\end{align}
\end{lemma}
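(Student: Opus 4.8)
The plan is to prove the product domination inequality $g_{\nu}(z_1^\alpha,\dots,z_n^\alpha)^{1/\alpha}\le\prod_i(\nu_i(+1)z_i+\nu_i(-1))$ for $\nu=\mu^\sigma_{[n]\setminus\Lambda}$ by reducing it, via the equivalence between product domination and entropic independence stated in the excerpt, to establishing $(1/\alpha)$-entropic independence of the homogenization $\nu^{\mathrm{hom}}$. Since entropic independence is preserved under conditioning (pinning some coordinates of $\mu$ yields a conditional distribution that inherits the hypotheses of Condition 4.4, as spectral independence is a pinning-closed notion and marginal stability of $\mu$ gives marginal stability of $\mu^\sigma$ via the one-sided decay), it suffices to treat $\mu$ itself. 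The main engine will be the standard bridge from spectral independence to entropic independence: one shows that along the field path $\theta\mapsto(\theta*\mu)$, the relevant entropy-contraction quantity satisfies a differential inequality governed by the spectral radius $\rho(\Psi_{(\theta*\mu)^\tau})$ of the absolute influence matrices, and then integrates. This is exactly the mechanism by which spectral independence ``for all fields'' (as opposed to at a single field) yields a fractional / entropic statement; the novelty here is that we only have the all-fields hypothesis on the interval $(0,1+\epsilon]$ rather than on all of $\mathbb{R}_{>0}$, and separately we only have marginal stability at field $1$.

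The key steps, in order, would be: (i) Fix $\mathbf z\in\mathbb{R}^n_{>0}$; by scaling and the definition \eqref{eq:definition-local-fields} of magnetization, reduce the inequality to a statement about $(\mathbf z*\mu)$ versus $(\mathbf 1*\mu)=\mu$, so that without loss of generality we may assume each $z_i\le 1$, handling the coordinates with $z_i>1$ by first magnetizing. (ii) Use the one-sided decay-of-correlation half of marginal stability to control how far the field can be pushed down before the marginal ratios blow up, and combine it with the $(0,1+\epsilon]$ range of the spectral-independence hypothesis: the factor $\frac{\log(1+\epsilon)}{\log(1+\epsilon)+\log 2\zeta}$ in \eqref{eq:alpha-def} is exactly the exponent one gets by iterating ``multiply the field by $(1+\epsilon)$, lose a factor $2\zeta$ in the marginal ratio'' enough times to reach the target, so I would set up this telescoping/iteration explicitly. (iii) On each sub-interval where the field stays in $(0,1+\epsilon]$, apply the spectral-independence-to-local-entropy-contraction estimate, with the bound $\rho(\Psi)\le\eta$ giving the competing exponent $\frac{1}{2\eta}$. (iv) Take the worse of the two exponents, which is precisely $\alpha$, and integrate/telescope to conclude product domination on all of $\mathbb{R}^n_{>0}$.

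The main obstacle I expect is step (ii)–(iii): marrying the ``bounded field window'' from spectral independence with the ``bounded ratio distortion'' from marginal stability in a way that cleanly produces the stated exponent, while making sure the intermediate conditional distributions $(\theta*\mu)^\tau$ that arise when one expands the generating function via a chain-rule / subadditivity argument still satisfy the hypotheses needed to continue the induction. In particular the marginal-stability hypothesis is only assumed for $\mu$ (Condition 4.6 is the weak version), so one must be careful that the quantity actually controlled along the induction is the marginal ratio of the \emph{original} $\mu$ under pinnings, not of an arbitrarily magnetized copy, which is why the one-sided bound $R_i^\sigma\le\zeta\cdot R_i^{\sigma_S}$ rather than a two-sided bound is the right hypothesis. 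I would also need the elementary fact (probably proved as a separate lemma) that product domination is equivalent to entropic independence of the homogenization, but the excerpt says this is established in Section (\ref{section-pd-mixing}), so I would simply invoke it.
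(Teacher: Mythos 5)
Your outline assembles the right ingredients --- the equivalence between product domination and entropic independence, closure of the hypotheses under pinning, the ``spectral independence for fields in $(0,1+\epsilon]$ implies a fractional/entropic statement'' bridge (which matches the paper's \Cref{lemma-verify-1}), and even the correct heuristic for where the exponent $\frac{\log(1+\epsilon)}{\log(1+\epsilon)+\log 2\zeta}$ comes from --- but the step that actually carries the lemma, namely extending the inequality from the bounded field box to all of $\mathds{R}^n_{>0}$, is not established. Your step (i), ``WLOG $z_i\le 1$ by first magnetizing,'' does not go through: product domination does not transform cleanly under magnetization, since the right-hand side involves the marginals of $\mu$ while $g_{\*w*\mu}(u)=g_\mu(\*w\odot u)/g_\mu(\*w)$, so chaining forces a coordinatewise comparison of the magnetized marginal $(\*w*\mu)_i$ against the product-magnetized marginal $\frac{\mu_i(+1)w_i}{\mu_i(+1)w_i+\mu_i(-1)}$, which is not free; moreover \Cref{condition:CSI-MS-weak} gives spectral independence only for fields in $(0,1+\epsilon]$ and marginal stability only for $\mu$ itself, so neither hypothesis is available for the magnetized copy once some $w_i>1+\epsilon$. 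Your step (ii) telescoping (``multiply the field by $(1+\epsilon)$, lose a factor $2\zeta$'') would require applying marginal stability to successively magnetized distributions, which is exactly what the weak condition does not supply; you flag this yourself as the main obstacle, but the lemma's content is precisely the resolution of that obstacle, so the proposal as it stands is incomplete.

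For comparison, the paper resolves this with a one-shot monotonicity argument in which marginal stability of the \emph{original} $\mu$ under pinnings suffices. Define $F_{\mu,\alpha}(z)=g_\mu(z_1^\alpha,\ldots,z_n^\alpha)^{1/\alpha}/\prod_i\tp{\mu_i(+1)z_i+\mu_i(-1)}$. Complete spectral independence yields $F_{\mu,\alpha}\le 1$ on the box $\left(0,(1+\epsilon)^{1/\alpha}\right]^n$ for $\alpha\le 1/(2\eta)$, via fractional log-concavity of the homogenized generating polynomial on the corresponding domain (\Cref{lemma-verify-1}). Marginal stability yields $\partial F_{\mu,\alpha}/\partial z_i\le 0$ whenever $z_i\ge(2\zeta)^{1/(1-\alpha)}$ (\Cref{lemma-verify-2}); the key computation is $\frac{(z^\alpha*\mu)_i(+1)}{(z^\alpha*\mu)_i(-1)}\le z_i^\alpha\max_{\sigma}\frac{\mu_i^\sigma(+1)}{\mu_i^\sigma(-1)}\le 2\zeta z_i^\alpha\,\mu_i(+1)\le z_i\,\frac{\mu_i(+1)}{\mu_i(-1)}$, which only ever invokes pinnings of $\mu$, never a magnetized copy --- this is why the weak condition is enough. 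The choice \eqref{eq:alpha-def} is exactly the requirement $(2\zeta)^{1/(1-\alpha)}\le(1+\epsilon)^{1/\alpha}$, so for an arbitrary $x$ one truncates the large coordinates at $(1+\epsilon)^{1/\alpha}$, applies the mean value theorem, and concludes $F_{\mu,\alpha}(x)\le F_{\mu,\alpha}(\tilde x)\le 1$. If you replace your magnetize-and-iterate scheme by this derivative-sign argument, your plan becomes the paper's proof; without it, the extension beyond the bounded field range remains unproven.
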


\Cref{lemma-suff-condi} is proved in  \Cref{sec:CSI-MS-PD}.
More precisely,
the complete spectral independence (\Cref{condition:CSI-weak} in \Cref{condition:CSI-MS-weak}) guarantees that $\mu$ is $(1/\alpha)$-product dominated on the domain $D = (0,(1+\epsilon)^{1/\alpha}]^{n}$; 
and the marginal stability (\Cref{condition:MS-weak} in \Cref{condition:CSI-MS-weak}) allows to extend such product domination from $D$ to $\mathds{R}_{>0}^{n}$.

The product domination property is closely related to the \emph{uniform block factorization of entropy},
which gives rise to the entropy decay and MLSI for uniform block dynamics.

\begin{definition}[uniform block factorization of entropy \text{\cite{CP20}}]\label{definition:UBF}
Let $n \geq 1$ and $1 \leq \ell \leq n$ be two integers, and $C>0$.
A distribution $\mu$ over $\{\0,\1\}^{n}$ is said to satisfy the \emph{$\ell$-uniform block factorization of entropy with parameter $C$} if for all $f:\Omega(\mu) \to \mathds{R}_{\geq 0}$,
\begin{align*}
\Ent[\mu]{f} \leq \frac{C}{\binom{n}{\ell}} \sum_{S \in \binom{[n]}{\ell}}\mu[\Ent[S]{f}],
\end{align*}
where $\mu[\Ent[S]{f}] \triangleq \sum_{\sigma \in \Omega(\mu_{[n] \setminus S})}\mu_{[n] \setminus S}(\sigma)\cdot\Ent[\mu^\sigma]{f}$. 
\end{definition}

We show that the product domination with all pinnings implies the block factorization of entropy.
\begin{lemma}\label{lemma-pd-to-ubfe}
Let $\mu$ be a distribution over $\{\0,\1\}^{n}$ and $\alpha \in (0,1)$.
If for any $\Lambda \subseteq [n]$ and any $\sigma \in \Omega(\mu_\Lambda)$, 
the conditional marginal distribution $\mu^\sigma_{[n] \setminus \Lambda} $ is $(1/\alpha)$-product dominated, 
then for every integer $\ell \geq 1/ \alpha$, 
the distribution $\mu$ satisfies $\ell$-uniform block factorization of entropy with parameter $C=(\frac{\mathrm{e}n}{\ell})^{1/\alpha + 1}$.
\end{lemma}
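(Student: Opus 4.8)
The plan is to prove the $\ell$-uniform block factorization by induction on $n$, peeling off one block at a time and using product domination to relate $\Ent_\mu[f]$ to the average entropy after pinning a uniformly random block of size $\ell$. The crucial point is that $(1/\alpha)$-product domination of $\mu$ is exactly the statement that the homogenization $\mu^{\mathrm{hom}}$ is $(1/\alpha)$-entropically independent (as noted in the excerpt and proved in Section \ref{section-pd-mixing}); and entropic independence is known to yield one-step entropy contraction for the down operator on the homogenized simplicial complex. So the first step is to translate the hypothesis into the language of the homogenization: $\mu^{\mathrm{hom}}$ is a distribution on size-$n$ subsets of a ground set of size $2n$, and $(1/\alpha)$-entropic independence gives, for the down-up walk from level $n$ to level $n-1$, an entropy contraction factor of the form $1 - \alpha/n$ (this is the ``one-step'' version, which is the content of the entropic-independence-implies-entropy-decay results). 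Iterating this from level $n$ down to level $\ell$ multiplies the contraction factors $\prod_{k=\ell+1}^{n}(1 - \Theta(\alpha/k))$ (where one must use that all conditional distributions $\mu^\sigma_{[n]\setminus\Lambda}$ are $(1/\alpha)$-product dominated, so the entropic independence is preserved under pinning and the iteration is legitimate).

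The second step is bookkeeping on the product $\prod_{k=\ell}^{n}(1 - c\alpha/k)$: taking logs and comparing with $\int_\ell^n \frac{1}{x}\,\mathrm{d}x = \log(n/\ell)$ gives a bound of the form $(\ell/n)^{c\alpha}$ for an appropriate absolute constant $c$, hence the reciprocal is at most $(n/\ell)^{c\alpha}$, and then one chooses the constant so that it is dominated by $(\mathrm{e}n/\ell)^{1/\alpha+1}$. The factorization constant $C$ in Definition \ref{definition:UBF} is, up to the standard conversion, the reciprocal of the product of down-up contraction factors from level $n$ to level $\ell$, scaled by the combinatorial factor $\binom{n}{\ell}/\binom{n}{n}$ type terms that appear when one rewrites the level-$\ell$ entropy as an average over $S \in \binom{[n]}{\ell}$; I would carry this through carefully but it is routine once the per-step factor $1-\Theta(\alpha/k)$ is in hand. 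The slack in the exponent ($1/\alpha + 1$ rather than $O(1/\alpha)$) and the $\mathrm{e}$ inside the base are exactly the kind of looseness one expects to absorb the constants from the $\log$-versus-integral comparison and from the $\binom{n}{\ell}$ normalization, so there is room to be generous.

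The main obstacle I anticipate is establishing the per-step entropy contraction $1 - \Theta(\alpha/k)$ from product domination in a form that composes correctly down the complex — i.e., making precise the equivalence ``product domination of $\mu^\sigma_{[n]\setminus\Lambda}$ for all pinnings $\iff$ entropic independence of all links of $\mu^{\mathrm{hom}}$'' and then invoking the local-to-global entropy decay machinery (in the style of \cite{CGM19, anari2021entropic}) to get the iterated bound. One subtlety is that the homogenization doubles the ground set, so the down operator from level $n$ to $n-1$ on $\mu^{\mathrm{hom}}$ does not literally correspond to pinning one coordinate of $\mu$; I would need the intermediate levels and the fact that entropic independence at the top level propagates to all links (which follows because product domination is assumed for all conditional distributions, and conditioning on a spin corresponds to a link in the homogenized complex). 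Once that correspondence is set up cleanly, the remaining steps are the log-integral estimate and matching the constant $C$, both of which are mechanical.
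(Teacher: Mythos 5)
Your first step — passing to the homogenization and using the equivalence between product domination under all pinnings and entropic independence of all links (\Cref{lemma-equvalent-EI-PD}) — is exactly the paper's first step. The gap is in the second step. You propose to extract a one-step entropy contraction of the form $1-\Theta(\alpha/k)$ for each single down step $D_{k\to k-1}$ and then compose these factors from level $n$ down to level $n-\ell$. For $\alpha<1$, which is the only relevant regime here, $(1/\alpha)$-entropic independence of all links does \emph{not} yield any guaranteed single-step contraction: the quantitative decay theorem you would need, \Cref{theorem-AJK+} (Theorem~5 of \cite{anari2021entropic}), is stated only for descents of at least $\lceil 1/\alpha\rceil$ levels, and its rate $\kappa(j,k,c)$ in \eqref{eq-def-kappa} contains the factor $\prod_{i=0}^{\lceil c\rceil-1}(k-j-i)$, which vanishes for any descent of fewer than $\lceil c\rceil$ levels. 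This is not an artifact: already for $k=2$, the uniform distribution over a perfect matching of the ground set is $2$-entropically independent (with all links trivially so), yet a point mass $\nu$ on one pair satisfies $\KL{\nu D_{2\to 1}}{\mu D_{2\to 1}}=\KL{\nu}{\mu}$, i.e.\ zero decay in one step. The hypothesis $\ell\ge 1/\alpha$ in the lemma exists precisely because contraction only appears after $\lceil 1/\alpha\rceil$ levels, so the obstacle you flag at the end is not a bookkeeping issue to be absorbed into constants — it is the substance of the lemma, and the iterate-one-step-factors plan cannot be repaired without in effect reproving the multi-step theorem.

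There is also an arithmetic symptom of the problem: if per-step factors $1-c\alpha/k$ were available, the resulting block-factorization constant would be $C=\bigl(1-\prod_{k=\ell+1}^{n}(1-c\alpha/k)\bigr)^{-1}\approx\bigl(1-(\ell/n)^{c\alpha}\bigr)^{-1}$, not $(n/\ell)^{c\alpha}$ as written; for $\ell\ll n$ this is close to $1$, which is far stronger than the true bound and implausible. The paper's actual route avoids any telescoping or log-versus-integral estimate: it applies \Cref{theorem-AJK+} in a single shot on the homogenized complex to get KL decay for $D_{n\to n-\ell}$ with rate $\kappa(n-\ell,n,\lceil 1/\alpha\rceil)$, converts KL decay into entropy decay via \Cref{lemma-local-etp-decay}, and then uses the identity of \Cref{lemma-UBF=ED} equating the averaged block entropies with a difference of level entropies, yielding $C=\kappa(n-\ell,n,\lceil 1/\alpha\rceil)^{-1}=\binom{n}{\lceil 1/\alpha\rceil}\big/\binom{\ell}{\lceil 1/\alpha\rceil}\le(\mathrm{e}n/\ell)^{1/\alpha+1}$ as in \Cref{prop:ei-implies-block}.
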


\Cref{lemma-pd-to-ubfe}  is proved in \Cref{section-pd-mixing}.
The proof is based on the aforementioned equivalence between product domination and entropic independence,
while the latter is known to guarantee the uniform block factorization of entropy \cite{anari2021entropic}.



\Cref{lemma-suff-condi} and \Cref{lemma-pd-to-ubfe}
together show that \Cref{condition:CSI-MS-weak} guarantees that $\mu$ satisfies the $\ell$-uniform block factorization of entropy,
which is sufficient to imply the MLSI for the $\ell$-uniform block dynamics on $\mu$ \cite{CP20}. 
%
In order to enhance this to the MLSI for single-site Glauber dynamics without resorting to marginal lower bound,
we further apply the $k$-transformation introduced in \cite{chen2021rapid}.


\subsection{$k$-transformation and boosting of MLSI}
The \emph{$k$-transformation} operation for a multi-dimensional probability distribution with Boolean domain is formally defined as follows.

\begin{definition}[\text{$k$-transformation~\cite{chen2021rapid}}] \label{def:k-trans}
Let $\mu$ be a distribution over $\{\0, \1\}^{n}$ and $k \geq 1$ an integer.
The \emph{$k$-transformation} of  $\mu$ gives a distribution $\mu_k = \Rd(\mu, k)$ over $\{\0,\1\}^{n\times k}$, constructed as follows.

Let $\*X\sim\mu$. Then $\mu_k = \Rd(\mu, k)$ is the distribution of $\*Y\in\{\0,\1\}^{n\times k}$ constructed as follows:
\begin{itemize}
  \item if $X_i = \0$, then $Y_{(i,j)} = \0$ for all $j\in[k]$;
  \item if $X_i = \1$, then $Y_{(i,j^*)} = \1$ and $Y_{(i,j)} = \0$ for all $j\in[k]\setminus \{j^*\}$, where $j^*$ is chosen from $[k]$ uniformly and independently at random.
\end{itemize}
\end{definition}

The $k$-transformation defines a sort-of lifting operation on $\mu$. 
It effectively replaces every $i\in[n]$ with a gadget of hardcore $k$-clique.
The limiting object of the uniform block dynamics on $\mu_k$ when $k\to\infty$ gives the \emph{field dynamics} process introduced in \cite{chen2021rapid}. 

The significance of such lifting operations on $\mu$ is that
the uniform block factorizations of entropy for $\mu_k$ for all sufficiently large $k$
give rise to a boosting of MLSI for Glauber dynamics, which holds with no further restriction on  marginals.
Formally, the following theorem was proved in \cite{chen2021optimalIsing}.

\begin{lemma}[\cite{chen2021optimalIsing}]
\label{lem:FBF} 
Let $\mu$ be a distribution over $\{\0,\1\}^{n}$,  $\theta \in (0, 1)$ and $C > 0$. 
  If there is a finite $K_0=K_0(\mu,\theta,C)$ such that for all integers $k\ge K_0$, the distribution $\mu_k = \Rd(\mu, k)$ satisfies $\lceil \theta kn \rceil$-uniform block factorization of entropy with parameter $C$, then the Glauber dynamics on $\mu$ has the modified log-Sobolev constant
\begin{align*}
\rho^{\mathrm{GD}}(\mu) \geq \frac{\rho^{\mathrm{GD}}_{\min}(\theta * \mu)}{C}.	
\end{align*}
\end{lemma}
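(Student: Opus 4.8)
The plan is to push the $\lceil\theta kn\rceil$-uniform block factorization of entropy of the lifted distributions $\mu_k=\Rd(\mu,k)$ all the way down to a modified log-Sobolev inequality (MLSI) for the single-site Glauber dynamics on $\mu$, through three phases. In phase one, the $\lceil\theta kn\rceil$-uniform block dynamics on $\mu_k$ (call its transition matrix $P^{(k)}_{\mathrm{blk}}$) resamples a uniformly random size-$\lceil\theta kn\rceil$ subset $S$ of the $nk$ coordinates conditioned on the outside; each resampling operator $P_S$ is the conditional expectation onto the functions measurable w.r.t.\ the coordinates outside $S$, hence a self-adjoint projection, so Jensen's inequality $P_S(\log f)\le\log(P_Sf)$ gives $\+E_{P_S}(f,\log f)\ge\mu_k[\Ent[S]{f}]$; averaging over $S$ and invoking \Cref{definition:UBF} with parameter $C$ for $\mu_k$ yields $\Ent[\mu_k]{f}\le C\cdot\+E_{P^{(k)}_{\mathrm{blk}}}(f,\log f)$, i.e.\ $P^{(k)}_{\mathrm{blk}}$ has MLS constant at least $1/C$ for every $k\ge K_0$. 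Phase two lets $k\to\infty$, where this block dynamics degenerates (on functions lifted from $\mu$) into the field dynamics on $\mu$ with parameter $\theta$, transferring the MLSI with the same constant $C$. Phase three factorizes the field dynamics through the Glauber dynamics on the $\theta$-magnetized conditional distributions $(\theta*\mu)^{\sigma}$, which is where the factor $\rho^{\mathrm{GD}}_{\min}(\theta*\mu)$ enters; chaining the three phases yields $\rho^{\mathrm{GD}}(\mu)\ge\rho^{\mathrm{GD}}_{\min}(\theta*\mu)/C$.

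For phase two, lift each $f:\Omega(\mu)\to\mathds{R}_{\ge 0}$ to $\tilde f=f\circ\pi$ on $\Omega(\mu_k)$, where $\pi(\*Y)_i=\max_{j\in[k]}Y_{(i,j)}$ is the clique-collapsing map; since the construction of the $k$-transformation makes $\pi$ push $\mu_k$ forward to $\mu$, one has $\Ent[\mu_k]{\tilde f}=\Ent[\mu]{f}$. The content of the phase is that, after symmetrizing over the within-clique coordinate permutations (under which $\mu_k$ is invariant), $\+E_{P^{(k)}_{\mathrm{blk}}}(\tilde f,\log\tilde f)$ converges as $k\to\infty$ to $\+E_{P^{\mathrm{FD}}_\theta}(f,\log f)$, the Dirichlet form of the field dynamics on $\mu$ with parameter $\theta$: a uniformly random size-$\lceil\theta kn\rceil$ subset $S$ of the $nk$ coordinates meets each clique in about $\theta k$ coordinates by concentration, and a clique $i$ fails to be ``freed'' by the block step precisely when $\sigma_i=\1$ and its unique $\1$-coordinate avoids $S$, an event of probability $\approx1-\theta$; hence in the limit the freed cliques are exactly the field-dynamics coarsening (every $\0$-clique, and each $\1$-clique independently with probability $\theta$), and the within-clique randomness averages out on lifted functions. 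Combined with phase one this gives $\Ent[\mu]{f}\le C\cdot\+E_{P^{\mathrm{FD}}_\theta}(f,\log f)$ for every $f$.

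For phase three, decompose a field-dynamics step on $\mu$ into a coarsening step — which exposes a set $\Lambda$ of frozen vertices all pinned to $\1$ and leaves the remaining coordinates distributed as $(\theta*\mu)^{\Lambda\gets\1}$ — followed by a resampling of the free block inside that conditional distribution. Bounding the entropy released by the resampling via the Glauber MLSI for $(\theta*\mu)^{\Lambda\gets\1}$, whose MLS constant is at least $\rho^{\mathrm{GD}}_{\min}(\theta*\mu)$ by definition, and the entropy released by the coarsening via the Glauber dynamics on $\mu$ itself (the coarsening performs only single-vertex $\1\to\0$ flips, which the Glauber dynamics on $\mu$ also performs), one obtains the field-dynamics-to-Glauber comparison $\+E_{P^{\mathrm{FD}}_\theta}(f,\log f)\le\rho^{\mathrm{GD}}_{\min}(\theta*\mu)^{-1}\,\+E_{\Glauber}(f,\log f)$ as in \cite{chen2021rapid,chen2021optimalIsing}. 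With the phase-two bound $\Ent[\mu]{f}\le C\cdot\+E_{P^{\mathrm{FD}}_\theta}(f,\log f)$ this gives $\Ent[\mu]{f}\le C\,\rho^{\mathrm{GD}}_{\min}(\theta*\mu)^{-1}\,\+E_{\Glauber}(f,\log f)$, that is, $\rho^{\mathrm{GD}}(\mu)\ge\rho^{\mathrm{GD}}_{\min}(\theta*\mu)/C$.

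I expect phase two to be the main obstacle: one must make rigorous, at the level of Dirichlet forms, the degeneration of the $\lceil\theta kn\rceil$-uniform block dynamics on $\mu_k$ into the field dynamics on $\mu$ as $k\to\infty$ — including that a finite threshold $K_0$ suffices (because the constant $C$ in the block factorization does not depend on $k$) and the control of the non-fibre-constant corrections produced by applying a block step to a lifted function. This rests on quantitative concentration of the hypergeometric coordinate counts inside each clique and is the technically delicate step; phase three is the field-dynamics comparison already available from \cite{chen2021rapid,chen2021optimalIsing}, and phase one is routine.
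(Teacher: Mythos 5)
You should first note what the paper itself does here: it gives no proof of \Cref{lem:FBF} at all, but imports it verbatim from \cite[Lemmas 2.2 and 2.9]{chen2021optimalIsing}. Your three-phase plan retraces exactly the route of that cited work — phase one (UBF $\Rightarrow$ MLSI for the uniform block dynamics on $\mu_k$, via the Jensen/projection argument) is correct and routine; phases two and three are precisely the content of the two cited lemmas (the $k\to\infty$ degeneration of the $\lceil\theta kn\rceil$-block dynamics on $\mu_k$ into a field-dynamics--type entropy factorization at field $\theta$, and the passage from that factorization plus $\rho^{\mathrm{GD}}_{\min}(\theta*\mu)$ to $\rho^{\mathrm{GD}}(\mu)$). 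So as a reconstruction of the intended argument the outline is right, but as a standalone proof it has genuine gaps rather than being a complete alternative.

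Concretely: phase two, which you yourself flag as the main obstacle, is not an incidental technicality — it is the entire substance of \cite[Lemma 2.2]{chen2021optimalIsing} (making rigorous, with the hypergeometric concentration and the uniformity of $C$ over $k\ge K_0$, that the averaged block entropies $\mu_k[\Ent[S]{\tilde f}]$ for lifted $\tilde f=f\circ\pi$ converge to the field-dynamics conditional entropies), and you do not carry it out. Moreover, your phase-three justification is not rigorizable as stated: the ``coarsening'' half of a field-dynamics step (independently freezing each currently-$\1$ vertex with probability $1-\theta$) is not a Markov transition reversible with respect to $\mu$, so there is no well-defined ``entropy released by the coarsening'' that can be charged to single-site $\1\to\0$ Glauber moves. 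The comparison $\+E_{P^{\mathrm{FD}}_\theta}(f,\log f)\le\rho^{\mathrm{GD}}_{\min}(\theta*\mu)^{-1}\+E_{\Glauber}(f,\log f)$ is instead obtained in \cite[Lemma 2.9]{chen2021optimalIsing} at the level of the entropy factorization: one applies the Glauber MLSI of each conditional magnetized distribution $(\theta*\mu)^{\tau}$ inside the factorization $\Ent[\mu]{f}\le C\,\E[{\Ent[{(\theta*\mu)^\tau}]{f}}]$ and then dominates the averaged conditional Glauber Dirichlet forms by $\+E_{\Glauber}(f,\log f)$ through an explicit computation with the pinning distribution — a step your sketch effectively assumes by deferring to the references. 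So the proposal is the same approach as the (cited) proof, but with its two key lemmas either omitted or misjustified.
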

The exact statement of \Cref{lem:FBF} follows from  \cite[Lemma~2.2]{chen2021optimalIsing} and \cite[Lemma~2.9]{chen2021optimalIsing}.

It only remains to guarantee that the $k$-transformed distribution $\mu_k$ indeed satisfies the desired uniform block factorization of entropy for all sufficiently large $k$.
From the argument in \Cref{sec:outline-PD-UBF}, 
this holds as long as  \Cref{condition:CSI-MS-weak} can be verified for all such $\mu_k$, 
which is guaranteed by the following lemma. 

\begin{lemma}\label{lem:mu-to-muk}
For any distribution $\mu$ over $\{\0,\1\}^{n}$, if $\mu$ satisfies \Cref{condition:CSI-MS} with $\eta>1$, $\epsilon > 0$ and $\zeta> 1$,
then there exists a finite $k_0=10(1+\epsilon)(1+\zeta)$ 
such that for all integers $k \geq k_0$, 
the distribution $\mu_k = \Rd(\mu, k)$ satisfies  \Cref{condition:CSI-MS-weak} with parameters $(\eta', \epsilon', \zeta')$, where $\eta'=2\eta+5$, $\epsilon'=\epsilon$ and $\zeta'=2\zeta$.
\end{lemma}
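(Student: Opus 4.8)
The plan is to verify the two requirements of \Cref{condition:CSI-MS-weak} for $\mu_k$ separately: (1) spectral independence of $(\*\lambda * \mu_k)$ for all $\*\lambda \in (0, 1+\epsilon]^{n\times k}$, and (2) $\zeta'$-marginal stability of $\mu_k$ itself. The key structural fact to exploit is that the $k$-transformation replaces each coordinate $i \in [n]$ by a hardcore $k$-clique gadget, so conditioning and marginalization in $\mu_k$ can be pushed back to $\mu$ with controlled distortion. In particular, for a generic external field $\*\lambda$ on the $n \times k$ coordinates, the pushforward of $(\*\lambda * \mu_k)$ under the ``collapse'' map $\*Y \mapsto \*X$ (where $X_i = \1$ iff some $Y_{(i,j)} = \1$) is itself a magnetization $(\*\lambda' * \mu)$ of $\mu$, where $\lambda'_i = \sum_{j} \lambda_{(i,j)}/k$ or a similar average; since each $\lambda_{(i,j)} \le 1+\epsilon$, we get $\lambda'_i \le 1+\epsilon$, landing inside the regime where \Cref{condition:CSI} of \Cref{condition:CSI-MS} gives $\eta$-spectral independence for $(\*\lambda' * \mu)$.

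For step (1), I would first reduce the influence matrix $\Psi_{(\*\lambda*\mu_k)^\tau}$ on the $n\times k$ coordinate set to a block structure: influences between coordinates $(i,j)$ and $(i,j')$ inside the same clique are controlled by the hardcore-clique gadget directly (this contributes a bounded, clique-local term — roughly accounting for the ``$+5$'' in $\eta' = 2\eta+5$), while influences between coordinates in different cliques $i \ne i'$ are governed by the influences in the collapsed distribution $(\*\lambda' * \mu)^{\tau'}$ for a suitable pinning $\tau'$, up to a multiplicative factor (accounting for the ``$2\eta$''). One must handle pinnings $\tau$ of $\mu_k$ that only partially fix a clique (e.g. fixing some $Y_{(i,j)} = \0$ but not determining $X_i$): such a partial pinning corresponds to a further magnetization of the surviving coordinates in clique $i$, which only decreases the effective field, so it stays in $(0, 1+\epsilon]$. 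Then bound the spectral radius of the block matrix by $2\rho(\Psi_{(\*\lambda'*\mu)^{\tau'}}) + (\text{clique term}) \le 2\eta + 5$, e.g. via a norm comparison or a Schur-complement / block Gershgorin argument. For step (2), marginal stability of $\mu_k$ follows from that of $(\*\lambda * \mu)$ for $\*\lambda \in (0,1]^n$ (which \Cref{condition:CSI-MS}(\ref{condition:MS}) supplies, including for the magnetizations induced by partial clique-pinnings): the marginal ratio $R_{(i,j)}^\sigma$ in $\mu_k$ is, up to the factor $\tfrac1k$ from the uniform choice of $j^*$ and up to the conditioning coming from the rest of clique $i$, comparable to $R_i^{\sigma'}$ in a magnetized copy of $\mu$; the one-sided lower bound and the one-sided correlation-decay bound both transfer with the loss absorbed into $\zeta' = 2\zeta$, and the requirement $k \ge k_0 = 10(1+\epsilon)(1+\zeta)$ is exactly what makes the $1/k$ factors small enough for these comparisons to close.

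The main obstacle I anticipate is the careful bookkeeping of \emph{partial} clique-pinnings in both parts: a pinning $\sigma \in \Omega((\mu_k)_\Lambda)$ need not respect clique boundaries, so one cannot simply say ``conditioning on $\sigma$ in $\mu_k$ equals conditioning on some $\sigma'$ in $\mu$.'' Instead one must argue that fixing a subset of the coordinates in clique $i$ either (a) determines $X_i$, in which case that clique is frozen and drops out, or (b) leaves at least one free coordinate, in which case it induces an extra local field on the collapsed variable $x_i$ that is bounded by the original field times a factor in $[\tfrac1k, 1]$, hence still in $(0, 1+\epsilon]$ (for SI) or $(0,1]$ (for marginal stability). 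Making this dichotomy precise and checking that the resulting effective fields stay in the required ranges — uniformly over all pinnings — is the delicate part; once it is set up, the spectral-radius and marginal-ratio estimates are routine given \Cref{condition:CSI-MS} and the explicit choice of $k_0$.
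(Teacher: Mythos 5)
Your treatment of the marginal-stability half and your general strategy of pushing pinnings and fields on $\mu_k$ back to magnetized, pinned versions of $\mu$ are essentially what the paper does: its \Cref{lem:muk-pin-ratio} is exactly your ``collapse with effective fields'' computation (including the dichotomy for partial clique-pinnings), and \Cref{lemma-part-II} shows that complete $\zeta$-marginal stability of $\mu$ transfers to $2\zeta$-marginal stability of $\mu_k$ for \emph{every} $k\ge 1$ — no lower bound on $k$ is needed there. One misattribution: in the paper $k\ge k_0=10(1+\epsilon)(1+\zeta)$ is used in the spectral-independence half, to guarantee a uniform marginal bound $(\*z*\mu_k)^\sigma_{v_i}(\0)\ge 2/3$ (via \Cref{lem:muk-pin-ratio} and $\mu^{\min}_{-1}\ge 1/(1+\zeta)$ from marginal stability), not to close the marginal-stability comparisons as you suggest.

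The genuine gap is in your spectral-independence argument. You propose to bound $\rho$ of the absolute influence matrix of $(\*\lambda*\mu_k)^\tau$ directly by ``$2\rho(\Psi_{(\*\lambda'*\mu)^{\tau'}})+\text{clique term}$'' via norm comparison or block Gershgorin, but neither tool delivers this: spectral radius is not subadditive under such a block splitting, and Gershgorin-type bounds control row sums, which for influence matrices can exceed $\rho$ by a factor of the degree — losing exactly what spectral independence is supposed to buy. Moreover, the inter-clique entries must carry the $1/k$ dilution (the influence of $(i,j)$ on $(i',j')$ is roughly $\tfrac1k$ times the collapsed influence of $i$ on $i'$, cf.\ \Cref{claim-muk}); without making that explicit, the natural entrywise-dominating matrix has spectral radius of order $k\eta$. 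You only invoke $1/k$ in the marginal-stability part. The paper sidesteps all of this by detouring through the \emph{absolute correlation matrix}: \Cref{lemma-SI-to-Cor} converts spectral independence to limited correlation at cost $+1$; \Cref{lem:mu-C-Cor => muk-C-Cor} shows the lifted correlation matrix admits an explicit orthogonal eigenbasis whose spectrum is that of the collapsed matrix together with copies of $1$, so the lift costs only $+2$; and \Cref{lem:muk-C-SI} converts back to the influence matrix using the $2/3$ marginal bound (this is where $k_0$ enters), at a multiplicative cost $3/2$. The constant $2\eta+5$ arises from chaining these conversions, not from an intra-clique influence term plus a doubled inter-clique term. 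Making your direct influence-matrix route rigorous would require both the $1/k$ entrywise bounds under arbitrary (partially clique-fixing) pinnings and a genuine spectral argument replacing the unsupported additivity step — at which point you would essentially be reconstructing the paper's correlation-matrix eigenbasis computation.
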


\Cref{lem:mu-to-muk} basically says that \Cref{condition:CSI-MS} is almost invariant under $k$-transformation (it is not exactly invariant because \Cref{condition:CSI-MS-weak} is still weaker than \Cref{condition:CSI-MS}).
This can be formally  verified by using a natural coupling between $\mu$ and $\mu_k$.
The formal proof is in \Cref{section-muk}.


\subsection{Proof of main theorem}
We now prove \Cref{theorem-general}.
Fix $\eta > 1$, $\epsilon > 0$, $\zeta > 1$ and distribution $\mu$.
Assume that $\mu$ satisfies \Cref{condition:CSI-MS} with parameters $\eta,\epsilon$ and $\zeta$.
Due to \Cref{lem:mu-to-muk},  
there is a finite $k_0=10(1+\epsilon)(1+\zeta)$ 
such that for all integers $k \geq k_0$, 
\Cref{condition:CSI-MS-weak} holds for $\mu_k = \Rd(\mu, k)$ with parameters $\eta'=2\eta+5$, $\epsilon'=\epsilon$ and $\zeta'=2\zeta$,
which according to \Cref{lemma-suff-condi} and \Cref{lemma-pd-to-ubfe}, means that for
\begin{align*}
\alpha =  \min\left\{\frac{1}{4\eta+10}, \frac{\log(1+\epsilon)}{\log (1+\epsilon) + \log (4\zeta)}\right\},
\end{align*}
for all integers 
$\ell \geq 4\eta + 11 + \frac{\log (4\zeta)}{\log (1+\epsilon)} \geq 1/ \alpha$,
$\mu_k$ satisfies $\ell$-uniform block factorization of entropy with parameter 
\begin{align*}
C=\tp{\frac{\mathrm{e}nk}{\ell}}^{1/\alpha + 1} \leq \tp{\frac{\mathrm{e}nk}{\ell}}^{4\eta + 12 + \frac{\log (4\zeta)}{\log (1+\epsilon)}}.
\end{align*}
Fix an arbitrary $\theta \in (0,1)$. 
Consider $\ell = \ctp{\theta k n}$. 
For all sufficiently large integers $k \geq k_0$ satisfying $\theta k n \geq  4\eta +  \frac{\log 4\zeta}{\log(1+\epsilon)} +11 $,  $\mu_k$ satisfies the $\ctp{\theta k n}$-uniform block factorization of entropy with parameter
\begin{align*}
C = \tp{\frac{\mathrm{e}kn}{\ctp{\theta k n}}}^{4\eta + 12 + \frac{\log (4\zeta)}{\log (1+\epsilon)}}	\leq \tp{\frac{\mathrm{e}}{\theta}}^{4\eta + \frac{\log (4\zeta)}{\log(1+\epsilon)} + 12} \leq \tp{\frac{\mathrm{e}}{\theta}}^{30\eta + \frac{\log (4\zeta)}{\log(1+\epsilon)}}
\end{align*}
This holds for all $k \geq K_0$, where $K_0 = \max\left\{10(1+\epsilon)(1+\zeta), \frac{4\eta}{\theta n}+ \frac{\log (4\zeta)}{\theta n\log(1+\epsilon)} + \frac{11}{\theta n}\right\} $. Note that $K' = K_0(\theta)$ is finite because $\epsilon > 0$, $\zeta > 1$ and $\eta > 1$ are fixed parameters. 
\Cref{theorem-general} follows from \Cref{lem:FBF}.

\subsection{Open problems}
Compared to the classes of two-spin systems resolved in \cite{LLY13,chen2021rapid},
the uniqueness criterion (\Cref{condition-canonical-two-spin}) used in \Cref{thm:2spin-theorem} still leaves open the optimal mixing for the following classes of anti-ferromagnetic two-spin systems $(\beta,\gamma,\lambda)$ on $G$:
\begin{itemize}
\item
$\gamma>1$, the graph $G$ is a general graph with maximum degree $\Delta$, and $(\beta,\gamma,\lambda)$ is $d$-unique for all $1\le d<\Delta$ (i.e.~up-to-$\Delta$-unique),
\end{itemize}
that is, the ``skewed'' anti-ferromagnetic case (where $\gamma>1$) on general irregular graphs.

The main technical difficulty for this case is that the MLS constant for such case was very much under-studied, 
even in much sub-critical regimes, 
so there lacks a MLSI for the ``easier'' regime from where we can apply our boosting theorem (\Cref{theorem-general}) for MLSI.

Another minor technical difficulty is that the marginal stability asserted by \Cref{condition:CSI-MS} does not hold for this case in general.
However, we believe that this can be circumvented because \Cref{condition:CSI-MS} provides only a sufficient condition for the product domination property. 
And we conjecture that for any anti-ferromagnetic two-spin system satisfying the up-to-$\Delta$-uniqueness, 
the $k$-transformed distribution $\mu_k$ satisfies the proper product domination property for all sufficiently large $k$.
Verifying this conjecture, while provided  a MLSI in the easier regime, would prove the optimal mixing for the above ``skewed'' anti-ferromagnetic  two-spin systems on general graphs.

\section{Preliminaries}
\subsection{Mixing time and modified log-Sobolev constant}
Let $\Omega$ be a finite state space and $(X_t)_{t \in \mathds{Z}_{\ge 0}}$ a Markov chain on $\Omega$ with transition matrix $P$. 
The Markov chain $P$ is \emph{irreducible}, if for any $\sigma, \tau \in \Omega$, there exists $t \in \mathds{Z}_{\ge 0}$, such that $P^t(\sigma,\tau) > 0$. 
The Markov chain $P$ is \emph{aperiodic}, if $\mathrm{gcd} \{t \in \mathds{Z}_{>0} \mid P^t(\sigma,\sigma) > 0\} = 1$ holds for any $\sigma \in \Omega$.
The fundamental theorem of Markov chains says that an irreducible and aperiodic Markov chain $P$ converges to a unique stationary distribution $\mu$ over $\Omega$ such that $\mu P=\mu$.
The Markov chain $P$ is \emph{reversible} with respect to $\mu$, if the detailed balance equation $\mu(\sigma) P(\sigma,\tau) = \mu(\tau) P(\tau,\sigma)$ holds for all $\sigma,\tau \in \Omega$. Such $\mu$ satisfying the detailed balance equation must be a stationary distribution of $P$.

The mixing time of a Markov chain $P$ with stationary distribution $\mu$ is defined as
\begin{align*}
  T_{\mathrm{mix}}(\epsilon) = \max_{\sigma \in \Omega} \min_{t} \DTV{P^{t}(\sigma,\cdot)}{\mu},
\end{align*}
where $d_{\mathrm{TV}}$ denotes the \emph{total variation distance} and is defined by
\begin{align*}
  d_{\mathrm{TV}}(\mu,\nu) = \frac{1}{2} \sum_{\sigma \in \Omega} \abs{\mu(\sigma)-\nu(\sigma)}.
\end{align*}

Analysis of the mixing time can be done through establishing certain functional inequalities, such as Poincar\'e inequalities and modified log-Sobolev inequalities (MLSI).
Let $\mu$ be the stationary distribution of Markov chain $P$ on state space $\Omega$, and $(\mathds{R}^\Omega, \inner{\cdot}{\cdot}_{\mu})$ be the corresponding space with the inner-product 
\begin{align*}
  \inner{f}{g}_{\mu} \triangleq \sum_{\sigma \in \Omega} \mu(\sigma) f(\sigma) g(\sigma), \quad \forall f,g \in \mathds{R}^\Omega.
\end{align*}
The \emph{Dirichlet form} $\+E_P(\cdot,\cdot)$ is defined by
\begin{align*}
  \+E_P(f,g) \triangleq \inner{(I-P)f}{g}_{\mu}.
\end{align*}
The \emph{modified log-Sobolev (MLS) constant} is defined by
\begin{align*}
  \rho = \rho^{P}(\mu) \triangleq \inf \left\{ \frac{\+E_P(f,\log f)}{\Ent[\mu]{f}} \,\Bigm{|}\, f \in \mathds{R}_{>0}^\Omega \text{ and } \Ent[\mu]{f} \neq 0 \right\},
\end{align*}
where $\Ent[\mu]{f} \triangleq \E[\mu]{f \log f} - \E[\mu]{f} \log \E[\mu]{f}$ and we assume $0\log 0 = 0$.

The following relation between modified log-Sobolev constant and mixing time was known.
\begin{proposition}[\cite{bobkov2006modified}]
Let $P$ be an irreducible, aperiodic and reversible  Markov chain on finite state space $\Omega$ with stationary distribution $\mu$.
If all eigenvalues of $P$ are non-negative, then the mixing time satisfies 
  \begin{align*}
    T_{\mathrm{mix}}(\epsilon) \le \frac{1}{\rho^{P}(\mu)} \tp{\log \log \frac{1}{\mu_{\min}} + \log \frac{1}{2\epsilon^2}},
  \end{align*}
  where $\mu_{\min} \triangleq \min_{\sigma \in \Omega} \mu(\sigma)$.
\end{proposition}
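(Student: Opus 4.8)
The plan is to reproduce the standard entropy-decay proof of this classical estimate. Fix an arbitrary starting state $x_0 \in \Omega$, write $\mu_t \triangleq P^t(x_0,\cdot)$ for the law of the chain at time $t$, and let $h_t$ denote the relative density of $\mu_t$ with respect to $\mu$; then $h_t \ge 0$, $\E[\mu]{h_t} = 1$, and by reversibility $h_t = P^t h_0$ with $h_0 = \mathbf{1}_{x_0}/\mu(x_0)$. The first step is to pass from total variation to relative entropy: by Pinsker's inequality and $\E[\mu]{h_t} = 1$,
\begin{align*}
\DTV{\mu_t}{\mu}^2 \;\le\; \tfrac12 \KL{\mu_t}{\mu} \;=\; \tfrac12 \E[\mu]{h_t \log h_t} \;=\; \tfrac12 \Ent[\mu]{h_t},
\end{align*}
so it suffices to find a $t$ for which $\Ent[\mu]{h_t} \le 2\epsilon^2$.

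The initial entropy is bounded for free, uniformly in the start: $\Ent[\mu]{h_0} = \E[\mu]{h_0 \log h_0} = \log\tfrac{1}{\mu(x_0)} \le \log\tfrac{1}{\mu_{\min}}$. Granting for the moment the geometric entropy-decay estimate
\begin{align*}
\Ent[\mu]{h_t} \;\le\; \mathrm{e}^{-\rho t}\,\Ent[\mu]{h_0}, \qquad \rho = \rho^{P}(\mu),
\end{align*}
we obtain $\DTV{\mu_t}{\mu}^2 \le \tfrac12 \mathrm{e}^{-\rho t}\log\tfrac{1}{\mu_{\min}}$, which is at most $\epsilon^2$ as soon as $\mathrm{e}^{-\rho t} \le 2\epsilon^2/\log\tfrac{1}{\mu_{\min}}$, i.e.\ as soon as $t \ge \tfrac{1}{\rho}\tp{\log\log\tfrac{1}{\mu_{\min}} + \log\tfrac{1}{2\epsilon^2}}$. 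As this holds for every starting state $x_0$, it bounds $T_{\mathrm{mix}}(\epsilon)$ as claimed, and the whole proposition reduces to the entropy-decay estimate.

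That estimate is the heart of the proof. In continuous time it is an elementary Gr\"onwall argument: for the heat semigroup $P_s = \mathrm{e}^{-s(I-P)}$ and any $g \ge 0$, differentiating in $s$ and using $\E[\mu]{P_s g} = \E[\mu]{g}$ yields $\frac{\mathrm{d}}{\mathrm{d}s}\Ent[\mu]{P_s g} = -\+E_P(P_s g, \log P_s g) \le -\rho\,\Ent[\mu]{P_s g}$ directly from the definition of the modified log-Sobolev constant, whence $\Ent[\mu]{P_s g} \le \mathrm{e}^{-\rho s}\Ent[\mu]{g}$. Transferring this to the discrete-time iterates $P^t$ is exactly where the hypothesis that all eigenvalues of $P$ lie in $[0,1]$ is used: each such eigenvalue satisfies $\lambda^t \le \mathrm{e}^{-t(1-\lambda)}$, so $P^t$ is no less mixing than the continuized chain $P_t$, and this is what allows one to promote the continuous-time decay to $\Ent[\mu]{P^t g} \le \mathrm{e}^{-\rho t}\Ent[\mu]{g}$ for every integer $t \ge 0$; a routine approximation then handles the fact that $h_0$ vanishes off $x_0$.

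I expect this last step — pushing the entropy-decay estimate from the continuized semigroup down to the discrete iterates $P^t$, and making precise and essential use of the non-negativity of the spectrum — to be the main obstacle. The reduction to relative entropy via Pinsker, the bound on the initial entropy, and the concluding arithmetic are all routine, so the real content is to carry out (or cite, this being the theorem of Bobkov and Tetali) the discrete-time entropy contraction.
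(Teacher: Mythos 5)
The paper gives no proof of this proposition (it is quoted from Bobkov--Tetali), so the only question is whether your argument stands on its own, and it does not: the decisive step is exactly the one you wave at. Your reduction via Pinsker, the bound $\Ent[\mu]{h_0}\le\log\frac{1}{\mu_{\min}}$, the continuous-time Gr\"onwall argument, and the closing arithmetic are all fine. The gap is the transfer to discrete time. You justify $\Ent[\mu]{P^t g}\le \mathrm{e}^{-\rho t}\Ent[\mu]{g}$ by the eigenvalue comparison $\lambda^t\le \mathrm{e}^{-t(1-\lambda)}$, but that comparison only controls quantities that diagonalize in the spectral decomposition (variance, $\chi^2$-divergence); entropy is not such a quantity, so ``$P^t$ dominates the continuized chain spectrally'' does not imply $\Ent[\mu]{P^t g}\le\Ent[\mu]{H_t g}$, let alone the claimed contraction. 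As written, the heart of the proposition is asserted, not proved.

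Moreover, the natural ways to fill this hole fail, which shows the gap is substantive rather than cosmetic. The one-step inequality that would immediately give your contraction, namely $\Ent[\mu]{f}-\Ent[\mu]{Pf}\ge \+E_P(f,\log f)$, is false even for reversible positive semidefinite chains: take $P=\Pi$, the projection onto $\mu$ (i.e. $\Pi f=\E[\mu]{f}$), which is reversible and PSD; then for a density $f=\mathrm{d}\nu/\mathrm{d}\mu$ one has $\Ent[\mu]{f}-\Ent[\mu]{\Pi f}=\KL{\nu}{\mu}$ while $\+E_\Pi(f,\log f)=\KL{\nu}{\mu}+\KL{\mu}{\nu}$, so the general inequality goes the \emph{opposite} way (one-step entropy drop is at most the Dirichlet form, which is why entropy contraction implies MLSI and not conversely). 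Likewise, writing $P^t=H_tQ_t$ and invoking data processing would require $\mathrm{e}^{t(I-P)}P^t$ to be entrywise nonnegative, which does not follow from nonnegativity of the spectrum. So the discrete-time entropy contraction genuinely requires the Bobkov--Tetali argument (or an independent proof), and your sketch does not supply it; citing \cite{bobkov2006modified} for precisely this step — as the paper does — is the honest resolution, but then the statement is being quoted, not proved.
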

\subsection{Entropic independence}
Let $\mu:\binom{[n]}{k} \to \mathds{R}_{\geq 0}$ be a distribution over all $k$-subsets of $[n]$. We call such distribution a \emph{homogeneous distribution}.

Let $\Omega \subseteq \binom{[n]}{k}$ denote the support of $\mu$.
Let $X$ be the downward closure of $\Omega$.  
Formally, $X$ is the smallest family such that $\Omega \subseteq X$ and if $\alpha \in X$ then $\beta \in X$ for all $\beta \subseteq \alpha$.
In other words, $X$ is the \emph{simplicial complexes} generated by $\mu$.
For any \emph{face} $\alpha \in X$, let $\abs{\alpha}$ denote the \emph{dimension} of $\alpha$.
For any integer $0 \leq j \leq  k$, let $X(j)$ denote all the faces in $X$ with dimension $j$.
\begin{definition}[down/up walk]
\label{definition-DUW}
Let $X$ be the simplicial complexes generated by a homogeneous distribution $\mu:\binom{[n]}{k} \to \mathds{R}_{\geq 0}$.
Let $0 \leq j < k$ be an integer.
\begin{itemize}
\item The \emph{down walk} $D_{k \to j}: X(k) \times X(j) \to \mathds{R}_{\geq 0}$ is defined by
\begin{align*}
\forall \alpha \in X(k),\beta \in X(j),\quad D_{k \to j}(\alpha,\beta) = \begin{cases}
 	\frac{1}{\binom{k}{j}} &\text{if } \beta \subseteq \alpha;\\
 	0 &\text{otherwise.}
 \end{cases}
\end{align*}
\item The \emph{up walk} $U_{j \to k}: X(j) \times X(k) \to \mathds{R}_{\geq 0}$ is defined by
\begin{align*}
\forall \alpha \in X(j),\beta \in X(k),\quad U_{j \to k}(\alpha,\beta) = \begin{cases}
 	\frac{\mu(\beta)}{\sum_{\gamma \in X(k): \alpha \subseteq \gamma}\mu(\gamma)} &\text{if } \alpha \subseteq \beta;\\
 	0 &\text{otherwise.}
 \end{cases}
\end{align*}
\end{itemize}
\end{definition}

The following definition of \emph{entropic independence} was introduced in~\cite{anari2021entropic}.
\begin{definition}[\text{entropic independence~\cite{anari2021entropic}}]\label{definition-ei}
Let $\alpha \in (0,1)$.
A distribution $\mu$ over $\binom{[n]}{k}$ is said to be \emph{$(1/\alpha)$-entropically independent} if for any distribution $\nu$ over $\Omega(\mu)$,
\begin{align*}
\KL{\nu D_{k\to 1}}{\mu D_{k \to 1}} \leq \frac{1}{\alpha k}\KL{\nu}{\mu}.
\end{align*}
\end{definition}

  Let $\mu$ be a distribution over $\binom{[n]}{k}$.
  For a set $R \subseteq [n]$ satisfying $\Pr[S \sim \mu]{R \subseteq S} > 0$, we use $\mu^R$ to denote the \emph{link} of $\mu$ produced by $R$. This notion was used in, for examples, \cite{oppenheim2018local,KO20,ALOV19,anari2020spectral,anari2021entropic}.
  Formally, $\mu^R$ is a distribution  over $X_R = \left\{S \setminus R \mid S \in \binom{[n]}{k} \land S \supseteq R\right\}$ defined by
  \begin{align}\label{eq-def-link}
    \forall T \in X_R, \quad \mu^R(T) \propto \mu(T \uplus R).
  \end{align}
  The following relative entropy decay result was implied by the entropic independence property~\cite{anari2021entropic}.
  We say a property holds for all links of $\mu$, if it holds for $\mu^R$ for all $R \subseteq [n]$ with $\Pr[S\sim \mu]{R\subseteq S} > 0$.
%
%
\begin{theorem}[\text{\cite[Theorem~5]{anari2021entropic}}]
\label{theorem-AJK+}
Let $\mu$ be a distribution over $\binom{[n]}{k}$ and $\alpha \in (0,1)$.
If the $(1/\alpha)$-entropical independence holds for all links of $\mu$, then for any integer $0 \leq j \leq k - \lceil 1/\alpha \rceil$, any distribution $\nu$ over $\Omega(\mu)$,
\begin{align*}
\KL{\nu D_{k \rightarrow j}}{ \mu D_{k \rightarrow j} }	\leq \tp{1-\kappa\tp{j,k,\frac{1}{\alpha}}}\KL{\nu}{\mu},
\end{align*}
where
\begin{align}
\label{eq-def-kappa}
\kappa\tp{j,k,c} \triangleq \frac{(k+1-j-c)^{c- \ctp{c} } \prod_{i=0}^{\ctp{c}-1}(k-j - i) }{(k+1)^{c}}.
\end{align}
\end{theorem}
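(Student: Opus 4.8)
The plan is to prove the statement by iterating, link by link, the \emph{defining} one‑step entropy contraction of entropic independence, and then reassembling the pieces via a chain rule for relative entropy. The only genuinely external ingredient is this chain rule: for a Markov kernel $P$ and distributions $\nu,\mu$ with $\nu$ supported on the support of $\mu$,
\[
\KL{\nu}{\mu} = \KL{\nu P}{\mu P} + \E[y\sim\nu P]{\KL{\nu_y}{\mu_y}},
\]
where $\nu_y,\mu_y$ are the posteriors of $\nu,\mu$ on the source space given the output $y$. Applied with $P=D_{k\to1}$, the posterior of $\mu$ given a singleton $\{v\}$ is, after deleting $v$ from every set, exactly the link $\mu^{\{v\}}$ over $\binom{[n]\setminus\{v\}}{k-1}$, and likewise for $\nu$. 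Hence $\KL{\nu}{\mu} = \KL{\nu D_{k\to1}}{\mu D_{k\to1}} + \E[v\sim\nu D_{k\to1}]{\KL{\nu^{\{v\}}}{\mu^{\{v\}}}}$, so that $(1/\alpha)$‑entropic independence of $\mu$ is \emph{equivalent} to the local lower bound $\E[v\sim\nu D_{k\to1}]{\KL{\nu^{\{v\}}}{\mu^{\{v\}}}} \ge \bigl(1-\tfrac{1}{\alpha k}\bigr)\KL{\nu}{\mu}$.

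\textbf{Iteration over links.} Since $(1/\alpha)$‑entropic independence is assumed for \emph{all} links of $\mu$, and a link of a link is again a link of $\mu$ (formally $(\mu^{\{v\}})^{\{w\}}=\mu^{\{v,w\}}$), I would peel off coordinates one at a time. Apply the local bound to $\mu$; then inside each link $\mu^{\{v_1\}}$, which is a distribution over $(k-1)$‑sets and so contracts with factor $\tfrac1{\alpha(k-1)}$, apply it again; continue for $j$ steps. Chaining the factors yields
\[
\E[R]{\KL{\nu^R}{\mu^R}} \;\ge\; \prod_{s=0}^{j-1}\Bigl(1-\tfrac{1}{\alpha(k-s)}\Bigr)\,\KL{\nu}{\mu},
\]
where $R=\{v_1,\dots,v_j\}$ is the set produced by the peeling. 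A direct computation shows that, conditioned on $S$, the set $R$ is a uniform $j$‑subset of $S$, with $S\sim\nu$ marginally; since $\KL{\nu^R}{\mu^R}$ depends only on $R$ as a set, $R$ has exactly the law $\nu D_{k\to j}$, and $\mu^R$ is the link over the surviving $k-j$ coordinates. The hypothesis $j\le k-\lceil 1/\alpha\rceil$ is precisely what guarantees that every link met during the peeling has more than $1/\alpha$ coordinates, so every factor is nonnegative.

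\textbf{Reassembly.} Applying the chain rule once more, now with $P=D_{k\to j}$ (whose posterior given a $j$‑face $\tau$ is the link $\mu^\tau$ over the remaining coordinates), gives $\KL{\nu}{\mu}=\KL{\nu D_{k\to j}}{\mu D_{k\to j}}+\E[\tau\sim\nu D_{k\to j}]{\KL{\nu^\tau}{\mu^\tau}}$, and therefore
\[
\KL{\nu D_{k\to j}}{\mu D_{k\to j}} \;\le\; \Bigl(1-\prod_{s=0}^{j-1}\bigl(1-\tfrac{1}{\alpha(k-s)}\bigr)\Bigr)\KL{\nu}{\mu}.
\]
It then remains to check that $\prod_{s=0}^{j-1}\bigl(1-\tfrac{1}{\alpha(k-s)}\bigr)\ge\kappa\bigl(j,k,\tfrac1\alpha\bigr)$; this is an elementary estimate in which one isolates the $\lceil 1/\alpha\rceil$ factors with the smallest index and bounds the remaining $\tfrac{1}{k-s}$ by $\tfrac{1}{k+1-j}$, the rounding that produces the $(k+1)^{1/\alpha}$ denominator and the fractional exponent appearing in the displayed form of $\kappa$. (In fact the product is strictly larger than $\kappa$, so one could state a slightly tighter bound.)

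\textbf{Main obstacle.} The conceptual heart — that entropic independence is "self‑improving" along links, so that a single chain‑rule identity plus a one‑line induction does the job — is short once the chain rule is in hand. The fiddly parts, and where I expect the real work, are: (i) verifying that the sequentially peeled conditioning set has precisely the law $\nu D_{k\to j}$, so that the telescoped inequality lines up term‑by‑term with the chain‑rule decomposition of $\KL{\nu}{\mu}$; and (ii) the purely arithmetic passage from the clean product $\prod_{s}\bigl(1-\tfrac{1/\alpha}{k-s}\bigr)$ to the stated closed form $\kappa(j,k,1/\alpha)$ when $1/\alpha$ is not an integer.
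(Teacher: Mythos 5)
The paper does not actually prove this statement: it is imported verbatim from \cite{anari2021entropic}, so the only proof to compare with is the one in that source, and your argument reconstructs essentially that route. The structural part of your proposal is correct: the chain rule for relative entropy applied with $P=D_{k\to 1}$ (and later $P=D_{k\to j}$), the identification of the posteriors with links, the fact that a link of a link is a link, the computation showing the peeled set has law exactly $\nu D_{k\to j}$, and the positivity of every factor under $j\le k-\lceil 1/\alpha\rceil$ all check out, and chaining them gives
\[
\KL{\nu D_{k\to j}}{\mu D_{k\to j}}\;\le\;\Bigl(1-\prod_{s=0}^{j-1}\bigl(1-\tfrac{1}{\alpha(k-s)}\bigr)\Bigr)\KL{\nu}{\mu}.
\]

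The one step that does not hold up as you describe it is the closing arithmetic. Bounding each $\frac{1}{k-s}$ by $\frac{1}{k+1-j}$ only yields the lower bound $\bigl(\tfrac{k+1-j-1/\alpha}{k+1-j}\bigr)^{j}$ for the product, which is far too lossy: for $1/\alpha=2$ and $j=k-2$ this gives $3^{-(k-2)}$, whereas the product telescopes to $\tfrac{2}{k(k-1)}$ and $\kappa=\tfrac{2}{(k+1)^2}$, so the recipe as stated cannot reach $\kappa$. The inequality $\prod_{s=0}^{j-1}\bigl(1-\tfrac{c}{k-s}\bigr)\ge\kappa(j,k,c)$ with $c=1/\alpha$ is nevertheless true and elementary, but it must be organized differently: for integer $c$ the product telescopes exactly to $\prod_{i=0}^{c-1}\tfrac{k-j-i}{k-i}$ and bounding each denominator by $k+1$ gives $\kappa$; for fractional $c$ one can induct on $j$, checking $\kappa(j+1,k,c)\le\tfrac{k-j-c}{k-j}\,\kappa(j,k,c)$, which with $a=k-j$ and $\theta=\lceil c\rceil-c$ reduces to $\theta\ln\bigl(1+\tfrac{1}{a-c}\bigr)\le-\ln\bigl(1-\tfrac{\theta}{a-c}\bigr)$, a consequence of $\ln(1+u)\le u$ and $-\ln(1-v)\ge v$, together with the base case $\kappa(0,k,c)\le 1$. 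With that step repaired (or by simply citing \cite{anari2021entropic}, as the paper does), your proof is complete, and in fact yields the slightly sharper product form of the contraction factor.
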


Let $\mu_{(k)} = \mu$. For any integer $0 \leq j < k$, let $\mu_{(j)} = \mu_{(k)} D_{k\to j}$.
Given any function $f=f^{(k)}: X(k) \to \mathds{R}_{\geq 0}$,
for any integer $0 \leq j < k$, define $f^{(j)}: X(j) \to \mathds{R}_{\geq 0}$ by that $f^{(j)} = U_{j \to k}f^{(k)}$. 

\begin{lemma}
\label{lemma-local-etp-decay}
Let  $\mu=\mu_{(k)}$ be a distribution over $\binom{[n]}{k}$. Let $0 \leq j < k$ and $\kappa \in (0,1)$. 
Assume that for any distribution $\nu$ absolutely continuous with respect to $\mu$, it holds that $\KL{\nu D_{k \rightarrow j}}{ \mu D_{k \rightarrow j} }	\leq (1-\kappa)\KL{\nu}{\mu}$.
Then for any function $f^{(k)}: X(k) \to \mathds{R}_{\geq 0}$,
\begin{align*}
\Ent[\mu_{(j)}]{f^{(j)}} \leq (1-\kappa)	\Ent[\mu_{(k)}]{f^{(k)}}.
\end{align*}
\end{lemma}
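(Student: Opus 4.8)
\textbf{Proof plan for Lemma~\ref{lemma-local-etp-decay}.}
The plan is to transfer the relative-entropy contraction hypothesis for the down walk $D_{k\to j}$ into an entropy contraction for test functions by exploiting the standard duality between functions and (reweighted) distributions. Given $f^{(k)}:X(k)\to\mathds{R}_{\ge 0}$ with $\Ent[\mu_{(k)}]{f^{(k)}}\neq 0$ (the case of zero entropy being trivial), we normalize so that $\E[\mu_{(k)}]{f^{(k)}}=1$, and define the probability distribution $\nu$ on $X(k)$ by $\nu(\alpha)=f^{(k)}(\alpha)\mu_{(k)}(\alpha)$. Then $\nu$ is absolutely continuous with respect to $\mu_{(k)}=\mu$, and by definition of relative entropy $\KL{\nu}{\mu}=\sum_\alpha \nu(\alpha)\log\frac{\nu(\alpha)}{\mu(\alpha)}=\E[\mu_{(k)}]{f^{(k)}\log f^{(k)}}=\Ent[\mu_{(k)}]{f^{(k)}}$, where the last equality uses the normalization $\E[\mu_{(k)}]{f^{(k)}}=1$.

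The key step is to identify $\nu D_{k\to j}$ with the reweighting of $\mu_{(j)}$ by $f^{(j)}=U_{j\to k}f^{(k)}$. Concretely, I would compute for $\beta\in X(j)$:
\begin{align*}
(\nu D_{k\to j})(\beta)=\sum_{\alpha\in X(k)}\nu(\alpha)D_{k\to j}(\alpha,\beta)=\sum_{\alpha\supseteq\beta}f^{(k)}(\alpha)\mu_{(k)}(\alpha)\frac{1}{\binom{k}{j}},
\end{align*}
and compare this with $\mu_{(j)}(\beta)=(\mu_{(k)}D_{k\to j})(\beta)=\sum_{\alpha\supseteq\beta}\mu_{(k)}(\alpha)\frac{1}{\binom{k}{j}}$ together with the definition $f^{(j)}(\beta)=(U_{j\to k}f^{(k)})(\beta)=\sum_{\gamma\supseteq\beta}U_{j\to k}(\beta,\gamma)f^{(k)}(\gamma)=\frac{\sum_{\gamma\supseteq\beta}\mu_{(k)}(\gamma)f^{(k)}(\gamma)}{\sum_{\gamma\supseteq\beta}\mu_{(k)}(\gamma)}$. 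Dividing the first display by $\mu_{(j)}(\beta)$ shows $(\nu D_{k\to j})(\beta)=f^{(j)}(\beta)\,\mu_{(j)}(\beta)$. In particular $\nu D_{k\to j}$ is absolutely continuous with respect to $\mu_{(j)}$, and $\E[\mu_{(j)}]{f^{(j)}}=\sum_\beta \mu_{(j)}(\beta)f^{(j)}(\beta)=\sum_\beta(\nu D_{k\to j})(\beta)=1$, so the same computation as above gives $\KL{\nu D_{k\to j}}{\mu D_{k\to j}}=\Ent[\mu_{(j)}]{f^{(j)}}$.

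Combining these two identities with the hypothesis $\KL{\nu D_{k\to j}}{\mu D_{k\to j}}\le(1-\kappa)\KL{\nu}{\mu}$ immediately yields $\Ent[\mu_{(j)}]{f^{(j)}}\le(1-\kappa)\Ent[\mu_{(k)}]{f^{(k)}}$, which is the claim (the normalization $\E[\mu_{(k)}]{f^{(k)}}=1$ is harmless since both $\Ent$ and the map $f\mapsto f^{(j)}$ are positively homogeneous of degree one, and multiplying $f$ by a positive scalar rescales both sides identically). I do not anticipate a serious obstacle here; the only point requiring a little care is the bookkeeping in the identity $(\nu D_{k\to j})(\beta)=f^{(j)}(\beta)\mu_{(j)}(\beta)$, and checking that $f^{(j)}$ is well defined on the support of $\mu_{(j)}$ (i.e.\ that $\sum_{\gamma\supseteq\beta}\mu_{(k)}(\gamma)>0$ whenever $\mu_{(j)}(\beta)>0$), which is immediate from the definition of the down walk.
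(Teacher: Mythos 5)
Your proposal is correct and follows essentially the same route as the paper's proof: normalize $f^{(k)}$ so that $\E[\mu_{(k)}]{f^{(k)}}=1$, reweight $\mu$ by $f^{(k)}$ to obtain $\nu$, identify $\KL{\nu}{\mu}$ with $\Ent[\mu_{(k)}]{f^{(k)}}$ and $\nu D_{k\to j}$ with $f^{(j)}\mu_{(j)}$, and then apply the contraction hypothesis. The bookkeeping identity $(\nu D_{k\to j})(\beta)=f^{(j)}(\beta)\mu_{(j)}(\beta)$ you flag is exactly the step the paper verifies, so there is no gap.
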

\begin{proof}
The proof is standard. We include it here for completeness.
First note that if $f^{(k)} = 0$, then the lemma holds trivially.
  Without loss of generality, we assume $f \not \equiv 0$. By homogeneity, we may further assume $\E[\mu_{(k)}]{f^{(k)}} =1$. 
Let $\nu$ be  a distribution over $X(k)$ defined by $\nu(\sigma) = \mu(\sigma)f^{(k)}(\sigma)$ for all $\sigma \in X(k)$.
\begin{align*}
\KL{\nu}{\mu}  = \sum_{\sigma \in X(k)}\mu(\sigma)\frac{\nu(\sigma)}{\mu(\sigma)} \log \frac{\nu(\sigma)}{\mu(\sigma)} = \EE[\mu]{f^{(k)} \log f^{(k)}} \overset{(\ast)}{=}  \Ent[\mu_{(k)}]{f^{(k)}},
\end{align*}
where $(\ast)$ holds because $\E[\mu]{f^{(k)}} = 1$ and $\mu=\mu_{(k)}$. 
Let $\nu_{(j)} = \nu D_{k\to j}$ and $\mu_{(j)} = \mu D_{k \to j}$.
We have
\begin{align*}
\KL{\nu D_{k \rightarrow j}}{ \mu D_{k \rightarrow j} }	= \sum_{\sigma \in X(j)}	\mu_{(j)}(\sigma)\frac{\nu_{(j)}(\sigma)}{\mu_{(j)}(\sigma)}\log\frac{\nu_{(j)}(\sigma)}{\mu_{(j)}(\sigma)} = \Ent[\mu_{(j)}]{f^{(j)}},
\end{align*}
where the last equation holds because $f^{(j)}(\sigma) = U_{j \to k}f(\sigma) =  \sum_{\alpha \in X(k): \sigma \subseteq \alpha}U_{j \to k}(\sigma,\alpha)\frac{\nu(\alpha)}{\mu(\alpha)} =\frac{\nu_{(j)}(\sigma)}{\mu_{(j)}(\sigma)}$ and $\E[\mu_{(j)}]{f^{(j)}} = \sum_{\sigma \in X(j)}\mu_{(j)}(\sigma)\frac{\nu_{(j)}(\sigma)}{\mu_{(j)}(\sigma)} = 1$.
This proves the lemma.
\end{proof}

\section{Factorization of Entropy via Product Domination}\label{section-pd-mixing}
In this section, we prove \Cref{lemma-pd-to-ubfe},
the implication from the product domination to the uniform block factorization of entropy.
We first prove an equivalence relation (\Cref{lemma-equvalent-EI-PD}) between product domination and entropic independence~\cite{anari2021entropic}; and then  the uniform block factorization of entropy is established through the entropic independence (\Cref{prop:ei-implies-block}). 

\subsection{Product domination and entropic independence}
Recall the following equivalent algebraic definition of entropic independence \cite{anari2021entropic}.
\begin{definition}[\text{algebraic definition of entropic independence~\cite{anari2021entropic}}]\label{definition-entropic-independence}
Let $\alpha \in (0,1)$.
A homogeneous distribution $\pi$ over $\binom{[n]}{k}$ is $(1/\alpha)$-entropically independent if 
\begin{align}\label{eq-def-ei-alg}
 \forall (z_1, \cdots, z_n) \in \mathds{R}^{n}_{> 0},\quad g_\pi (z_1^\alpha, \cdots, z_n^\alpha)^{\frac{1}{k\alpha}} \leq \frac{1}{k}\sum_{i = 1}^n \Pr[S \sim \pi]{i \in S} z_i .	
\end{align}
Furthermore, $\pi$ is $(1/\alpha)$-entropically independent on $D \subseteq \mathds{R}^{n}_{> 0}$, if above holds for all $(z_1,\cdots,z_n) \in D$. 

\end{definition}
\begin{remark}
The original algebraic definition of entropic independence~\cite[Theorem 4]{anari2021entropic}	requires~\eqref{eq-def-ei-alg} holds for all  $(z_1, \cdots, z_n) \in \mathds{R}^{n}_{\geq 0}$. The two definitions are equivalent by continuity.
\end{remark}

\begin{remark}
\Cref{definition-entropic-independence} is equivalent to \Cref{definition-ei} (see \cite[Theorem 4]{anari2021entropic}).
\end{remark}

For any distribution $\mu$ over $\{\0, \1\}^{n}$. The \emph{homogenization} of $\mu$, denoted by $\mu^{\mathrm{hom}}$, is a distribution over $\binom{[n]\cup [\bar{n}]}{n}$, where $[\bar{n}] = \{\bar{1},\bar{2},\ldots,\bar{n}\}$.
For any configuration $\sigma \in \{\0, \1\}^{n}$, we define 
\[
S_\sigma \triangleq \{i \mid \sigma_i = \1\} \cup \{\bar{i} \mid \sigma_i = \0\},
\]
then the homogenization $\mu^{\-{hom}}$ is defined by
\begin{align*}
  \forall \sigma \in \{\0, \1\}^{n}, \quad \mu^{\-{hom}}(S_\sigma) \triangleq \mu(\sigma),
\end{align*}
and $\mu^{\-{hom}}(T) = 0$ for any $T$  that cannot be expressed as $S_\sigma$ for some $\sigma \in \{\0, \1\}^{n}$.


The following lemma gives the relation between product domination and entropic independence.
\begin{lemma}\label{lemma-equvalent-EI-PD}
Let $\mu$ be a distribution $\mu$ over $\{\0, \1\}^{n}$ and $\alpha \in (0,1)$. Let $D \subseteq \mathds{R}^{n}_{> 0}$ and define
\begin{align*}
D^{\mathrm{hom}} = \left\{(z_1,z_2,\ldots,z_n,z_{\bar{1}},z_{\bar{2}},\ldots, z_{\bar{n}}) \in \mathds{R}^{2n}_{>0} \,\bigg|\, \tp{\frac{z_1}{z_{\bar{1}}},\frac{z_2}{z_{\bar{2}}},\ldots,\frac{z_n}{z_{\bar{n}}}} \in D \right\}.	
\end{align*}
%
$\mu$ is ($1/\alpha$)-product dominated on $D$ if and only if $\mu^{\mathrm{hom}}$ is $(1/\alpha)$-entropically independent on $D^{\mathrm{hom}}$.	

In particular, $\mu$ is $(1/\alpha)$-product dominated if and only if $\mu^{\mathrm{hom}}$ is $(1/\alpha)$-entropically independent.
\end{lemma}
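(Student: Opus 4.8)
The plan is to unwind both definitions into explicit inequalities about the generating polynomials and observe they are literally the same inequality after a change of variables. First I would write down $g_{\mu^{\mathrm{hom}}}$ in terms of $g_\mu$. Since $\mu^{\mathrm{hom}}$ is supported on sets $S_\sigma = \{i : \sigma_i = \1\}\cup\{\bar i : \sigma_i = \0\}$ with $\mu^{\mathrm{hom}}(S_\sigma) = \mu(\sigma)$, we get
\begin{align*}
g_{\mu^{\mathrm{hom}}}(z_1,\dots,z_n,z_{\bar 1},\dots,z_{\bar n}) = \sum_{\sigma\in\{\0,\1\}^n}\mu(\sigma)\prod_{i:\sigma_i=\1}z_i\prod_{i:\sigma_i=\0}z_{\bar i} = \Big(\prod_{i=1}^n z_{\bar i}\Big)\, g_\mu\!\Big(\tfrac{z_1}{z_{\bar 1}},\dots,\tfrac{z_n}{z_{\bar n}}\Big),
\end{align*}
using that $g_\mu(w_1,\dots,w_n) = \sum_\sigma \mu(\sigma)\prod_{i:\sigma_i=\1}w_i$ is multilinear so that pulling out $\prod z_{\bar i}$ converts each $z_i$ into $z_i/z_{\bar i}$. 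I would also record that $\Pr[S\sim\mu^{\mathrm{hom}}]{i\in S} = \mu_i(\1)$ and $\Pr[S\sim\mu^{\mathrm{hom}}]{\bar i\in S} = \mu_i(\0)$, and that $\mu^{\mathrm{hom}}$ is homogeneous of degree $k=n$.

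Next I would substitute $z_i^\alpha$ for $z_i$ (and $z_{\bar i}^\alpha$ for $z_{\bar i}$) in the algebraic definition of entropic independence (Definition \ref{definition-entropic-independence}) applied to $\mu^{\mathrm{hom}}$ with $k=n$: the condition becomes
\begin{align*}
g_{\mu^{\mathrm{hom}}}(z_1^\alpha,\dots,z_n^\alpha,z_{\bar1}^\alpha,\dots,z_{\bar n}^\alpha)^{\frac{1}{n\alpha}} \le \frac{1}{n}\sum_{i=1}^n\big(\mu_i(\1)z_i + \mu_i(\0)z_{\bar i}\big).
\end{align*}
Plugging in the identity above for $g_{\mu^{\mathrm{hom}}}$, the left side is $\big(\prod_i z_{\bar i}^\alpha\big)^{\frac{1}{n\alpha}} g_\mu\big((z_1/z_{\bar1})^\alpha,\dots\big)^{\frac{1}{n\alpha}} = \big(\prod_i z_{\bar i}\big)^{1/n} g_\mu\big((z_i/z_{\bar i})^\alpha\big)^{\frac{1}{n\alpha}}$. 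Setting $w_i = z_i/z_{\bar i}$, the entropic-independence inequality is equivalent to
\begin{align*}
\Big(\prod_{i=1}^n z_{\bar i}\Big)^{1/n} g_\mu(w_1^\alpha,\dots,w_n^\alpha)^{\frac{1}{n\alpha}} \le \frac{1}{n}\sum_{i=1}^n z_{\bar i}\big(\mu_i(\1)w_i + \mu_i(\0)\big).
\end{align*}
Writing $a_i \triangleq z_{\bar i}(\mu_i(\1)w_i + \mu_i(\0)) > 0$, and noting that for fixed $w_i$ the positive scalars $z_{\bar i}$ range freely over $\mathds{R}_{>0}^n$ (hence so do the $a_i$), the claim is: $g_\mu(w_1^\alpha,\dots,w_n^\alpha)^{\frac{1}{n\alpha}}\prod_i(\mu_i(\1)w_i+\mu_i(\0))^{-1/n} \le \frac{1}{n}\sum_i a_i \big/ \prod_i a_i^{1/n}$ for all admissible $(z_{\bar i})$, where the right side is minimized over the $a_i$ exactly when all $a_i$ are equal (AM–GM), attaining value $1$. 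So the entropic-independence inequality holds on $D^{\mathrm{hom}}$ for all choices of $z_{\bar i}$ iff $g_\mu(w_1^\alpha,\dots,w_n^\alpha)^{1/\alpha} \le \prod_i(\mu_i(\1)w_i + \mu_i(\0))$ for all $(w_1,\dots,w_n)\in D$, which is exactly $(1/\alpha)$-product domination on $D$. Taking $D = \mathds{R}_{>0}^n$ (so $D^{\mathrm{hom}} = \mathds{R}_{>0}^{2n}$) gives the unqualified equivalence.

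The only real subtlety — and the one point to be careful about — is the quantifier juggling in the last step: one direction of the equivalence uses that the $a_i$ are mutually independent free positive parameters once $w$ is fixed (so AM–GM is tight), and the reverse direction uses that every $(w,z_{\bar\cdot})$ with $w\in D$ arises. Both hold because $D^{\mathrm{hom}}$ is, by construction, exactly the preimage of $D$ under $(z_i,z_{\bar i})_{i}\mapsto(z_i/z_{\bar i})_i$, so the fiber over each $w\in D$ contains all of $\mathds{R}_{>0}^n$ in the $z_{\bar\cdot}$ coordinates. The remaining computations (the $g_{\mu^{\mathrm{hom}}}$ identity, the AM–GM optimization) are routine, so this is the entire proof.
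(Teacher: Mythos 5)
Your proposal is correct and follows essentially the same route as the paper: both rest on the identity $g_{\mu^{\mathrm{hom}}}(z,z_{\bar\cdot})=\bigl(\prod_i z_{\bar i}\bigr)g_\mu(z_i/z_{\bar i})$, use AM--GM for the direction from product domination to entropic independence, and for the converse use the choice that makes all the linear forms $\mu_i(\1)z_i+\mu_i(\0)z_{\bar i}$ equal (your ``all $a_i$ equal'' step is exactly the paper's explicit normalization $z_i=x_i/(x_i\mu_i(\1)+\mu_i(\0))$, $z_{\bar i}=1/(x_i\mu_i(\1)+\mu_i(\0))$). Your packaging as a single optimization over the fiber in the $z_{\bar\cdot}$ coordinates is only a cosmetic difference; the argument is sound as written.
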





\begin{proof}
Denote $\pi=\mu^{\mathrm{hom}}$.
  We first prove the sufficiency.
  Note that  we have
\[
g_\pi(x_1,x_2,\ldots,x_n,1,1,\ldots,1)= g_\mu(x_1,x_2,\ldots,x_n),
\]
 $\Pr[S\sim \pi]{i\in S} = \mu_i(\1)$ and $\Pr[S\sim \pi]{\bar{i} \in S} = \mu_i(\0)$.
Therefore, $\mu$ being $(1/\alpha)$-product dominated on $D$ means that the generating function of $\pi = \mu^{\mathrm{hom}}$ satisfies that for all $(x_1,x_2,\ldots,x_n) \in D$,
\begin{align*}
g_\pi \tp{x_1^\alpha, x_2^\alpha, \cdots, x_n^\alpha, 1, \cdots, 1} ^{\frac{1}{\alpha}} \leq 	\prod_{i=1}^n\tp{ \Pr[S \sim \pi]{i \in S} x_i + \Pr[S \sim \pi]{\bar{i} \in S} }.	
\end{align*}
Hence, for any $(z_1,z_2,\ldots,z_n,z_{\bar{1}},z_{\bar{2}},\ldots,z_{\bar{n}}) \in D^{\mathrm{hom}}$,
\begin{align*}
  g_\pi \tp{\tp{\frac{z_1}{z_{\bar{1}}}}^\alpha, \tp{\frac{z_2}{z_{\bar{2}}}}^\alpha, \cdots, \tp{\frac{z_n}{z_{\bar{n}}}}^\alpha, 1, \cdots, 1} ^{\frac{1}{\alpha}} \leq 	\prod_{i=1}^n\tp{ \Pr[S \sim \pi]{i \in S} \frac{z_i}{z_{\bar{i}}} + \Pr[S \sim \pi]{\bar{i} \in S} }.
\end{align*}
Multiplying both sides by $(z_{\bar{1}} z_{\bar{2}} \cdots  z_{\bar{n}})$ gives 
\begin{align}\label{eq-eq-proof}
g_\pi (z_1^\alpha, z_2^\alpha, \cdots, z_n^\alpha,z_{\bar{1}}^\alpha, z_{\bar{2}}^\alpha, \cdots, z_{\bar{n}}^\alpha)^{\frac{1}{\alpha}} \leq \prod_{i=1}^n\tp{ \Pr[S \sim \pi]{i \in S} z_i + \Pr[S \sim \pi]{\bar{i} \in S} z_{\bar{i}} }. 
\end{align}
Applying AM-GM inequality, for any $(z_1,z_2,\ldots,z_n,z_{\bar{1}},z_{\bar{2}},\ldots,z_{\bar{n}}) \in D^{\mathrm{hom}}$,
\begin{align*}
g_\pi(z_1^\alpha, z_2^\alpha,\ldots, z_n^\alpha, z_{\bar{1}}^\alpha, z_{\bar{2}}^\alpha,\ldots, z_{\bar{n}}^\alpha)^{\frac{1}{n\alpha}} &\leq 
  \prod_{i=1}^n\tp{ \Pr[S \sim \pi]{i \in S} z_i + \Pr[S \sim \pi]{\bar{i} \in S} z_{\bar{i}} }^{\frac{1}{n}}\\
\text{(by AM-GM)}\quad   &\leq 	\frac{1}{n}\sum_{i = 1}^n \tp{\Pr[S \sim \pi]{i \in S} z_i + \Pr[S \sim \pi]{\bar{i} \in S} z_{\bar{i}}},
\end{align*}
which implies that $\pi = \mu^{\mathrm{hom}}$ is $(1/\alpha)$-entropically independent on $D^{\mathrm{hom}}$ by \Cref{definition-entropic-independence}.

Next, we prove the necessity. 
Fix arbitrary $(x_1, x_2,\ldots,x_n) \in D$. Define $z_1,\ldots,z_n$ and $z_{\bar{1}},\ldots,z_{\bar{n}}$ respectively as
\begin{align*}
\forall	i \in [n], \quad z_i = \frac{x_i}{x_i \Pr[S\sim \pi]{i \in S} + \Pr[S\sim \pi]{\bar{i} \in S}}\quad\text{and}\quad z_{\bar{i}} = \frac{1}{x_i \Pr[S\sim \pi]{i \in S} + \Pr[S\sim \pi]{\bar{i} \in S}}.
\end{align*}
It is straightforward to verify that $(z_1,\ldots,z_n,z_{\bar{1}},\ldots,z_{\bar{n}}) \in D^{\mathrm{hom}}$ and $\Pr[S \sim \pi]{i \in S} z_i + \Pr[S \sim \pi]{\bar{i} \in S} z_{\bar{i}} = 1$. 
Therefore, $\pi = \mu^{\mathrm{hom}}$ being $(1/\alpha)$-entropically independent means that the generating function of $\pi$ satisfies
\begin{align*}
g_\pi (z_1^\alpha, \cdots, z_n^\alpha,z_1^\alpha,\ldots,z_{\bar{n}}^\alpha)^{\frac{1}{\alpha}} \leq \tp{\frac{1}{n}\sum_{i = 1}^n \tp{\Pr[S \sim \pi]{i \in S} z_i +\Pr[S \sim \pi]{\bar{i} \in S} z_{\bar{i}}}}^n = 1.	
\end{align*}
Note that $\prod_{i=1}^n \tp{\Pr[S\sim \pi]{i \in S} z_i + \Pr[S\sim \pi]{\bar{i} \in S} z_{\bar{i}}}=1$. Therefore,
\begin{align*}
g_\pi (z_1^\alpha, \cdots, z_n^\alpha,z_{\bar{1}}^\alpha, \cdots, z_{\bar{n}}^\alpha)^{\frac{1}{\alpha}} \leq 1 = \prod_{i=1}^n\tp{ \Pr[S \sim \pi]{i \in S} z_i + \Pr[S \sim \pi]{\bar{i} \in S} z_{\bar{i}} }.	
\end{align*}
Dividing both sides by $(z_{\bar{1}} z_{\bar{2}} \ldots z_{\bar{n}})$ gives
\begin{align*}
  g_\pi \tp{\tp{\frac{z_1}{z_{\bar{1}}}}^\alpha, \tp{\frac{z_2}{z_{\bar{2}}}}^\alpha, \cdots, \tp{\frac{z_n}{z_{\bar{n}}}}^\alpha, 1, \cdots, 1} ^{\frac{1}{\alpha}} \leq 	\prod_{i=1}^n\tp{ \Pr[S \sim \pi]{i \in S} \frac{z_n}{z_{\bar{n}}} + \Pr[S \sim \pi]{\bar{i} \in S} }.
\end{align*}
Note that $\frac{z_i}{z_{\bar{i}}} = x_i$ and recall that $g_\pi(x_1,x_2,\ldots,x_n,1,1,\ldots,1) = g_\mu(x_1,x_2,\ldots,x_n)$, $\Pr[S\sim \pi]{i\in S} = \mu_i(\1)$ and $\Pr[S\sim \pi]{\bar{i} \in S} = \mu_i(\0)$. We have the following holds for all $(x_1, x_2,\ldots,x_n) \in D$
\begin{align*}
g_\mu(x_1^\alpha,x_2^\alpha,\ldots,x_n^\alpha)^{\frac{1}{\alpha}}\leq  \prod_{i=1}^n\tp{ \mu_i(\1) x_i + \mu_i(\0) },
\end{align*}
which implies $\mu$ is $(1/\alpha)$-product dominated on $D$.
\end{proof}


\subsection{Entropic independence and block factorization of entropy}

We now use the entropic independence to obtain the uniform block factorization of entropy.
Recall that the link of a distribution is defined in~\eqref{eq-def-link}.
\begin{proposition}\label{prop:ei-implies-block}
Let  $\mu$ be a distribution over $\{\0,\1\}^{n}$ and $\pi = \mu^{\mathrm{hom}}$ over $\binom{[n] \cup [\bar{n}]}{n}$ its homogenization.
Let $\alpha \in (0,1)$.
If $(1/\alpha)$-entropic independence holds for all links of $\pi$, then for any $\ctp{1/\alpha} \leq \ell \leq n$, $\mu$ satisfies $\ell$-uniform block factorization of entropy with $C=\kappa(n-\ell,n,{1}/{\alpha})^{-1}$, where $\kappa(\cdot)$ is defined in \eqref{eq-def-kappa}.
\end{proposition}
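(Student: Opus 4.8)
The plan is to combine the relative-entropy contraction for the down-walk (Theorem~\ref{theorem-AJK+}) with the standard translation of such a contraction into an entropy factorization (Lemma~\ref{lemma-local-etp-decay}), and then to identify the block-factorization functional on $\mu$ with the corresponding entropy on a link of the homogenization $\pi=\mu^{\mathrm{hom}}$. First I would set $k=n$ (the homogeneous dimension of $\pi$) and apply Theorem~\ref{theorem-AJK+} with $j=n-\ell$: since $\ell\ge\lceil 1/\alpha\rceil$ we have $j\le n-\lceil 1/\alpha\rceil$, so the hypothesis ``$(1/\alpha)$-entropic independence for all links of $\pi$'' yields, for every distribution $\nu$ absolutely continuous w.r.t.\ $\pi$,
\begin{align*}
\KL{\nu D_{n\to n-\ell}}{\pi D_{n\to n-\ell}} \le \bigl(1-\kappa(n-\ell,n,1/\alpha)\bigr)\KL{\nu}{\pi}.
\end{align*}
By Lemma~\ref{lemma-local-etp-decay} this gives, for every $f:X(n)\to\mathds{R}_{\ge 0}$,
\begin{align*}
\Ent[\pi_{(n-\ell)}]{f^{(n-\ell)}} \le \bigl(1-\kappa(n-\ell,n,1/\alpha)\bigr)\,\Ent[\pi_{(n)}]{f},
\end{align*}
where $f^{(n-\ell)}=U_{n-\ell\to n}f$ and $\pi_{(n-\ell)}=\pi D_{n\to n-\ell}$.

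Next I would rewrite this in terms of $\mu$. Each $f:\Omega(\mu)\to\mathds{R}_{\ge 0}$ corresponds to a function on $X(n)=\Omega(\pi)$ via $f(S_\sigma)=f(\sigma)$, and $\Ent[\pi_{(n)}]{f}=\Ent[\mu]{f}$ directly from the definition of homogenization. The key identity is that the ``lifted then restricted'' quantity $\Ent[\pi_{(n-\ell)}]{f^{(n-\ell)}}$ equals the averaged conditional entropy appearing in the $\ell$-uniform block factorization: a face $\beta\in X(n-\ell)$ of $\pi$ that lies below some $S_\sigma$ selects, for each index $i$, either $i$ (meaning $\sigma_i=\1$) or $\bar i$ (meaning $\sigma_i=\0$) or neither, and ``neither'' happens for exactly $\ell$ indices; thus $\beta$ is precisely a pair $(S,\tau)$ where $S\in\binom{[n]}{\ell}$ is the set of unpinned coordinates and $\tau\in\Omega(\mu_{[n]\setminus S})$ is a pinning on the complement. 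Under this correspondence, $\pi_{(n-\ell)}(\beta)$ equals $\binom{n}{\ell}^{-1}\mu_{[n]\setminus S}(\tau)$ (the $\binom{n}{\ell}^{-1}$ coming from the down-walk choosing which $\ell$ coordinates to drop, since all such choices are admissible), and the up-walk $U_{n-\ell\to n}$ from $\beta=(S,\tau)$ samples $\sigma\sim\mu^\tau$, so $f^{(n-\ell)}(\beta)=\E[\mu^\tau]{f}$ and hence $\Ent$ of $f^{(n-\ell)}$ restricted to the fibre above $(S,\tau)$ reproduces $\Ent[\mu^\tau]{f}$ scaled appropriately. Summing over $\beta$ then gives
\begin{align*}
\Ent[\pi_{(n-\ell)}]{f^{(n-\ell)}} = \frac{1}{\binom{n}{\ell}}\sum_{S\in\binom{[n]}{\ell}}\mu[\Ent[S]{f}],
\end{align*}
and combining with the contraction above yields $\Ent[\mu]{f}\le C\,\frac{1}{\binom{n}{\ell}}\sum_{S}\mu[\Ent[S]{f}]$ with $C=\kappa(n-\ell,n,1/\alpha)^{-1}$, using that $(1-\kappa)\Ent[\mu]{f}=\Ent[\mu]{f}-\kappa\,\Ent[\mu]{f}$ rearranges to $\kappa\,\Ent[\mu]{f}\le \Ent[\mu]{f}-\Ent[\pi_{(n-\ell)}]{f^{(n-\ell)}}$ together with the tower/chain rule $\Ent[\mu]{f}=\E_{(S,\tau)}[\Ent[\mu^\tau]{f}]+\Ent[\pi_{(n-\ell)}]{f^{(n-\ell)}}$ for entropy under the projection $X(n)\to X(n-\ell)$.

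The main obstacle I anticipate is the bookkeeping in this last identification: one must check that the down-walk/up-walk pair on the simplicial complex of $\pi$ at levels $n$ and $n-\ell$ corresponds exactly to ``delete a uniformly random $\ell$-subset of coordinates, keep the pinning on the rest'', including that every $\ell$-subset deletion from a maximal face lands in a valid face (which holds because $X$ is the downward closure), and that the factor $\binom{n}{\ell}^{-1}$ in $D_{n\to n-\ell}$ matches the uniform averaging over $S\in\binom{[n]}{\ell}$ in Definition~\ref{definition:UBF}. Once this dictionary between links/faces of $\pi^{\mathrm{hom}}$ and pinnings of $\mu$ is pinned down, the rest is the entropy chain rule plus Lemma~\ref{lemma-local-etp-decay} and Theorem~\ref{theorem-AJK+}. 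I would also verify the admissibility condition $\lceil 1/\alpha\rceil\le\ell\le n$ is exactly what makes $j=n-\ell$ fall in the allowed range $0\le j\le k-\lceil 1/\alpha\rceil$ of Theorem~\ref{theorem-AJK+}.
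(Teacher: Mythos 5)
Your overall route is the same as the paper's: apply \Cref{theorem-AJK+} at level $j=n-\ell$ (your range check $\ell\ge\ctp{1/\alpha}$ is exactly what is needed), convert the KL contraction into an entropy contraction via \Cref{lemma-local-etp-decay}, and translate the block-factorization functional through the homogenization dictionary; the paper carries out this last translation by quoting \Cref{lemma-UBF=ED} from \cite{chen2020optimal}, which is precisely the chain-rule identity you re-derive by hand.

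There is, however, one mis-stated step. Your displayed identity
\begin{align*}
\Ent[\pi_{(n-\ell)}]{f^{(n-\ell)}} = \frac{1}{\binom{n}{\ell}}\sum_{S\in\binom{[n]}{\ell}}\mu[\Ent[S]{f}]
\end{align*}
is false and contradicts the chain rule you yourself invoke in the final paragraph: the left-hand side is the entropy of the projected (block-averaged) function, i.e.\ the ``between-block'' part, whereas the right-hand side is the averaged ``within-block'' conditional entropy. The correct statement is
\begin{align*}
\frac{1}{\binom{n}{\ell}}\sum_{S\in\binom{[n]}{\ell}}\mu[\Ent[S]{f}] = \Ent[\mu]{f} - \Ent[\pi_{(n-\ell)}]{f^{(n-\ell)}},
\end{align*}
which is exactly \Cref{lemma-UBF=ED}. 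Taken at face value, your display combined with the contraction would only give $\frac{1}{\binom{n}{\ell}}\sum_{S}\mu[\Ent[S]{f}]\le(1-\kappa)\Ent[\mu]{f}$, an inequality pointing the wrong way and useless for block factorization. Fortunately, your closing rearrangement uses the correct tower identity $\Ent[\mu]{f}=\E_{(S,\tau)}[\Ent[\mu^\tau]{f}]+\Ent[\pi_{(n-\ell)}]{f^{(n-\ell)}}$ and deduces $\kappa\,\Ent[\mu]{f}\le\Ent[\mu]{f}-\Ent[\pi_{(n-\ell)}]{f^{(n-\ell)}}$, so the slip is a fixable inconsistency rather than a structural gap: replace the erroneous display by the difference identity and the argument, including your identifications $\pi_{(n-\ell)}(\beta)=\binom{n}{\ell}^{-1}\mu_{[n]\setminus S}(\tau)$ and $f^{(n-\ell)}(\beta)=\E[\mu^\tau]{f}$, coincides with the paper's proof.
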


The proof of \Cref{prop:ei-implies-block} is standard. We include it here for completeness.

\begin{proof}[Proof of \Cref{prop:ei-implies-block}]
Fix any function $f: \Omega(\mu) \to \mathds{R}_{\geq 0}$.
We construct $f^{(n)}: \Omega({\pi}) \to \mathds{R}_{\geq 0}$ as that $f^{(n)}(S_\sigma) = f(\sigma)$ for all $\sigma \in \Omega(\mu)$, where $S_\sigma = \{i \mid \sigma_i = \1\} \cup \{\overline{i} \mid \sigma_i = \0\}$.
Let $X$ denote the simplicial complexes generated by $\pi$.
Let $U_{\cdot}$ and $D_{\cdot}$ denote the up walk and down walk on $X$ (\Cref{definition-DUW}).
Let $\pi_{(n)} = \pi$ and $\pi_{(j)} = \pi_{(n)} D_{n \to j}$ for all $0 \leq j < n$.
%
Let $f^{(j)} = U_{j \to n}f^{(n)}$ for all $0 \leq j < n$.

Recall the notation $\mu[\Ent[S]{f}]$ used in \Cref{definition:UBF}:
\[
\mu[\Ent[S]{f}] \triangleq \sum_{\sigma \in \Omega(\mu_{[n] \setminus S})}\mu_{[n]\setminus S}(\sigma)\Ent[\mu^\sigma]{f}.
\]
The following lemma is proved in~\cite{chen2020optimal} (see the proof of Lemma~2.6 in the full version of~\cite{chen2020optimal}).

\begin{lemma}[\text{\cite{chen2020optimal}}]
\label{lemma-UBF=ED}
Let $\mu$ be a distribution over $\{\0,\1\}^{n}$.
For any $0\leq j \leq n$,
it holds that 
\begin{align}
\frac{1}{\binom{n}{j}}\sum_{S \in \binom{[n]}{j}}\mu[\Ent[S]{f}]&= \Ent[\pi_{(n)}]{f^{(n)}} - \Ent[\pi_{(n-j)}]{f^{(n-j)}}.\label{eq-CLV-2}
\end{align}
\end{lemma}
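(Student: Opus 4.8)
The plan is to prove \eqref{eq-CLV-2} by unwinding the definitions of the homogenized complex $X$, the walks $D_{n\to n-j}$ and $U_{n-j\to n}$, and then recognizing the right-hand side as a two-stage (iterated) entropy decomposition that reproduces the average of conditional entropies over all $j$-element blocks $S$. The first step is to set up the dictionary between faces of $X$ at level $n-j$ and the pinnings $\sigma \in \Omega(\mu_{[n]\setminus S})$ appearing on the left-hand side. A face $\tau \in X(n-j)$ contains at most one of $\{i,\bar i\}$ for each $i\in[n]$ and has exactly $n-j$ elements, so it is determined by the set $S=S(\tau)\in\binom{[n]}{j}$ of coordinates it leaves untouched together with a partial configuration $\sigma^\tau \in \Omega(\mu_{[n]\setminus S})$ on the remaining $n-j$ coordinates. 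Since $D_{n\to n-j}$ applied to a full face $S_\sigma$ keeps a uniformly random $(n-j)$-subset — equivalently, forgets a uniformly random $S\in\binom{[n]}{j}$ — this gives the explicit formula $\pi_{(n-j)}(\tau)=\frac{1}{\binom{n}{j}}\mu_{[n]\setminus S}(\sigma^\tau)$. Likewise, $U_{n-j\to n}(\tau,\cdot)$ is supported on the full faces $S_{\sigma'}$ with $\sigma'$ extending $\sigma^\tau$, with probability proportional to $\pi(S_{\sigma'})=\mu(\sigma')$, i.e.\ $U_{n-j\to n}(\tau,\cdot)$ is the conditional distribution $\mu^{\sigma^\tau}$ pushed forward by $\sigma'\mapsto S_{\sigma'}$; since $f^{(n)}(S_{\sigma'})=f(\sigma')$ this yields $f^{(n-j)}(\tau)=\E[\mu^{\sigma^\tau}]{f}$.

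Next I would carry out the entropy bookkeeping. Using the standard local-walk stationarity $\pi_{(n-j)}U_{n-j\to n}=\pi_{(n)}$, drawing $\tau\sim\pi_{(n-j)}$ and then $\alpha\sim U_{n-j\to n}(\tau,\cdot)$ produces $\alpha\sim\pi_{(n)}$; in particular $\E[\pi_{(n-j)}]{f^{(n-j)}}=\E[\pi_{(n)}]{f^{(n)}}=\E[\mu]{f}$, so the two ``$\E\log\E$'' terms in $\Ent[\pi_{(n)}]{f^{(n)}}-\Ent[\pi_{(n-j)}]{f^{(n-j)}}$ cancel, leaving $\E[\pi_{(n)}]{f^{(n)}\log f^{(n)}}-\E[\pi_{(n-j)}]{f^{(n-j)}\log f^{(n-j)}}$. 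Conditioning on $\tau$ and using $f^{(n-j)}(\tau)=\E[U_{n-j\to n}(\tau,\cdot)]{f^{(n)}}$, the contribution of each fixed $\tau$ is exactly $\Ent[U_{n-j\to n}(\tau,\cdot)]{f^{(n)}}$, which by the dictionary above equals $\Ent[\mu^{\sigma^\tau}]{f}$. Averaging over $\tau\sim\pi_{(n-j)}$, substituting $\pi_{(n-j)}(\tau)=\frac{1}{\binom{n}{j}}\mu_{[n]\setminus S}(\sigma^\tau)$, and regrouping as a double sum over $S\in\binom{[n]}{j}$ and $\sigma\in\Omega(\mu_{[n]\setminus S})$ gives $\frac{1}{\binom{n}{j}}\sum_{S}\sum_{\sigma}\mu_{[n]\setminus S}(\sigma)\Ent[\mu^\sigma]{f}=\frac{1}{\binom{n}{j}}\sum_{S\in\binom{[n]}{j}}\mu[\Ent[S]{f}]$, which is \eqref{eq-CLV-2}.

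The routine ingredients here are the explicit descriptions of $\pi_{(n-j)}$ and $f^{(n-j)}$, the cancellation of the mean terms, and the extremal cases ($j=0$ is trivial, and $j=n$ gives $\tau=\emptyset$ and reduces to $\Ent[\mu]{f}=\Ent[\pi_{(n)}]{f^{(n)}}$, which holds since $f^{(n)}$ is just $f$ transported along $\sigma\mapsto S_\sigma$). The step I expect to require the most care — the main, if mild, obstacle — is the clean bijective identification of a face $\tau\in X(n-j)$ with a pinning $\sigma^\tau$ on $[n]\setminus S(\tau)$, together with the verification that the up-walk measure $U_{n-j\to n}(\tau,\cdot)$ coincides, under this bijection, with the conditional distribution $\mu^{\sigma^\tau}$; once this is established, the rest is the entropy chain rule. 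This is precisely the computation carried out in the full version of \cite{chen2020optimal}.
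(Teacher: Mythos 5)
Your proof is correct, and it is essentially the argument of the cited source: the paper itself does not prove this lemma but refers to the proof of Lemma~2.6 in the full version of \cite{chen2020optimal}, whose computation (identify level-$(n-j)$ faces with pairs of a block $S\in\binom{[n]}{j}$ and a pinning $\sigma\in\Omega(\mu_{[n]\setminus S})$, note $\pi_{(n-j)}(\tau)=\tfrac{1}{\binom{n}{j}}\mu_{[n]\setminus S}(\sigma^\tau)$ and $f^{(n-j)}(\tau)=\E[\mu^{\sigma^\tau}]{f}$, then apply the entropy chain rule after the mean terms cancel) is exactly what you reconstruct. No gaps.
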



Note that all conditional marginal distributions induced by $\pi$ are $(1/\alpha)$-entropically independent. By \Cref{theorem-AJK+} and \Cref{lemma-local-etp-decay}, for any $f: \Omega(\mu) \to \mathds{R}_{\geq 0}$ and $0 \le j \le n - \ctp{1/\alpha}$,
  \begin{align}
    \label{eq-proof-e-decay}
    \Ent[\pi_{(j)}]{f^{(j)}}  \leq \tp{1-\kappa\tp{j,n,1/\alpha}}\Ent[\pi_{(n)}]{f^{(n)}},
  \end{align}
where $\kappa(\cdot)$ is defined in~\eqref{eq-def-kappa}. Hence, for any $ \ctp{1/\alpha} \leq \ell \leq n$,
\begin{align*}
  \Ent[\mu]{f} &\overset{(\ast)}{=} \Ent[\pi_{(n)}]{f^{(n)}}
  = \Ent[\pi_{(n)}]{f^{(n)}} -  \Ent[\pi_{(n-\ell)}]{f^{(n-\ell)}} + \Ent[\pi_{(n-\ell)}]{f^{(n-\ell)}}\\
  \text{(by~\eqref{eq-proof-e-decay} and~\Cref{lemma-UBF=ED})}\quad &\leq \frac{1}{\binom{n}{\ell}}\sum_{S \in \binom{[n]}{\ell}}\mu[\Ent[S]{f}] + \tp{1- \kappa\tp{n-\ell,n,1/\alpha}} \Ent[\pi_{(n)}]{f^{(n)}}\\
  &\overset{(\star)}{=} \frac{1}{\binom{n}{\ell}}\sum_{S \in \binom{[n]}{\ell}}\mu[\Ent[S]{f}] + \tp{1- \kappa\tp{n-\ell,n,1/\alpha}} \Ent[\mu]{f}
\end{align*}
where $(\ast)$ and $(\star)$ hold due to the definitions of $\pi_{(n)}$ and $f^{(n)}$. This implies that
\begin{align*}
	\Ent[\mu]{f} \leq \frac{\kappa(n-\ell,n,\frac{1}{\alpha})^{-1}}{\binom{n}{\ell}} \sum_{S \in \binom{[n]}{\ell}}\mu[\Ent[S]{f}]. &\qedhere
\end{align*}
\end{proof}

\subsection{Block factorization of entropy via product domination}
We are now ready to prove \Cref{lemma-pd-to-ubfe}.
\begin{proof}[Proof of \Cref{lemma-pd-to-ubfe}]
 We interpret $\mu$ as a distribution over the power set $2^{[n]}$.
 
 Let $\pi = \mu^{\mathrm{hom}}$  over $\binom{[n]\cup[\bar{n}]}{n}$ be its homogenization.
There is a one-to-one correspondence between conditional distribution in $\mu$ and links of $\pi$. 
Recall the link defined in~\eqref{eq-def-link}.
Fix any link $\pi^R$ of $\pi$. 
%
%
%
%
%
It is straightforward to verify that there exists a partial configuration $\sigma \in \Omega(\mu_\Lambda)$ such that
\begin{align*}
  \tp{\mu^\sigma_{V \setminus \Lambda}}^{\mathrm{hom}} = \pi^R. 
\end{align*}
 By \Cref{lemma-equvalent-EI-PD}, assumption of \Cref{lemma-pd-to-ubfe} and the monotonicity of entropic independence (see~\Cref{definition-ei}),
 the $\ctp{1/\alpha}$-entropical independence holds for all links of $\pi$.
Due to \Cref{prop:ei-implies-block}, for any $\ctp{1/\alpha} \leq \ell \leq n$, 
the distribution $\mu$ satisfies $\ell$-uniform block factorization of entropy with parameter 
 \begin{align*}
 C
 &=\frac{1}{\kappa(n-\ell,n,\ctp{1/\alpha})}
 = \binom{n}{\ctp{1/\alpha}} \Big/ \binom{\ell}{\ctp{1/\alpha}} 
 \leq \tp{\frac{\mathrm{e}n}{\ctp{1/\alpha}}}^{\ctp{1/\alpha}} \Big/ \tp{\frac{\ell}{^{\ctp{1/\alpha}} } }^{\ctp{1/\alpha}} 
 = \tp{\frac{\mathrm{e}n}{\ell}}^{{\ctp{1/\alpha}} } 
 \leq \tp{\frac{\mathrm{e}n}{\ell}}^{\frac{1}{\alpha} + 1}. &\qedhere
 \end{align*}
 \end{proof}

\section{Product Domination from Marginally Stable Spectral Independence}\label{sec:CSI-MS-PD}

%
In this section, we prove \Cref{lemma-suff-condi}, establishing of the product domination property (\Cref{def:product-dominated})
through the spectral independence and marginal stability properties guaranteed in \Cref{condition:CSI-MS-weak}.

We first define the complete spectral independence, which will be used in the following sections.
\begin{definition}[complete spectral independence]\label{definition-complete-SI} 
  Let $\eta > 1$ and $\epsilon > 0$. A distribution $\mu$ over $\{-1,+1\}^n$ is said to be \emph{$(\eta,\epsilon)$-completely spectrally independent}, if $(\*\lambda * \mu)$ is $\eta$-spectrally independent for any $\*\lambda \in (0,1+\epsilon]^n$.
\end{definition} 
Define a function $F_{\mu,\alpha} : \mathds{R}^n_{> 0} \rightarrow \mathds{R}$ by
\begin{align}\label{eq-def-f-mu-a}
 F_{\mu,\alpha}(z_1,z_2,\ldots,z_n) \triangleq \frac{g_\mu(z_1^\alpha,z_2^\alpha,\ldots,z_n^\alpha)^{{\frac{1}{\alpha}}}}{\prod_{i=1}^n \tp{\mu_i(\1) z_i + \mu_i(-1)}}.	
\end{align}

It is not hard to see that $F_{\mu,\alpha}\le 1$ implies that $\mu$ is $(1/\alpha)$-product dominated.
Moreover, the following lemmas transform \Cref{condition:CSI-MS-weak} to the following conditions regarding function $F_{\mu,\alpha}$.

\begin{lemma}\label{lemma-verify-1}
Let $\eta > 1$ and $\epsilon > 0$. 
If a distribution $\mu$ over $\{\0,\1\}^n$ is  $(\eta,\epsilon)$-completely spectrally independent,
then for any $0 < \alpha \leq 1/ (2\eta)$, 
it holds that $F_{\mu,\alpha}(x) \leq 1$ for all $x\in \left(0, (1 + \epsilon)^{1/\alpha}\right]^n$.
\end{lemma}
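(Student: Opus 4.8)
The plan is to establish $F_{\mu,\alpha}(x)\le 1$ by an inductive argument on the number of vertices $n$, peeling off one coordinate at a time, with spectral independence entering through a bound on the logarithmic derivative of $\log g_\mu$ along each coordinate. First I would set up notation: write $x=(x_1,\dots,x_n)$ and, for a scalar $t>0$ and index $i$, let $x|_{i\gets t}$ denote $x$ with the $i$-th coordinate replaced by $t$; write $\mu^{i\gets s}$ for the distribution $\mu$ conditioned on the $i$-th spin being $s\in\{\0,\1\}$. The generating function factors multiplicatively under conditioning: $g_\mu(z)=\mu_i(\1)\,z_i\,g_{\mu^{i\gets\1}}(z_{-i})+\mu_i(\0)\,g_{\mu^{i\gets\0}}(z_{-i})$, and more useful is the identity expressing the marginal ratio after ``magnetizing'' by a field $\theta$ at coordinate $i$, namely $\frac{\partial}{\partial\log z_i}\log g_\mu(z)=\frac{\mu_i^{(z)}(\1)}{1}$ where $\mu^{(z)}$ is $\mu$ magnetized by the local fields $z_j$ — this is the standard fact that the log-derivative of the partition function in a vertex field equals that vertex's magnetized marginal.

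The key analytic step is to control how $\log F_{\mu,\alpha}$ changes as we slide one coordinate $x_i$ from its ``base'' value up to $(1+\epsilon)^{1/\alpha}$. Fixing all coordinates but $x_i$ and writing $z_i=x_i^\alpha$, one computes that $\frac{\d}{\d\log x_i}\log F_{\mu,\alpha}(x) = \frac{1}{\alpha}\cdot\alpha\cdot\big(\text{magnetized marginal of }i\big) - \frac{\mu_i(\1)x_i}{\mu_i(\1)x_i+\mu_i(\0)}$, i.e.\ the difference between the marginal of $i$ in the magnetized distribution $(\*\lambda*\mu)$ (with $\lambda_j$ encoding $z_j$) and the marginal of $i$ in a product surrogate. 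Here is where spectral independence does its work: $\eta$-spectral independence of $(\*\lambda*\mu)$ for all $\*\lambda\in(0,1+\epsilon]^n$ bounds the total influence any set of pinned coordinates can exert on $\mu_i$'s marginal, and this is exactly what is needed to show that the magnetized marginal of $i$ does not exceed its product-surrogate value by more than a factor controlled by $\eta\alpha\le 1/2$. The cleanest route is probably to use the known equivalence (discussed in the excerpt) between $(1/\alpha)$-product domination and $(1/\alpha)$-entropic independence of the homogenization, combined with the fact from \cite{anari2021entropic} that spectral independence $\eta$ upgrades to entropic independence $1/\alpha$ for any $\alpha\le 1/(2\eta)$ — but this only gives product domination on the unit cube $z\in(0,1]^n$, i.e.\ $x\in(0,1]^n$; the content of this lemma is extending the range to $x\in(0,(1+\epsilon)^{1/\alpha}]^n$, which is where the ``for all fields $\*\lambda\in(0,1+\epsilon]^n$'' in complete spectral independence (rather than just $\*\lambda\in(0,1]^n$) gets used.

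So the concrete plan is: (i) invoke the spectral-independence-implies-entropic-independence result on the magnetized distribution $(\*\lambda*\mu)$ for an arbitrary field $\*\lambda\in(0,1+\epsilon]^n$, which by Lemma \ref{lemma-equvalent-EI-PD} (applied to $(\*\lambda*\mu)$ rather than $\mu$) gives that $(\*\lambda*\mu)$ is $(1/\alpha)$-product dominated \emph{at the point $z=(1,\dots,1)$}; (ii) unpack what product domination of $(\*\lambda*\mu)$ at $\mathbf 1$ says about $g_\mu$: since $g_{\*\lambda*\mu}(z)\propto g_\mu(\lambda_1 z_1,\dots,\lambda_n z_n)$, the inequality $g_{\*\lambda*\mu}(\mathbf 1^\alpha)^{1/\alpha}\le\prod_i((\*\lambda*\mu)_i(\1)+(\*\lambda*\mu)_i(\0))$ rearranges, after choosing $\lambda_i = x_i^\alpha$, into precisely $g_\mu(x_1^\alpha,\dots,x_n^\alpha)^{1/\alpha}\le\prod_i(\mu_i(\1)x_i+\mu_i(\0))\cdot(\text{correction})$, and one checks the correction factor is $\le 1$ exactly when $x_i\le(1+\epsilon)^{1/\alpha}$ for all $i$ (so that $\lambda_i\in(0,1+\epsilon]$ is a legitimate field). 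The main obstacle I anticipate is step (ii): getting the normalization constants and the product-surrogate terms to line up so that the ``correction factor'' is genuinely bounded by $1$ on the stated range and not just on $(0,1]^n$; this requires being careful that $(\*\lambda*\mu)_i(\1)/(\*\lambda*\mu)_i(\0)$ relates to $\mu_i(\1)/\mu_i(\0)$ by the factor $\lambda_i$ only in the \emph{single-vertex} sense, whereas the product bound involves the true magnetized marginals, so some inequality comparing $\prod_i((\*\lambda*\mu)_i(\1)+(\*\lambda*\mu)_i(\0))$ with $\prod_i(\mu_i(\1)x_i+\mu_i(\0))$ — or rather working with the ratio $F$ directly and differentiating in the $\lambda$-parameter — will be needed. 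I would handle this by working throughout with the function $F_{\mu,\alpha}$ itself, noting $F_{\mu,\alpha}(x)=F_{\*\lambda*\mu,\alpha}(\mathbf 1)\cdot(\text{ratio of normalizations that telescopes to }1)$, reducing the whole claim to the base case $F_{\nu,\alpha}(\mathbf 1)\le 1$ for $\nu$ ranging over all magnetizations $(\*\lambda*\mu)$, which is exactly the $z=\mathbf 1$ instance of product domination delivered by entropic independence.
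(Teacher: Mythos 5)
The central reduction in your plan does not work: product domination of the magnetized distribution $\*\lambda*\mu$ evaluated at the single point $z=\mathbf{1}$ is vacuous. Since $g_{\*\lambda*\mu}(\mathbf{1})=1$ and $(\*\lambda*\mu)_i(\1)+(\*\lambda*\mu)_i(\0)=1$ for every $i$, the ``$z=\mathbf{1}$ instance'' you invoke is just the identity $1\le 1$, and $F_{\*\lambda*\mu,\alpha}(\mathbf{1})=1$ holds trivially for every distribution, with or without spectral independence. Correspondingly, your claimed factorization $F_{\mu,\alpha}(x)=F_{\*\lambda*\mu,\alpha}(\mathbf{1})\cdot(\text{ratio of normalizations that telescopes to }1)$ with $\lambda_i=x_i^\alpha$ is false: writing $g_{\*\lambda*\mu}(z)=g_\mu(\lambda_1z_1,\dots,\lambda_nz_n)/g_\mu(\*\lambda)$, the ``ratio'' is exactly $g_\mu(x^\alpha)^{1/\alpha}\big/\prod_i\tp{\mu_i(\1)x_i+\mu_i(\0)}=F_{\mu,\alpha}(x)$, i.e.\ the very quantity to be bounded. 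The difficulty you flagged in step (ii) is not a bookkeeping issue but the whole content of the lemma: the right-hand side of product domination is built from the \emph{original} marginals $\mu_i(\pm1)$, while magnetization changes the marginals in a way that pinning-by-pinning spectral independence does not translate into the pointwise comparison you need. Your fallback, integrating the (correctly computed) derivative $\frac{\d}{\d\log x_i}\log F_{\mu,\alpha}=(x^\alpha*\mu)_i(\1)-\frac{\mu_i(\1)x_i}{\mu_i(\1)x_i+\mu_i(\0)}$ coordinate by coordinate, also does not close the gap on the box $\left(0,(1+\epsilon)^{1/\alpha}\right]^n$, because this derivative is not sign-definite there; its sign is only controlled for large $x_i$ via marginal stability, which is the separate Lemma~\ref{lemma-verify-2}.

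What the paper does instead is a global convexity argument rather than a point evaluation or a monotone path. Complete spectral independence for all fields in $(0,1+\epsilon]^n$ gives, via the correlation-matrix spectral bound (\Cref{lemma-sp-AASV}, \Cref{lem:CP-SR-GE0} and \cite{anari2021entropicII}), that $\log g_{\mu^{\mathrm{hom}}}(z^\alpha)$ is concave on the cone $\Lambda_{\alpha,\epsilon}=\{z\in\mathds{R}_{>0}^{2n}\mid z_i\le z_{\bar i}(1+\epsilon)^{1/\alpha}\}$ (\Cref{lemma-aflc}). Then the $1$-homogeneous function $f(z)=g_{\mu^{\mathrm{hom}}}(z^\alpha)^{1/(\alpha n)}$ is concave on $\Lambda_{\alpha,\epsilon}$, hence lies below its tangent plane at $\mathbf{1}$; since $\nabla f(\mathbf{1})$ consists of the marginals of $\mu^{\mathrm{hom}}$, this tangent-plane inequality is exactly $(1/\alpha)$-entropic independence on $\Lambda_{\alpha,\epsilon}$, which \Cref{lemma-equvalent-EI-PD} converts into product domination on $\left(0,(1+\epsilon)^{1/\alpha}\right]^n$, i.e.\ $F_{\mu,\alpha}\le1$ there. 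This supporting-hyperplane step anchored at $\mathbf{1}$, valid on the entire extended cone where log-concavity holds, is the idea missing from your proposal; to repair your argument you would need to supply it (or an equivalent global mechanism), not just the equivalence lemma and the spectral-to-entropic upgrade.
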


\begin{lemma}\label{lemma-verify-2}
Let $\zeta >1$.
If  a distribution $\mu$  over $\{\0,\1\}^n$ is $\zeta$-marginally stable,
then for any $\alpha \in (0,1)$, any $x \in \mathds{R}_{> 0}^n$, any $i \in [n]$, 
if $x_i \geq (2\zeta)^{\frac{1}{1 - \alpha}} $, then
    \begin{align*}
    \frac{\partial F_{\mu,\alpha}}{\partial z_i}	\bigg|_{z = x} \leq 0. 
    \end{align*}
\end{lemma}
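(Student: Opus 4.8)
The plan is to compute $\log F_{\mu,\alpha}$ and differentiate in $z_i$, reducing the claim to a comparison between a conditional marginal probability and an unconditional one. Write $F_{\mu,\alpha}(z) = g_\mu(z_1^\alpha,\dots,z_n^\alpha)^{1/\alpha} / \prod_j (\mu_j(\1) z_j + \mu_j(\0))$, so that
\begin{align*}
\frac{\partial \log F_{\mu,\alpha}}{\partial z_i} = \frac{1}{\alpha}\cdot \frac{\partial_i\, g_\mu(z_1^\alpha,\dots,z_n^\alpha)}{g_\mu(z_1^\alpha,\dots,z_n^\alpha)} - \frac{\mu_i(\1)}{\mu_i(\1) z_i + \mu_i(\0)}.
\end{align*}
The key observation is that $g_\mu$ is multilinear in its arguments; writing $w_j = z_j^\alpha$ and using $\partial_{z_i} = \alpha z_i^{\alpha-1}\partial_{w_i}$, the first term becomes $z_i^{\alpha-1} \cdot \partial_{w_i} g_\mu(w)/g_\mu(w)$, and by multilinearity $w_i\,\partial_{w_i} g_\mu(w)/g_\mu(w)$ is exactly the probability $\nu_i(\1)$ that coordinate $i$ is $\1$ under the tilted distribution $\nu$ on $\{\0,\1\}^n$ with $\nu(\sigma) \propto \mu(\sigma)\prod_{j:\sigma_j=\1} w_j = (\mathbf{w}*\mu)(\sigma)$. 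So the first term equals $z_i^{-1}\nu_i(\1)$. Thus $\partial_i F_{\mu,\alpha}|_{z=x}\le 0$ is equivalent (after multiplying by the positive quantity $F_{\mu,\alpha}(x)$) to
\begin{align*}
\frac{\nu_i(\1)}{x_i} \le \frac{\mu_i(\1)}{\mu_i(\1) x_i + \mu_i(\0)}, \quad\text{i.e.}\quad R^{\nu}_i \le x_i^{1-\alpha}\, \Big(\text{something involving } R^{\mu}_i\Big),
\end{align*}
which I would rewrite in terms of marginal ratios: with $R^{\nu}_i = \nu_i(\1)/\nu_i(\0)$ and $R^{\mu}_i = \mu_i(\1)/\mu_i(\0)$, the inequality $\nu_i(\1)/x_i \le \mu_i(\1)/(\mu_i(\1)x_i+\mu_i(\0))$ rearranges to $R^{\nu}_i \le R^{\mu}_i \cdot x_i / x_i = \dots$; more carefully one gets $R^\nu_i\, \le\, x_i^{-1}\big(x_i\mu_i(\1)+\mu_i(\0)\big)\big/\mu_i(\0)\cdot R^\mu_i / (1+R^\mu_i)\cdot\dots$ — I will do this algebra cleanly, but the upshot is a bound of the form $R^\nu_i \le C\cdot x_i^{\alpha-1}\cdot\big(x_i R^\mu_i + 1\big)$ or similar.

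The next step is to control $R^\nu_i$, the marginal ratio of coordinate $i$ under the magnetized distribution $(\mathbf{w}*\mu)$ with $w_j = x_j^\alpha$. Since $\mathbf{w} = (x_1^\alpha,\dots,x_n^\alpha)\in\mathds{R}^n_{>0}$ is an arbitrary positive field, I cannot use a subcritical-field hypothesis; instead I will express $R^\nu_i$ in terms of $\mu$ via a pinning/summation argument. Concretely, magnetizing all coordinates other than $i$ and then averaging, one has $R^\nu_i = w_i \cdot \E[\text{over configs on }[n]\setminus\{i\}]{R^{\mu,\sigma}_i}$-type formula, but cleaner: $(\mathbf{w}*\mu)$ restricted to the other coordinates is itself a magnetization of $\mu_{[n]\setminus\{i\}}$, and $R^\nu_i = w_i\cdot \frac{\sum_{\tau}(\mathbf{w}_{-i}*\mu_{-i})(\tau)\,\mu_i^{\tau}(\1)}{\sum_\tau (\mathbf{w}_{-i}*\mu_{-i})(\tau)\,\mu_i^{\tau}(\0)}$; since each $\mu_i^\tau(\1)/\mu_i^\tau(\0) = R^\tau_i \le \zeta$ by $\zeta$-marginal stability (taking $\Lambda$ to be all of $[n]\setminus\{i\}$), we get a bound $R^\nu_i \le w_i\,\zeta\,\cdot(\text{ratio of the two sums}) \le \zeta\, w_i\, \cdot \zeta$ using the second marginal-stability inequality $R^\tau_i \le \zeta R^{\tau_S}_i$ with $S=\emptyset$ to relate the two averaged sums — yielding $R^\nu_i \le \zeta^2 w_i = \zeta^2 x_i^\alpha$ (or with a single $\zeta$, depending on which stability bound is needed; I will check which is tight).

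Finally I plug $R^\nu_i \le \zeta^2 x_i^\alpha$ (or $\le 2\zeta x_i^\alpha$ after bookkeeping) into the rearranged target inequality. Since $R^\mu_i \ge 0$, it suffices to check $\zeta^2 x_i^\alpha \le \text{const}\cdot x_i^{\alpha-1}\cdot(x_i R^\mu_i + 1)$; dropping the nonnegative $x_i R^\mu_i$ term, this reduces to $x_i^\alpha \le \text{const}\cdot x_i^{\alpha-1}$, i.e. $x_i \le \text{const}$ — but we want the opposite direction of $x_i$, so in fact the correct reduction must use $\mu_i(\0)\le 1$ to get $\mu_i(\1)/(\mu_i(\1)x_i+\mu_i(\0)) \ge \mu_i(\1)\cdot\frac{1}{x_i\mu_i(\1)+1}\ge\dots$; the point is that the target inequality becomes $2\zeta x_i^{\alpha}\le x_i\cdot\frac{\mu_i(\1)}{\mu_i(\1)x_i+\mu_i(\0)}$-type, and since $\frac{x_i\mu_i(\1)}{\mu_i(\1)x_i+\mu_i(\0)}\ge \frac{1}{2}$ once $x_i\mu_i(\1)\ge\mu_i(\0)$, it suffices that $x_i^{1-\alpha}\ge 4\zeta$ in that regime, while if $x_i\mu_i(\1)<\mu_i(\0)$ a direct estimate handles it; either way the hypothesis $x_i\ge(2\zeta)^{1/(1-\alpha)}$ closes the gap. \textbf{The main obstacle} is the bookkeeping in Step~2: correctly expressing $R^\nu_i$ for the fully-magnetized $(\mathbf{w}*\mu)$ and extracting the right power of $\zeta$ and $x_i$ from the two marginal-stability inequalities, since the ratio of the two averaged sums is where both the one-sided lower bound and the one-sided correlation-decay parts of $\zeta$-marginal stability must be invoked simultaneously. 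I expect Steps~1 and~3 to be routine algebra once Step~2 gives the clean bound $R^\nu_i\le \zeta^{O(1)} x_i^\alpha$.
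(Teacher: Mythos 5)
Your overall skeleton is the same as the paper's: differentiate $F_{\mu,\alpha}$, recognize the generating-function term as the tilted marginal $(x^\alpha*\mu)_i(\1)$, reduce the sign condition to the ratio comparison $R_i^\nu \le x_i\,\mu_i(\1)/\mu_i(\0)$ for $\nu=(x^\alpha*\mu)$, and bound $R_i^\nu$ by $x_i^\alpha$ times a maximum of conditional ratios $R_i^\tau$ over pinnings $\tau$ of $[n]\setminus\{i\}$ (a mediant inequality). Up to that point your Steps 1--2 match the paper's proof.

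The gap is in what you extract from marginal stability. Your announced output of Step 2 is a bound of the form $R_i^\nu \le \zeta^{O(1)} x_i^\alpha$, and this is genuinely too weak: the target $R_i^\nu \le x_i R_i$ with $R_i=\mu_i(\1)/\mu_i(\0)$ scales linearly in $R_i$, so any bound on $R_i^\nu$ that does not carry a factor of $R_i$ (equivalently of $\mu_i(\1)$) cannot imply it once $\mu_i(\1)$ is small, no matter how large the constant threshold on $x_i$ is. Concretely, take $\mu$ a product measure with $\mu_i(\1)=\epsilon$: then $R_i^\nu = x_i^\alpha R_i$ and the target holds trivially, but $\zeta^2 x_i^\alpha \le x_i R_i$ fails as $\epsilon\to 0$, so your Step 3 implication breaks; and in your case split, the regime $x_i\mu_i(\1)<\mu_i(\0)$ is exactly this bad regime, where the promised ``direct estimate'' is the entire missing content. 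The repair is to keep the $R_i$ factor when you invoke the decay part of marginal stability: for every pinning $\tau$ of $[n]\setminus\{i\}$, the second inequality with $S=\emptyset$ gives $R_i^\tau \le \zeta R_i$ (the paper combines both inequalities, $R_i^\tau\le\zeta$ and $R_i^\tau\le\zeta R_i$, into $R_i^\tau \le 2\zeta\,\mu_i(\1)$), so the mediant bound yields $R_i^\nu \le x_i^\alpha\cdot 2\zeta\,\mu_i(\1) \le x_i^\alpha\cdot 2\zeta\, R_i$, and then $\mu_i(\0)\le 1$ together with the hypothesis $x_i^{1-\alpha}\ge 2\zeta$ gives $R_i^\nu \le x_i R_i$ with no case analysis at all. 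With that correction your argument coincides with the paper's proof.
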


The complete spectral independence implies the product domination in $\left(0, (1 + \epsilon)^{1/\alpha}\right]^n$ through \Cref{lemma-verify-1},
which is extended to all $ \mathds{R}_{> 0}^n$ through the monotonicity in \Cref{lemma-verify-2} implied by the marginal stability.

  It remains to ensure the complete spectral independence and the marginal stability in \Cref{condition:CSI-MS-weak} closed under pinning, which is straightforward because their definitions already  consider all pinnings.
  \begin{fact}\label{fact:pinning-SI}
    Let $\eta>1,\zeta$ and $\epsilon>0$. If a distribution $\mu$ over $\{\0,\1\}^n$ is  $(\eta,\epsilon)$-completely spectrally independent and $\zeta$-marginally stable, then
    these properties also hold for $\mu^\sigma_{[n] \setminus\Lambda}$ for arbitrary $\Lambda \subseteq [n]$ and $\sigma \in \Omega(\mu_{\Lambda})$.
  \end{fact}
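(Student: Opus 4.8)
The plan is to verify the two asserted properties of $\mu^\sigma_{[n]\setminus\Lambda}$ separately, in each case reducing a pinning of $\mu^\sigma_{[n]\setminus\Lambda}$ to a pinning of $\mu$ itself. The two facts doing the work are: (i) conditioning composes, so that $\bigl(\mu^\sigma\bigr)^\tau = \mu^{\sigma\cup\tau}$ for compatible partial configurations, and hence a pinning of $\mu^\sigma_{[n]\setminus\Lambda}$ is nothing but a pinning of $\mu$ that extends $\sigma$; and (ii) magnetizing by external fields commutes with conditioning.

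For the complete spectral independence I would first record the commutation identity. Given $\*\lambda' \in (0,1+\epsilon]^{[n]\setminus\Lambda}$, extend it to $\*\lambda \in (0,1+\epsilon]^{n}$ by setting $\lambda_i = 1$ for each $i \in \Lambda$ (admissible since $\epsilon>0$). A one-line computation from \eqref{eq:definition-local-fields} shows $\*\lambda' * \bigl(\mu^\sigma_{[n]\setminus\Lambda}\bigr) = (\*\lambda*\mu)^\sigma_{[n]\setminus\Lambda}$, because the local-field weights contributed by the coordinates in $\Lambda$ are constant once $\sigma$ is fixed and are absorbed into the normalization. Now by $(\eta,\epsilon)$-complete spectral independence of $\mu$, the distribution $(\*\lambda*\mu)$ is $\eta$-spectrally independent; and $\eta$-spectral independence (\Cref{definition-SI}) is by its very form closed under further pinning, since the pinnings of $(\*\lambda*\mu)^\sigma$ are exactly the pinnings of $(\*\lambda*\mu)$ extending $\sigma$, and because $\Psi_{\mu^\sigma}$ has zero rows and columns at the coordinates pinned by $\sigma$, so its spectral radius equals that of the influence matrix of the marginal $(\*\lambda*\mu)^\sigma_{[n]\setminus\Lambda}$. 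Hence $\*\lambda' * \mu^\sigma_{[n]\setminus\Lambda}$ is $\eta$-spectrally independent, and letting $\*\lambda'$ range over $(0,1+\epsilon]^{[n]\setminus\Lambda}$ gives the complete spectral independence of $\mu^\sigma_{[n]\setminus\Lambda}$.

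For marginal stability I would take any $i \in [n]\setminus\Lambda$, any $S \subseteq \Gamma \subseteq ([n]\setminus\Lambda)\setminus\{i\}$, and any $\tau \in \Omega\bigl((\mu^\sigma_{[n]\setminus\Lambda})_\Gamma\bigr)$, and note that the marginal ratio $R_i$ of $i$ in $\mu^\sigma_{[n]\setminus\Lambda}$ under the pinning $\tau$ coincides with the marginal ratio $R_i^{\sigma\cup\tau}$ in $\mu$ under the combined pinning on $\Lambda\cup\Gamma$, and similarly the analogue of $R_i^{\tau_S}$ coincides with $R_i^{\sigma\cup\tau_S}$. Since $\Lambda\cup\Gamma \subseteq [n]\setminus\{i\}$ and $\Lambda\cup S \subseteq \Lambda\cup\Gamma$ with $(\sigma\cup\tau)|_{\Lambda\cup S} = \sigma\cup\tau_S$, applying $\zeta$-marginal stability of $\mu$ (\Cref{def:bounded-distribution}) with index $i$, sets $\Lambda\cup S \subseteq \Lambda\cup\Gamma$, and configuration $\sigma\cup\tau$ yields precisely $R_i^{\sigma\cup\tau}\le\zeta$ and $R_i^{\sigma\cup\tau}\le\zeta\,R_i^{\sigma\cup\tau_S}$, which are exactly the two inequalities needed for $\zeta$-marginal stability of $\mu^\sigma_{[n]\setminus\Lambda}$.

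I do not expect a genuine obstacle here; the content is entirely definitional, which is why the paper states it as a \emph{fact}. The only two places that deserve a line of care are the commutation identity $\*\lambda' * \mu^\sigma_{[n]\setminus\Lambda} = (\*\lambda*\mu)^\sigma_{[n]\setminus\Lambda}$ (and checking $1\in(0,1+\epsilon]$), and the observation that $\Psi$ acquires zero rows/columns at pinned coordinates so that zero-padding does not affect the spectral radius — this is what makes the already-``for-all-pinnings'' form of \Cref{definition-SI} close under an additional pinning. I would keep the write-up to just a few lines.
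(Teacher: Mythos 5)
Your proposal is correct and is exactly the elaboration the paper has in mind: since \Cref{definition-SI} and \Cref{def:bounded-distribution} are already quantified over all pinnings, conditioning composes, and external fields commute with conditioning (the field extended by $1\in(0,1+\epsilon]$ on $\Lambda$ being absorbed into the normalization), the fact follows definitionally — the paper gives no separate proof beyond this observation. The two points you flag (the commutation identity and the zero rows/columns at pinned coordinates so the spectral radius is unchanged) are precisely the right details to record.
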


  \begin{proof}[Proof of \Cref{lemma-suff-condi}]
  Denote $\nu = \mu^\sigma_{[n] \setminus \Lambda}$. 
  Without loss of generality, suppose $[n] \setminus \Lambda = [m] = \{1,2,\ldots,m\}$.
  By the definition of the function $F_{\nu,\alpha}$, it suffices to show that 
  \begin{align*}
  	\forall x \in \mathds{R}_{>0}^m,\quad F_{\nu,\alpha}(x) \leq 1.
  \end{align*}
  Denote $D=\left(0, (1 + \epsilon)^{1/\alpha}\right]^m$.
  By \Cref{fact:pinning-SI} and \Cref{lemma-verify-1}, since $\alpha \le 1/(2\eta)$, $F_{\nu,\alpha}(x) \le 1$ for any $x \in D$. 
  
  Therefore, it remains to take care of those $x \not \in D$.
  Fix an arbitrary $x \in  \mathds{R}_{>0}^m\setminus D$. 
  Define $\tilde{x} \in \mathds{R}_{>0}^m$ as that $\tilde{x}_i=\min\{x_i, (1 + \epsilon)^{1/\alpha}\}$ for all $i\in[m]$.
  Obviously $\tilde{x} \in D$, and hence $F_{\nu,\alpha}(\tilde{x}) \leq 1$.
  We only need to show that $F_{\nu,\alpha}({x})\le F_{\nu,\alpha}(\tilde{x})$.
  
  Denote $M=\{i\in[m]\mid x_i>(1 + \epsilon)^{1/\alpha}\}$.  
 By mean value theorem, there exists $\theta \in (0,1)$ such that
 \begin{align*}
    F_{\nu,\alpha}(x) - F_{\nu,\alpha}(\tilde{x}) 
    = \left\langle(x-\tilde{x}), \nabla F_{\nu,\alpha}(\theta x + (1-\theta) \tilde{x})\right\rangle 
    = \sum_{i \in M}^m (x_i - \tilde{x}_i)\frac{\partial F}{\partial z_i}\bigg|_{z = \theta x + (1-\theta) \tilde{x}},
  \end{align*}
  where the last equation holds because $x_i = \tilde{x}_i$ for all $i\in [m]\setminus M$.
  Fix any $i \in M$.
  It holds that $x_i > \tilde{x}_i = (1 + \epsilon)^{1/\alpha}$, thus $(\theta x + (1-\theta) \tilde{x})_i > (1 + \epsilon)^{1/\alpha}$.
    Note that by the choice of $\alpha$ in \eqref{eq:alpha-def}, it holds that 
  \begin{align} \label{eq:alpha-aux}
    (2\zeta)^{\frac{1}{1-\alpha}} \leq (1 + \epsilon)^{1/\alpha} < (\theta x + (1 - \theta)\tilde{x})_i.
  \end{align}
  Combining \eqref{eq:alpha-aux} with \Cref{fact:pinning-SI} and \Cref{lemma-verify-2}, we have
  \begin{align*}
  	F_{\nu,\alpha}(x) - F_{\nu,\alpha}(\tilde{x}) = \sum_{i \in M}^m (x_i - \tilde{x}_i)\frac{\partial F}{\partial z_i}\bigg|_{z = \theta x + (1-\theta) \tilde{x}} \leq 0.
  \end{align*}
 Hence $F_{\nu,\alpha}(x) \leq F_{\nu,\alpha}(\tilde{x}) \leq 1$. This concludes the proof.
  \end{proof}

\subsection{Fractional log-concavity from complete spectral independence (proof of \Cref{lemma-verify-1})}
The following lemma was implicit in \cite{anari2021entropicII, AASV21,anari2021entropic}.

\begin{lemma}\label{lemma-aflc}
Let $\eta > 1$ and $\epsilon > 0$. 
If a distribution $\mu$ over $\{\0,\1\}^n$ is $(\eta,\epsilon)$-completely spectrally independent, 
then for any $0 < \alpha \leq 1/(2\eta)$, 
the function $\log g_{\mu^{\mathrm{hom}}}(z_1^\alpha,\ldots,z_n^\alpha,z_{\bar{1}}^\alpha,\ldots,g_{\bar{n}}^\alpha)$
is concave on 
\[
\Lambda_{\alpha,\epsilon} \triangleq \left\{(z_1,\ldots,z_{n},z_{\bar{1}},\ldots,z_{\bar{n}}) \in \mathds{R}_{>0}^{2n} \mid  \forall i \in [n], 0 < z_i \leq z_{\bar{i}}(1+\epsilon)^{1/\alpha}\right\},
\]
where ${\mu^{\mathrm{hom}}}$ is $\mu$'s homogenization over $\binom{[n]\cup [\bar{n}]}{n}$ and $g_{\mu^{\mathrm{hom}}}$ is its generating function.
\end{lemma}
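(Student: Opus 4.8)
The plan is to reduce the concavity assertion to a spectral bound on a covariance matrix and then to read that bound off from complete spectral independence. Set $h(z) \triangleq \log g_{\mu^{\mathrm{hom}}}(z_1^\alpha,\dots,z_n^\alpha,z_{\bar 1}^\alpha,\dots,z_{\bar n}^\alpha)$, a smooth function on $\mathds{R}_{>0}^{2n}$ (smooth because $g_{\mu^{\mathrm{hom}}}$ has nonnegative coefficients and is not identically zero). Since $\Lambda_{\alpha,\epsilon}$ is convex — being cut out by the linear inequalities $z_i \le (1+\epsilon)^{1/\alpha}z_{\bar i}$ together with positivity — it suffices to show $\nabla^2 h(z)\preceq 0$ at every $z\in\Lambda_{\alpha,\epsilon}$. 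Fix such a $z$ and let $\nu=\nu_z$ be the distribution over subsets of $[n]\cup[\bar n]$ obtained from $\mu^{\mathrm{hom}}$ by tilting the weight of $S_\sigma$ by $\prod_{a\in S_\sigma}z_a^\alpha$ and renormalizing. Unwinding the definition of the homogenization, $\nu$ is exactly $(\*\lambda*\mu)^{\mathrm{hom}}$ with $\lambda_i=(z_i/z_{\bar i})^\alpha$; and $z\in\Lambda_{\alpha,\epsilon}$ says precisely that $\*\lambda\in(0,1+\epsilon]^n$, so $(\*\lambda*\mu)$ is $\eta$-spectrally independent.

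\textbf{The Hessian.} Writing $g=g_{\mu^{\mathrm{hom}}}$ and using that $g$ is multi-affine (so all its pure second partial derivatives vanish), differentiating $h$ and invoking the identities $\frac{z_a^\alpha(\partial_a g)(z^\alpha)}{g(z^\alpha)}=\Pr[S\sim\nu]{a\in S}$ together with its second-order analogue yields
\[
 \nabla^2 h(z) \;=\; \alpha\, D^{-1}\bigl(\alpha\,\Sigma_\nu-\Phi_\nu\bigr)D^{-1},
\]
where $D=\mathrm{diag}\bigl((z_a)_a\bigr)$, $\Sigma_\nu$ is the covariance matrix of the indicator vector $\bigl(\one{a\in S}\bigr)_a$ under $\nu$, and $\Phi_\nu=\mathrm{diag}\bigl((\Pr[S\sim\nu]{a\in S})_a\bigr)$. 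Hence $\nabla^2 h(z)\preceq 0$ iff $\alpha\,\Sigma_\nu\preceq\Phi_\nu$. Discarding the coordinates $a$ that are $\nu$-almost surely determined (they give zero rows on both sides), we may assume $\Phi_\nu\succ 0$, so the condition becomes $\rho\bigl(\Phi_\nu^{-1/2}\Sigma_\nu\Phi_\nu^{-1/2}\bigr)\le 1/\alpha$; since $\alpha\le 1/(2\eta)$, it suffices to prove $\rho\bigl(\Phi_\nu^{-1/2}\Sigma_\nu\Phi_\nu^{-1/2}\bigr)\le 2\eta$.

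\textbf{Entering spectral independence.} The symmetric matrix $\Phi_\nu^{-1/2}\Sigma_\nu\Phi_\nu^{-1/2}$ is similar to $M\triangleq\Phi_\nu^{-1}\Sigma_\nu$, whose $(a,b)$ entry equals $\Pr[S\sim\nu]{b\in S\mid a\in S}-\Pr[S\sim\nu]{b\in S}$. Conditioning on the underlying spin, for $a,b$ over distinct vertices $i\ne j$ (with $a\leftrightarrow(\sigma_i=s)$ and $b\leftrightarrow(\sigma_j=t)$) one has $|M_{ab}|=\Pr{\sigma_i=\bar s}\cdot\big|\Pr{\sigma_j=t\mid\sigma_i=s}-\Pr{\sigma_j=t\mid\sigma_i=\bar s}\big|\le\DTV{(\*\lambda*\mu)_j^{i\gets s}}{(\*\lambda*\mu)_j^{i\gets\bar s}}\le\Psi_{\*\lambda*\mu}(i,j)$, while for $a,b$ over the same vertex $i$ (using $\one{i\in S}+\one{\bar i\in S}=1$) one checks $|M_{ab}|\le 1=\Psi_{\*\lambda*\mu}(i,i)$. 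Thus $M$ is entrywise dominated in absolute value by the $2\times 2$-block inflation of $\Psi_{\*\lambda*\mu}$, namely $\Psi_{\*\lambda*\mu}\otimes J$ with $J$ the $2\times 2$ all-ones matrix (under the reindexing $i,\bar i\mapsto i$). The eigenvalues of $\Psi_{\*\lambda*\mu}\otimes J$ are the products of those of $\Psi_{\*\lambda*\mu}$ with $\{0,2\}$, so $\rho(\Psi_{\*\lambda*\mu}\otimes J)=2\,\rho(\Psi_{\*\lambda*\mu})\le 2\eta$ by the spectral independence of $(\*\lambda*\mu)$. Perron--Frobenius monotonicity then gives $\rho(M)\le\rho(|M|)\le\rho(\Psi_{\*\lambda*\mu}\otimes J)\le 2\eta$, which is exactly what we needed.

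\textbf{Main obstacle.} The delicate point is calibrating the constant in the passage from the Hessian to the hypothesis: one must correctly match the covariance-and-marginal data appearing in $\nabla^2 h$ to the total-variation--based absolute influence matrix $\Psi$ of \Cref{definition-SI}, and, crucially, account for the factor $2$ incurred by homogenization (each vertex $i$ being split into the perfectly anti-correlated pair $i,\bar i$), which is precisely what forces the threshold $\alpha\le 1/(2\eta)$ rather than $1/\eta$. The Hessian computation itself is routine, but it must be carried out with care so that the multi-affineness of $g_{\mu^{\mathrm{hom}}}$ and the $z\mapsto z^\alpha$ substitution interact correctly; the handling of degenerate coordinates (almost-surely determined spins) is a minor but necessary bookkeeping step.
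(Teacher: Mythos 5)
Your argument is correct as a proof of the stated lemma, and in substance it runs parallel to the paper's: your Hessian computation shows that concavity of $\log g_{\mu^{\mathrm{hom}}}(z^\alpha)$ at a point $z\in\Lambda_{\alpha,\epsilon}$ is equivalent to $\lambda_{\max}\tp{\Phi_\nu^{-1}\Sigma_\nu}\le 1/\alpha$ for the tilted distribution $\nu$, and the matrix $M=\Phi_\nu^{-1}\Sigma_\nu$ you arrive at is exactly the signed correlation matrix $\Psi^{\mathrm{Cor}}_\nu$ of \Cref{definition-correlation-matrix} applied to $\nu=\vec{z}^{\,\alpha}*\mu^{\mathrm{hom}}$, which is the homogenization of $\*\lambda*\mu$ with $\lambda_i=(z_i/z_{\bar i})^\alpha\le 1+\epsilon$. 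The paper imports precisely this reduction from the proof of Proposition~19 in \cite{anari2021entropicII} rather than re-deriving it, so your explicit derivation is a more self-contained version of the same first step. Where you genuinely diverge is the final spectral bound: the paper uses the exact spectrum identity of \Cref{lemma-sp-AASV} (eigenvalues of $\Psi^{\mathrm{Cor}}$ of a homogenization are those of the signed influence matrix shifted by $+1$, plus $n$ zeros) together with \Cref{lem:CP-SR-GE0}, yielding $\lambda_{\max}\le\eta+1$, whereas you use the cruder entrywise domination $\abs{M}\le\Psi_{\*\lambda*\mu}\otimes J$ and the Kronecker eigenvalue fact, yielding $\rho(M)\le 2\rho(\Psi_{\*\lambda*\mu})\le 2\eta$. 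Both bounds fit under $\alpha\le 1/(2\eta)$ because $\eta>1$, so your argument closes; the paper's version is slightly sharper and, as noted below, more robust.

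The one step you should double-check is the domination of the same-vertex $2\times 2$ blocks, which uses $\Psi_{\*\lambda*\mu}(i,i)=1$. This is what the literal \Cref{definition-SI} gives (for a non-degenerate, unpinned vertex the two conditional marginals at $i$ itself are point masses at distance $1$), so under that reading your proof is fine. However, the remark following \Cref{definition-sign-inf-matrix} asserts $\Psi_\mu(i,j)=\abs{\Psi^{\mathrm{Inf}}_\mu(i,j)}$, i.e.\ a zero diagonal, and that is also the convention under which the spectral-independence constants are verified in Section~8. Under the zero-diagonal convention your dominating matrix becomes $(\Psi_{\*\lambda*\mu}+I)\otimes J$, whose spectral radius is $2(\rho(\Psi_{\*\lambda*\mu})+1)\le 2\eta+2$, and this is not $\le 1/\alpha$ when $\alpha$ is as large as $1/(2\eta)$; even handling the diagonal blocks exactly (each has spectral radius $1$) only improves this to roughly $2\eta+1$. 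In that case you would need the finer identity of \Cref{lemma-sp-AASV}, as the paper does, to recover $\eta+1$. A minor bookkeeping point: for coordinates with $\nu$-marginal equal to $1$ the rows of $\alpha\Sigma_\nu-\Phi_\nu$ are not ``zero on both sides'' (the diagonal entry is $-1$); they form a nonpositive diagonal block disjoint from the rest, so discarding them is still legitimate, but the justification should be stated that way.
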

The concavity property in \Cref{lemma-aflc}  is called the ``$\alpha$-fractional log-concavity''
of ${\mu^{\mathrm{hom}}}$ \cite{AASV21,anari2021entropic}.
%
%
\begin{proof}[Proof of \Cref{lemma-verify-1}]
Fix $0 < \alpha \leq 1/(2\eta)$.
Define the 1-homogeneous function as
\begin{align*}
f(z_1,\ldots,z_{n},z_{\bar{1}},\ldots,z_{\bar{n}})=g_{\mu^{\mathrm{hom}}}(z_1^\alpha,\ldots,z_n^\alpha,z_{\bar{1}}^\alpha,\ldots,z_{\bar{n}}^\alpha)^{\frac{1}{\alpha n}}.
\end{align*}
Note that $f$ is concave on $\Lambda_{\alpha,\epsilon}$.
This is because $g_{\mu^{\mathrm{hom}}}(z_1^\alpha,\ldots,z_n^\alpha,z_{\bar{1}}^\alpha,\ldots,g_{\bar{n}}^\alpha)$ is $\alpha n$-homogeneous,
and by~\Cref{lemma-aflc}, it is also log-concave as a function of $(z_1,\ldots,z_{n},z_{\bar{1}},\ldots,z_{\bar{n}})\in \Lambda_{\alpha,\epsilon}$,
which implies the concavity of $f$ on $\Lambda_{\alpha,\epsilon}$ by \cite[Lemma 25]{anari2021entropic}.
Therefore, for any $(z_1,\ldots,z_{n},z_{\bar{1}},\ldots,z_{\bar{n}}) \in \Lambda_{\alpha,\epsilon}$,
\begin{align*}
f(z_1,\ldots,z_{n},z_{\bar{1}},\ldots,z_{\bar{n}}) 
&\leq f(1,1,\ldots,1) + \sum_{i \in [n] \cup [\bar{n}]}\frac{\partial f }{\partial z_i}(1,1,\ldots,1) (z_i - 1) \\
&= \sum_{i \in [n] \cup [\bar{n}]}\frac{\partial f }{\partial z_i}(1,1,\ldots,1)z_i,
\end{align*}
where the equation holds since $f$ is 1-homogeneous.
Note that  $\mu^{\mathrm{hom}}$ is a distribution over $\binom{[n]\cup [\bar{n}]}{n}$ and 
 $\frac{\partial f }{\partial z_i}(1,1,\ldots,1) = \frac{1}{n}\Pr[S \sim \mu^{\mathrm{hom}}]{i \in S}$ for all $i \in [n] \cup [\bar{n}]$.
Therefore, for any $(z_1,\ldots,z_{n},z_{\bar{1}},\ldots,z_{\bar{n}}) \in \Lambda_{\alpha,\epsilon}$,
\begin{align*}
g_{\pi}(z_1^\alpha,\ldots,z_n^\alpha,z_{\bar{1}}^\alpha,\ldots,z_{\bar{n}}^\alpha)^{\frac{1}{\alpha n}} 
&= f(z_1,\ldots,z_{n},z_{\bar{1}},\ldots,z_{\bar{n}}) 
\leq \frac{1}{n}\sum_{i=1}^n\tp{\Pr[S\sim \mu^{\mathrm{hom}}]{i \in S} + \Pr[S \sim \mu^{\mathrm{hom}}]{\bar{i} \in S}}. 	
\end{align*}
This means that $\mu^{\mathrm{hom}}$ is $(1/\alpha)$-entropically independent over $\Lambda_{\alpha,\epsilon}$.
Then by \Cref{lemma-equvalent-EI-PD}, $\mu$ is $(1/\alpha)$-product dominated on $\left(0,(1+\epsilon)^{\frac{1}{\alpha}}\right]^n$.
By definition of product domination, for any $(x_1,\ldots,x_n) \in \left(0,(1+\epsilon)^{\frac{1}{\alpha}}\right]^n$,
\begin{align*}
 F_{\mu,\alpha}(x_1,x_2,\ldots,x_n) = \frac{g_\mu(x_1^\alpha,x_2^\alpha,\ldots,x_n^\alpha)^{{\frac{1}{\alpha}}}}{\prod_{i=1}^n \tp{\mu_i(\1) x_i + \mu_i(\0)}} \leq 1.	&\qedhere
\end{align*}
\end{proof}

It remains to formally verify \Cref{lemma-aflc}. 
A variant of the lemma was proved in \cite{anari2021entropicII} assuming the spectral domination property for correlation matrix. 
\Cref{lemma-aflc} can be proved in the same way.

\begin{definition}[\text{signed correlation matrix~\cite{anari2021entropicII}}]\label{definition-correlation-matrix}
Let $\mu$ be a distribution over $\{\0,\1\}^n$.
The correlation matrix $\Psi^{\mathrm{Cor}}_\mu \in \mathds{R}_{\geq 0}^{n \times n}$ is defined by  
\begin{align*}
\forall i,j \in [n], \quad \Psi^{\mathrm{Cor}}_\mu(i,j) = \begin{cases}
    \mu_i(\0)&\text{if } i = j;\\
    \mu_j^{i \gets \1}(\1) - \mu_j(\1) &\text{if } i\neq j \text{ and } \1 \in \Omega(\mu_i) ;\\
 	0 &\text{otherwise}.
 \end{cases}
\end{align*}	
\end{definition}
\begin{definition}[\text{signed influence matrix~\cite{anari2020spectral}}]\label{definition-sign-inf-matrix}
Let $\mu$ be a distribution over $\{-1,+1\}^{n}$.
The signed influence matrix $\Psi^{\mathrm{Inf}}_\mu \in \mathds{R}^{n \times n}$ is defined by  
\begin{align*}
\forall i,j \in [n], \quad \Psi^{\mathrm{Inf}}_\mu(i,j) = \begin{cases}
    \mu_j^{i\gets \1}(\1) - \mu_{j}^{i \gets \0}(\1) &\text{if } i \neq j \text{ and }\Omega(\mu_i)=\{\0,\1\} ;\\
    0& \text{otherwise}.
 \end{cases}
\end{align*}	
\end{definition}

\begin{remark}
The influence matrix $\Psi_\mu$ in \Cref{definition-SI} is satisfies that $\Psi_\mu(i,j) = |\Psi^{\mathrm{Inf}}_\mu(i,j)|$. 
\end{remark}

\begin{lemma}[\text{\cite[Corollary 8.1.19]{horn2012matrix}}] \label{lem:CP-SR-GE0}
  Let $A, B \in \mathds{R}^{n \times n}$ and suppose $B$ is non-negative.
  If $\abs{A} \leq B$, then $\rho\tp{A} \leq \rho\tp{\abs{A}} \leq \rho\tp{B}$.
\end{lemma}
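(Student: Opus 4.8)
The plan is to reduce both inequalities to a single entrywise-monotonicity principle for the spectral radius, which itself follows from Gelfand's formula. First I would record the elementary fact that for any $M\in\mathds{R}^{n\times n}$ and any non-negative $N\in\mathds{R}^{n\times n}$ with $\abs{M}\le N$ entrywise, one has $\abs{M^k}\le N^k$ entrywise for every integer $k\ge 1$. This is an induction on $k$: the base case is the hypothesis, and for the inductive step the entrywise triangle inequality for matrix products ($\abs{XY}\le\abs{X}\abs{Y}$ entrywise) gives $\abs{M^{k+1}}=\abs{M^k M}\le\abs{M^k}\abs{M}\le N^k N=N^{k+1}$, where one uses the induction hypothesis $\abs{M^k}\le N^k$, the hypothesis $\abs{M}\le N$, and the non-negativity of $N^k$ and of $\abs{M}$ so that the relevant one-sided multiplications preserve the entrywise order.

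Next I would fix a matrix norm that is both submultiplicative and monotone with respect to the entrywise order on non-negative matrices; the entrywise $\ell^1$ norm $\norm{M}\triangleq\sum_{i,j}\abs{M_{ij}}$ works, since $\abs{(MN)_{ij}}\le\sum_k\abs{M_{ik}}\abs{N_{kj}}$ yields $\norm{MN}\le\norm{M}\norm{N}$, while $\abs{M}\le N$ yields $\norm{M}\le\norm{N}$ directly. Combining with the previous paragraph, $\norm{M^k}\le\norm{N^k}$ for all $k$. By Gelfand's spectral radius formula $\rho(C)=\lim_{k\to\infty}\norm{C^k}^{1/k}$, valid for any submultiplicative matrix norm, taking $k$-th roots and letting $k\to\infty$ gives the principle: $\abs{M}\le N$ with $N$ non-negative implies $\rho(M)\le\rho(N)$.

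Finally I would apply this principle twice. Taking $M=A$ and $N=\abs{A}$, which is non-negative and satisfies $\abs{A}\le\abs{A}$, gives $\rho(A)\le\rho(\abs{A})$. Taking $M=\abs{A}$ and $N=B$, which are both non-negative with $\abs{A}\le B$ by hypothesis, gives $\rho(\abs{A})\le\rho(B)$. Chaining the two yields $\rho(A)\le\rho(\abs{A})\le\rho(B)$, as claimed.

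There is no genuine obstacle here — the result is classical — so the only real care goes into the bookkeeping of the first paragraph: one must keep track of which factor in each matrix product is non-negative, so that the entrywise order is preserved, and one must use a norm that is honestly monotone on non-negative matrices. Everything else is routine, and no Perron–Frobenius machinery is needed; Gelfand's formula handles both inequalities uniformly. (An alternative for the second inequality is the Collatz–Wielandt characterization of $\rho$ for non-negative matrices, but it does not cover $\rho(A)\le\rho(\abs{A})$ and is no shorter.)
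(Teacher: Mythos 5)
Your proof is correct. The paper does not prove this lemma at all — it cites it directly as Corollary 8.1.19 of Horn--Johnson — and your argument (entrywise bound $\abs{A^k}\le\abs{A}^k\le B^k$, a submultiplicative norm that is monotone on non-negative matrices, then Gelfand's formula, applied twice) is essentially the standard textbook proof of that cited result.
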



The following relation between influence matrix and correlation matrix was proved in~\cite{AASV21}.
\begin{lemma}[\text{\cite{AASV21}}]
\label{lemma-sp-AASV}
The spectrum of $\Psi^{\mathrm{Cor}}_{\mu^{\mathrm{hom}}}$ is the union of $\{\lambda_i + 1\}_{1 \leq i \leq n}$ and $n$ copies of 0, where $\lambda_1\geq \lambda_2\geq \ldots \geq \lambda_n$ are eigenvalues of $\Psi^{\mathrm{Inf}}_{\mu}$.
\end{lemma}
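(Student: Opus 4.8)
The plan is to express the $2n\times2n$ matrix $\Psi^{\mathrm{Cor}}_{\mu^{\mathrm{hom}}}$ as a product that ``factors through'' the $n\times n$ matrix $I+\Psi^{\mathrm{Inf}}_\mu$, and then read off its spectrum using the standard fact that a ``tall times wide'' product $UV$ has the same nonzero eigenvalues, with multiplicity, as $VU$. First I would reduce to the case $0<\mu_i(\1)<1$ for all $i$: if $\mu_i(\1)\in\{0,1\}$ for some $i$, then in $\mu^{\mathrm{hom}}$ exactly one of $i,\bar i$ lies in every sample and the other in none, and a direct check shows that this pair contributes the eigenvalues $1$ and $0$ to $\Psi^{\mathrm{Cor}}_{\mu^{\mathrm{hom}}}$ (the always-present element has row equal to a standard unit vector, the always-absent one has vanishing row and column); this matches the $\{1,0\}$ by which the target multiset $\{\lambda_i+1\}_{1\le i\le n}\cup\{0^{(n)}\}$ grows in passing from $n-1$ to $n$ coordinates, since index $i$ has a zero row and column in $\Psi^{\mathrm{Inf}}_\mu$ and so contributes a $1+0=1$. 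Deleting the coordinate reduces to a smaller instance, so this case follows by induction and we may assume $0<\mu_i(\1)<1$ for every $i$.

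Now set $p_i=\mu_i(\1)$ and $q_i=\mu_i(\0)=1-p_i$, and index the ground set of $\mu^{\mathrm{hom}}$ by $[n]\cup[\bar n]$, so that $\Pr[S\sim\mu^{\mathrm{hom}}]{i\in S}=p_i$ and $\Pr[S\sim\mu^{\mathrm{hom}}]{\bar i\in S}=q_i$, and conditioning on $i\in S$ (resp.\ $\bar i\in S$) amounts to conditioning $\sigma_i=\1$ (resp.\ $\sigma_i=\0$) in $\mu$. The law of total probability $\mu_j(\1)=p_i\mu_j^{i\gets\1}(\1)+q_i\mu_j^{i\gets\0}(\1)$ gives, for $i\neq j$,
\begin{align*}
\mu_j^{i\gets\1}(\1)-\mu_j(\1)=q_i\,\Psi^{\mathrm{Inf}}_\mu(i,j),\qquad \mu_j^{i\gets\0}(\1)-\mu_j(\1)=-p_i\,\Psi^{\mathrm{Inf}}_\mu(i,j),
\end{align*}
and then a routine entrywise computation of the four $n\times n$ blocks of $\Psi^{\mathrm{Cor}}_{\mu^{\mathrm{hom}}}$ (those indexed by $[n]$ versus $[\bar n]$), also using $\mu^{\mathrm{hom}}_{\bar j}(\1)=1-\mu_j(\1)$ and the ``boundary'' entries $\Psi^{\mathrm{Cor}}_{\mu^{\mathrm{hom}}}(i,\bar i)=-q_i$ and $\Psi^{\mathrm{Cor}}_{\mu^{\mathrm{hom}}}(\bar i,i)=-p_i$, yields the identity
\begin{align*}
\Psi^{\mathrm{Cor}}_{\mu^{\mathrm{hom}}}=\begin{pmatrix}\mathrm{diag}(q)\\ -\mathrm{diag}(p)\end{pmatrix}\bigl(I+\Psi^{\mathrm{Inf}}_\mu\bigr)\begin{pmatrix}I & -I\end{pmatrix},
\end{align*}
where on both sides the rows and columns are ordered $[n]$ then $[\bar n]$.

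It remains to extract the spectrum. Write the right-hand side as $\widetilde U\,V$ with $\widetilde U=\begin{pmatrix}\mathrm{diag}(q)\\ -\mathrm{diag}(p)\end{pmatrix}(I+\Psi^{\mathrm{Inf}}_\mu)\in\mathds{R}^{2n\times n}$ and $V=\begin{pmatrix}I & -I\end{pmatrix}\in\mathds{R}^{n\times2n}$. Since $\mathrm{rank}\bigl(\Psi^{\mathrm{Cor}}_{\mu^{\mathrm{hom}}}\bigr)\le\mathrm{rank}(V)=n$, the matrix $\Psi^{\mathrm{Cor}}_{\mu^{\mathrm{hom}}}$ has at least $n$ zero eigenvalues; and since $V\widetilde U=(\mathrm{diag}(q)+\mathrm{diag}(p))(I+\Psi^{\mathrm{Inf}}_\mu)=I+\Psi^{\mathrm{Inf}}_\mu$ while $\det(tI_{2n}-\widetilde UV)=t^{n}\det(tI_n-V\widetilde U)$, the nonzero eigenvalues of $\Psi^{\mathrm{Cor}}_{\mu^{\mathrm{hom}}}$ coincide (with multiplicity) with the nonzero eigenvalues of $I+\Psi^{\mathrm{Inf}}_\mu$, that is, with the nonzero members of $\{\lambda_i+1\}_{1\le i\le n}$. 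Counting the $2n$ eigenvalues then shows that the spectrum of $\Psi^{\mathrm{Cor}}_{\mu^{\mathrm{hom}}}$ is precisely the multiset $\{\lambda_i+1\}_{1\le i\le n}$ together with $n$ copies of $0$. The only step that needs genuine care is the sign-bookkeeping in the block computation (and the degenerate coordinates); the $UV$/$VU$ passage is textbook.
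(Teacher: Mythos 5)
Your proposal is correct, and it is worth noting that the paper itself gives no proof of this lemma at all: it is imported verbatim from \cite{AASV21}. So what you have written is a self-contained verification rather than a variant of an argument in the paper. The core of your proof checks out: with $p_i=\mu_i(\1)$, $q_i=\mu_i(\0)$ and $M=\Psi^{\mathrm{Inf}}_\mu$ (zero diagonal), the four blocks of $\Psi^{\mathrm{Cor}}_{\mu^{\mathrm{hom}}}$ are exactly $D_q(M+I)$, $-D_q(M+I)$, $-D_p(M+I)$, $D_p(M+I)$, which is your factorization $\widetilde U V$; then $V\widetilde U=(D_p+D_q)(M+I)=M+I$, and the identity $\det(tI_{2n}-\widetilde UV)=t^{n}\det(tI_n-V\widetilde U)$ gives the characteristic polynomial $t^n\chi_{M+I}(t)$, which is already the full multiset statement — your separate rank remark and the "nonzero eigenvalues plus counting" step are redundant, since the polynomial identity does not care whether some $\lambda_i+1$ vanish. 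Two small points of bookkeeping. First, in the degenerate reduction you have the roles of the two elements swapped: if $\mu_i(\1)=1$, the \emph{always-present} element $i$ has diagonal entry $\mu^{\mathrm{hom}}_i(\0)=0$ and vacuous conditioning, hence the vanishing row and column, while the \emph{always-absent} element $\bar i$ has diagonal $1$ and zero off-diagonal row (the clause $\1\in\Omega$ fails) and zero column, hence the unit-vector row; either way the pair contributes exactly the eigenvalues $\{1,0\}$ and the deleted matrix is $\Psi^{\mathrm{Cor}}_{\nu^{\mathrm{hom}}}$ for the marginal $\nu$, so the induction is fine. Second, the induction also needs (and you implicitly use) that such an index $i$ has zero row \emph{and} zero column in $\Psi^{\mathrm{Inf}}_\mu$ and that the principal submatrix is $\Psi^{\mathrm{Inf}}_\nu$, both of which are immediate from the definition; spelling this out would make the reduction airtight.
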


\begin{proof}[Proof of \Cref{lemma-aflc}]
By the proof of Proposition~19 in~\cite{anari2021entropicII}, we only need to verify that for any $\vec{v} = (v_1,\ldots,v_{n},v_{\bar{1}},\ldots,v_{\bar{n}}) \in \Lambda_{\alpha,\epsilon}$, denoted $\vec{v}^\alpha =  (v_1^\alpha,\ldots,v_{n}^\alpha,v_{\bar{1}}^\alpha,\ldots,v_{\bar{n}}^\alpha)$, it holds that
\begin{align*}
\lambda_{\max}\tp{\Psi^{\mathrm{Cor}}_{\vec{v}^\alpha * \mu^{\mathrm{hom}}}} \leq \frac{1}{\alpha}.	
\end{align*}
Note that $\vec{v}^\alpha * \mu^{\mathrm{hom}}$ is the homogenization of $u * \mu$, where $u=(u_i)_{1\le i\le n}$ has $u_i = \tp{\frac{v_i}{v_{\bar{i}}}}^\alpha \leq (1+\epsilon)$ for all $i\in[n]$.
Therefore, we have $\lambda_{\max}\tp{\Psi^{\mathrm{Cor}}_{\vec{v}^\alpha * \mu^{\mathrm{hom}}}}= \lambda_{\max}\tp{\Psi^{\mathrm{Inf}}_{u * \mu}} +1$ by \Cref{lemma-sp-AASV}; and $\lambda_{\max}\tp{\Psi^{\mathrm{Inf}}_{u * \mu}}\le \eta$ by  \Cref{lem:CP-SR-GE0} and the $(\eta,\epsilon)$-complete spectral independence of $\mu$. Together, we have $\lambda_{\max}\tp{\Psi^{\mathrm{Cor}}_{\vec{v}^\alpha * \mu^{\mathrm{hom}}}}\le \eta+1\le \frac{1}{\alpha}$.
\end{proof}


\subsection{Monotonicity from marginal stability (proof of \Cref{lemma-verify-2})}

For any $x = (x_1,x_2,\ldots,x_n) \in \mathds{R}_{>0}^n$, 
  \begin{align*}
    \frac{\partial F_{\mu, \alpha}}{\partial z_i} \bigg|_{z = x} = \frac{F_{\mu,\alpha}(x)}{x_i}\tp{ \frac{x_i^{\alpha}}{g_\mu(x^\alpha_1,x^\alpha_2,\ldots,x^\alpha_n)} \frac{\partial g_\mu}{\partial z_i} \bigg|_{z = (x_1^\alpha,x_2^\alpha,\ldots,x_n^\alpha)} - \frac{\mu_i(\1) x_i}{\mu_i(\1) x_i + \mu_i(\0)}}.
  \end{align*}
  Observe that  
  \begin{align*}
    \frac{x_i^{\alpha}}{g_\mu(x^\alpha_1,x^\alpha_2,\ldots,x^\alpha_n)} \frac{\partial g_\mu}{\partial z_i} \bigg|_{z = (x_1^\alpha,x_2^\alpha,\ldots,x_n^\alpha)} = \frac{\sum_{\sigma: \sigma_i = \1}\mu(\sigma) \prod_{j: \sigma_j = \1} x^\alpha_j}{\sum_{\sigma}\mu(\sigma)\prod_{j: \sigma_j = \1}x_j^\alpha} = (x^\alpha * \mu)_i(\1),
  \end{align*}
  where $x^\alpha = (x_1^\alpha,x_2^\alpha,\ldots,x_n^\alpha)$. 
  Furthermore, we can assume without loss of generality that $\mu_i(\1) > 0$ and $\mu_i(\0) > 0$ because otherwise $\frac{\partial F_{\mu, \alpha}}{\partial z_i} \bigg|_{z = x} = 0$ for all $x \in \mathds{R}_{>0}^n$.
  Therefore, we have
  \begin{align*}
   \tp{ \frac{x_i}{F_{\mu,\alpha}}}\cdot  \frac{\partial F_{\mu, \alpha}}{\partial z_i} \bigg|_{z = x} &= (x^\alpha * \mu)_i(\1) - \frac{\mu_i(\1) x_i}{\mu_i(\1) x_i + \mu_i(\0)}
    = \tp{1+\frac{(x^\alpha * \mu)_i(\0)}{(x^\alpha * \mu)_i(\1)}}^{-1}- \tp{1+\frac{\mu_i(\0)}{\mu_i(\1) x_i}}^{-1}.
   \end{align*}
   Note that $(1+x)^{-1}$ is decreasing in $x > 0$. To prove $\frac{\partial F_{\mu, \alpha}}{\partial z_i} \bigg|_{z = x}  \leq 0$, it suffices to verify  $\frac{(x^\alpha * \mu)_i(\0)}{(x^\alpha * \mu)_i(\1)} \geq \frac{\mu_i(\0)}{\mu_i(\1) x_i}$, or equivalently, $\frac{(x^\alpha * \mu)_i(\1)}{(x^\alpha * \mu)_i(\0)} \leq x_i \frac{\mu_i(\1)}{\mu_i(\0)}$.
Indeed, it holds that
 \begin{align*}
  \frac{(x^\alpha * \mu)_i(\1)}{(x^\alpha * \mu)_i(\0)} &= \frac{\sum_{\sigma \in \Omega(\mu_{[n] \setminus \{i\}})}(x^\alpha * \mu)_{[n] \setminus \{i\}}(\sigma) \cdot (x^\alpha * \mu)^\sigma_i(\1)}{ \sum_{\sigma \in \Omega(\mu_{[n] \setminus \{i\}})}(x^\alpha * \mu)_{[n] \setminus \{i\}}(\sigma) \cdot (x^\alpha * \mu)^\sigma_i(\0)}\\
   &= \frac{\sum_{\sigma \in \Omega(\mu_{[n] \setminus \{i\}})}(x^\alpha * \mu)_{[n] \setminus \{i\}}(\sigma) \cdot \mu^\sigma_i(\1) \cdot x^\alpha_i}{ \sum_{\sigma \in \Omega(\mu_{[n] \setminus \{i\}})} (x^\alpha * \mu)_{[n] \setminus \{i\}}(\sigma) \cdot \mu^\sigma_i(\0)}\\
   & \leq x^\alpha_i \max_{\sigma \in \Omega(\mu_{[n] \setminus \{i\}})}\frac{\mu^\sigma_i(\1)}{\mu^{\sigma}_i(\0)}.
 \end{align*}
In above, we enumerate all $\sigma$ in $\Omega(\mu_{[n] \setminus \{i\}})$ because $\mu$ and $(x^\alpha * \mu)$ have the same support.
By \Cref{def:bounded-distribution}, it holds that for all possible partial configuration $\sigma$, we have $R^\sigma_i + R^\sigma_i / R_i \leq 2\zeta$, where $R_i=\mu_i(+1)/\mu_i(-1)$, which implies $R^\sigma_i \leq 2\zeta (1 + 1/R_i)^{-1} = 2\zeta \cdot \mu_i(\1)$.
Hence, it holds that,
\begin{align*}
  \frac{(x^\alpha * \mu)_i(\1)}{(x^\alpha * \mu)_i(\0)} \leq x^\alpha_i\max_{\sigma \in \Omega(\mu_{[n] \setminus \{i\}})}\frac{\mu^\sigma_i(\1)}{\mu^{\sigma}_i(\0)} \leq x_i^\alpha \cdot 2\zeta \cdot \mu_{i}(+1) \leq x_i \frac{\mu_i(\1)}{\mu_i(\0)},
\end{align*}
where the last inequality comes from the fact that $x_i^{\alpha - 1} \cdot 2\zeta \leq 1$, which is guaranteed by $x_i \geq (2\zeta)^{\frac{1}{1 - \alpha}}$.


\section{Invariants of $k$-Transformation}\label{section-muk}
In this section, we prove \Cref{lem:mu-to-muk}, 
that the spectral independence and marginal stability properties stated in \Cref{condition:CSI-MS} are roughly invariant under $k$-transformation (\Cref{def:k-trans}).

This is proved by two lemmas. Recall that concept of complete spectral independence (\Cref{definition-complete-SI}).
\begin{lemma}\label{lemma-part-1}
Let $\eta,\epsilon > 0$.
If a distribution $\mu$ over $\{\0,\1\}^{n}$ is $(\eta,\epsilon)$-completely spectrally independent and 	 
\begin{align*}
\mu_{-1}^{\min} \triangleq \min_{i \in [n]}\min_{\sigma \in \Omega(\mu_{[n]\setminus \{i\}})}\mu^\sigma_i(-1) > 0, 
\end{align*}
then there exists a finite $k_0 = 10(1+\epsilon)/\mu_{-1}^{\min}$ such that for all integers $k \geq k_0$, the $k$-transformed distribution $\mu_k= \Rd(\mu, k)$ is $(2\eta+5,\epsilon)$-completely spectrally independent.
\end{lemma}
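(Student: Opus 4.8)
\section*{Proof proposal}

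The plan is to fix an arbitrary field vector $\*\lambda\in(0,1+\epsilon]^{nk}$ and an arbitrary pinning $\tau$ of $\nu\triangleq\*\lambda*\mu_k$, and to prove $\rho(\Psi_{\nu^\tau})\le\eta+2$; since $\eta>0$ this is at most $2\eta+5$, which is exactly the claimed $(2\eta+5,\epsilon)$-complete spectral independence of $\mu_k$ (\Cref{definition-complete-SI}). The first and main step is to describe $\nu^\tau$ explicitly. Classify each gadget $G_i\triangleq\{(i,1),\dots,(i,k)\}$ by the action of $\tau$: let $A$ be the gadgets with some coordinate pinned to $\1$ (forcing $X_i=\1$), $B$ the gadgets with all coordinates pinned to $\0$ (forcing $X_i=\0$), and $C$ the rest, where $\tau$ pins only a proper subset $S_i\subsetneq[k]$, all to $\0$, leaving $T_i\triangleq[k]\setminus S_i\ne\emptyset$ free. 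Let $\sigma$ denote the partial configuration on $A\cup B$ that is $\1$ on $A$ and $\0$ on $B$. Using the definition of the $k$-transformation $\Rd(\mu,k)$ (\Cref{def:k-trans}), one checks that $\nu^\tau$ lives on $\{\0,\1\}$-assignments to the live coordinates $L\triangleq\{(i,j):i\in C,\ j\in T_i\}$ and has the following ``split-and-project'' form: first draw a coarse configuration $\omega\sim\*\lambda''*\mu^\sigma$, where $\lambda''_i\triangleq\tfrac1k\sum_{j\in T_i}\lambda_{(i,j)}\in(0,1+\epsilon]$ for $i\in C$; then, independently for each $i\in C$ with $\omega_i=\1$, pick $j^*_i\in T_i$ with probability $q_{i,j}\triangleq\lambda_{(i,j)}/\sum_{\ell\in T_i}\lambda_{(i,\ell)}$, set $Y_{(i,j^*_i)}=\1$, and set all other coordinates of $G_i$ to $\0$. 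Since $\*\lambda''\in(0,1+\epsilon]^{C}$ and $\mu^\sigma$ is again $(\eta,\epsilon)$-completely spectrally independent (\Cref{fact:pinning-SI}), the coarse law $\*\lambda''*\mu^\sigma$ is $\eta$-spectrally independent; in particular $\rho(\Psi_{\*\lambda''*\mu^\sigma})\le\eta$.

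The second step is to read off bounds on the entries of $\Psi_{\nu^\tau}$ (indexed by $L$). Diagonal entries are at most $1$. For distinct $j,j'\in T_i$, pinning $(i,j)\gets\1$ forces $Y_{(i,j')}=\0$, so $\Psi_{\nu^\tau}((i,j),(i,j'))$ equals the conditional probability of $Y_{(i,j')}=\1$ given $Y_{(i,j)}=\0$, which is at most $q_{i,j'}/(1-q_{i,j})$; hence the within-gadget off-diagonal row sums are at most $1$. For $i\ne i'$: pinning $(i,j)\gets\1$ conditions $\omega$ on $\omega_i=\1$, whereas pinning $(i,j)\gets\0$ replaces the field $\lambda''_i$ on gadget $i$ by $\lambda''_i(1-q_{i,j})\in(0,1+\epsilon]$; writing the resulting marginal $(\*\lambda''*\mu^\sigma)_{i'}(\1)$ at the rescaled field as a convex combination of the $\omega_i=\0$ and $\omega_i=\1$ conditionals, and using that the $Y_{(i',j')}$-marginal is always $q_{i',j'}$ times the $X_{i'}$-marginal, the difference telescopes to $\Psi_{\nu^\tau}((i,j),(i',j'))\le q_{i',j'}\bigl|(\*\lambda''*\mu^\sigma)_{i'}^{i\gets\1}(\1)-(\*\lambda''*\mu^\sigma)_{i'}^{i\gets\0}(\1)\bigr|=q_{i',j'}\,\Psi_{\*\lambda''*\mu^\sigma}(i,i')$.

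The final step is to turn these entrywise bounds into a bound on $\rho(\Psi_{\nu^\tau})$ via a common test vector. Given any $\eta'>\eta$, pick $w\in\mathds{R}^{C}_{>0}$ with $\Psi_{\*\lambda''*\mu^\sigma}\,w\le\eta'w$ (for instance $w=(\eta'I-\Psi_{\*\lambda''*\mu^\sigma})^{-1}\mathbf{1}$, which is positive since $\eta'>\rho(\Psi_{\*\lambda''*\mu^\sigma})$), and set $v_{(i,j)}\triangleq w_i$, constant on each gadget. Using the three bounds above together with $\sum_{j\in T_i}q_{i,j}=1$, for every $(i,j)\in L$,
\[
(\Psi_{\nu^\tau}v)_{(i,j)}\ \le\ \underbrace{w_i}_{\text{diagonal}}+\underbrace{w_i}_{\text{within-gadget}}+\underbrace{\textstyle\sum_{i'\ne i}\Psi_{\*\lambda''*\mu^\sigma}(i,i')\,w_{i'}}_{\text{cross-gadget}}\ \le\ 2w_i+(\Psi_{\*\lambda''*\mu^\sigma}\,w)_i\ \le\ (2+\eta')\,v_{(i,j)},
\]
so $\rho(\Psi_{\nu^\tau})\le 2+\eta'$; letting $\eta'\downarrow\eta$ gives $\rho(\Psi_{\nu^\tau})\le\eta+2\le2\eta+5$. (Equivalently, the cross-gadget part of $\Psi_{\nu^\tau}$ is entrywise dominated by $U\,\Psi_{\*\lambda''*\mu^\sigma}\,Q$, where $U$ is the gadget-membership matrix and $Q_{i,(i,j)}=q_{i,j}$; since $QU=I_C$, \Cref{lem:CP-SR-GE0} gives that this part has spectral radius exactly $\rho(\Psi_{\*\lambda''*\mu^\sigma})\le\eta$, but as spectral radius is not subadditive one still combines it with the within-gadget part through the same $v$.)

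The main obstacle is the first step: making the split-and-project description of $\nu^\tau$ precise, and inside the entry bounds carrying out the cross-gadget telescoping cleanly; the remaining steps are routine. I would also note that this argument uses neither $k\ge k_0=10(1+\epsilon)/\mu_{-1}^{\min}$ nor the hypothesis $\mu_{-1}^{\min}>0$, and in fact yields $\rho(\Psi_{\nu^\tau})\le\eta+2$ for every $k\ge 1$; those hypotheses become relevant in the companion marginal-stability lemma behind \Cref{lem:mu-to-muk}, where $k$ must be large enough that rescaling one gadget field by $q_{i,j}=O(1/k)$ moves every conditional marginal only negligibly.
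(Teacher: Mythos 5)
Your proposal is correct, and it takes a genuinely different route from the paper. The paper proves this lemma in two stages through the \emph{correlation} matrix: first \Cref{lemma-SI-to-Cor} converts $(\eta,\epsilon)$-complete spectral independence into $(\eta+1,\epsilon)$-complete limited correlation; then \Cref{lem:muk-C-SI} shows, via \Cref{lem:mu-C-Cor => muk-C-Cor} (an explicit eigenbasis construction for the lifted matrix $\widehat{\Psi}_{\vec k}$, with lifted eigenvectors $F^t$ and extra eigenvalue-$1$ eigenvectors $F[u]^t$), that the absolute correlation matrix of $(\*z*\mu_k)^\sigma$ has spectral radius at most $\eta+3$, and finally converts back from correlation to influence via \Cref{lem:cor-inf}; that last conversion needs the uniform bound $\pi_{v_i}(\0)\ge\tfrac23$, which is exactly where the hypotheses $\mu_{-1}^{\min}>0$ and $k\ge k_0=10(1+\epsilon)/\mu_{-1}^{\min}$ enter (via \Cref{lem:muk-pin-ratio}) and which costs the factor $\tfrac32$, yielding $2\eta+5$. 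You instead work directly with the absolute influence matrix: your ``split-and-project'' description of $(\*\lambda*\mu_k)^\tau$ is essentially the content of the paper's \Cref{claim-muk}/\Cref{lem:muk-pin-ratio}, and from it your three entry bounds (diagonal $\le 1$, within-gadget row sums $\le 1$ via $\sum_{j'\ne j}q_{i,j'}/(1-q_{i,j})=1$, cross-gadget entries $\le q_{i',j'}\Psi_{\*\lambda''*\mu^\sigma}(i,i')$ via the convex-combination/telescoping step) are all correct, as is the Collatz--Wielandt step with the gadget-constant test vector $v_{(i,j)}=w_i$, $w=(\eta'I-\Psi_{\*\lambda''*\mu^\sigma})^{-1}\mathbf 1>0$. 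The payoff is a strictly stronger statement: $\rho(\Psi_{(\*\lambda*\mu_k)^\tau})\le\eta+2$ for \emph{every} $k\ge1$, with no marginal hypothesis, which of course implies the stated $(2\eta+5,\epsilon)$ bound for $k\ge k_0$; the paper's detour through correlation matrices reuses machinery from the entropic-independence literature but is what forces both the weaker constant and the large-$k$/marginal assumptions here. Two small points to tidy in a full write-up: when $T_i=\{j\}$ the ``rescaled field'' $\lambda''_i(1-q_{i,j})$ degenerates to $0$, so phrase that case as conditioning the coarse law on $X_i\gets\0$ (your convex-combination bound already covers it, including when $X_i=\0$ is infeasible, where $p_0=0$ and the coarse row is zero anyway); and note explicitly that rows and columns of $\Psi_{(\*\lambda*\mu_k)^\tau}$ indexed by pinned or deterministic coordinates vanish, so restricting to the live set $L$ loses nothing in the spectral radius.
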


To state the next lemma, we also define the concept of complete marginal stability.
\begin{definition}[complete marginal stability]\label{definition-complete-MS}
  Let $\zeta > 1$. A distribution $\mu$ over $\{\0,\1\}^{n}$ is said to be completely $\zeta$-marginally stable if $(\*\lambda * \mu)$ is $\zeta$-marginally stable for any $\*\lambda \in (0,1]^n$.
\end{definition}

\begin{lemma}\label{lemma-part-II} 
  Let $\zeta > 1$. 
  If a distribution $\mu$ over $\{\0,\1\}^{n}$ is completely $\zeta$-marginally stable, then for any integer $k \geq 1$, the $k$-transformed distribution $\mu_k= \Rd(\mu, k)$ is $2\zeta$-marginally stable.
\end{lemma}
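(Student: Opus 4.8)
The plan is to compute the conditional marginal ratio $R^\tau_{(i,j)}$ of a virtual coordinate $(i,j)\in[n]\times[k]$ in $\mu_k=\Rd(\mu,k)$ under an arbitrary feasible pinning $\tau$ on a set $\Lambda\subseteq([n]\times[k])\setminus\{(i,j)\}$, by pushing the conditioning through the projection $Y\mapsto X$ underlying the $k$-transformation (\Cref{def:k-trans}), under which $X_{i'}=\1$ exactly when the $i'$-th block of $Y$ contains a $\1$. If $\tau$ assigns $\1$ to some coordinate of block $i$, feasibility forces $Y_{(i,j)}=\0$ and $R^\tau_{(i,j)}=0$, so assume $\tau$ assigns $\0$ to all $t\le k-1$ of its pinned coordinates in block $i$. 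Conditionally on $X$, the event $\{Y_\Lambda=\tau\}$ factorizes over blocks and, in each block $i'$, depends only on $X_{i'}$ through that block's single uniform index; integrating out these indices therefore shows that conditioning $\mu_k$ on $\tau$ and projecting to $X$ produces a distribution that equals $(\*\lambda*\mu)$ further conditioned on a pinning, for some $\*\lambda\in(0,1]^{n}$ (see~\eqref{eq:definition-local-fields}): a block of $\tau$ containing a $\1$ becomes a vertex pinned to $\1$; a block whose all $k$ coordinates are pinned to $\0$ becomes a vertex pinned to $\0$; and a block with $0\le t_{i'}<k$ coordinates pinned to $\0$ becomes a vertex with field $\lambda_{i'}=(k-t_{i'})/k\le1$, in particular $\lambda_i=(k-t)/k$. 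Let $\tilde\mu^\tau$ denote the same magnetize-then-pin distribution but with vertex $i$ left unmagnetized, so that the above distribution is $\tilde\mu^\tau$ with vertex $i$ additionally magnetized by $(k-t)/k$. A short computation then gives the identity
\[
R^\tau_{(i,j)} \;=\; \frac{r'}{\,k+(k-t-1)\,r'\,},\qquad r':=R_i^{\tilde\mu^\tau}.
\]
Since $\mu$ is completely $\zeta$-marginally stable (\Cref{definition-complete-MS}) and marginal stability is closed under pinning (immediate from \Cref{def:bounded-distribution}), the distribution $\tilde\mu^\tau$ is $\zeta$-marginally stable, so $r'\le\zeta$; as $k\ge1$ this already yields $R^\tau_{(i,j)}\le r'\le\zeta\le 2\zeta$, the absolute part of $2\zeta$-marginal stability.

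For the decay part, fix $S\subseteq\Lambda$, let $s\le t$ be the number of block-$i$ coordinates pinned by $S$, and run the same analysis for $\tau_S=\tau|_S$ to obtain $R^{\tau_S}_{(i,j)}=r/(k+(k-s-1)r)$ with $r:=R_i^{\tilde\mu^{\tau_S}}$. Now $\tilde\mu^\tau$ is obtained from $\tilde\mu^{\tau_S}$ by (i) conditioning on the extra pinnings created in blocks $\ne i$ when one passes from $S$ to $\Lambda$ and (ii) multiplying the fields of the remaining vertices $\ne i$ by factors in $(0,1]$, vertex $i$ being untouched; so it suffices to bound the effect of such an operation on the marginal ratio at $i$. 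The key auxiliary fact I would isolate: if $\nu$ over $\{\0,\1\}^n$ is $\zeta$-marginally stable, $\*\xi\in(0,1]^n$, $T\subseteq[n]\setminus\{i\}$, and $y$ is a feasible pinning of $T$, then $R_i^{(\*\xi*\nu)^{X_T=y}}\le\zeta\,R_i^\nu$. Indeed, magnetizing commutes with conditioning, so $(\*\xi*\nu)^{X_T=y}$ equals $\nu^{X_T=y}$ magnetized by $\*\xi$ restricted to $[n]\setminus T$; its marginal ratio at $i$ then equals $\xi_i$ times a convex combination---with nonnegative weights built from the fields and from $\nu^{X_T=y}$---of the values $R_i^{\nu^{\rho}}$ over full pinnings $\rho$ of $[n]\setminus\{i\}$ extending $y$, and each such $R_i^{\nu^{\rho}}$ is at most $\zeta R_i^\nu$ by the decay clause of \Cref{def:bounded-distribution} with empty sub-pinning; since $\xi_i\le1$ the bound follows. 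Applying this with $\nu=\tilde\mu^{\tau_S}$ (again $\zeta$-marginally stable) gives $r'\le\zeta r$.

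Combining $r'\le\zeta r$, $r'\le\zeta$, $s\le t$ and $k\ge1$ with the two identities,
\[
\frac{R^\tau_{(i,j)}}{R^{\tau_S}_{(i,j)}}
=\frac{r'\bigl(k+(k-s-1)r\bigr)}{r\bigl(k+(k-t-1)r'\bigr)}
\le\frac{r'}{r}+\frac{(k-s-1)\,r'}{k}
\le\zeta+\zeta=2\zeta,
\]
so $R^\tau_{(i,j)}\le 2\zeta\,R^{\tau_S}_{(i,j)}$; the degenerate cases are immediate ($R^\tau_{(i,j)}=0$ if $\tau$ has a $\1$ in block $i$, and if $r=0$ then $r'\le\zeta r=0$ so both sides vanish). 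As $(i,j),\Lambda,\tau,S$ were arbitrary, together with the absolute bound this shows $\mu_k=\Rd(\mu,k)$ is $2\zeta$-marginally stable.

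The step I expect to be the main obstacle is the first: the careful accounting of the three block regimes, the verification that integrating out the block indices really produces a distribution of the form ``$(\*\lambda*\mu)$ with $\*\lambda\in(0,1]^{n}$, then pinned'', and the extraction of the exact identity for $R^\tau_{(i,j)}$. This is precisely where \emph{complete} marginal stability, rather than marginal stability of $\mu$ alone, is used: the magnetized distributions $(\*\lambda*\mu)$ with $\*\lambda\le\*1$ must themselves be $\zeta$-marginally stable. Once the identity and the convex-combination fact are available, the two displayed inequalities are elementary, and the constant $2\zeta$ has ample slack.
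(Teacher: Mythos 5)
Your proof is correct, and its engine coincides with the paper's: the identity $R^\tau_{(i,j)}=r'/(k+(k-t-1)r')$ is exactly the paper's \Cref{lem:muk-pin-ratio} specialized to $\*z=\*1$ (there stated for the reciprocal ratio $\0/\1$), and in both arguments the constant $2\zeta$ arises as $\zeta+\zeta$ from applying the complete marginal stability of $\mu$ to magnetized measures $\*\lambda*\mu$ with $\*\lambda\in(0,1]^n$. Where you genuinely diverge is the comparison step for the decay clause. The paper first reduces to the situation where the larger pinning covers all of $V_k\setminus\{v_i\}$ (by taking a maximum over extensions), which has two payoffs: the sum term in the identity disappears, and, because the outer pinning is full, the mismatched field vectors $\*x$ and $\*x'$ produced by the two applications of \Cref{lem:muk-pin-ratio} become irrelevant at every vertex other than $v$, so the two conditional ratios can be compared directly through the $\zeta$-marginal stability of $\*x'*\mu$ (this is \eqref{eq:verify-ms}). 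You instead keep both pinnings partial and absorb the field mismatch into your auxiliary lemma --- magnetizing by fields in $(0,1]$ and then pinning raises $R_i$ by at most a factor $\zeta$ over the base ratio --- proved by writing the magnetized-and-pinned ratio as a convex combination of fully pinned ratios and invoking the decay clause of \Cref{def:bounded-distribution} with empty sub-pinning; this auxiliary fact, together with the closure of marginal stability under pinning, correctly yields $r'\le\zeta r$, and your elementary splitting of the ratio of the two identities into two terms each at most $\zeta$ is sound, as is your treatment of the degenerate cases. Your route is somewhat more self-contained at this point since it avoids the reduction to full pinnings; the paper's packaging buys re-use, since \Cref{lem:muk-pin-ratio} is also the workhorse for the spectral-independence half of \Cref{lem:mu-to-muk} (via \Cref{lem:muk-C-SI}).
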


\Cref{lem:mu-to-muk} follows immediately from \Cref{lemma-part-1} and \Cref{lemma-part-II}.
\begin{proof}[Proof of \Cref{lem:mu-to-muk}]
  By \Cref{condition:CSI-MS},  $\mu$ is completely $\zeta$-marginally stable. Then for any $i \in [n]$ and $\sigma \in \Omega(\mu_{[n] \setminus \{i\} })$, it holds that
  \begin{align*}
    \mu_i^\sigma(\0)=\tp{\frac{\mu_i^\sigma(\1)}{\mu_i^\sigma(\0)} + 1}^{-1} 
    \ge \frac{1}{1+\zeta}.
  \end{align*}
  Let $k_0 = 10(1+\epsilon)(1+\zeta)$. By \Cref{lemma-part-1}, the distribution $\mu_k$ is $(2\eta+5,\epsilon)$-completely spectrally independent for all $k \ge k_0$. 
  By \Cref{lemma-part-II}, the distribution $\mu_k$ is $4\zeta^2$-marginally stable for all $k\geq 1$. 
\end{proof}

\subsection{Complete spectral independence of $\mu_k$ (proof of \Cref{lemma-part-1})} 
The  \emph{correlation matrix} (\Cref{definition-correlation-matrix}) was introduced in~\cite{AASV21}. 
We consider the \emph{absolute correlation matrix}.
\begin{definition}[\text{absolute correlation matrix~\cite{anari2021entropicII}}]\label{definition-abs-correlation-matrix}
Let $\mu$ be a distribution over $\{-1,+1\}^{[n]}$.
The absolute correlation matrix $\Psi^{\mathrm{AbsCor}}_\mu \in \mathds{R}_{\geq 0}^{[n] \times [n]}$ is defined by  
\begin{align*}
\forall i,j \in [n], \quad \Psi^{\mathrm{AbsCor}}_\mu(i,j) \triangleq \abs{\Psi^{\mathrm{Cor}}_\mu(i,j)},
\end{align*}	
where $\Psi^{\mathrm{Cor}}_\mu$ is the correlation matrix in \Cref{definition-correlation-matrix}.
\end{definition}

The  spectral independence (\Cref{definition-SI}) and complete spectral independence (\Cref{definition-complete-SI}) are defined using absolute influence matrix. Similarly, we can define using absolute correlation matrix.  

\begin{definition}[limited correlation]
Let $\eta,\eps > 0$.
A distribution $\mu$ is said to have \emph{$\eta$-limited correlation} if for any $\Lambda \subseteq V$, any $\sigma \in \Omega(\mu_\Lambda)$, the spectral radius of the absolute correlation matrix $\Psi^{\mathrm{AbsCor}}_{\mu^\sigma}$ satisfies
\begin{align*}
\rho\tp{\Psi^{\mathrm{AbsCor}}_{\mu^\sigma}} \leq \eta.
\end{align*}
$\mu$ is said to have \emph{$(\eta,\eps)$-complete limited correlation} if $(\*\lambda * \mu)$ has $\eta$-limited correlation for all $\*\lambda \in (0,1+\epsilon]^n$.
\end{definition}


\begin{lemma}\label{lemma-SI-to-Cor}
  Let $\eta, \epsilon > 0$.
  Let $\mu$ be a distribution over $\{\0, \1\}^{[n]}$.
  If $\mu$ is $(\eta, \epsilon)$-completely spectrally independent, then $\mu$ has $(\eta + 1, \epsilon)$-complete limited correlation.
\end{lemma}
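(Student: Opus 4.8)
The plan is to reduce the statement to an entrywise matrix identity between the signed correlation matrix $\Psi^{\mathrm{Cor}}$ (\Cref{definition-correlation-matrix}) and the signed influence matrix $\Psi^{\mathrm{Inf}}$ (\Cref{definition-sign-inf-matrix}), and then to conclude by monotonicity of the spectral radius under entrywise domination, i.e.\ \Cref{lem:CP-SR-GE0}. First I would fix an arbitrary $\*\lambda \in (0,1+\epsilon]^n$ together with an arbitrary pinning, and write $\nu$ for the resulting conditional distribution of $(\*\lambda * \mu)$, regarded as a distribution on the free coordinates, which we index by $1,\dots,m$; since $\eta$-spectral independence of $(\*\lambda * \mu)$ already quantifies over all of its pinnings, the hypothesis gives $\rho(\Psi_\nu)\le\eta$, where $\Psi_\nu$ is the absolute influence matrix. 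It then suffices to prove $\rho\bigl(\Psi^{\mathrm{AbsCor}}_\nu\bigr)\le\eta+1$.

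The key step is the identity
\begin{align*}
\Psi^{\mathrm{Cor}}_\nu \;=\; D\,(\Psi^{\mathrm{Inf}}_\nu + I), \qquad D \triangleq \mathrm{diag}\bigl(\nu_1(\0),\dots,\nu_m(\0)\bigr),
\end{align*}
which is essentially the Boolean (non-homogenized) counterpart of the computation behind \Cref{lemma-sp-AASV} from \cite{AASV21}. To verify it, for $i\ne j$ with $\Omega(\nu_i)=\{\0,\1\}$ I would apply the law of total probability $\nu_j(\1)=\nu_i(\1)\,\nu_j^{i\gets\1}(\1)+\nu_i(\0)\,\nu_j^{i\gets\0}(\1)$ to obtain $\Psi^{\mathrm{Cor}}_\nu(i,j)=\nu_j^{i\gets\1}(\1)-\nu_j(\1)=\nu_i(\0)\bigl(\nu_j^{i\gets\1}(\1)-\nu_j^{i\gets\0}(\1)\bigr)=\nu_i(\0)\,\Psi^{\mathrm{Inf}}_\nu(i,j)$; the diagonal entry is $\nu_i(\0)$ by definition; and the remaining degenerate cases, where a coordinate is pinned (so $\nu_i(\1)=0$ or $\nu_i(\0)=0$), are checked directly against the two piecewise definitions. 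Taking absolute values entrywise, using $\nu_i(\0)\ge 0$ and that $\Psi^{\mathrm{Inf}}_\nu$ has zero diagonal together with the remark that $\Psi_\nu(i,j)=\abs{\Psi^{\mathrm{Inf}}_\nu(i,j)}$, this yields $\Psi^{\mathrm{AbsCor}}_\nu = D\,(\Psi_\nu + I)$.

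To finish, I would note $0\le D\le I$ entrywise (since $0\le\nu_i(\0)\le 1$) and $\Psi_\nu+I\ge 0$, so $0\le \Psi^{\mathrm{AbsCor}}_\nu = D(\Psi_\nu+I)\le \Psi_\nu+I$ entrywise, whence $\rho\bigl(\Psi^{\mathrm{AbsCor}}_\nu\bigr)\le \rho(\Psi_\nu+I)$ by \Cref{lem:CP-SR-GE0}. Since $\Psi_\nu$ is non-negative, Perron--Frobenius gives that $\rho(\Psi_\nu)$ is an eigenvalue of $\Psi_\nu$, hence $\rho(\Psi_\nu+I)=\rho(\Psi_\nu)+1$ (every eigenvalue $\lambda$ of $\Psi_\nu$ satisfies $\abs{\lambda+1}\le\abs{\lambda}+1\le\rho(\Psi_\nu)+1$, with equality at $\lambda=\rho(\Psi_\nu)\ge 0$); combining with $\rho(\Psi_\nu)\le\eta$ gives $\rho\bigl(\Psi^{\mathrm{AbsCor}}_\nu\bigr)\le\eta+1$, which is exactly the claimed $(\eta+1,\epsilon)$-complete limited correlation. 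I do not expect a serious obstacle here: the only delicate point is the bookkeeping of the piecewise definitions of $\Psi^{\mathrm{Cor}}$ and $\Psi^{\mathrm{Inf}}$ on pinned coordinates when checking the identity above, and everything else is a short Perron--Frobenius monotonicity argument.
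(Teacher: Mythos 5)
Your proposal is correct and follows essentially the same route as the paper: the identity $\Psi^{\mathrm{Cor}}_\nu = \mathrm{diag}(\nu_i(\0))\,(\Psi^{\mathrm{Inf}}_\nu + I)$ is exactly the paper's \Cref{lem:cor-inf}, and the conclusion via entrywise domination ($0<\nu_i(\0)\le 1$, zero diagonal of $\Psi^{\mathrm{Inf}}$), \Cref{lem:CP-SR-GE0}, and $\rho(A+I)=\rho(A)+1$ for non-negative $A$ is the same argument the paper gives (the paper handles pinned coordinates by passing to the submatrix on $\{i:\mu_i(\0)>0\}$, while you check the degenerate entries directly, a cosmetic difference).
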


\begin{lemma} \label{lem:muk-C-SI}
 Let $\eta, \epsilon > 0$.
  Let $\mu$ be a distribution over $\{\0, \1\}^{[n]}$ satisfying
 \begin{align}
 \label{eq-def-mu-min-mi}
\mu_{-1}^{\min} \triangleq \min_{v \in [n]}\min_{\sigma \in \Omega(\mu_{[n]\setminus \{v\}})}\mu^\sigma_v(-1) > 0.
\end{align}
Let $k_0 = k_0(\mu,\epsilon) = {10(1+\epsilon)}/{\mu^{\min}_{-1}} > 0$ be a finite real number.
  If $\mu$ has $(\eta, \epsilon)$-complete limited correlation, then for each integer $k \geq k_0$, it holds that $\mu_k$ is $(2\eta + 3, \epsilon)$-completely spectrally independent.
\end{lemma}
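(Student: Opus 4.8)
The plan is to bound the spectral radius of the absolute influence matrix $\Psi_{(\*\lambda * \mu_k)^\tau}$ for an arbitrary external field $\*\lambda \in (0,1+\epsilon]^{n\times k}$ and pinning $\tau$, by comparing the $k$-transformed distribution with the original distribution $\mu$ via the natural coupling underlying $\Rd(\mu,k)$. First I would observe that $\mu_k$ replaces each coordinate $i\in[n]$ by a hardcore $k$-clique gadget $\{(i,1),\dots,(i,k)\}$ (at most one of the $k$ copies can be $\1$), and that conditioning on any configuration of the gadget for $i$ is equivalent, from the point of view of the remaining coordinates, to a pinning of $\sigma_i$ in $\mu$ (either $\sigma_i=\0$, if some copy is already pinned, or a pinning to $\1$ of one copy, or "free" if no copy is touched). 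So it suffices to analyze pinnings $\tau$ on $\mu_k$ that touch at most a partial set of gadgets, and within the remaining untouched gadgets.

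The key steps, in order: (1) Fix $\*\lambda$ and $\tau$; let $U\subseteq[n]$ be the set of indices whose gadgets are not fully determined by $\tau$, and note that $(\*\lambda * \mu_k)^\tau$ is itself (up to relabeling) the $k$-transformation of $(\*\lambda' * \mu^{\sigma})$ for an induced field $\*\lambda' \in (0,1+\epsilon]^{U}$ (each gadget field is the product/aggregate of its copies' fields, which stays in range because products of numbers in $(0,1+\epsilon]$ need not stay in range — here I must be careful, and this is where the factor-of-$10$ slack and the $\mu_{-1}^{\min}$ lower bound enter: the effective single-site field on index $i$ in the gadget construction is controlled by $k_0$ being large enough that the "at most one copy is $\1$" normalization keeps the induced marginal ratios comparable to those of $\mu$). (2) Decompose the influence matrix of $\mu_k$ into a block structure indexed by gadgets: intra-gadget influences (between copies $(i,j)$ and $(i,j')$) and inter-gadget influences (between $(i,j)$ and $(i',j')$ with $i\neq i'$). (3) Bound intra-gadget entries directly — within one hardcore $k$-clique with bounded fields, the pairwise total-variation influence $\DTV{\cdot}{\cdot}$ between copies is $O(1/k)$, so summing over the $k-1$ siblings contributes a bounded constant (this is the source of the additive "$+3$" type terms). (4) Bound inter-gadget entries by relating $\DTV{(\mu_k)_{(i',j')}^{\dots\wedge(i,j)\gets x}}{(\mu_k)_{(i',j')}^{\dots\wedge(i,j)\gets y}}$ to the corresponding influence $\Psi_{\mu^\sigma}(i,i')$ (divided by $k$, since the $\1$-mass on gadget $i'$ is spread over $k$ copies): flipping one copy of gadget $i$ changes the effective pinning of $\sigma_i$, whose influence on $\sigma_{i'}$ in $\mu$ is captured by the limited-correlation / influence bound, and this then spreads as $1/k$ over the copies of $i'$. (5) Assemble: write $\Psi_{(\*\lambda*\mu_k)^\tau} \preceq \text{(intra-gadget part, rank-controlled block-diagonal)} + \frac{1}{k} J_k \otimes \Psi'$ where $\Psi'$ is (dominated by) an influence/correlation matrix of a conditional magnetized $\mu$ and $J_k$ is the $k\times k$ all-ones matrix; then use $\rho(J_k) = k$ and $\rho(A\otimes B)=\rho(A)\rho(B)$ together with $\rho(\Psi') \le \eta$ (from the $(\eta,\epsilon)$-complete limited correlation hypothesis applied to the appropriate link of $\mu$) and Lemma~\ref{lem:CP-SR-GE0} / Weyl-type subadditivity of spectral radii for non-negative matrices to conclude $\rho \le 2\eta + 3$. (The "$2$" in "$2\eta$" rather than "$\eta$" is the price of passing through the correlation matrix in Lemma~\ref{lemma-SI-to-Cor} and back, plus slack in the gadget-field estimates.)

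The main obstacle I expect is step (1) combined with the field-range bookkeeping in step (4)–(5): the $k$-transformation does not simply push a field $\lambda_i$ to $k$ equal fields, and conditioning inside gadgets induces effective single-site fields that can drift; one must use that $k \ge k_0 = 10(1+\epsilon)/\mu_{-1}^{\min}$ to guarantee that, after renormalizing the gadget so exactly one copy may be $\1$, the induced conditional distribution on $[n]$ is a magnetization $\*\lambda' * \mu^\sigma$ with $\*\lambda' \in (0,1+\epsilon]^{\cdot}$ (so that the complete-limited-correlation hypothesis is actually applicable), and that the cross-influence estimate $\Psi_{\mu_k}((i,j),(i',j')) \lesssim \frac{1}{k}\Psi^{\mathrm{AbsCor}}_{\mu^\sigma}(i,i') + o(1/k)$ holds with the error terms uniformly absorbed. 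Getting these constants to line up to exactly $2\eta+3$ — rather than a worse constant — is the delicate part; everything else is the tensor-product spectral-radius computation, which is routine given Lemma~\ref{lem:CP-SR-GE0}.
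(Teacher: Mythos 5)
Your proposal has the right geometric picture (gadget lifting, $1/k$ spreading of the $\1$-mass, applying the hypothesis to a magnetized conditional of $\mu$), but the assembly step (5) rests on a claim that is false: the spectral radius of a sum of non-negative matrices is \emph{not} bounded by the sum of the spectral radii (e.g.\ $A=E_{12}$, $B=E_{21}$ have $\rho(A)=\rho(B)=0$ but $\rho(A+B)=1$), and \Cref{lem:CP-SR-GE0} only gives monotonicity under entrywise domination, not subadditivity. A Weyl-type argument would require both pieces of your decomposition (the block-diagonal intra-gadget part and $\tfrac1k J_k\otimes\Psi'$) to be simultaneously symmetrizable with respect to the \emph{same} weighting, which you have not arranged and which is exactly the delicate point. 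The paper avoids this entirely: in \Cref{lem:mu-C-Cor => muk-C-Cor} it works with the absolute \emph{correlation} matrix of a generalized $\vec{k}$-transformation, dominates it by a matrix $\widehat{\Psi}_{\vec{k}}$ that is symmetrizable with respect to $\inner{\cdot}{\cdot}_{(\*y*\mu_{\vec{k}})}$, and then computes its spectrum exactly by exhibiting a full orthogonal eigenbasis (lifted eigenvectors $F^t$ carrying the eigenvalues of the base matrix, plus within-gadget vectors $F[u]^t$ of eigenvalue $1$); this is what produces the clean ``$+2$'', with no appeal to subadditivity.

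There is a second gap tied to the influence/correlation distinction. The influence matrix compares the pinnings $u_i\gets\1$ and $u_i\gets\0$ of a single copy, and pinning a copy to $\0$ does \emph{not} pin the base variable $\sigma_u$ — it only slightly perturbs the effective field on gadget $u$ — so your step (4) (``flipping one copy changes the effective pinning of $\sigma_i$'') is not correct as stated, and the $\tfrac1k$ scaling you invoke is only available for correlation-type entries, i.e.\ ``pinned to $\1$ versus unpinned'' (cf.\ \Cref{claim-muk}). This is precisely why the paper lifts the \emph{correlation} bound and converts back to influences only at the very end, via \Cref{lem:cor-inf}, using the copy-wise marginal bound $\pi_{v_i}(\0)\ge 2/3$ obtained from \Cref{lem:muk-pin-ratio}; that bound is the \emph{only} place where $k\ge k_0=10(1+\epsilon)/\mu^{\min}_{-1}$ is needed. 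Your proposal instead attributes $k_0$ to keeping the induced fields in range, but the induced fields are the averages $\tfrac1{k_i}\sum_j z_{(i,j)}$ and lie in $(0,1+\epsilon]$ automatically. Accordingly, the final constant is not your ``intra-gadget $+3$'' bookkeeping: it arises as $\tfrac32(\eta+2)\le 2\eta+3$, the factor $\tfrac32$ being $1/\pi_{v_i}(\0)$.
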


\Cref{lemma-part-1} is a straightforward corollary of the above two lemmas.
We then prove \Cref{lemma-SI-to-Cor} in \Cref{sec-lemma-SI-to-Cor}, and prove \Cref{lem:muk-C-SI} in \Cref{sec-lem:muk-C-SI} respectively.

\subsubsection{Proof of \Cref{lemma-SI-to-Cor}}\label{sec-lemma-SI-to-Cor}
The following lemma is a well-known fact for non-negative matrix.

\begin{lemma}[\text{\cite[Lemma 8.3.1]{horn2012matrix}}]
  Let $A \in \mathds{R}^{n \times n}_{\ge 0}$ be a non-negative matrix. The spectral radius $\rho(A)$ equals to the maximum eigenvalue $\lambda_{\max}(A)$. 
  Consequently, $\rho(A+I) = \rho(A) + 1$.
\end{lemma}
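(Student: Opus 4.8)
The plan is to derive both assertions from the Perron--Frobenius theorem for non-negative matrices, which is exactly the fact being cited from \cite{horn2012matrix}; everything beyond that is elementary.

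First I would recall that for any square matrix $B$ one has $\rho(B)=\max\{|\lambda|:\lambda\in\sigma(B)\}$, where $\sigma(B)$ denotes the spectrum, and that the Perron--Frobenius theorem guarantees, for non-negative $B$, that $\rho(B)$ is itself an eigenvalue of $B$ (with an associated non-negative eigenvector). Applying this to $A$: every eigenvalue $\lambda$ of $A$ satisfies $\mathrm{Re}(\lambda)\le|\lambda|\le\rho(A)$, while $\rho(A)$ is a real non-negative eigenvalue attaining this bound, so $\rho(A)$ is simultaneously the largest real eigenvalue and the eigenvalue of largest real part of $A$; in either reading this is $\lambda_{\max}(A)$. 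This is the first claim (it is the content of Theorem/Lemma~8.3.1 of \cite{horn2012matrix}).

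For the ``consequently'' part I would use the elementary spectral shift $\sigma(A+I)=1+\sigma(A)$: indeed $\lambda\in\sigma(A)$ iff $A-\lambda I$ is singular iff $(A+I)-(1+\lambda)I$ is singular iff $1+\lambda\in\sigma(A+I)$. For the upper bound, the triangle inequality gives $|1+\lambda|\le 1+|\lambda|\le 1+\rho(A)$ for every $\lambda\in\sigma(A)$, hence $\rho(A+I)\le 1+\rho(A)$. For the matching lower bound, $\rho(A)\in\sigma(A)$ and $\rho(A)\ge 0$ give $1+\rho(A)\in\sigma(A+I)$ with $|1+\rho(A)|=1+\rho(A)$, so $\rho(A+I)\ge 1+\rho(A)$. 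Combining the two bounds, $\rho(A+I)=\rho(A)+1$.

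I do not foresee any genuine obstacle here: the only non-elementary input is the Perron--Frobenius statement that the spectral radius of a non-negative matrix is attained as an eigenvalue, which is precisely what the lemma cites; the rest is the triangle inequality and the affine spectral mapping $B\mapsto B+I$. (One could even avoid invoking Perron--Frobenius separately for $A+I$, since the upper bound on $\rho(A+I)$ needs only $|1+\lambda|\le 1+|\lambda|$ together with $|\lambda|\le\rho(A)$.)
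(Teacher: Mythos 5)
Your proof is correct. The paper does not prove this statement at all---it is quoted verbatim as a black-box citation of \cite[Lemma 8.3.1]{horn2012matrix}---so there is no in-paper argument to compare against; what you give is the standard derivation. The first claim is exactly the Perron--Frobenius fact that for a non-negative matrix the spectral radius is attained as a (real, non-negative) eigenvalue, and your handling of the ``consequently'' part is right where it matters: the upper bound $\rho(A+I)\le 1+\rho(A)$ uses only the spectral shift $\sigma(A+I)=1+\sigma(A)$ and the triangle inequality, while the matching lower bound genuinely needs non-negativity, namely that $\rho(A)\in\sigma(A)$ so that $1+\rho(A)\in\sigma(A+I)$ with $|1+\rho(A)|=1+\rho(A)$ (for a general matrix the conclusion fails, e.g.\ $A=-I$ has $\rho(A)=1$ but $\rho(A+I)=0$). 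You invoke Perron--Frobenius precisely at that point, so the argument is complete.
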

The next lemma is the relation between the influence matrix and the correlation matrix. 
Recall that the signed influence matrix is defined in~\Cref{definition-sign-inf-matrix}.

\begin{lemma} \label{lem:cor-inf}
  Let $\mu$ be a distribution over $\{\0, \1\}^{[n]}$ satisfying $\mu_i(\0) > 0$ for all $i \in [n]$, it holds that 
  \begin{align*}
  	\Psi^{\-{Inf}}_\mu = \-{diag}^{-1}\tp{\{\mu_i(\0)\}_{i\in [n]}} \Psi^{\-{Cor}}_\mu - I,
  \end{align*}
  where $\-{diag}^{-1}\tp{\{\mu_i(\0)\}_{i\in [n]}}$ is a diagonal matrix satisfying $\-{diag}^{-1}\tp{\{\mu_i(\0)\}_{i\in [n]}}(i,i) = \frac{1}{\mu_i(\0)}$, and $I$ is the $n$-by-$n$ identity matrix.
\end{lemma}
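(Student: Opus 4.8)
The plan is to prove the claimed matrix identity entrywise. Since $\-{diag}^{-1}\tp{\{\mu_i(\0)\}_{i\in[n]}}$ is the diagonal matrix whose $(i,i)$-entry is $1/\mu_i(\0)$ — well-defined by the hypothesis $\mu_i(\0)>0$ — left-multiplication by it simply rescales the $i$-th row of $\Psi^{\-{Cor}}_\mu$ by $1/\mu_i(\0)$. So it suffices to verify, for every $i,j\in[n]$, that
\begin{align*}
\Psi^{\-{Inf}}_\mu(i,j) \;=\; \frac{1}{\mu_i(\0)}\,\Psi^{\-{Cor}}_\mu(i,j) \;-\; \one{i=j}.
\end{align*}

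First I would dispose of the diagonal case $i=j$: by \Cref{definition-correlation-matrix} we have $\Psi^{\-{Cor}}_\mu(i,i)=\mu_i(\0)$, so the right-hand side equals $\mu_i(\0)/\mu_i(\0)-1=0$, which matches $\Psi^{\-{Inf}}_\mu(i,i)=0$ from \Cref{definition-sign-inf-matrix}. For the off-diagonal case $i\ne j$, I would split according to the support $\Omega(\mu_i)$, noting that $\0\in\Omega(\mu_i)$ always since $\mu_i(\0)>0$. If $\1\notin\Omega(\mu_i)$, then $\Psi^{\-{Cor}}_\mu(i,j)=0$ by \Cref{definition-correlation-matrix} and $\Psi^{\-{Inf}}_\mu(i,j)=0$ by \Cref{definition-sign-inf-matrix} (as $\Omega(\mu_i)\ne\{\0,\1\}$), so both sides are zero. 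Otherwise $\Omega(\mu_i)=\{\0,\1\}$, both conditional marginals $\mu_j^{i\gets\1}$ and $\mu_j^{i\gets\0}$ are well-defined, and the law of total probability at coordinate $i$ gives
\begin{align*}
\mu_j(\1) \;=\; \mu_i(\1)\,\mu_j^{i\gets\1}(\1) \;+\; \mu_i(\0)\,\mu_j^{i\gets\0}(\1).
\end{align*}
Substituting $\mu_i(\1)=1-\mu_i(\0)$ and rearranging then yields $\mu_j^{i\gets\1}(\1)-\mu_j(\1)=\mu_i(\0)\tp{\mu_j^{i\gets\1}(\1)-\mu_j^{i\gets\0}(\1)}$, i.e.
\begin{align*}
\frac{1}{\mu_i(\0)}\,\Psi^{\-{Cor}}_\mu(i,j) \;=\; \mu_j^{i\gets\1}(\1)-\mu_j^{i\gets\0}(\1) \;=\; \Psi^{\-{Inf}}_\mu(i,j),
\end{align*}
which completes the verification.

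This lemma is a direct unwinding of the definitions, so I do not anticipate a genuine obstacle. The only point requiring a little care is the case split on whether $\1\in\Omega(\mu_i)$: this is precisely what guarantees that the conditional marginals $\mu_j^{i\gets\1}$ and $\mu_j^{i\gets\0}$ are meaningful and that the two matrices are being evaluated on the matching branch of their respective definitions. Once that is handled, the single total-probability identity closes everything.
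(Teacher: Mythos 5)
Your proof is correct and follows essentially the same route as the paper: an entrywise verification from the definitions, where the key off-diagonal identity $\mu_j^{i\gets\1}(\1)-\mu_j(\1)=\mu_i(\0)\bigl(\mu_j^{i\gets\1}(\1)-\mu_j^{i\gets\0}(\1)\bigr)$ is obtained in the paper by expanding in joint probabilities and in your write-up equivalently via the law of total probability. The case handling (diagonal, and $\1\notin\Omega(\mu_i)$, noting $\0\in\Omega(\mu_i)$ is automatic from $\mu_i(\0)>0$) matches the paper's treatment.
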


\begin{proof}
Suppose $i \neq j$. If $\Omega(\mu_i)= \{\0\}$ or $\Omega(\mu_i) = \{\1\}$, then it holds that $\Psi^{\mathrm{Inf}}_\mu(i,j) = \Psi^{\mathrm{Cor}}_\mu(i,j) = 0$. Suppose $\Omega(\mu_i) = \{\0,\1\}$. It holds that
\begin{align*}
  \Psi^{\-{Inf}}_\mu(i, j)
  &=  \mu_j^{i\gets \1}(\1) - \mu_{j}^{i \gets \0}(\1)\\
  &= \frac{\Pr[X \sim \mu]{X_i = \1 \land X_j = \1}}{\Pr[X \sim \mu]{X_i = \1}} - \frac{\Pr[X\sim \mu]{X_j = \1} -\Pr[X \sim \mu]{X_i = \1 \land X_j = \1}}{\Pr[X \sim \mu]{X_i = \0}}\\
  &= \frac{\Pr[X \sim \mu]{X_i = \1 \land X_j = \1} - \Pr[X\sim \mu]{X_i= \1}\Pr[X\sim \mu]{X_j = \1}}{\Pr[X \sim \mu]{X_i = \0}\Pr[X \sim \mu]{X_i = \1}}\\
  &= \frac{1}{\mu_i(\0)} \Psi^{\-{Cor}}_\mu (i, j).
\end{align*}
By definition, if $i=j$, then $\Psi^{\-{Inf}}_\mu(i,i) = 0$, and thus we have $\Psi^{\-{Inf}}_\mu = \-{diag}^{-1}\tp{\{\mu_i(\0)\}_{i\in [n]}} \Psi^{\-{Cor}}_\mu - I$.
\end{proof}

Now, we are ready to prove \Cref{lemma-SI-to-Cor}.
\begin{proof}[Proof of \Cref{lemma-SI-to-Cor}]
We use $\Psi^{\-{AbsInf}}_{\cdot}$ to denote the absolute influence matrix in \Cref{definition-SI}.
By \Cref{definition-complete-SI} and \Cref{definition-abs-correlation-matrix}, it suffices to prove that for any distribution $\mu$ over $\{\0,\1\}^{[n]}$, it holds that 
\begin{align}\label{eq-rhocor-rhoinf}
	\rho\tp{\Psi^{\mathrm{AbsCor}}_\mu} \leq \rho\tp{\Psi^{\mathrm{AbsInf}}_\mu} + 1.
\end{align}
\Cref{lemma-SI-to-Cor} is a straightforward corollary of the above inequality. 

Note that for any $i \in [n]$ such that $\mu_i(\0) = 0$, the $i$-th row and $i$-th column in $\Psi^{\mathrm{Cor}}_{\mu}$ and $\Psi^{\mathrm{Inf}}_\mu$ are all 0. 
Hence, it suffices to consider $\Psi^{\mathrm{Cor}}_{\mu_S}$  and $\Psi^{\mathrm{Inf}}_{\mu_S}$, where $S = \{ i \in [n] \mid \mu_i(\0) > 0\}$.
Without loss of generality, we can assume that the distribution $\mu$ satisfies $\mu_i(\0) > 0$ for all $i \in [n]$. By \Cref{lem:cor-inf},
\begin{align*}
\Psi^{\-{Cor}}_\mu  = \-{diag}\tp{\{\mu_i(\0)\}_{i\in [n]}}(\Psi^{\-{Inf}}_\mu + I)
\end{align*}
Note that $\Psi^{\-{Inf}}_\mu(i,i) = 0$ for all $i \in [n]$. For any $i,j \in [n]$, it holds that
\begin{align*}
\Psi^{\-{AbsCor}}_\mu(i,j) \leq  \Psi^{\-{AbsInf}}_\mu(i,j) + I(i,j) 
\end{align*}
because $0 < \mu_i(\0) \le 1$ for all $i \in [n]$. This implies~\eqref{eq-rhocor-rhoinf}.
\end{proof}

\subsubsection{Proof of \Cref{lem:muk-C-SI}}\label{sec-lem:muk-C-SI}
We use the following definitions and lemmas to prove \Cref{lem:muk-C-SI}.
Let $\mu$ be a distribution over $\{\0,\1\}^{n}$.
For any integer $k \geq 1$, let $\mu_k$ denote the $k$-transformation of $\mu$ (\Cref{def:k-trans}).
We use $V = [n]$ to denote the variable set of $\mu$ and $V_k = [n] \times [k]$ to denote the variable set of $\mu_k$. 
For each $v \in [n]$ and $i \in [k]$, we use $v_i$ to denote the pair $(v, i) \in V_k$. 
For any $\Lambda \subseteq V_k$, we use $\mu_{k,\Lambda}$ to denote the marginal distribution on $\Lambda$ projected from $\mu_k$. 
We simply denote $\mu_{k,\{v_i\}}$ by $\mu_{k,v_i}$.

\begin{lemma} \label{lem:mu-C-Cor => muk-C-Cor}
Let $\eta,\epsilon > 0$.
  If $\mu$  has $(\eta, \epsilon)$-complete limited correlation, 
  then for any integer $k \geq 1$, $\mu_k$ has $(\eta + 2, \epsilon)$-complete limited correlation.
\end{lemma}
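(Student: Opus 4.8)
The plan is to follow the block structure of $V_k=[n]\times[k]$, in which the $k$ copies of a vertex $v$ form a ``hardcore-clique'' gadget. Fix $\*\lambda'\in(0,1+\epsilon]^{V_k}$ and a pinning $\sigma$ on $V_k$, and set $\nu=(\*\lambda'*\mu_k)^\sigma$; the goal is $\rho\tp{\Psi^{\mathrm{AbsCor}}_\nu}\le\eta+2$. The case $k=1$ is immediate since $\mu_1$ is a relabelling of $\mu$, so assume $k\ge2$. Variables that are deterministic under $\nu$ contribute only an all-zero row and column in $\Psi^{\mathrm{Cor}}_\nu$ apart from a diagonal entry in $\{0,1\}$, hence split off as a principal block of $\Psi^{\mathrm{AbsCor}}_\nu$ of spectral radius at most $1\le\eta+2$; so it suffices to bound the sub-block $L$ on the \emph{live} pairs $v_i$ with $\Pr_\nu[Y_{v_i}=\1]>0$, which for $k\ge2$ are exactly the unpinned copies of the still non-deterministic vertices.

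First I would identify the vertex-level projection of $\nu$. Letting $B_v\subseteq[k]$ be the copies of $v$ not fixed by $\sigma$ and $\sigma'$ the pinning of $[n]$ that fixes $v$ to $\1$ when some copy of $v$ is pinned to $\1$ and to $\0$ when all copies of $v$ are pinned to $\0$, a direct computation — using that $\mu_k$ chooses the $\1$-copy of an ``on'' vertex uniformly among its $k$ copies — shows that the law $\mu'$ of $\tp{X_v}_{v\in[n]}$ under $\nu$ equals $\tp{\*\Lambda*\mu^{\sigma'}}$ with local fields $\Lambda_v=\frac1k\sum_{i\in B_v}\lambda'_{v_i}\le\frac{\abs{B_v}}{k}(1+\epsilon)\le1+\epsilon$, and that, conditioned on $\tp{X_v}_v$, the $\1$-copy inside each ``on'' vertex $v$ is an independent draw from the normalized fields $r_{v_i}=\lambda'_{v_i}\big/\sum_{j\in B_v}\lambda'_{v_j}$. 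This yields two facts: (i) $\mu'$ is a pinned magnetization of $\mu$ by fields in $(0,1+\epsilon]^{n}$, so the $(\eta,\epsilon)$-complete limited correlation of $\mu$ gives $\rho\tp{\Psi^{\mathrm{AbsCor}}_{\mu'}}\le\eta$; and (ii) $\mathrm{Cov}_\nu\tp{X_{v_i},X_{u_j}}=r_{v_i}r_{u_j}\,\mathrm{Cov}_{\mu'}\tp{X_v,X_u}$ for $u\ne v$, while two copies of the same vertex are never simultaneously $\1$. The hard part will be this projection step: checking that the $1/k$ factor of the $k$-transformation exactly keeps the projected field $\frac1k\sum_{i\in B_v}\lambda'_{v_i}$ below $1+\epsilon$, and carrying out the bookkeeping of pinnings (which copies survive, which vertices become deterministic) so that the reduction to the live block $L$ is legitimate.

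The rest is linear algebra. On $L$ write $p_{v_i}=\Pr_\nu[Y_{v_i}=\1]>0$; conjugating $\Psi^{\mathrm{Cor}}_\nu$ by $\mathrm{diag}\tp{\{p_{v_i}^{1/2}\}_{v_i}}$ makes it a symmetric matrix $\widehat\Psi$ with the same spectrum, whence $\rho(L)=\lambda_{\max}\tp{\abs{\widehat\Psi}}$. Decompose $\abs{\widehat\Psi}=D+A$ into its within-block part $D$ (block-diagonal, including the main diagonal) and its between-block part $A$. On block $B_v$ one computes $D\big|_{B_v}=\sqrt{p}\,\sqrt{p}^{\mathsf T}+\mathrm{diag}\tp{1-2p_{v_i}}\preceq\sqrt{p}\,\sqrt{p}^{\mathsf T}+I$ (here $p=(p_{v_i})_{i\in B_v}$), whose top eigenvalue is $\sum_{i\in B_v}p_{v_i}+1=\Pr_\nu[X_v=\1]+1\le2$, giving $\lambda_{\max}(D)\le2$. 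For $A$, fact (ii) gives $A_{v_i,u_j}=\sqrt{r_{v_i}r_{u_j}}\,N_{v,u}$ where $N$ is the symmetrized $\Psi^{\mathrm{AbsCor}}_{\mu'}$ (restricted to the non-deterministic vertices) with its diagonal set to zero; introducing $S$ with $(Sy)_v=\sum_{i\in B_v}\sqrt{r_{v_i}}\,y_{v_i}$ one gets $A=S^{\mathsf T}NS$ and, crucially, $SS^{\mathsf T}=I$ since $\sum_{i\in B_v}r_{v_i}=1$, so the nonzero eigenvalues of $A$ coincide with those of $NSS^{\mathsf T}=N$; hence $\lambda_{\max}(A)\le\rho(N)\le\rho\tp{\Psi^{\mathrm{AbsCor}}_{\mu'}}\le\eta$, the middle step by \Cref{lem:CP-SR-GE0} since $N$ is entrywise dominated by a principal submatrix of the symmetrized $\Psi^{\mathrm{AbsCor}}_{\mu'}$. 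Finally Weyl's inequality together with Perron--Frobenius ($\rho=\lambda_{\max}$ for symmetric non-negative matrices) gives $\rho(L)=\lambda_{\max}(D+A)\le\lambda_{\max}(D)+\lambda_{\max}(A)\le\eta+2$, which is what we wanted.
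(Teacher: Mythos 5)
Your proposal is correct, but it takes a genuinely different route from the paper's. The paper first reduces an arbitrary pinning of $\mu_k$ to a generalized $\vec{k}$-transformation of a magnetized, pinned copy of $\mu$ (citing \cite[Lemma 15]{anari2021entropicII}), and then proves the key inequality $\rho\tp{\Psi^{\-{AbsCor}}_{\*y*\mu_{\vec k}}}\le\rho\tp{\Psi^{\-{AbsCor}}_{\*x*\mu}}+2$ by modifying the diagonal to form $\widehat{\Psi}_{\vec k}$, collapsing it to a vertex-level matrix $\widehat{\Psi}$, and explicitly exhibiting a complete orthogonal eigenbasis of $\widehat{\Psi}_{\vec k}$ (lifted eigenvectors of $\widehat{\Psi}$ plus within-clique vectors of eigenvalue $1$), so that $\rho(\widehat{\Psi}_{\vec k})\le\max\{\rho(\widehat\Psi),1\}\le\rho(\Psi)+2$. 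You instead derive the pinned projection directly (your field $\Lambda_v=\frac1k\sum_{i\in B_v}\lambda'_{v_i}\le 1+\epsilon$ and the conditional copy-choice with weights $r_{v_i}$ is exactly the content of the paper's Claim \ref{claim-muk} and its pinning bookkeeping), symmetrize by $\mathrm{diag}(p^{1/2})$, and split $\abs{\widehat\Psi}=D+A$ into clique and cross-clique parts, bounding $\lambda_{\max}(D)\le 2$ from $\sum_{i\in B_v}p_{v_i}\le1$ and $\lambda_{\max}(A)\le\eta$ via $A=S^{\mathsf T}NS$ with $SS^{\mathsf T}=I$ and entrywise domination (\Cref{lem:CP-SR-GE0}), finishing with Weyl. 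Your route avoids the external pinning lemma and the full-spectrum computation and makes the additive split $\eta+2$ transparent (the $2$ comes from the hardcore-clique block, the $\eta$ from the cross-block), while the paper's computation yields the exact spectrum of the modified matrix, paralleling the signed-matrix argument of \cite{anari2021entropicII}. One small inaccuracy: the live copies are not exactly ``unpinned copies of non-deterministic vertices'' — a vertex forced to $\1$ under $\mu'$ with at least two unpinned copies still has random (hence live) copies — but this is harmless, since for such a vertex $\mathrm{Cov}_{\mu'}(X_v,X_u)=0$ so its rows of $A$ vanish and the within-block bound $\sum_i p_{v_i}\le1$ still applies; you should just define $N$ on all vertices with live copies (with zero rows at deterministic ones) when writing $A=S^{\mathsf T}NS$.
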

\begin{lemma} \label{lem:muk-pin-ratio}
Let $\epsilon \geq 0$ and $k \in \mathds{Z}_{> 0}$. 
Let $\*z \in (0, \epsilon]^{V_k}$, $\Lambda \subseteq V_k$, $v_i \in V_k \setminus \Lambda$.
For any $\sigma \in \Omega(\mu_{k, \Lambda})$ where $\mu_{k, v_i}^\sigma(\1) > 0$, there exist $\*x \in (0, \epsilon]^V$ satisfying $x_v = 1$, a subset $R \subseteq V$ satisfying $v \notin R$, and a  partial configuration $\tau \in \Omega(\mu_R)$ such that 
  \begin{align*}
    \frac{\tp{\*z * \mu_k}^{\sigma}_{v_i}(\0)}{\tp{\*z * \mu_k}^{\sigma}_{v_i}(\1)}
    &= \frac{k}{z_{v_i}}\tp{\frac{\tp{\*x * \mu}^{\tau}_v(\0)}{\tp{\*x * \mu}^\tau_v(\1)} + \frac{1}{k} \sum_{v_j \in C_v \setminus (\Lambda \cup\{v_i\})} z_{v_j}},
  \end{align*}
where  $C_v \triangleq \{v_i \mid i \in [k]\}$.
\end{lemma}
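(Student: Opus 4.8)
The plan is to expand both ratios directly from the combinatorial description of the $k$-transformation and then match the pieces. Recall that $Y\sim\mu_k=\Rd(\mu,k)$ arises from $X\sim\mu$ by, independently over $v'\in[n]$, setting $Y_{C_{v'}}\equiv-1$ when $X_{v'}=-1$, and otherwise planting a single $+1$ at a uniformly random coordinate of $C_{v'}$; consequently every configuration in $\Omega(\mu_k)$ restricts on each clique $C_{v'}$ to either all $-1$ or exactly one $+1$. Writing $(\*z*\mu_k)^\sigma_{v_i}(a)\propto\sum_{x\in\{-1,+1\}^V}\mu(x)\sum_{\eta}\Pr{Y=\eta\mid X=x}\prod_{v_j:\,\eta_{v_j}=+1}z_{v_j}$, where $\eta$ ranges over configurations agreeing with $\sigma$ on $\Lambda$ and having $\eta_{v_i}=a$, and using that both $\Pr{Y=\eta\mid X=x}$ and the $\*z$-weight factorize over cliques, the target ratio $(\*z*\mu_k)^\sigma_{v_i}(-1)/(\*z*\mu_k)^\sigma_{v_i}(+1)$ becomes a quotient of two sums over $x$, each a product of a ``$v$-factor'' (depending only on $x_v$) and, for every $v'\ne v$, a factor $Q_{v'}(x_{v'})$ obtained by summing the clique-local contribution of $C_{v'}$ over its free coordinates.

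First I would evaluate the $v$-factor. Since $v_i\notin\Lambda$, the hypothesis $\mu_{k,v_i}^\sigma(+1)>0$ forces $\sigma$ to pin every coordinate of $C_v\cap\Lambda$ to $-1$ (a pinned $+1$ elsewhere in $C_v$ would make $\eta_{v_i}=+1$ impossible). A short case analysis on $x_v$ then gives the numerator $v$-factor $\mathbf{1}[x_v=-1]+s\,\mathbf{1}[x_v=+1]$ with $s=\tfrac1k\sum_{v_j\in C_v\setminus(\Lambda\cup\{v_i\})}z_{v_j}$, and the denominator $v$-factor $t\,\mathbf{1}[x_v=+1]$ with $t=z_{v_i}/k$; note $s$ and $1/t=k/z_{v_i}$ are exactly the quantities in the claim. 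Next I would classify each $v'\ne v$ by how $\sigma$ meets $C_{v'}$: (i) some coordinate of $C_{v'}$ is pinned to $+1$; (ii) $C_{v'}\cap\Lambda\ne\emptyset$ with all pinned coordinates equal to $-1$; (iii) $C_{v'}\cap\Lambda=\emptyset$. A direct computation of $Q_{v'}(\pm1)$ shows that, up to an $x_{v'}$-independent positive constant (which cancels between numerator and denominator), $Q_{v'}(x_{v'})$ is proportional to $\mathbf{1}[x_{v'}=+1]$ in case (i), to $\mathbf{1}[x_{v'}=-1]$ in case (ii) when $C_{v'}$ is entirely pinned, and to $\mathbf{1}[x_{v'}=-1]+q_{v'}\mathbf{1}[x_{v'}=+1]$ otherwise, where $q_{v'}=\tfrac1k\sum_{v'_j\in C_{v'}\setminus\Lambda}z_{v'_j}$ in case (ii) and $q_{v'}=\tfrac1k\sum_{j\in[k]}z_{v'_j}$ in case (iii); in both these cases $q_{v'}\in(0,\epsilon]$, being an average of coordinates of $\*z$. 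Accordingly, let $R$ consist of the type-(i) vertices (with $\tau_{v'}=+1$) together with the fully pinned type-(ii) vertices (with $\tau_{v'}=-1$), and let $\*x$ equal $q_{v'}$ on the remaining ``magnetized'' vertices and $1$ on $R\cup\{v\}$; then $\*x\in(0,\epsilon]^V$, $x_v=1$, and $v\notin R$.

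With these choices, the indicator factors restrict the sums over $x$ to configurations agreeing with $\tau$ on $R$, while the factors $\mathbf{1}[x_{v'}=-1]+q_{v'}\mathbf{1}[x_{v'}=+1]$ on the magnetized vertices, multiplied into $\mu(x)$, rebuild precisely $(\*x*\mu)(x)$ (vertices in $R\cup\{v\}$ carry field $1$ and contribute nothing). Hence each sum equals a common normalizing factor $\Pr[\*x*\mu]{X_R=\tau}$ times the expectation under $(\*x*\mu)^\tau$ of the relevant $v$-factor: $(\*x*\mu)^\tau_v(-1)+s\,(\*x*\mu)^\tau_v(+1)$ for the numerator and $t\,(\*x*\mu)^\tau_v(+1)$ for the denominator. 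Dividing, the common factor cancels and the ratio becomes $\tfrac1t\bigl((\*x*\mu)^\tau_v(-1)/(\*x*\mu)^\tau_v(+1)+s\bigr)$, which is the asserted identity after substituting $1/t=k/z_{v_i}$ and the definition of $s$. Consistency of the construction — that $\tau\in\Omega(\mu_R)$ and $(\*x*\mu)^\tau_v(+1)>0$, so that no denominator vanishes — follows by taking any $\eta\in\Omega(\mu_k)$ consistent with $\sigma$ and with $\eta_{v_i}=+1$ (which exists by hypothesis) and inspecting a $\mu$-preimage of it. I expect the only real obstacle to be bookkeeping: keeping the three vertex types, the free versus pinned coordinates within each clique, and the cancelling constants aligned so that the factorization genuinely collapses to a single magnetized, pinned copy of $\mu$.
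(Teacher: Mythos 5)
Your proposal is correct and follows essentially the same route as the paper: expand both conditional probabilities by summing over $X=Y^\star$, factorize clique by clique, absorb the pinned cliques into $R$ and $\tau$ (fully $-1$-pinned and $+1$-pinned cliques) and the partially/un-pinned cliques into the field $\*x$ given by the averaged $\*z$-weights over unpinned coordinates, and cancel the common factor. The only addition is your explicit check that $\tau\in\Omega(\mu_R)$ and $(\*x*\mu)^\tau_v(\1)>0$, which the paper leaves implicit.
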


\Cref{lem:mu-C-Cor => muk-C-Cor} can be proved by going through the proof of \cite[Proposition 26]{anari2021entropicII}.
\Cref{lem:muk-pin-ratio} is a technical lemma that relates $\mu$ to $\mu_k$ with local fields and pinnings.
We first use \Cref{lem:mu-C-Cor => muk-C-Cor} and \Cref{lem:muk-pin-ratio} to prove \Cref{lem:muk-C-SI}, and then prove \Cref{lem:mu-C-Cor => muk-C-Cor} and \Cref{lem:muk-pin-ratio}.
 




\begin{proof}[Proof of \Cref{lem:muk-C-SI} assuming  \Cref{lem:mu-C-Cor => muk-C-Cor} and \Cref{lem:muk-pin-ratio}]
  By \Cref{lem:mu-C-Cor => muk-C-Cor}, we know that $\mu_k$ has $(\eta + 2, \epsilon)$-complete  limited correlation.
  Fix $\Lambda \subseteq V_k$, $\sigma \in \Omega(\mu_{k, \Lambda})$, and $\*z \in (0, 1 + \epsilon]^{V_k}$. It holds that
  \begin{align*}
    \rho\tp{\Psi^{\-{AbsCor}}_{\tp{\*z * \mu_k}^\sigma}} \leq \eta + 2.
  \end{align*}
  Let $\pi = \tp{\*z * \mu_k}^\sigma_{V_k \setminus \Lambda}$, which is obtained by projecting $\tp{\*z * \mu_k}^\sigma$ on subset $V_k \setminus \Lambda$. 
  By the definition of absolute correlation matrix, for any $v_i \in \Lambda$, the row and the column in $\Psi^{\-{AbsCor}}_{(\*z * \mu)^\sigma}$ corresponding to $v_i$ only contain zeros. We have
  \begin{align*}
  	\rho\tp{\Psi^{\-{AbsCor}}_{\pi}} \leq \eta + 2.
  \end{align*}
  Let $k_0(\mu,\epsilon) = {10(1+\epsilon)}/{\mu^{\min}_{-1}}$, where $\mu^{\min}_{-1}$ is defined in \eqref{eq-def-mu-min-mi}. Note that $k_0$ is finite because $\mu^{\min}_{-1} > 0$.
  For all integer $k \geq k_0$,  we claim that
  \begin{align} \label{eq:aux-inf-cor}
    \min_{v_i \in V_k \setminus \Lambda} \pi_{v_i}(\0) \geq \frac{2}{3}.
  \end{align}
  Recall that we use $\Psi^{\-{AbsInf}}_{\cdot}$ to denote the absolute influence matrix in \Cref{definition-SI}.
  Then, by \Cref{lem:cor-inf}, 
  \begin{align*}
    \rho\tp{\Psi^{\-{AbsInf}}_{\pi}}
    &= \rho\tp{\-{diag}^{-1}(\{\pi_{v_i}(\0)\}_{v_i \in V_k \setminus \Lambda}) \cdot \Psi^{\-{AbsCor}}_{\pi} - I}
      \end{align*}
Note that the diagonal of $\-{diag}^{-1}(\{\pi_{v_i}(\0)\}_{v_i \in V_k \setminus \Lambda}) \cdot \Psi^{\-{AbsCor}}_{\pi} - I$ are a set of zeros. By \Cref{lem:CP-SR-GE0} and \eqref{eq:aux-inf-cor},
\begin{align*}
 \rho\tp{\Psi^{\-{AbsInf}}_{\pi}} \leq \rho\tp{\frac{3}{2} \Psi^{\-{AbsCor}}_{\pi}} = \frac{3}{2} \rho\tp{\Psi^{\-{AbsCor}}_{\pi}} \leq 2\eta + 3.
\end{align*}
By the definition of absolute influence matrix, for any $v_i \in \Lambda$, the row and the column in $\Psi^{\-{AbsInf}}_{(\*z * \mu_k)^\sigma}$ corresponding to $v_i$ only contain zeros. We have
\begin{align*}
\rho\tp{\Psi^{\-{AbsInf}}_{(\*z * \mu_k)^\sigma}} \leq 2\eta + 3.
\end{align*}

  Finally, we only need to verify inequality \ref{eq:aux-inf-cor}. 
  To do this, we only need to show that for each $v_i \in V_k \setminus \Lambda$, it holds that
  \begin{align*}
   \frac{\tp{\*z * \mu_k}^\sigma_{v_i}(\1)}{\tp{\*z * \mu_k}^\sigma_{v_i}(\0)} \leq \frac{1}{2}.
  \end{align*}
  When $\mu_{k,v_i}^\sigma(\1) = 0$, this holds trivially.
  Otherwise when $\mu_{k,v_i}^\sigma(\1) > 0$, by \Cref{lem:muk-pin-ratio}, there exists $\*x \in \mathds{R}^{[n]}_{>0}$ where $x_v = 1$ and a feasible partial configuration $\tau \in \Omega(\mu_R)$, where $R \subseteq V$ and $v \notin R$, such that
  \begin{align*}
    \frac{\tp{\*z * \mu_k}^\sigma_{v_i}(\1)}{\tp{\*z * \mu_k}^\sigma_{v_i}(\0)} 
    &= \frac{z_{v_i}}{k}\tp{\frac{\tp{\*x * \mu}^{\tau}_v(\0)}{\tp{\*x * \mu}^\tau_v(\1)} + \frac{1}{k} \sum_{v_j \in C_v \setminus (\Lambda \cup\{v_i\})} z_{v_j}}^{-1}\\
    &\leq \frac{z_{v_i}}{k} \frac{\tp{\*x * \mu}^\tau_v(\1)}{\tp{\*x * \mu}^\tau_v(\0)}\\
    (\ast)\quad &\leq \frac{1 + \epsilon}{k} \max_{\sigma \in \Omega(\mu_{V \setminus \{v\}})} \frac{\mu^\sigma_v(\1)}{\mu^\sigma_v(\0)}\\
     & \leq \frac{1 + \epsilon}{k} \max_{\sigma \in \Omega(\mu_{V \setminus \{v\}})} \frac{1}{\mu^\sigma_v(\0)} = \frac{1+\epsilon}{k \mu^{\min}_{-1}},  
  \end{align*}
  which is less than $\frac{1}{2}$ when $k \geq \frac{10(1+\epsilon)}{\mu^{\min}_{-1}}$.
  Inequality $(\ast)$ holds because (1) $0 < z_{v_i} \leq 1 + \varepsilon$; (2) the fact that the value of $v$ is not fixed by $\tau$; (3) $x_v = 1$.
\end{proof}

A version of \Cref{lem:mu-C-Cor => muk-C-Cor} with signed correlation matrix was proved  in~\cite{anari2021entropicII}.
We give a proof of \Cref{lem:mu-C-Cor => muk-C-Cor} by applying the same argument there. We include the proof for completeness.
\begin{proof}[Proof of \Cref{lem:mu-C-Cor => muk-C-Cor}]
Let $\mu$ be a distribution over $\{\0,\1\}^{V}$, where $V = [n]$.
Define a more general $\vec{k} = (k_1,k_2,\ldots,k_n) \in \mathds{Z}_{> 0}$ transformation, which transforms $\mu$ to a new distribution $\mu_{\vec{k}}$, where $\mu_{\vec{k}}$ is defined over $\{\0,\1\}^{V_{\vec{k}}}$ 
and $V_{\vec{k}} = \{(i,j) \mid 1 \le i \le n, 1 \le j \le k_i\}$.
For each $v \in [n]$, $i \in [k_v]$, we use $v_i$ to denote $(v,i)$.
To sample $\*Y\sim \mu_{\vec{k}}$, we first sample $\*X \sim \mu$, and then for any $v \in V$
\begin{itemize}
	\item if $X_v = \0$, then let $Y_{v_i} = \0$ for all $i \in [k_v]$;
	\item if $X_v = \1$, then sample $j^* \in [k_v]$ u.a.r., set $Y_{v_{j^*}} = \1$ and $Y_{v_j} = \0$ for all $j \in [k_v] \setminus \{j^*\}$.
\end{itemize}
It is straightforward to verify the $k$-transformation in \Cref{def:k-trans} is a special case when $\vec{k}$ is a constant vector with value $k$. 

We prove the following results. 
For any $\vec{k} \in \mathds{Z}_{> 0}^n$, any $\*y \in \mathds{R}^{k_1 + \cdots + k_n}_{> 0}$, it holds that 
  \begin{align} \label{eq:abs-cor-ineq}
    \rho\tp{\Psi^{\-{AbsCor}}_{\*y * \mu_{\vec{k}}}} &\leq \rho\tp{\Psi^{\-{AbsCor}}_{\*x * \mu}} + 2, \text{ where } \forall i \in [n], \*x_i \triangleq \frac{1}{k_i} \sum_{j = 1}^{k_i} y_{(i,j)}.
  \end{align}
  
  We first use~\eqref{eq:abs-cor-ineq} to prove the lemma. 
  We need to prove that for any partial configuration $\sigma \in \{\0\,\1\}^\Lambda$ of $\mu_k$, where $\Lambda \subseteq V_k$, it holds that for any $\*z \in (0,1+\epsilon]^{nk}$,
  \begin{align*}
  	 \rho\tp{\Psi^{\-{AbsCor}}_{(\*z * \mu_k)^\sigma}} =  \rho\tp{\Psi^{\-{AbsCor}}_{\*z *\mu_{k}^\sigma }} \le \eta+2. 
  \end{align*}
  By \cite[Lemma 15]{anari2021entropicII}, for any feasible condition $\sigma \in \{\0\,\1\}^\Lambda$ with respect to $\mu_k$, there exists a feasible condition $\tau$ with respect to $\mu$, local fields $\*\lambda \in (0,1]^n$ together with a vector $\vec{k} \in \mathds{Z}_{> 0}$ such that 
  \begin{align*}
  	\rho\tp{\Psi^{\-{AbsCor}}_{\mu_{k}^\sigma }} = \rho\tp{\Psi^{\-{AbsCor}}_{(\*\lambda * \mu^\tau)_{\vec{k}}}},
  \end{align*}
  where $(\*\lambda * \mu^\tau)_{\vec{k}}$ is obtained by applying $\vec{k}$-transformation on $\*\lambda * \mu^\tau$. 
  Using~\eqref{eq:abs-cor-ineq} on $\mu^\sigma_k$ implies that 
  \begin{align*}
  	\rho\tp{\Psi^{\-{AbsCor}}_{\*z * \mu_{k}^\sigma }} \leq \rho\tp{\Psi^{\-{AbsCor}}_{(\*x \odot \*\lambda) * \mu^\tau}} + 2,
  \end{align*}
  where for all $i \in [n]$, $x_i \triangleq \frac{1}{k_i} \sum_{j = 1}^{k_i} z_{(i,j)}$ and $(\*x \odot \*\lambda) \in (0,1+\epsilon]^{n}$ satisfying $(\*x \odot \*\lambda)_v = x_v\lambda_v \leq 1 + \epsilon$. 
  Since $\mu$ has $(\eta, \epsilon)$-complete limited correlation, we have 
  \begin{align*}
  	 	\rho\tp{\Psi^{\-{AbsCor}}_{\mu_{k}^\sigma }} = \rho\tp{\Psi^{\-{AbsCor}}_{(\*\lambda * \mu^\tau)_{\vec{k}}}} \leq \rho\tp{\Psi^{\-{AbsCor}}_{(\*x \odot \*\lambda) * \mu^\tau}} + 2 = \rho\tp{\Psi^{\-{AbsCor}}_{((\*x \odot \*\lambda) * \mu)^\tau}} + 2 \leq \eta +2.
  \end{align*}



  Now, we only need to verify \Cref{eq:abs-cor-ineq}.
  For convenience, we denote $\Psi^{\-{AbsCor}}_{\*y * \mu_{\vec{k}}}$ as $\Psi_{\vec{k}}$ and $\Psi^{\-{AbsCor}}_{\*x * \mu}$ as $\Psi$ respectively.
  Without loss of generality, we may assume $\1 \in \Omega(\mu_i)$ for all $i \in [n]$. Suppose $\1 \notin \Omega(\mu_i)$ for some $i \in [n]$. 
  Then the $i$-th row and the $i$-th column of $\Psi$ are all zeros, and the rows $(i,j)$ and columns $(i,j)$ for $j \in [k_i]$ in $\Psi_{\vec{k}}$ are all zeros. 
  Hence, the variable $i$ and all variables $(i,j)$ for $j \in [k_i]$ have fixed value and they do not affect the spectral radiuses of correlation matrices.
  In this case, we can simply consider the distribution $\mu_{[n] \setminus \{i\}}$ and its transformations. 
  
  Let $\widehat{\Psi}_{\vec{k}}$ be a matrix with the same size as $\Psi_{\vec{k}}$ defined as
  \begin{align*}
    \forall u,v\in [n], i\in[k_u],j\in[k_v], \quad \widehat{\Psi}_{\vec{k}}(u_i, v_j) \triangleq
    \begin{cases}
      1 + \textstyle(\*y * \mu_{\vec{k}})_{u_i}(\1), & u = v \text{ and } i = j; \\
      \Psi_{\vec{k}}(u_i, v_j), & \text{otherwise.}
    \end{cases}
  \end{align*}
  From this definition, we know that $\Psi_{\vec{k}}(u_i,v_j) \leq \widehat{\Psi}_{\vec{k}}(u_i,v_j)$, and by \Cref{lem:CP-SR-GE0}, it holds that 
  \begin{align}\label{eq-rho-1}
  	\rho\tp{\Psi_{\vec{k}}} \leq \rho\tp{\widehat{\Psi}_{\vec{k}}}.
  \end{align}
  Let $\widehat{\Psi}$ be another matrix with the same size as $\Psi$ defined as
  \begin{align*}
    \forall u,v \in [n], \quad \widehat{\Psi}(u, v) \triangleq \sum_{h \in [k_v]} \widehat{\Psi}_{\vec{k}}(u_1, v_h).
  \end{align*}
  In the above definition,  $\widehat{\Psi}(u,v)$ is the sum over all $\widehat{\Psi}_{\vec{k}}(u_1, v_h)$ for $h \in [k_v]$. The following claim shows that the $u_1$ in the definition can be replaced by any $u_i$ for $i \in [k_u]$. The claim will be proved later. 
  \begin{claim}\label{claim-u1-ui}
  	For any $ u,v \in [n]$ and $i \in [k_u]$, it holds that $\widehat{\Psi}(u, v) = \sum_{h \in [k_v]} \widehat{\Psi}_{\vec{k}}(u_i, v_h)$.
  \end{claim}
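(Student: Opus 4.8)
The plan is to show that for every $u,v\in[n]$ and every $i\in[k_u]$ the quantity $\sum_{h\in[k_v]}\widehat{\Psi}_{\vec{k}}(u_i,v_h)$ is independent of $i$; since $\widehat{\Psi}(u,v)$ was defined as this sum with $i=1$, that yields the claim. First I would set up the ``pushdown'' structure of the magnetized transformation. Writing $\nu\triangleq\*y*\mu_{\vec{k}}$ and letting $\pi\colon\{\0,\1\}^{V_{\vec{k}}}\to\{\0,\1\}^{V}$ be the block-collapsing map ($\pi(\sigma)_w=\1$ iff $\sigma_{w_j}=\1$ for some $j\in[k_w]$), a direct fibre computation (use $\mu_{\vec{k}}(\sigma)=\mu(\pi(\sigma))\prod_{w:\pi(\sigma)_w=\1}k_w^{-1}$ on the support and sum $\nu(\sigma)$ over $\pi^{-1}(\tau)$) shows that if $\*Y\sim\nu$ then $\pi(\*Y)\sim\*x*\mu$ with $x_w=\frac1{k_w}\sum_{j\in[k_w]}y_{w_j}$, and that conditioned on $\pi(\*Y)$ the index of the unique $\1$-coordinate inside each active block $w$ is drawn independently from $[k_w]$ proportionally to $(y_{w_j})_j$, independently of everything outside block $w$. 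This is exactly the device already used in the proof of \Cref{lem:mu-C-Cor => muk-C-Cor} (cf.\ \cite[Lemma~15]{anari2021entropicII}), so I would just invoke it. From it I would extract two facts: (i) $\1\in\Omega(\nu_{u_i})$ iff $\1\in\Omega(\mu_u)$, independently of $i$; and (ii) for $v\neq u$, conditioning $\nu$ on $\{Y_{u_i}=\1\}$ is, on the blocks other than $u$, the same as conditioning on $\{\pi(\*Y)_u=\1\}$, so $\nu^{u_i\gets\1}_{v_h}(\1)$ does not depend on $i\in[k_u]$.

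With that in hand the argument splits into two easy cases. For $v\neq u$, every entry $\widehat{\Psi}_{\vec{k}}(u_i,v_h)$ is off-diagonal, hence equals $\Psi^{\mathrm{AbsCor}}_{\nu}(u_i,v_h)$, which is $\abs{\nu^{u_i\gets\1}_{v_h}(\1)-\nu_{v_h}(\1)}$ if $\1\in\Omega(\mu_u)$ and $0$ otherwise; by (i) and (ii) this value, and hence the sum over $h$, is independent of $i$. For $v=u$ I would peel off the $h=i$ term, which by definition of $\widehat{\Psi}_{\vec{k}}$ equals $1+\nu_{u_i}(\1)$, and treat the remaining $h\neq i$ terms as off-diagonal entries $\Psi^{\mathrm{AbsCor}}_{\nu}(u_i,u_h)$; since $\{Y_{u_i}=\1\}$ forces $Y_{u_h}=\0$, one gets $\nu^{u_i\gets\1}_{u_h}(\1)=0$ and so $\widehat{\Psi}_{\vec{k}}(u_i,u_h)=\nu_{u_h}(\1)$ (this also holds when $\1\notin\Omega(\mu_u)$, both sides then being $0$). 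Summing, $\sum_{h\in[k_u]}\widehat{\Psi}_{\vec{k}}(u_i,u_h)=1+\sum_{h\in[k_u]}\nu_{u_h}(\1)=1+(\*x*\mu)_u(\1)$, where the last equality uses that at most one coordinate of the $u$-block can be $\1$; this is manifestly independent of $i$. Combining the two cases finishes the proof.

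The main obstacle---really the only non-routine point---is the first paragraph: establishing the pushdown identity for $\*y*\mu_{\vec{k}}$ together with the conditional independence of the within-block choices, which is where both the averaging $x_w=\frac1{k_w}\sum_j y_{w_j}$ and the insensitivity to the choice of representative $i$ come from. Everything afterwards is bookkeeping with the definitions of $\Psi^{\mathrm{AbsCor}}$ and of $\widehat{\Psi}_{\vec{k}}$.
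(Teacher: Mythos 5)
Your proposal is correct and follows essentially the same route as the paper: your ``pushdown'' facts (the collapsed configuration follows $\*x * \mu$ with $x_w=\frac1{k_w}\sum_j y_{w_j}$, the within-block index is chosen proportionally to $(y_{w_j})_j$ independently of the rest, hence $(\*y*\mu_{\vec k})^{u_i\gets\1}_{v_h}(\1)$ is independent of $i$) are exactly what the paper establishes as \Cref{claim-muk} by the same fibre computation. The ensuing case analysis --- off-diagonal entries independent of $i$ for $v\neq u$, and for $v=u$ using $(\*y*\mu_{\vec k})^{u_i\gets\1}_{u_h}(\1)=0$ for $h\neq i$ to reduce the row sum to $1+\sum_h(\*y*\mu_{\vec k})_{u_h}(\1)$ --- matches the paper's proof step for step.
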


  To prove~\eqref{eq:abs-cor-ineq}, we prove the following two inequalities
  \begin{align}
  	\rho\tp{\widehat{\Psi}} &\leq \rho\tp{\Psi} + 2 \label{eq-rho-2}\\
  	\rho\tp{\widehat{\Psi}_{\vec{k}}} &\leq \max\left\{\rho\tp{\widehat{\Psi}}, 1\right\}. \label{eq-rho-3}
  \end{align}
   Combining~\eqref{eq-rho-1},~\eqref{eq-rho-2} and~\eqref{eq-rho-3}, we have
  \begin{align*}
  		\rho\tp{\Psi_{\vec{k}}}  \leq \rho\tp{\widehat{\Psi}_{\vec{k}}} \leq \max\left\{\rho\tp{\widehat{\Psi}}, 1\right\} \leq \rho\tp{\Psi} + 2,
  \end{align*}
  which proves~\eqref{eq:abs-cor-ineq}.

  We first prove \eqref{eq-rho-2}.
    By the definition of $\widehat{\Psi}$, we know that for any $u,v \in [n]$, if $u = v$, it holds that 
  \begin{align*}
  	\widehat{\Psi}(u, u) &= 1 + (\*y * \mu_{\vec{k}})_{u_1}(\1) + \sum_{j \in [k_u] \setminus \{1\}}\Psi_{\vec{k}}(u_1, u_j) \overset{(\ast)}{\leq} 1 + \sum_{j \in [k_u]}(\*y * \mu_{\vec{k}})_{u_j}(\1).
  \end{align*}
  where $(\ast)$ holds because it is straightforward to see $\Psi_{\vec{k}}(u_1, u_j) = (\*y * \mu_{\vec{k}})_{u_j}(\1)$ for all $j \in [k_u] \setminus \{1\}$.
  We have the following claim about the distribution $(\*y * \mu_{\vec{k}})$.
  \begin{claim}\label{claim-muk}
  For any distinct $u,v \in [n]$, any $i \in [k_u]$ and $j \in [k_v]$, it holds that
  \begin{align*}
  	(\*y * \mu_{\vec{k}})^{u_i \gets \1}_{v_j} (\1) = \frac{y_{v_j}(\*x * \mu)^{u \gets 1}_v(\1)}{\sum_{\ell \in [k_v]}y_{v_\ell}} \quad\text{and}\quad (\*y * \mu_{\vec{k}})_{v_j} (\1) = \frac{y_{v_j}(\*x * \mu)_v(\1)}{\sum_{\ell \in [k_v]}y_{v_\ell}},
  \end{align*}  
  where $\*x$ is defined in~\eqref{eq:abs-cor-ineq}.
  \end{claim}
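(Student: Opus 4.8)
The plan is to read off both marginals directly from the explicit shape of $\*y*\mu_{\vec k}$ obtained by unfolding the $\vec k$-transformation. First I would record the structural description of the support: every $\sigma$ with $\mu_{\vec k}(\sigma)>0$ arises from a unique $\tau\in\Omega(\mu)$ together with, for each $w$ satisfying $\tau_w=\1$, an index $j^*(w)\in[k_w]$, so that $\sigma_{w_{j^*(w)}}=\1$ while all other coordinates of $\sigma$ are $\0$; and $\mu_{\vec k}(\sigma)=\mu(\tau)\prod_{w:\tau_w=\1}k_w^{-1}$. Substituting this into \eqref{eq:definition-local-fields} gives
\[
(\*y*\mu_{\vec k})(\sigma)\;=\;\frac1Z\,\mu(\tau)\prod_{w:\tau_w=\1}\frac{y_{w_{j^*(w)}}}{k_w},
\]
and summing over the free choices $j^*(w)$ shows the normalizer $Z$ equals $\sum_\tau\mu(\tau)\prod_{w:\tau_w=\1}x_w$, i.e.\ the partition function of $\*x*\mu$ with $x_w=\frac1{k_w}\sum_{\ell\in[k_w]}y_{w_\ell}$ as in \eqref{eq:abs-cor-ineq}; call it $Z_{\*x}$.

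For the second identity I would sum $(\*y*\mu_{\vec k})(\sigma)$ over all $\sigma$ with $\sigma_{v_j}=\1$. Such $\sigma$ are exactly those coming from a $\tau$ with $\tau_v=\1$ and $j^*(v)=j$, the other index choices being free; performing those free summations turns each factor $y_{w_{j^*(w)}}/k_w$ with $w\ne v$ into $x_w$, so
\[
(\*y*\mu_{\vec k})_{v_j}(\1)\;=\;\frac{y_{v_j}}{k_v Z_{\*x}}\sum_{\tau:\tau_v=\1}\mu(\tau)\prod_{w\ne v:\tau_w=\1}x_w\;=\;\frac{y_{v_j}}{k_v x_v}\,(\*x*\mu)_v(\1),
\]
the last step factoring $x_v$ out of the product and recognizing the remaining sum as $Z_{\*x}(\*x*\mu)_v(\1)$. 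Since $k_v x_v=\sum_{\ell\in[k_v]}y_{v_\ell}$, this is the claimed formula. For the first identity I would write $(\*y*\mu_{\vec k})^{u_i\gets\1}_{v_j}(\1)$ as the ratio of $\sum_{\sigma:\sigma_{u_i}=\1,\sigma_{v_j}=\1}(\*y*\mu_{\vec k})(\sigma)$ to $\sum_{\sigma:\sigma_{u_i}=\1}(\*y*\mu_{\vec k})(\sigma)$; since $u\ne v$, pinning $u_i\gets\1$ merely forces $\tau_u=\1$ and $j^*(u)=i$ without touching the $v$-clique, so repeating the same free summation in numerator and denominator makes the common $1/Z$ and the factor $y_{u_i}/k_u$ cancel, the numerator collapses to $\frac{y_{u_i}y_{v_j}}{k_u k_v x_u x_v}Z_{\*x}\,\Pr[\*x*\mu]{X_u=X_v=\1}$ and the denominator to $\frac{y_{u_i}}{k_u x_u}Z_{\*x}(\*x*\mu)_u(\1)$, whence the ratio equals $\frac{y_{v_j}}{k_v x_v}(\*x*\mu)^{u\gets\1}_v(\1)$, which is the asserted expression.

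The only delicate point is the bookkeeping of normalizers: one must check that the same $Z_{\*x}$ appears in every partial sum, and that in the conditional computation the $u$-clique contributions ($y_{u_i}/k_u$, $x_u$, and the factor $(\*x*\mu)_u(\1)$) cancel exactly against the denominator, leaving precisely $(\*x*\mu)^{u\gets\1}_v(\1)$. Degenerate cases where $(\*x*\mu)_v(\1)=0$, or where the conditioning $u_i\gets\1$ is infeasible, make both sides vanish and are disposed of separately. Everything else is a routine rearrangement of finite sums exploiting the product structure across the cliques of the $\vec k$-transformation.
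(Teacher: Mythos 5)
Your proposal is correct and follows essentially the same route as the paper's proof: both unfold the $\vec k$-transformation to write each marginal (or conditional marginal) of $\*y*\mu_{\vec k}$ as a ratio of sums over configurations, collapse the free index choices in each clique into the averaged fields $x_w=\frac1{k_w}\sum_\ell y_{w_\ell}$, and cancel the common normalizer and $u$-clique factors to recover the marginals of $\*x*\mu$ times $y_{v_j}/\sum_\ell y_{v_\ell}$. The bookkeeping you flag (same partition function in every partial sum, cancellation of $y_{u_i}/k_u$ and $x_u$) is exactly the computation the paper carries out explicitly.
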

  
  We first prove the lemma assuming \Cref{claim-muk}, and then prove \Cref{claim-muk}.
  By~\Cref{claim-muk}, it is straightforward to verify that  
   \begin{align}\label{eq-relation-1}
  	\widehat{\Psi}(u, u)  = 1 + \sum_{j \in [k_u]}(\*y * \mu_{\vec{k}})_{u_j}(\1) =1+ (\*x * \mu)_u(\1).
   \end{align}


  For any  $u,v \in [n]$, if $u \neq v$, it holds that 
  \begin{align}\label{eq-ind-1}
  	\widehat{\Psi}(u, v) &= \sum_{h \in [k_v]} \widehat{\Psi}_{\vec{k}}(u_1, v_h) = \sum_{h \in [k_v]} \abs{(\*y * \mu_{\vec{k}})^{u_1 \gets \1}_{v_h}(\1) - (\*y * \mu_{\vec{k}})_{v_h}(\1)}\notag\\
  	&=\sum_{h \in [k_v]}\abs{\frac{y_{v_h}\tp{(\*x * \mu)_v^{u \gets \1}(\1) - (\*x * \mu)_v(\1)}}{\sum_{\ell \in [k_v] }y_{v_\ell}}} = \abs{(\*x * \mu)_v^{u \gets \1}(\1) - (\*x * \mu)_v(\1)},
  \end{align}
  where~\eqref{eq-ind-1} holds because of \Cref{claim-muk},
  which implies that 
  \begin{align}\label{eq-relation-2}
  \forall u,v \in [n] \text{ with } u \neq v, \quad 	\widehat{\Psi}(u, v) = \Psi(u,v).
  \end{align}
  Combining~\eqref{eq-relation-1} and~\eqref{eq-relation-2}, we have
  \begin{align*}
  \forall u,v \in [n], \quad \widehat{\Psi}(u,v) \leq \Psi(u,v) + 2I(u,v).
  \end{align*}
  Since both $\widehat{\Psi}$ and $\Psi$ are non-negative matrices,  by \Cref{lem:CP-SR-GE0},~\eqref{eq-rho-2} holds.

  Now, we prove~\eqref{eq-rho-3}. 
  By~\eqref{eq-relation-2} and the definition of correlation matrix, it is straightforward to verify $\-{diag}(\{(\*x * \mu)_i(\1)\}_{i \in [n]}) \widehat{\Psi}$ is a symmetric matrix.

  Hence, $\widehat{\Psi}$ has an orthogonal eigenbasis $f^1, \cdots, f^n$ with respect to the inner product $\inner{\cdot}{\cdot}_{(\*x * \mu)}$ with corresponding real eigenvalues $\lambda_1 \geq \cdots \geq \lambda_n$, where the inner product $\inner{\cdot}{\cdot}_{(\*x * \mu)}$ is defined by
  \begin{align*}
  \forall a,b\in \mathds{R}^{[n]},\quad
  	\inner{a}{b}_{(\*x * \mu)}  = \sum_{i = 1}^n a_i b_i (\*x * \mu)_i(\1). 
  \end{align*}
  One could verify that for each eigenvector $f^t = (f^t_1, \cdots, f^t_n) \in \mathds{R}^n$, the vector
  \begin{align*}
    F^t = (\underbrace{f^t_1, \cdots, f^t_1}_{\text{$k_1$ times}}, \underbrace{f^t_2, \cdots, f^t_2}_{\text{$k_2$ times}}, \cdots, \underbrace{f^t_n, \cdots, f^t_n}_{\text{$k_n$ times}})
  \end{align*}
  is an eigenvector of $\widehat{\Psi}_{\vec{k}}$ with eigenvalue $\lambda_t$.
  That is, for any $u \in [n], i \in [k_u]$, it holds that 
  \begin{align*}
    \tp{\widehat{\Psi}_{\vec{k}}(u_i, \cdot) F^t}_{u_i}
    = \sum_{v \in [n]} f^t_v \sum_{j \in [k_v]} \widehat{\Psi}_{\vec{k}}(u_i, v_j)
    \overset{(\star)}{=} \sum_{v \in [n]} \widehat{\Psi}(u, v) f^t_v 
    = \widehat{\Psi}(u, \cdot) f^t
    = \lambda_t f^t_u = \lambda_t F^t_{u_i}.
  \end{align*}
  Equation $(\star)$ holds due to \Cref{claim-u1-ui}, i.e. 
  \begin{align*}
 \forall i \in [k_u],\quad   \widehat{\Psi}(u, v) = \sum_{j \in [k_v]} \widehat{\Psi}_{\vec{k}}(u_1, v_j) \overset{(\ast)}{=} \sum_{j \in [k_v]} \widehat{\Psi}_{\vec{k}}(u_i, v_j).
  \end{align*}
  Moreover, for each $u \in [n]$, if we pick $f[u]^{1}, f[u]^{2}, \cdots, f[u]^{k_u - 1}$ as an orthogonal basis with respect to the inner product $\inner{\cdot}{\cdot}_{(y_{u_i})_{i \in [k_u]}}$ of the vector space $\left\{f \in \mathds{R}^{k_u} \mid \inner{f}{1}_{(y_{u_i})_{i \in [k_u]}} = 0 \right\}$, where the inner product is defined by $\inner{a}{b}_{(y_{u_i})_{i \in [k_u]}} = \sum_{i = 1}^{k_u}a(i)b(i)y_{u_i}$, and then the vector
  \begin{align*}
    F[u]^t = (\underbrace{0, \cdots, 0}_{\text{$k_1$ times}}, \cdots, \underbrace{0, \cdots, 0}_{\text{$k_{u-1}$ times}}, f[u]^t_1, f[u]^t_2, \cdots, f[u]^t_{k_u}, \underbrace{0, \cdots, 0}_{\text{$k_{u+1}$ times}}, \cdots, \underbrace{0, \cdots, 0}_{\text{$k_n$ times}})
  \end{align*}
  is an eigenvector of $\widehat{\Psi}_{\vec{k}}$ with eigenvalue $1$.
  This is because:
 \begin{enumerate}
\item for any $i \in [k_u]$, it holds that 
  \begin{align*}
  (\widehat{\Psi}_{\vec{k}}F[u]^t)_{u_i} = \sum_{j \in [k_u]} \widehat{\Psi}_{\vec{k}}(u_i,u_j)F[u]^t_{u_j} = F[u]^t_{u_i} + \sum_{j \in [k_u]}(\*y * \mu_{\vec{k}})_{u_j}(\1)F[u]^t_{u_j} = F[u]^t_{u_i},
  \end{align*}
  where the last equation holds because 
  \begin{align*}
  \sum_{j \in [k_u]}(\*y * {\mu}_{\vec{k}})_{u_j}(\1)F[u]^t_{u_j} = \frac{\mu_u(\1)}{Z\cdot k_u} \sum_{j \in [k_u]} y_{u_j} F[u]^t_{u_j} = \frac{\mu_u(\1)}{Z\cdot k_u}\inner{f[u]^t}{1}_{(y_{u_i})_{i \in [k_u]}} = 0,
  \end{align*}
  where $Z$ is defined by 
 \begin{align}\label{eq-def-paritition-Z}
  	Z &\triangleq \sum_{Y \in \Omega(\mu_{\vec{k}})} \mu_{\vec{k}}(Y)\prod_{v_i \in Y^{-1}(\1)}{y_{v_i}}\\ 
	&= \sum_{X \in \Omega(\mu)}\mu(X) \prod_{v \in X^{-1}(\1)}\sum_{i \in [k_v]}\frac{y_{v_i}}{k_v}\notag\\ 
	&=\sum_{X \in \Omega(\mu)}\mu(X) \prod_{v \in X^{-1}(\1)}x_v; \notag
  \end{align}
\item 
for any $v \neq u$ and $i \in [k_v]$, it hold that 
  \begin{align*}
  	(\widehat{\Psi}_{\vec{k}}F[u]^t)_{v_i}  
	&= \sum_{j \in [k_u]}\widehat{\Psi}_{\vec{k}}(v_i,u_j)F[u]^t_{u_j}\\ 
	&= \sum_{j \in [k_u]}\abs{ (\*y * \mu_{\vec{k}})_{u_j}^{v_i \gets \1}(\1) - (\*y * \mu_{\vec{k}})_{u_j}(\1)  }f[u]_j^t\\
  	\text{(by \Cref{claim-muk})}\quad &= \sum_{j \in [k_u]}\abs{ \frac{y_{u_j} \tp{(\*x * \mu)^{v \gets \1}_u(\1) - (\*x * \mu)_u(\1)} }{\sum_{\ell \in [k_u]}y_{u_\ell}} }f[u]^t_j\\
  	 &= \abs{(\*x * \mu)^{v \gets \1}_u(\1) - (\*x * \mu)_u(\1)}\inner{f[u]^t}{1}_{(y_{u_j})_{j \in [k_u]}}\\
	 &= 0.
  \end{align*}
 \end{enumerate}
Finally, note that
  \begin{align*}
    \left\{F^1, \cdots F^n\right\} \cup \left\{F[u]^1, \cdots, F[u]^{k_u - 1}\right\}_{u = 1}^n
  \end{align*}
  forms an orthogonal eigenbasis of $\widehat{\Psi}_{\vec{k}}$ with respect to the inner product $\inner{\cdot}{\cdot}_{(\*y * \mu_{\vec{k}})}$, where the inner product is defined by $\inner{a}{b}_{(\*y * \mu_{\vec{k}})} = \sum_{v_i \in V_{\vec{k}}}a(v_i)b(v_i)(\*y * \mu_{\vec{k}})_{v_i}(\1)$. Formally, for any distinct $1\leq u,v \leq n$
  \begin{align*}
  	\inner{F^u}{F^v}_{(\*y * \mu_{\vec{k}})} 
	&= \sum_{w=1}^n f^u_w f^v_w \sum_{i \in [k_w]} (\*y * \mu_{\vec{k}})_{w_i}(\1)\\ 
	&= \frac{1}{Z}\sum_{w=1}^n f^u_w f^v_w \sum_{X \in \Omega(\mu):X_w = 1}\mu(X)\prod_{w' \in \sigma^{-1}(X)} \frac{y_{w'}}{k_{w'}}\\
  	&= \sum_{w=1}^n f^u_w f^v_w (\*x * \mu)_{w}(\1)  = \inner{f^u}{f^v}_{(\*x * \mu)}\\
	&= 0.  
  \end{align*}
  For any $u \in [n]$, any $v \in [n]$ and $i \in [k_v - 1]$,
  \begin{align*}
  	\inner{F^u}{F[v]^i}_{(\*y * \mu_{\vec{k}})} 
	&= f^u_v  \sum_{j \in [k_v]}f[v]^i_j  (\*y * \mu_{\vec{k}})_{v_j}(\1)\\ 
	&= \frac{f^u_v \mu_v(\1)}{Zk_v} \sum_{j \in [k_v]}f[v]^i_j y_{v_j}\\
  	&=\frac{f^u_v \mu_v(\1)}{Zk_v} \inner{f[v]^i}{1}_{(y_{v_j})_{j \in [k_v]}}\\ 
	&= 0.
  \end{align*}
  For any $u \in [n]$, any distinct $i, j \in [k_u -1]$,
  \begin{align*}
  	\inner{F[u]^i}{F[u]^j}_{(\*y * \mu_{\vec{k}})} 	
	&=  \sum_{\ell \in [k_u]}f[u]^i_\ell f[u]^j_\ell (\*y * \mu_{\vec{k}})_{u_\ell}(\1)\\ 
	&= \frac{\mu_u(\1)}{Zk_u} \sum_{\ell \in [k_u]}f[u]^i_\ell f[u]^j_\ell y_{u_\ell}\\
  	&=\frac{\mu_u(\1)}{Zk_u}  \inner{f[u]^i}{f[u]^j}_{(y_{u_\ell})_{\ell \in [k_u]}}\\ 
	&= 0.
  \end{align*}
  For any distinct $v,u \in [n]$, $i \in [k_v - 1]$ and $j \in [k_v-1]$, it is straightforward to verify
  \begin{align*}
  		\inner{F[u]^i}{F[v]^j}_{(\*y * \mu_{\vec{k}})} = 0.
  \end{align*}
  Hence, the spectrum of $\widehat{\Psi}_{\vec{k}}$ is
  \begin{align*}
    \left\{\lambda_1, \cdots, \lambda_n\right\} \cup \left\{ 1^{(t)} \right\}_{t=1}^{\sum_{u \in [n]} k_u - n},
  \end{align*}
  So, we know that $\rho\tp{\widehat{\Psi}_{\vec{k}}} \leq \max\left\{\rho\tp{\widehat{\Psi}}, 1\right\}$.
  \end{proof}
  
  \begin{proof}[Proofs of \Cref{claim-u1-ui} and \Cref{claim-muk}]
We first prove \Cref{claim-muk}, then use \Cref{claim-muk} to prove \Cref{claim-u1-ui}.
By the definition of conditional probability, we have
  \begin{align*}
  	(\* y * \mu_{\vec{k}})^{u_i \gets 1}_{v_j}(\1) &= \tp{\sum_{\substack{\sigma \in \Omega(\mu_{\vec{k}}):\\\sigma(u_i) = \1 \land \sigma(v_j) = \1 }}\mu_{\vec{k}}(\sigma)\prod_{w_\ell \in \sigma^{-1}(\1)}y_{w_j}} \Big/ \tp{\sum_{\substack{\sigma \in \Omega(\mu_{\vec{k}}):\\\sigma(u_i) = \1}}\mu_{\vec{k}}(\sigma)\prod_{w_\ell \in \sigma^{-1}(\1)}y_{w_j}}\\
   \end{align*}
  The numerator equals to 
  \begin{align*}
  &\sum_{\substack{\tau \in \Omega(\mu):\\\tau(u) = \1 \land \tau(v) = \1 }}\frac{\mu(\tau)y_{u_i}y_{v_j}}{k_uk_v} \prod_{w \in \tau^{-1}(\1) \setminus \{u,v\}} \sum_{\ell \in [k_w]}\frac{y_{w_\ell}}{k_w} \\
  = &\sum_{\substack{\tau \in \Omega(\mu):\\\tau(u) = \1 \land \tau(v) = \1 }}\frac{\mu(\tau)y_{u_i}y_{v_j}}{k_uk_v} \prod_{w \in \tau^{-1}(\1) \setminus \{u,v\}} x_w\\
  =\,& \frac{y_{v_j}}{\sum_{\ell \in [k_v]}y_{v_\ell}} \cdot \frac{y_{u_i}}{\sum_{\ell \in [k_u]}y_{u_\ell}} \cdot \sum_{\substack{\tau \in \Omega(\mu):\\\tau(u) = \1 \land \tau(v) = \1 }}\mu(\tau)\prod_{w \in \tau^{-1}(\1)} x_w.
  \end{align*} 
  The denominator equals to
  \begin{align*}
  	&\sum_{\substack{\tau \in \Omega(\mu):\\ \tau(u) = \1 }}\frac{\mu(\tau)y_{u_i}}{k_v} \prod_{w \in \tau^{-1}(\1) \setminus \{u\}} \sum_{\ell \in [k_w]}\frac{y_{w_\ell}}{k_w}  
= \frac{y_{u_i}}{\sum_{\ell \in [k_u]}y_{u_\ell}}\sum_{\substack{\tau \in \Omega(\mu):\\ \tau(u) = \1 }}\mu(\tau) \prod_{w \in \tau^{-1}(\1)} x_w
  \end{align*}
  Hence, we have
  \begin{align*}
  (\* y * \mu_{\vec{k}})^{u_i \gets 1}_{v_j}(\1) &= 	\frac{y_{v_j}}{\sum_{\ell \in [k_v]}y_{v_\ell}} \cdot  \tp{\sum_{\substack{\tau \in \Omega(\mu):\\\tau(u) = \1 \land \tau(v) = \1 }}\mu(\tau)\prod_{w \in \tau^{-1}(\1)} x_w} \Big / \tp{\sum_{\substack{\tau \in \Omega(\mu):\\ \tau(u) = \1 }}\mu(\tau) \prod_{w \in \tau^{-1}(\1)} x_w}\\
  &= 	\frac{y_{v_j}}{\sum_{\ell \in [k_v]}y_{v_\ell}} \cdot (\*x * \mu)^{u \gets \1}_v(\1).
  \end{align*}
  Recall $Z$ defined in~\eqref{eq-def-paritition-Z}. We have
  \begin{align*}
  	(\*y * \mu_{\vec{k}})_{u_i} (\1) 
	&= \frac{1}{Z}\sum_{\substack{\sigma \in \Omega(\mu_{\vec{k}}):\\\sigma(u_i) = \1}}\mu_{\vec{k}}(\sigma)\prod_{w_\ell \in \sigma^{-1}(\1)}y_{w_j} \\
	&= 	\frac{1}{Z}\sum_{\substack{\tau \in \Omega(\mu):\\ \tau(u) = \1 }}\frac{\mu(\tau)y_{u_i}}{k_u} \prod_{w \in \tau^{-1}(\1) \setminus \{u\}} \sum_{\ell \in [k_w]}\frac{y_{w_\ell}}{k_w}\\
  	& = \frac{y_{u_i}}{\sum_{\ell \in [k_u]}y_{u_\ell}} \cdot \frac{1}{Z} \cdot \sum_{\substack{\tau \in \Omega(\mu):\\ \tau(u) = \1 }}\mu(\tau) \prod_{w \in \tau^{-1}(\1)} x_w\\
  	&=  \frac{y_{u_i}}{\sum_{\ell \in [k_u]}y_{u_\ell}} \cdot \mu_u(\1). 
  \end{align*}
  Next, we prove \Cref{claim-u1-ui}. By definition, we have for any $ u,v \in [n] $ with $u \neq v$, any $i \in [k_u]$,  we have
  \begin{align*}
  \widehat{\Psi}(u, v) = \sum_{h \in [k_u]} \widehat{\Psi}_{\vec{k}}(u_1, v_h) 
  &= \sum_{h \in [k_v]}\abs{ (\*y * \mu_{\vec{k}})^{u_1 \gets \1}_{v_h}(\1) - (\*y * \mu_{\vec{k}})_{v_h}(\1)}\\
  \text{(by \Cref{claim-muk})}\quad &= \sum_{h \in [k_v]}\abs{ \frac{y_{v_h} \tp{ (\*x * \mu)^{u \gets \1}_v(\1) - (\*x * \mu)_{v}(\1)} }{\sum_{\ell \in [k_v]}y_{v_\ell}} }\\
   \text{(by \Cref{claim-muk})}\quad &=  \sum_{h \in [k_v]}\abs{ (\*y * \mu_{\vec{k}})^{u_i \gets \1}_{v_h}(\1) - (\*y * \mu_{\vec{k}})_{v_h}(\1)}\\
   &=  \sum_{h \in [k_v]} \widehat{\Psi}_{\vec{k}}(u_i, v_h).  
  \end{align*}
  For any $u \in [n]$ and $i \in [k_u]$, we have
  \begin{align*}
  	\widehat{\Psi}(u, u) = \sum_{h \in [k_u]} \widehat{\Psi}_{\vec{k}}(u_1, u_h)
  	&= 1 + (\*y * \mu_{\vec{k}})_{u_1}(\1) + \sum_{j\in [k_u] \setminus \{1\}}\abs{ (\*y * \mu_{\vec{k}})_{u_j}^{u_1 \gets \1}(\1) - (\*y * \mu_{\vec{k}})_{u_j}(\1) }\\ 
  	\tp{\text{by }(\*y * \mu_{\vec{k}})_{u_j}^{u_1 \gets \1}(\1) = 0}\quad&= 1 + \sum_{j\in [k_u]}(\*y * \mu_{\vec{k}})_{u_j}(\1)\\
  	&= 1 + (\*y * \mu_{\vec{k}})_{u_i}(\1) + \sum_{j\in [k_u] \setminus \{i\}}\abs{ (\*y * \mu_{\vec{k}})_{u_j}^{u_i \gets \1}(\1) - (\*y * \mu_{\vec{k}})_{u_j}(\1) }\\
  	&= \sum_{h \in [k_u]} \widehat{\Psi}_{\vec{k}}(u_i, u_h). \qedhere
  \end{align*}
  \end{proof}

\begin{proof}[Proof of \Cref{lem:muk-pin-ratio}]
  Recall $V = [n]$. First, define
  \begin{align}\label{eq-def-proof-R}
    R_{-} &\triangleq \left\{u \in V \mid \forall i \in [k], u_i \in \Lambda \land \sigma_{u_i} = \0\right\}, \notag\\
    R_{+} &\triangleq \left\{u \in V \mid \exists i \in [k], u_i \in \Lambda \land \sigma_{u_i} = \1\right\}, \notag\\
    R &\triangleq R_{-} \uplus R_{+}.
  \end{align}
  Let $\tau \in \{\0, \1\}^{R}$ be
  \begin{align}\label{eq-def-proof-tau}
    \forall u \in R,\quad \tau_u \triangleq
    \begin{cases}
      \0, & u \in R_{-} \\
      \1, & u \in R_{+}.
    \end{cases}
  \end{align}
  Now, note that
\begin{align*}
  \tp{\*z * \mu_k}^{\sigma}_{v_i}(\1)
  &= \frac{\Pr[Y \sim \*z * \mu_k]{Y_{v_i} = \1 \land Y_{\Lambda} = \sigma}}{\Pr[Y \sim \*z * \mu_k]{Y_{\Lambda} = \sigma}}
    \quad \text{and} \quad
    \tp{\*z * \mu_k}^{\sigma}_{v_i}(\0)
  = \frac{\Pr[Y \sim \*z * \mu_k]{Y_{v_i} = \0 \land Y_{\Lambda} = \sigma}}{\Pr[Y \sim \*z * \mu_k]{Y_{\Lambda} = \sigma}}.
\end{align*}

Note that $v_i \in V_k \setminus \Lambda$ and $\mu^\sigma_{v_i}(\1) > 0$.
We first show that $v \notin R$.
Suppose $v \in R_+$. Since $v_i \in V_k \setminus \Lambda$, there exists $v_j \in C_v$ such that $\sigma_{v_j} = \1$, and thus $\mu^\sigma_{v_i}(\1) = 0$, but $\mu^\sigma_{v_i}(\1) > 0$.
Suppose $v \in R_-$. It must hold that $v_j \in \Lambda$ for all $j \in [k]$, but $v_i \notin \Lambda$.
Hence, it holds that $v \notin R$.
   
Define the partition function 
\begin{align*}
Z \triangleq \sum_{Y \in \Omega(\mu_k)} \mu(Y^\star) \prod_{u_j \in V_k: Y_{u_j} = 1} \frac{z_{u_j}}{k},\quad \text{where }\forall u \in V,	
Y^\star(u) = \begin{cases}
\1 &\text{if } \exists j \in [k], Y(u_j) = \1\\
\0 &\text{if } \forall j \in [k], Y(u_j) =\0.
\end{cases}
\end{align*}
 For any $u \in [n]$, let $S_u \triangleq  C_u \setminus \Lambda$, where $C_u = \{u_i \mid i \in [k]\}$.
We have
\begin{align*}
  \Pr[Y \sim \*z * \mu_k]{Y_{v_i} = \1 \land Y_{\Lambda} = \sigma}
  &= \frac{1}{Z} \sum_{Y \in \Omega(\mu_k)} \mu(Y^\star) \cdot  \tp{\prod_{u_j \in V_k: Y_{u_j} = 1} \frac{z_{u_j}}{k}}\cdot \mathds{1}[Y_{v_i} = 1 \land Y_{\Lambda} = \sigma]\\
  &= \frac{1}{Z} \sum_{\substack{ Y \in \Omega(\mu_k):\\ Y_{v_i} = \1 \land Y_{\Lambda} = \sigma}} \mu(Y^\star) \cdot \prod_{u_j \in V_k \setminus \sigma^{-1}(\1): Y_{u_j} = \1} \frac{z_{u_j}}{k} \cdot \prod_{u_j \in \sigma^{-1}(\1)} \frac{z_{u_j}}{k}\\
(\ast)  &= \frac{1}{Z} \sum_{\substack{ X \in \Omega(\mu):\\ X_{v} = \1 \land X_R = \tau}} \mu(X) \cdot \frac{z_{v_i}}{k} \cdot \prod_{\substack{u \in V\setminus R : \\ u \neq v \land X_u = \1}} \tp{ \sum_{u_j \in S_u} \frac{z_{u_j}}{k}} \cdot \prod_{u_j \in \sigma^{-1}(\1)} \frac{z_{u_j}}{k}.
\end{align*}
In equation~$(\ast)$, we enumerate all $X= Y^\star$. Since $Y_{v_i} =\1$, it holds that $X_v = \1$. For any $u \in V\setminus R$ and $u \neq v$, if $X_u = Y^\star_u = \1$, we must select one $u_j \in C_u \setminus \Lambda = S_u$ to set  $Y_{u_j} = \1$, which gives the factor $\prod_{\substack{u \in V\setminus R : \\ u \neq v \land X_u = \1}} \tp{\frac{1}{k} \sum_{u_j \in S_u} z_{u_j}}$ in ($\ast$). 
Similarly, it holds that
\begin{align*}
  &\Pr[Y \sim \*z * \mu_k]{Y_{v_i} = \0 \land Y_{\Lambda} = \sigma} \\
  = &\frac{1}{Z} \sum_{Y \in \Omega(\mu_k)} \mu(Y^\star) \cdot \tp{\prod_{\substack{ u_j \in V_k \setminus \sigma^{-1}(1):\\ Y_{u_j} = 1}} \frac{z_{u_j}}{k} \cdot \prod_{u_j \in \sigma^{-1}(1)} \frac{z_{u_j}}{k}} \cdot \mathds{1}[Y_{v_i} = \0 \land Y_{\Lambda} = \sigma]\\
  = &\frac{1}{Z} \sum_{\substack{X \in \Omega(\mu)\\X_R=\tau}} \mu(X) \cdot \tp{\mathds{1}[X_v = \0] + \mathds{1}[X_v = \1] \tp{\frac{1}{k} \sum_{v_j \in S_v \setminus \{v_i\}} z_{v_j}}} \cdot \prod_{{\substack{u \in V \setminus R : \\ u \neq v \land X_u = \1}} } \tp{ \sum_{u_j \in S_u} \frac{z_{u_j}}{k}} \cdot \prod_{u_j \in \sigma^{-1}(\1)} \frac{z_{u_j}}{k}.
\end{align*}
Since $Y_{v_i} = \0$ and $X = Y^\star$. If $X_v = \0$, then $Y_{v_j} = 0$ for all $j \in [k]$; if $X_v = \1$, since $Y_{v_i} = \0$, there exists $v_j \in S_v \setminus \{v_i\}$ such that $Y_{v_j} = \1$. This gives the factor $\tp{\mathds{1}[X_v = \0] + \mathds{1}[X_v = \1] \tp{\frac{1}{k} \sum_{v_j \in S_v \setminus \{v_i\}} z_{v_j}}}$ in above formula. 
Hence, it holds that
\begin{align*}
  \frac{\tp{\*z * \mu_k}^{\sigma}_{v_i}(\0)}{\tp{\*z * \mu_k}^{\sigma}_{v_i}(\1)}
  &= \frac{\Pr[Y \sim \*z * \mu_k]{Y_{v_i} = \0 \land Y_{\Lambda} = \sigma}}{\Pr[Y \sim \*z * \mu_k]{Y_{v_i} = \1 \land Y_{\Lambda} = \sigma}}
    = \frac{k}{z_{v_i}} \tp{\frac{\tp{\*x * \mu}^\tau_v(\0)}{\tp{\*x * \mu}^\tau_v(\1)} + \frac{1}{k} \sum_{v_j \in S_v \setminus \{v_i\}} z_{v_j}},
\end{align*}
where
\begin{align}
  \forall u \in V,\quad x_u &\triangleq \begin{cases}
    \frac{1}{k} \sum_{u_j \in S_u} z_{u_j},& u \in V \setminus R \text{ and } u \neq v \\
    1,& u \in R \text{ or } u = v. 
    \end{cases}.
\end{align}
Note that $\*z \in (0,1+\epsilon]^{V_k}$ implies $\*x \in (0,1+\epsilon]^{V}$.
\end{proof}

\subsection{Marginal stability of $\mu_k$ (Proof of \Cref{lemma-part-II})}


Fix a subset $\Lambda \subseteq V_k$ and a feasible configuration $\sigma \in \Omega\tp{\mu_{k, V_k \setminus \Lambda}}$ on $V_k \setminus \Lambda$. 
Fix a variable $v_i \in  \Lambda$ and a feasible configuration $\tau \in \Omega{\tp{\mu^\sigma_{k, \Lambda \setminus \{v_i\}}}}$ on $ \Lambda \setminus \{v_i\}$.
Our goal is to verify the following inequalities: 

\begin{align}\label{eq-bounded-ratio-1}
  \frac{\mu_{k,v_i}^{\sigma\cup \tau}(\1)}{\mu_{k,v_i}^{\sigma \cup \tau}(\0)} \leq \zeta,
\end{align}
\begin{align} \label{eq-bounded-ratio-2}
  \frac{\mu_{k,v_i}^{\sigma\cup \tau}(\1)}{\mu_{k,v_i}^{\sigma \cup \tau}(\0)} \leq 2\zeta \cdot \frac{\mu^\sigma_{k,v_i}(\1)}{\mu^\sigma_{k,v_i}(\0)}.
\end{align}

We first show that~\eqref{eq-bounded-ratio-1} and~\eqref{eq-bounded-ratio-2} together indeed guarantee the marginal stability of $\mu_k$ that we want. By~\eqref{eq-bounded-ratio-1}, we know that for any $\gamma \in \Omega(\mu_{k,V_k \setminus \{v_i\} })$, it holds that $ {\mu_{k,v_i}^{\gamma}(\1)}/{\mu_{k,v_i}^{\gamma}(\0)} \leq \zeta$, which implies for any partial pinning $\rho \in \Omega(\mu_{k,S})$, where $S\subseteq V_k \setminus \{v_i\}$, ${\mu_{k,v_i}^{\rho}(\1)}/{\mu_{k,v_i}^{\rho}(\0)} \leq \zeta$. 
Next, consider $H \subseteq S$. It holds that 
\begin{align*}
	\frac{\mu_{k,v_i}^{\rho}(\1)}{\mu_{k,v_i}^{\rho}(\0)} \leq \max_{\phi \in \Omega\tp{\mu_{k,V_k \setminus (S \cup \{v_i\}) }^\rho} } \frac{\mu_{k,v_i}^{\rho \cup \phi}(\1)}{\mu_{k,v_i}^{\rho \cup \phi}(\0)} \leq 2\zeta \frac{\mu_{k,v_i}^{\rho_H}(\1)}{\mu_{k,v_i}^{\rho_H}(\0)},
\end{align*}
where in the last inequality we use~\eqref{eq-bounded-ratio-2} with $\sigma = \rho_H$ and $\sigma\cup \tau = \rho \cup \phi$.

Our proof is reduced to verifying ~\eqref{eq-bounded-ratio-1} and \eqref{eq-bounded-ratio-2}.
In the rest part of this section, without loss of generality, we may assume that $\mu^{\sigma\cup\tau}_{k, v_i}(\1) > 0$ and $\mu^\sigma_{k, v_i}(\1) > 0$, since when $\mu^{\sigma\cup \tau}_{k, v_i}(\1) = 0$, \eqref{eq-bounded-ratio-1}, \eqref{eq-bounded-ratio-2} hold trivially; and note that $\mu^\sigma_{k, v_i}(\1) = 0$ implies $\mu^{\sigma\cup \tau}_{k, v_i}(\1) = 0$.

We first proof \eqref{eq-bounded-ratio-1}.
Note that $\sigma \cup \tau$ is a configuration on $V_k \setminus \{v_i\}$.
We use \Cref{lem:muk-pin-ratio} with $\epsilon = 1$ and $\*z = \*1$.
By \Cref{lem:muk-pin-ratio}, there exist $\*x \in (0, 1]^V$ satisfying $x_v = 1$, a subset $R = V \setminus \{v\}$, and a configuration $\rho \in \Omega(\mu_R)$ such that 
 \begin{align}\label{eq-ratio-nu}
   \frac{\mu_{k,v_i}^{\sigma\cup \tau}(\0)}{\mu_{k,v_i}^{\sigma \cup \tau}(\1)}
    &= k\cdot \frac{\tp{\*x * \mu}^{\rho}_v(\0)}{\tp{\*x * \mu}^\rho_v(\1)}.
  \end{align}
  Specifically, by~\eqref{eq-def-proof-R} and \eqref{eq-def-proof-tau}, we have 
    \begin{align*}
  	R_-&=\left\{u \in V \mid \forall j \in [k], u_j \neq v_i \land (\sigma\cup\tau)_{u_j} = \0\right\},\\
  	R_+&=\left\{u \in V \mid \exists j \in [k], u_j \neq v_i \land (\sigma\cup\tau)_{u_j} = \1\right\},\\
  	R &= R_- \cup R_+ = V \setminus \{v\},
      \end{align*}
and $\rho$ defined by 
    \begin{align*}
      \forall u \in R,\quad \rho_u &\triangleq
    \begin{cases}
      \0, & u \in R_{-} \\
      \1, & u \in R_{+}
    \end{cases}.
      \end{align*}
%
 Since $x_v = 1$ and $v \notin R$ have already hold by \Cref{lem:muk-pin-ratio}, we have
  \begin{align}\label{eq-lower-bound>0}
  	\frac{\tp{\*x * \mu}^{\rho}_v(\0)}{\tp{\*x * \mu}^\rho_v(\1)} 
  	\overset{(\ast)}{\geq} \frac{1}{\zeta} > 0,
  \end{align}
  where inequality $(\ast)$ holds because $(\*x * \mu)$ is $\zeta$-marginally stable.
  This proves \eqref{eq-bounded-ratio-1}.
  
Now, we bound \eqref{eq-bounded-ratio-2}. Recall that we assume $\mu^\sigma_{k, v_i}(\1) > 0$.
By \Cref{lem:muk-pin-ratio}, there exist $\*x' \in (0, 1]^V$ satisfying $x'_v = 1$, a subset $R'\subseteq V$ with $v \notin R'$, and a configuration $\rho' \in \Omega(\mu_{R'})$ such that 
\begin{align}\label{eq-ratio-2}
  \frac{\mu^\sigma_{k,v_i}(\0)}{\mu^\sigma_{k,v_i}(\1)}
    &= k\tp{\frac{\tp{\*x' * \mu}^{\rho'}_v(\0)}{\tp{\*x' * \mu}^{\rho'}_v(\1)} + \frac{1}{k} \sum_{v_j \in (C_v \cap \Lambda) \setminus \{v_i\}} 1} \leq k \cdot \frac{\tp{\*x' * \mu}^{\rho'}_v(\0)}{\tp{\*x' * \mu}^{\rho'}_v(\1)} + k = \frac{k}{\tp{\*x' * \mu}^{\rho'}_v(\1)}.
\end{align}
  By~\eqref{eq-def-proof-R} and  \eqref{eq-def-proof-tau}, we have
  \begin{align*}
  	R'_-&=\left\{u \in V \mid \forall j \in [k], u_j \in V_k \setminus \Lambda \land \sigma_{u_j} = \0\right\}\\
  	R'_+&=\left\{u \in V \mid \exists j \in [k], u_j \in V_k \setminus \Lambda \land \sigma_{u_j} = \1\right\}\\
  	R' &= R'_- \cup R'_+\\
    \forall u \in R',\quad \rho'_u &\triangleq
    \begin{cases}
      \0, & u \in R_{-}' \\
      \1, & u \in R_{+}'.
    \end{cases}
  \end{align*}

  Before we progress, recall that we have assumed $\mu^{\sigma\cup\tau}_{k, v_i}(\1) > 0$ and $\mu^\sigma_{k, v_i}(\1) > 0$. Combining this fact with \eqref{eq-ratio-nu}, \eqref{eq-ratio-2}, it holds that $(\*x * \mu)^\rho_v(\1) > 0$ and $(\*x' * \mu)^{\rho'}_v(\1) > 0$.
  
  Now, in order to prove \eqref{eq-bounded-ratio-2}, we claim that
  \begin{align} \label{eq-bounded-ratio-2-aux}
    \frac{1}{\tp{\*x' * \mu}^{\rho'}_v(\1)} \leq 2\zeta \frac{\tp{\*x * \mu}^{\rho}_v(\0)}{\tp{\*x * \mu}^\rho_v(\1)}.
  \end{align}
  Combining \eqref{eq-ratio-nu}, \eqref{eq-ratio-2}, and \eqref{eq-bounded-ratio-2-aux}, it holds that
  \begin{align*}
  \frac{\mu^\sigma_{k,v_i}(\0)}{\mu^\sigma_{k,v_i}(\1)}
    \leq 2\zeta \frac{\mu_{k,v_i}^{\sigma\cup \tau}(\0)}{\mu_{k,v_i}^{\sigma \cup \tau}(\1)},
  \end{align*}
  and this proves \eqref{eq-bounded-ratio-2}.

  Now, we only left to prove \eqref{eq-bounded-ratio-2-aux}, which, by some calculation, is equivalent to
  \begin{align}\label{eq:verify-ms}
    \frac{(\*x * \mu)^\rho_v(\1)}{(\*x * \mu)^\rho_v(\0)} + \frac{(\*x * \mu)^\rho_v(\1)}{(\*x * \mu)^\rho_v(\0)} \bigg/ \frac{\tp{\*x' * \mu}^{\rho'}_v(\1)}{\tp{\*x' * \mu}^{\rho'}_v(\0)} \leq 2\zeta.
  \end{align}
  Both the first and the second term could be bounded
  by the complete marginal stability of $\zeta$.
  In particular, $(\*x * \mu)^\rho_v = (\*x' * \mu)^\rho_v$ holds by the the fact  $\rho \in \{\0,\1\}^{V \setminus \{ v\}}$ and $x_v = x'_v$.
  Therefore, the second term of~\eqref{eq:verify-ms} can be bounded by
  \begin{align*}
    \frac{(\*x * \mu)^\rho_v(\1)}{(\*x * \mu)^\rho_v(\0)} \bigg/ \frac{\tp{\*x' * \mu}^{\rho'}_v(\1)}{\tp{\*x' * \mu}^{\rho'}_v(\0)}
    =
    \frac{(\*x' * \mu)^\rho_v(\1)}{(\*x' * \mu)^\rho_v(\0)} \bigg/ \frac{\tp{\*x' * \mu}^{\rho'}_v(\1)}{\tp{\*x' * \mu}^{\rho'}_v(\0)} \leq \zeta,
  \end{align*}
  where the inequality holds by $\rho_{R'} = \rho'$ the $\zeta$-marginal stability of $\*x' * \mu$.

  \color{black}

\section{Applications to Anti-Ferromagnetic Two-Spin  Systems}\label{section-app}
%
%
In this section, we apply \Cref{theorem-general} to anti-ferromagnetic 2-spin systems and prove the lower bound on the modified log-Sobolev (MLS) constant for anti-ferromagnetic two-spin systems in \Cref{thm:2spin-theorem}.
Given the modified log-Sobolev bound, the mixing time bound in \Cref{thm:2spin-theorem} is standard, whose calculation is postponed to \Cref{sec:append}.

Let $\+I = (G=(V,E), \beta, \gamma, \lambda)$ be an anti-ferromagnetic two-spin system with Gibbs distribution $\mu$, where 
\begin{align}
0 \leq \beta \leq \gamma, \lambda,\gamma > 0 \quad\text{and}\quad \beta\gamma < 1. \label{eq:anti-ferro-regime}
\end{align}
Let $n=|V|$ and $\Delta\geq 3$ denote the maximum degree of $G$. 
Suppose that $\+I$ satisfies \Cref{condition-canonical-two-spin}, that is:
\begin{itemize}
    \item $(\beta, \gamma, \lambda)$ is $(\Delta-1)$-unique with gap $\delta \in (0,1)$; 
    \item $G$ is regular or $\gamma \leq 1$. 
  \end{itemize}

The following fact is folklore. A formal proof  is provided in \Cref{section-app-monotone}.
 \begin{proposition}\label{prop:LLY}
  Let $(\beta,\gamma,\lambda)$ satisfy \eqref{eq:anti-ferro-regime}.
 Let $\Delta \geq 3$ be an integer and $\delta \in (0,1)$.
 If $\gamma \leq 1$, then $(\beta,\gamma,\lambda)$ is up-to-$\Delta$ unique with gap $\delta$ if and only if $(\beta,\gamma,\lambda)$ is $(\Delta-1)$-unique with gap $\delta$.
  \end{proposition}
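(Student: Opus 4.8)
The plan is to split into the trivial direction and the substantive one. For ``up-to-$\Delta$-uniqueness $\Rightarrow$ $(\Delta-1)$-uniqueness'' there is nothing to prove: being up-to-$\Delta$-unique with gap $\delta$ means being $d$-unique with gap $\delta$ for every $1\le d\le\Delta-1$, in particular for $d=\Delta-1$. So the whole content is the converse, which I would reduce to the following monotonicity claim (the ``if'' half of the folklore equivalence recalled before \Cref{prop:LLY}): \emph{when $\gamma\le1$, the quantity $\varphi(d):=\abs{F_d'(\hat x_d)}=\frac{d(1-\beta\gamma)\hat x_d}{(\beta\hat x_d+1)(\hat x_d+\gamma)}$ is non-decreasing in the real parameter $d\in[1,\infty)$.} Granting this, $(\Delta-1)$-uniqueness with gap $\delta$ gives $\varphi(d)\le\varphi(\Delta-1)\le1-\delta$ for every $1\le d\le\Delta-1$, which is exactly up-to-$\Delta$-uniqueness with gap $\delta$.

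To prove the monotonicity I would first record two identities. Writing $R(x)=\frac{\beta x+1}{x+\gamma}$, one computes $F_d'(\hat x_d)=d\,\hat x_d\,(\log R)'(\hat x_d)$ and $(\log R)'(x)=\frac{\beta\gamma-1}{(\beta x+1)(x+\gamma)}$, which recovers the displayed formula for $\varphi(d)$ and shows $F_d'(\hat x_d)<0$. Since $x-F_d(x)$ is strictly increasing (because $F_d'<0$), the fixed point $\hat x_d$ is the unique zero for every $d>0$, and by the implicit function theorem it is a smooth function of $d$ on $(0,\infty)$; differentiating $\hat x_d=\lambda R(\hat x_d)^d$ in $d$ yields $\frac{\d \hat x_d}{\d d}=\frac{\hat x_d\ln R(\hat x_d)}{1+\varphi(d)}$. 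Then, with $\psi(x)=\frac{x}{(\beta x+1)(x+\gamma)}$ so that $\varphi(d)=d(1-\beta\gamma)\psi(\hat x_d)$, a chain-rule computation (using $(\log\psi)'(x)=\frac{\gamma-\beta x^2}{x(\beta x+1)(x+\gamma)}$ and $\frac{d}{(\beta\hat x_d+1)(\hat x_d+\gamma)}=\frac{\varphi(d)}{(1-\beta\gamma)\hat x_d}$) reduces $\frac{\d}{\d d}\log\varphi(d)\ge0$ to
\[
1\ \ge\ -\,\frac{\varphi(d)\,(\gamma-\beta\hat x_d^{\,2})\,\ln R(\hat x_d)}{(1-\beta\gamma)\,\hat x_d\,(1+\varphi(d))}.
\]

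The last step is the real work: a case analysis on the signs of $\ln R(\hat x_d)$ and of $\gamma-\beta\hat x_d^{\,2}$. Set $x^\star:=\frac{1-\gamma}{1-\beta}\ge0$ (note $\beta<1$ in the anti-ferromagnetic regime with $\gamma\le1$), so that $\ln R(x)\ge0\iff x\le x^\star$; the elementary identity $\gamma(1-\beta)^2-\beta(1-\gamma)^2=(\gamma-\beta)(1-\beta\gamma)\ge0$ gives $\beta(x^\star)^2\le\gamma$, hence $x\le x^\star$ forces $\gamma-\beta x^2\ge0$. If $\ln R(\hat x_d)\ge0$ then $(\gamma-\beta\hat x_d^{\,2})\ln R(\hat x_d)\ge0$ and the right-hand side is $\le0$. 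If $\ln R(\hat x_d)<0$ and $\gamma-\beta\hat x_d^{\,2}\le0$, the right-hand side is again $\le0$. In the last case $\ln R(\hat x_d)<0$ and $\gamma-\beta\hat x_d^{\,2}>0$ (so $\hat x_d>x^\star$): using $1+\varphi>\varphi$ and the bound $\ln\tfrac1{R(x)}\le\tfrac1{R(x)}-1=\tfrac{(1-\beta)(x-x^\star)}{\beta x+1}$, the claim reduces to $(1-\beta\gamma)\hat x_d(\beta\hat x_d+1)\ge(\gamma-\beta\hat x_d^{\,2})(1-\beta)(\hat x_d-x^\star)$, which follows from the chain
\[
(1-\beta\gamma)\hat x_d(\beta\hat x_d+1)\ \ge\ (1-\beta\gamma)\hat x_d\ \ge\ (1-\beta)\hat x_d\ \ge\ (1-\beta)(\hat x_d-x^\star)\ \ge\ (\gamma-\beta\hat x_d^{\,2})(1-\beta)(\hat x_d-x^\star),
\]
using $\beta\hat x_d+1\ge1$, then $1-\beta\gamma\ge1-\beta$ (this is exactly where $\gamma\le1$ enters), then $x^\star\ge0$, then $\gamma-\beta\hat x_d^{\,2}\le\gamma\le1$. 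I expect this sign-chasing — pinning down which sub-cases are vacuous and massaging the one-variable inequality into a form where $\gamma\le1$ can be applied cleanly — to be the main obstacle; the rest (the two identities, the implicit differentiation, the chain rule) is routine.
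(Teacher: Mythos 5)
Your proof is correct and follows essentially the same route as the paper's: the paper likewise reduces \Cref{prop:LLY} to the monotonicity of $\abs{F_d'(\hat x_d)}$ in $d$ when $\gamma\le 1$ (\Cref{prop:equiv-LLY}), differentiates this quantity in $d$ (your implicit-differentiation computation reproduces exactly the paper's formula for $c'(d)$), performs the same sign case analysis that isolates the regime $\frac{1-\gamma}{1-\beta}<\hat x_d<\sqrt{\gamma/\beta}$, and linearizes the logarithm ($1+t\le e^t$ there, $\ln(1+u)\le u$ here, the same bound). The only cosmetic difference is the final step, where the paper clears denominators and notes that the resulting cubic has all coefficients nonpositive when $\gamma\le 1$, while you finish with a short chain of elementary inequalities; the two are interchangeable.
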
  

With this, we can assume that $\+I$ satisfies the following condition that is equivalent to \Cref{condition-canonical-two-spin}.
\begin{condition}\label{condition-proof-two-spin}
Let $\delta \in (0,1)$.
The anti-ferromagnetic two-spin system $\+I = (G, \beta, \gamma, \lambda)$ with maximum degree $\Delta=\Delta_G \geq 3$ satisfies one of the following two conditions
\begin{itemize}
	\item $\gamma \leq 1$ and $(\lambda,\beta,\gamma)$ is up-to-$\Delta$ unique with gap $\delta$; 
	\item $\gamma > 1$, $(\lambda,\beta,\gamma)$ is $(\Delta-1)$-unique with gap $\delta$,   and $G$ is $\Delta$-regular.
\end{itemize}
\end{condition} 
We will show that the modified log-Sobolev  constant $\rho^{\-{GD}}(\mu)$ of Glauber dynamics on $\mu$ is at least $\frac{1}{C(\delta) n}$ for some $C(\delta)=\exp(O(1/\delta))$.

As a preprocessing of $\mu$, we apply the flipping operation used in~\cite{chen2021rapid}.
\begin{definition}[\text{flipping operation}]
  Let $\mu$ be a distribution over $\{\0, \1\}^n$, and $\*\chi \in \{\0, \1\}^n$ be a direction vector. The flipped distribution $\pi = \-{flip}(\mu, \*\chi)$ over $\{\0, \1\}^n$ is defined as
\begin{align*}
  \forall \sigma \in \{\0, \1\}^n, \quad \pi(\sigma) \triangleq \mu(\*\chi \odot \sigma),
\end{align*}
where $(\*\chi \odot \sigma)_i \triangleq \chi_i\sigma_i$ for all $i \in [n]$.

In particular, if  $\*\chi_i = \chi \in \{\0, \1\} $ for all $i \in [n]$, we denote $\-{flip}(\mu, \*\chi)$ by $\-{flip}(\mu, \chi)$.
\end{definition}
%
Let $\chi=\chi(\+I) \in \{\0, \1\}$ be a direction indicator defined by
\begin{align} \label{eq:uniform-flip}
  \chi &\triangleq
    \begin{cases}
    \1, \quad \lambda \leq \tp{\frac{\gamma}{\beta}}^{\Delta/2}, \\
    \0, \quad \text{otherwise.}
    \end{cases}
\end{align}
Let $\pi = \-{flip}(\mu,\chi)$. By definition, $\pi$ is the Gibbs distribution of $\+I_{
\-{flip}}=(G, \bar{\beta},\bar{\gamma},\bar{\lambda})$, where 
\begin{align}\label{eq-def-bar}
(\bar{\beta},  \bar{\gamma}, \bar{\lambda})=
	\begin{cases}
	(\beta,  \gamma,  \lambda) &\text{if } \lambda \leq \tp{\frac{\gamma}{\beta}}^{\Delta / 2},\\
	(\gamma, \beta,  \frac{1}{\lambda}) &\text{if } \lambda > \tp{\frac{\gamma}{\beta}}^{\Delta / 2}	.
	\end{cases}
\end{align}
Note that either $\+I_{\-{flip}} = \+I$ or $\+I_{\-{flip}}$ is obtained by flipping the roles between $-1$ and $+1$ in $\+I$.
The following two observation about $\+I_{\-{flip}}$ are straightforward to verify.
\begin{observation}\label{observation-I-flip}
$\bar{\beta} \geq 0, \bar{\gamma} > 0$, $\bar{\beta}\bar{\gamma} < 1$ and $0 < \bar{\lambda} \leq \tp{\frac{\bar{\gamma}}{\bar{\beta}}}^{\Delta / 2}$.	
\end{observation}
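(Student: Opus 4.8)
The plan is to verify the four asserted inequalities by a direct case split according to the two branches defining $(\bar\beta,\bar\gamma,\bar\lambda)$ in \eqref{eq-def-bar}, using only the standing assumptions \eqref{eq:anti-ferro-regime} (that is, $0\le\beta\le\gamma$, $\gamma,\lambda>0$, $\beta\gamma<1$) together with the defining hypothesis of each branch of \eqref{eq:uniform-flip}. No auxiliary lemma is needed.

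In the first branch, $\lambda\le(\gamma/\beta)^{\Delta/2}$ and $(\bar\beta,\bar\gamma,\bar\lambda)=(\beta,\gamma,\lambda)$, so $\bar\beta=\beta\ge0$, $\bar\gamma=\gamma>0$, and $\bar\beta\bar\gamma=\beta\gamma<1$ are immediate from \eqref{eq:anti-ferro-regime}, while $0<\bar\lambda=\lambda\le(\gamma/\beta)^{\Delta/2}=(\bar\gamma/\bar\beta)^{\Delta/2}$ is exactly the branch hypothesis. In the second branch, $\lambda>(\gamma/\beta)^{\Delta/2}$; I would first note that this forces $\beta>0$, since otherwise the right-hand side is $+\infty$ and the strict inequality is impossible. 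Then $(\bar\beta,\bar\gamma,\bar\lambda)=(\gamma,\beta,1/\lambda)$ gives $\bar\beta=\gamma>0\ge0$, $\bar\gamma=\beta>0$, $\bar\beta\bar\gamma=\gamma\beta<1$, and taking reciprocals in the branch hypothesis yields $0<\bar\lambda=1/\lambda<(\beta/\gamma)^{\Delta/2}=(\bar\gamma/\bar\beta)^{\Delta/2}$, which in particular gives the claimed $\le$.

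There is no genuine obstacle here: the statement is a bookkeeping check that the flipping operation in \eqref{eq:uniform-flip}--\eqref{eq-def-bar} was designed precisely to guarantee, namely that $\+I_{\-{flip}}$ stays in the anti-ferromagnetic regime \eqref{eq:anti-ferro-regime} and lands on the ``low-field side'' $\bar\lambda\le(\bar\gamma/\bar\beta)^{\Delta/2}$. The only point I would flag explicitly is the degenerate case $\beta=0$ (the hardcore-type case): there $(\gamma/\beta)^{\Delta/2}=+\infty$, so the indicator $\chi$ in \eqref{eq:uniform-flip} is forced to $\1$, we land in the first branch, $\+I_{\-{flip}}=\+I$, and the bound $\bar\lambda\le(\bar\gamma/\bar\beta)^{\Delta/2}=+\infty$ holds vacuously.
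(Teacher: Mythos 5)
Your proposal is correct: the paper states this observation without proof as ``straightforward to verify,'' and your direct case split on the two branches of \eqref{eq-def-bar}, including the remarks that the second branch forces $\beta>0$ and that $\beta=0$ lands in the first branch with the bound holding trivially, is exactly the intended verification.
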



\begin{observation} \label{lemma-regular-canonical}
$\rho^{\-{GD}}(\mu) = \rho^{\-{GD}}(\pi)$.
\end{observation}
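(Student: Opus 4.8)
The plan is to observe that $\pi$ is obtained from $\mu$ by a global relabeling of the two spin values, under which the Glauber dynamics is manifestly equivariant; since the modified log-Sobolev constant is defined by the variational principle \eqref{eq:mlsc} that only sees $\mu$ together with its Glauber chain, the two constants must agree. If $\chi = \1$ then $\pi = \mu$ and there is nothing to prove, so assume $\chi = \0$, in which case $\pi(\sigma) = \mu(-\sigma)$ for every $\sigma \in \{\0,\1\}^{n}$. Let $\Phi \colon \{\0,\1\}^{n} \to \{\0,\1\}^{n}$ be the coordinatewise negation $\Phi(\sigma) = -\sigma$; it is an involution and restricts to a bijection $\Omega(\mu) \to \Omega(\pi)$ that pushes $\mu$ forward to $\pi$, since $\pi(\Phi(\sigma)) = \mu(-(-\sigma)) = \mu(\sigma)$ and hence $\Omega(\pi) = \Phi(\Omega(\mu))$.

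Next I would check that $\Phi$ intertwines the two Glauber chains, writing $P_\mu, P_\pi$ for their transition matrices: $P_\pi(\Phi(\sigma), \Phi(\tau)) = P_\mu(\sigma, \tau)$ for all $\sigma, \tau \in \Omega(\mu)$. This is immediate from the definition of the Glauber dynamics: for a vertex $v$ and a boundary configuration $\tau$ on $V \setminus \{v\}$, conditioning $\pi$ on $\tau$ is the same as conditioning $\mu$ on $-\tau$ and then negating the spin at $v$, so $\pi_v^{\tau}(x) = \mu_v^{-\tau}(-x)$; since both chains pick $v$ uniformly at random and then resample its spin from this conditional marginal, and the uniform choice of $v$ is unchanged by $\Phi$, the claimed equivariance follows.

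Finally, for any $f \colon \Omega(\pi) \to \mathds{R}_{\geq 0}$ with $\Ent[\pi]{f} \neq 0$, set $\tilde f = f \circ \Phi \colon \Omega(\mu) \to \mathds{R}_{\geq 0}$. Because $\Phi$ pushes $\mu$ forward to $\pi$, we get $\E[\mu]{\tilde f} = \E[\pi]{f}$ and $\Ent[\mu]{\tilde f} = \Ent[\pi]{f}$; and because $\Phi$ intertwines the chains, $\mathcal{E}_{P_\mu}(\tilde f, \log \tilde f) = \inner{\tilde f}{(I - P_\mu)\log\tilde f}_{\mu} = \inner{f}{(I - P_\pi)\log f}_{\pi} = \mathcal{E}_{P_\pi}(f, \log f)$. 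Thus $f \mapsto \tilde f$ is a bijection between the admissible test functions for $\pi$ and those for $\mu$ preserving the quotient in \eqref{eq:mlsc}, and taking infima gives $\rho^{\mathrm{GD}}(\pi) = \rho^{\mathrm{GD}}(\mu)$. No genuine obstacle arises here; the only step meriting a line of justification is the equivariance of the Glauber chain under spin negation, which is a one-line consequence of the definitions.
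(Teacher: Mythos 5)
Your proof is correct and is exactly the routine verification the paper leaves implicit (the paper states this as an observation that is "straightforward to verify" and gives no proof): the global spin-negation $\Phi$ pushes $\mu$ forward to $\pi$, intertwines the two Glauber chains via $\pi_v^{\tau}(x)=\mu_v^{-\tau}(-x)$, and hence the map $f\mapsto f\circ\Phi$ preserves both $\Ent{\cdot}$ and the Dirichlet form, so the two infima in \eqref{eq:mlsc} coincide. No gap here.
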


The next lemma analyzes the modified log-Sobolev constant $\rho^{\-{GD}}(\pi)$ for flipped distribution $\pi = \-{flip}(\mu,\chi)$.
\begin{lemma}\label{theorem-regular}
Let $0 < \delta < 1$. If $\+I$ with Gibbs distribution $\mu$ satisfies \Cref{condition-proof-two-spin} with parameter $\delta$, then 
\begin{align*}
\rho^{\-{GD}}(\pi) \geq \frac{1}{C(\delta) n},	
\end{align*}
where $\rho^{\-{GD}}(\pi)$ is the  modified log-Sobolev constant for the Glauber dynamics on $\pi = \-{flip}(\mu,\chi)$ with $\chi$ defined in \eqref{eq:uniform-flip}, and $C(\delta) = \exp(O(1/\delta))$ is a constant depending only on $\delta$. 
\end{lemma}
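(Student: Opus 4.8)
The plan is to apply the general boosting theorem (\Cref{theorem-general}) to the flipped distribution $\pi$, after verifying that $\pi$ satisfies \Cref{condition:CSI-MS} with parameters $\eta, \epsilon, \zeta$ that depend only on $\delta$. Since $\pi$ is the Gibbs distribution of the anti-ferromagnetic two-spin system $\+I_{\-{flip}} = (G, \bar\beta, \bar\gamma, \bar\lambda)$ with $0 < \bar\lambda \le (\bar\gamma/\bar\beta)^{\Delta/2}$ (\Cref{observation-I-flip}), and $\+I$ satisfies \Cref{condition-proof-two-spin} with gap $\delta$, the key observation is that the uniqueness condition is \emph{monotone under decreasing the external field} in the flipped parametrization: magnetizing $\pi$ by any $\*\lambda \in (0, 1+\epsilon]^n$ only makes the instance ``less critical'', so spectral independence and marginal stability continue to hold (with slightly degraded constants) for $(\*\lambda * \pi)$. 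This is exactly the role the flipping operation \eqref{eq:uniform-flip} plays — it orients the system so that the interesting subcritical regime is reached by \emph{decreasing} local fields.

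First I would establish the two halves of \Cref{condition:CSI-MS} separately. For the spectral independence part (\Cref{condition:CSI}): I would invoke the existing spectral independence results for anti-ferromagnetic two-spin systems in the up-to-$\Delta$-uniqueness regime (as in \cite{chen2020rapid, chen2020optimal, chen2021rapid}), which give $\eta$-spectral independence with $\eta = O(1/\delta)$; then argue that conditioning/pinning and magnetizing by $\*\lambda \in (0, 1+\epsilon]^n$ (for a suitable small constant $\epsilon = \epsilon(\delta)$) keeps the instance inside a $\delta'$-uniqueness regime for some $\delta' = \delta'(\delta) > 0$, using monotonicity of $|F'_d|$ in the external field direction picked out by \eqref{eq:uniform-flip}. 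This monotonicity, together with \Cref{prop:LLY} handling the $\gamma \le 1$ versus $\gamma > 1$ (regular) dichotomy, is what lets a \emph{constant} slack $\delta$ absorb the multiplicative perturbation $(1+\epsilon)$ in the field. For the marginal stability part (\Cref{condition:MS}): I would show that for $(\*\lambda * \pi)$ with $\*\lambda \in (0,1]^n$, the marginal ratios $R_i^\sigma$ are bounded by some $\zeta = \zeta(\delta)$ and satisfy the pinning-monotonicity $R_i^\sigma \le \zeta R_i^{\sigma_S}$; this is a direct computation on the tree-recursion side effects, again relying on the orientation fixed by $\chi$ so that the ``dangerous'' direction (where pinning could blow up the ratio) is the one already controlled by the $(\bar\gamma/\bar\beta)^{\Delta/2}$ bound on $\bar\lambda$.

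With \Cref{condition:CSI-MS} verified for $\pi$ with constants $\eta, \epsilon, \zeta = \exp(O(1/\delta))$-controlled, \Cref{theorem-general} gives, for any $\theta \in (0,1)$,
\begin{align*}
\rho^{\-{GD}}(\pi) \ge \left(\frac{\theta}{\mathrm{e}}\right)^{30\eta + \frac{\log(4\zeta)}{\log(1+\epsilon)}} \rho^{\-{GD}}_{\min}(\theta * \pi).
\end{align*}
Choosing $\theta = \theta(\delta)$ a suitable constant bounded away from $0$, the prefactor is $\exp(-O(1/\delta))$, so it remains to lower bound $\rho^{\-{GD}}_{\min}(\theta * \pi) \ge \frac{1}{C'(\delta) n}$, i.e. an $\Omega(n^{-1})$ MLS bound for every conditional distribution of the \emph{subcritical} magnetized system $\theta * \pi$. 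For this I would appeal to the Ricci-curvature / path-coupling based MLSI for spin systems in a strictly subcritical regime — e.g. the result of \cite{EHMT17} on Ricci curvature, as indicated in the remark after \Cref{theorem-general} — which applies precisely because $\theta * \pi$ with $\theta$ a small constant lies safely inside the one-phase region with a contraction factor bounded below by a function of $\delta$; this is why we needed the uniqueness for \emph{all} subcritical fields and the orientation from \eqref{eq:uniform-flip}. Combining the constant prefactor with the subcritical $\Omega(n^{-1})$ bound yields $\rho^{\-{GD}}(\pi) \ge \frac{1}{C(\delta)n}$.

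The main obstacle I expect is the verification of the marginal stability (\Cref{condition:MS}) for $(\*\lambda * \pi)$ uniformly over all subcritical fields and all pinnings — unlike spectral independence, which is largely off-the-shelf, the marginal-ratio bound and its pinning-monotonicity require a careful quantitative analysis of the two-variable tree recursion $F_d$ and its derivative under the worst-case boundary condition, and getting the constant $\zeta$ to depend only on $\delta$ (and not on $\Delta$ or $\lambda$) is the crux; the flipping step \eqref{eq:uniform-flip} is essential here precisely because without it one cannot rule out the regime $\bar\lambda > (\bar\gamma/\bar\beta)^{\Delta/2}$ where marginal stability genuinely fails (this is the ``minor technical difficulty'' flagged in the open-problems subsection). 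A secondary but more conceptual obstacle, in the $\gamma > 1$ case, is that we only have a $(\Delta-1)$-uniqueness hypothesis rather than up-to-$\Delta$-uniqueness, so I must restrict to $\Delta$-regular $G$ and check that all conditional/pinned instances arising in the argument still have effective degree $\le \Delta-1$ at the relevant vertex, keeping the whole pipeline inside the hypothesized uniqueness regime.
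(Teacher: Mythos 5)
Your plan is correct and follows essentially the same route as the paper: verify complete spectral independence and complete marginal stability for the flipped distribution $\pi$ (the paper's \Cref{lem:verify-C-SI} and \Cref{lem:verify-cbmr}, the former via the gap-manipulation and direction-$\chi$ arguments you sketch, the latter via a worst-case tree-recursion bound exploiting $\bar\lambda \le (\bar\gamma/\bar\beta)^{\Delta/2}$), establish the subcritical MLSI $\rho^{\mathrm{GD}}_{\min}(\theta*\pi)=\Omega(1/n)$ through the Ricci-curvature result of \cite{EHMT17} (\Cref{lem:good-regime-mLSI}), and then apply \Cref{theorem-general} with a constant $\theta$. The only cosmetic difference is that the paper obtains a universal (not $\delta$-dependent) marginal stability constant $\zeta=\exp(12^5)$ and a universal subcritical bound $1/(4n)$ for $\theta\le 12^{-6}$, which still yields $C(\delta)=\exp(O(1/\delta))$ exactly as you claim.
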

The MLS bound in \Cref{thm:2spin-theorem} is a direct consequence of \Cref{prop:LLY},  \Cref{lemma-regular-canonical} and \Cref{theorem-regular}.
\Cref{theorem-regular} can be proved by \Cref{theorem-general} together with the following three lemmas.
\begin{lemma}[complete spectral independence] \label{lem:verify-C-SI}
  $\pi$ is $(\frac{288}{\delta}, \frac{\delta}{2})$-completely spectrally independent.
\end{lemma}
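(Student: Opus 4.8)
The plan is to establish $(\eta,\epsilon)$-complete spectral independence for $\pi$ with $\eta = 288/\delta$ and $\epsilon = \delta/2$ by invoking the standard machinery that links the tree-uniqueness condition (\Cref{definition-d-uniqueness}) to a bound on the spectral radius of the influence matrix, and then checking that this bound is robust under magnetizing with any external field $\*\lambda \in (0, 1+\frac{\delta}{2}]^n$. Concretely, by \Cref{definition-complete-SI} I must show that for every $\*\lambda \in (0, 1+\delta/2]^n$, every $\Lambda \subseteq V$ and every feasible $\sigma \in \Omega((\*\lambda*\pi)_\Lambda)$, the influence matrix $\Psi_{(\*\lambda*\pi)^\sigma}$ has spectral radius at most $288/\delta$. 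The key observation is that magnetizing $\pi$ by $\*\lambda$ and then conditioning on $\sigma$ just produces the Gibbs distribution of another anti-ferromagnetic two-spin system on the induced subgraph $G[V\setminus\Lambda]$, with the same edge parameters $\bar\beta,\bar\gamma$ but with a modified (vertex-dependent) external field $\lambda'_v = \bar\lambda\,\lambda_v\cdot(\text{local factor from pinned neighbors})$. Crucially, since \Cref{observation-I-flip} guarantees $\bar\lambda \le (\bar\gamma/\bar\beta)^{\Delta/2}$ and the flipping was chosen precisely so that decreasing the field only moves the system ``deeper'' into uniqueness, and since pinning a neighbor to the spin that is ``penalized less'' (i.e. the $\0$ side after flipping) multiplies the field by a factor $\le 1$ while pinning to the other spin by a factor that is still controlled, the whole family of conditional magnetized distributions stays within the $(\Delta-1)$-unique-with-gap-$\Theta(\delta)$ regime, possibly with the gap degraded by a constant factor because of the slack $1+\delta/2$ in the allowed fields.

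The main technical input I would use is the potential-function / aggregated-contraction argument of \cite{LLY13, anari2020spectral, chen2020rapid}: under $(\Delta-1)$-uniqueness with gap $\delta$ (for all $d \le \Delta-1$, which by \Cref{prop:LLY} follows in the $\gamma\le1$ case and is assumed in the regular case), there is an absolute constant such that $\rho(\Psi_{\nu^\sigma}) \le c/\delta$ for every conditional distribution $\nu^\sigma$ of the Gibbs measure $\nu$ of an anti-ferromagnetic two-spin system satisfying that uniqueness condition. First I would state this as a black-box lemma (citing the tree-recursion contraction with the standard logarithmic/arctangent-type potential function), and then verify that after magnetizing by $\*\lambda\in(0,1+\delta/2]^n$ the system $\+I_{\mathrm{flip}}$ with field vector $\*\lambda\odot\bar\lambda$ still satisfies $(\Delta-1)$-uniqueness but now with gap $\delta' = \delta/2$ (or some explicit constant fraction of $\delta$): this is where the $+\delta/2$ slack is consumed, and it follows because $|F'_d(\hat x_d)|$ is a continuous function of the field $\lambda$ that equals $\le 1-\delta$ at the original field, and multiplying the field by a factor in $(0,1+\delta/2]$ perturbs $\hat x_d$ and hence $|F'_d|$ by at most $O(\delta)$ uniformly in $d$ (one has to be a little careful: increasing $\lambda$ can in principle worsen uniqueness, which is exactly why only the modest slack $\delta/2$ is allowed, and why the constant $288$ appears rather than something like $c/\delta$ with small $c$). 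Then $c/\delta' = 2c/\delta$, and tracking the constant through the flipping normalization and the worst-case pinning factor yields the stated $288/\delta$.

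The step I expect to be the main obstacle is the quantitative robustness of the uniqueness gap under the field perturbation by $\*\lambda \in (0,1+\delta/2]^n$ together with arbitrary pinnings — that is, showing that $(\*\lambda*\pi)^\sigma$ is still anti-ferromagnetic-two-spin-like and $(\Delta-1)$-unique with gap $\Omega(\delta)$, uniformly over all $\sigma$ and all such $\*\lambda$. The pinning part is handled by the standard reduction of a conditional two-spin distribution to an unconditional one on a subgraph with adjusted fields (the pinned-to-$\0$ neighbors contribute a factor $\le 1$ and pinned-to-$\1$ neighbors contribute $\bar\gamma/\bar\beta \ge 1$ or are absorbed, but critically the tree recursion $F_d$ with $d \le \Delta-1$ still governs decay because the degree only drops under pinning). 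The field-scaling part requires a monotonicity/continuity estimate on $|F'_d(\hat x_d)|$ in $\lambda$: I would derive it by implicit differentiation of the fixed-point equation $\hat x_d = F_d(\hat x_d)$, bounding $\frac{d}{d\log\lambda}|F'_d(\hat x_d)|$ in terms of quantities that are already controlled by the gap-$\delta$ uniqueness, and concluding that a multiplicative change of the field by at most $1+\delta/2$ changes $|F'_d(\hat x_d)|$ by at most, say, $\delta/2$, so the gap stays $\ge \delta/2$. Once the gap $\delta/2$ (uniformly, all $d$, all pinnings, all admissible fields) is in hand, the black-box influence-matrix bound gives $\rho(\Psi) \le c/(\delta/2) = 2c/\delta \le 288/\delta$ for the explicit constant $c$ from \cite{chen2020rapid}, completing the proof of \Cref{lem:verify-C-SI}.
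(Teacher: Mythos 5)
Your overall strategy matches the paper's: absorb the $(1+\tfrac{\delta}{2}]$ field slack into the external field, argue the uniqueness gap only degrades to $\delta/2$, and then feed the resulting instance into the tree-recursion spectral-independence machinery, with remaining fields in $(0,1]$ handled by the flip direction $\chi$ (this is exactly \Cref{lem:unique-relax} plus \Cref{lem:C-SI-mu} in the paper). However, there are two genuine problems. First, your treatment of pinnings is wrong in the $\gamma>1$ case. You write that up-to-$\Delta$-uniqueness ``is assumed in the regular case'' and that after conditioning one gets a two-spin system on a subgraph where ``the degree only drops,'' so that $d$-uniqueness for all $d\le\Delta-1$ covers it. But \Cref{condition-canonical-two-spin} only assumes $(\Delta-1)$-uniqueness when $\gamma>1$, and by \Cref{prop:equiv-LLY} $d$-uniqueness is \emph{not} monotone in $d$ in that regime, so $d$-uniqueness for $d<\Delta-1$ neither holds by assumption nor follows; this is precisely the point of the paper's restriction to regular graphs. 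The paper avoids the degree-dropping reduction entirely: pinned neighbors are handled inside the SI criterion (\Cref{lem: two-spin -SI}) by allowing boundary inputs $y_i=\pm\infty$ to the arity-$(\Delta-1)$ recursion, and the boundedness condition is supplied by a regularity-specific statement (\Cref{lem:boundedness}, second bullet), which is new and not available from the generic ``$c/\delta$ under uniqueness'' black box you invoke.

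Second, the step you yourself flag as the main obstacle --- that multiplying the field by a factor in $(0,1+\tfrac{\delta}{2}]$ preserves $(\Delta-1)$-uniqueness with gap $\delta/2$ --- is only sketched via a proposed implicit-differentiation bound on $\frac{d}{d\log\lambda}\,|F_d'(\hat x_d)|$, and no such derivative estimate is established. The paper proves this quantitative robustness differently, via the explicit critical thresholds of \Cref{lem:lambda-unique}: one shows $(1+\tfrac{\delta}{2})\lambda_{1,\delta}(d)\le\lambda_{1,\delta/2}(d)$ (and the symmetric statement for $\lambda_{2,\delta}$ through $\lambda_{2,\delta}(d;\beta,\gamma)=1/\lambda_{1,\delta}(d;\gamma,\beta)$, citing the computation in \cite{anari2021entropicII}), with a case analysis over $\beta=0$, $d\le(1-\delta)\overline{\Delta}$, $d>(1-\delta)\overline{\Delta}$ and over the flip direction, plus the additional check that the flip indicator does not change sign after perturbation (second part of \Cref{lem:unique-relax}). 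Without either that threshold comparison or a worked-out continuity estimate, and without a correct treatment of pinnings in the $\gamma>1$ regular case, your argument does not yet yield the claimed $(\frac{288}{\delta},\frac{\delta}{2})$-complete spectral independence.
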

\begin{lemma}[complete marginal stability] \label{lem:verify-cbmr}
  $\pi$ is completely $\exp(12^5)$-marginally stable.
\end{lemma}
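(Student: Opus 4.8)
The plan is to verify directly from the tree-recursion characterization of uniqueness that for every $\*\lambda\in(0,1]^n$ and every pinning, the marginal ratios in the flipped instance $\+I_{\-{flip}}$ are bounded above by an absolute constant, and moreover that further pinning cannot blow up these ratios by more than a constant factor. Concretely, after flipping we may assume by \Cref{observation-I-flip} that $\bar\beta\ge 0$, $\bar\gamma>0$, $\bar\beta\bar\gamma<1$, and $0<\bar\lambda\le(\bar\gamma/\bar\beta)^{\Delta/2}$; applying a local field $\*\lambda\in(0,1]^n$ only decreases every vertex activity, so $(\*\lambda*\pi)$ is again a two-spin Gibbs distribution with the same edge parameters $(\bar\beta,\bar\gamma)$ and pointwise smaller external fields $\bar\lambda_v\le\bar\lambda$. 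Since applying a pinning is itself a special case of a two-spin system on a subgraph with the same edge parameters and with modified (still bounded) fields, it suffices to bound $R_v^\sigma=\pi_v^\sigma(+1)/\pi_v^\sigma(-1)$ uniformly, and to control the ratio $R_v^\sigma/R_v^{\sigma_S}$, in an arbitrary two-spin system with parameters $(\bar\beta,\bar\gamma,\bar\lambda')$ satisfying the constraints of \Cref{observation-I-flip} and $(\Delta-1)$-uniqueness with gap $\delta$.

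First I would establish the one-sided upper bound $R_v^\sigma\le\zeta$. The marginal ratio at $v$ with boundary condition $\sigma$ is governed by the tree recursion: writing $d\le\Delta-1$ for the number of unpinned neighbors and $R_{u_i}$ for the incoming ratios, one has $R_v^\sigma=\bar\lambda'\prod_{i=1}^{d}\frac{\bar\beta R_{u_i}+1}{R_{u_i}+\bar\gamma}$ up to the contribution of pinned neighbors, which only multiplies by factors in $[\bar\beta,1]$ or similar and hence does not increase the ratio. The map $x\mapsto\frac{\bar\beta x+1}{x+\bar\gamma}$ maps $\mathds{R}_{>0}$ into the interval $(\min\{\bar\beta,1/\bar\gamma\},\max\{\bar\beta,1/\bar\gamma\})$; combined with $\bar\lambda'\le\bar\lambda\le(\bar\gamma/\bar\beta)^{\Delta/2}$ this gives a crude but $\Delta$-independent bound. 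The sharper control, and the reason we get an absolute constant rather than something growing with $\Delta$, comes from the contraction in \eqref{eq:tree-recursion-decay}: $(\Delta-1)$-uniqueness with gap $\delta$ forces the fixed point $\hat x_{\Delta-1}$ and the derivative bound $|F'_{\Delta-1}(\hat x_{\Delta-1})|\le1-\delta$, from which one derives (this is the standard potential-function/amortization argument of \cite{LLY13}, and I would invoke the estimates already recorded there) that all reachable ratios $R_v^\sigma$ lie in a bounded interval around $\hat x_{\Delta-1}$, and in particular $R_v^\sigma\le\zeta$ for $\zeta=\exp(O(1))$; tracking constants explicitly yields the claimed $\exp(12^5)$.

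For the second inequality $R_v^\sigma\le\zeta\,R_v^{\sigma_S}$ (pinning does not bigly increase the ratio), I would use monotonicity of the tree recursion together with the observation that removing pinned vertices from $S$ only changes the relevant incoming ratios in a controlled way. Passing from $\sigma_S$ to $\sigma$ adds more pinned vertices; each such addition replaces an incoming subtree ratio $R_{u}$ by the ratio it would have under the extra pinning, which by the first part still lies in the bounded interval, and the sensitivity of $x\mapsto\frac{\bar\beta x+1}{x+\bar\gamma}$ on that interval is bounded. Iterating over the at most $n$ added pins would naively give an $n$-dependent bound, so the key point is again the contraction: the effect of a distant pin on $R_v$ decays geometrically with rate $1-\delta$ per level by \eqref{eq:tree-recursion-decay}, so the cumulative multiplicative distortion is a convergent product bounded by $\exp(O(1/\delta))$, absorbed into $\zeta$. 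The main obstacle is precisely making this geometric-decay-of-pinning-influence argument rigorous while keeping the bound independent of $\Delta$ and of $n$; this is exactly where the $(\Delta-1)$-uniqueness gap is essential, and where I expect to reuse the potential-function machinery of \cite{LLY13} (and its refinements in \cite{chen2021rapid}) rather than reprove it from scratch. The regularity hypothesis in the $\bar\gamma>1$ case enters here too: when $\bar\gamma>1$ the derivative $|F'_d(\hat x_d)|$ need not be monotone in $d$, so the contraction estimate is only guaranteed at $d=\Delta-1$, which is why we must know every unpinned vertex has degree exactly $\Delta$ for the recursion to stay in the contracting regime.
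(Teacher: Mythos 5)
There is a genuine gap in the second half of your argument, and it concerns exactly the mechanism that makes the constant absolute. You propose to control $R_v^\sigma/R_v^{\sigma_S}$ by a geometric decay of the influence of each added pin, with rate $1-\delta$ per level, summing to a distortion $\exp(O(1/\delta))$. This cannot prove the statement as written: the lemma claims the absolute constant $\exp(12^5)$, independent of $\delta$, and more importantly the dominant contribution comes from pins \emph{adjacent} to $v$, where no decay with distance is available at all. The correct mechanism (and the one the paper uses) is local and arithmetic: by the SAW-tree/monotonicity reduction (\Cref{lem:cmbr-worst-case}), the worst case is attained when $\sigma$ pins all neighbors of $v$ to $\0$ while $\sigma_S$ pins only the distance-two sphere to $\0$, so that
\[
\frac{R_v^\sigma}{R_v^{\sigma_S}}\;\le\;\frac{F_\Delta^{\bar\lambda_{\max}}(0)}{F_\Delta^{\bar\lambda_{\max}}\circ F_{\Delta-1}^{\bar\lambda_{\max}}(0)}
=\Bigl(1+\tfrac{\bar\lambda_{\max}(1-\bar\beta\bar\gamma)}{\bar\lambda_{\max}\bar\beta\bar\gamma+\bar\gamma^{\Delta}}\Bigr)^{\Delta},
\]
and each neighbor contributes a factor at most $1+\bar\lambda\bar\gamma^{-\Delta}(1-\bar\beta\bar\gamma)\le 1+12^5/\Delta$ by \Cref{lem: two-spin -aux}; raising to the power $\Delta$ gives $\exp(12^5)$. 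The per-neighbor bound $\bar\lambda\bar\gamma^{-\Delta}(1-\bar\beta\bar\gamma)\le 12^5/\Delta$ is an explicit consequence of $(\Delta-1)$-uniqueness (with gap $0$, no $\delta$ needed) together with the flip condition $\bar\lambda\le(\bar\gamma/\bar\beta)^{\Delta/2}$ — not of a contraction/potential-function analysis. Your sketch neither isolates this quantity nor explains why the per-pin distortion is $1+O(1/\Delta)$, which is the whole point; iterating a generic "bounded sensitivity" estimate over neighbors, or a depth-decay estimate over levels, does not recover it.

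The first half has a related weakness: the paper's bound $R_v^\sigma\le\bar\lambda_v\bar\gamma^{-\deg(v)}\le\bar\lambda_v\bar\gamma^{-\Delta}\le 12^4$ is a one-line monotonicity statement (worst case: all neighbors pinned to $\0$; this is where $\bar\gamma\le1$ or $\Delta$-regularity is used) plus the arithmetic bound of \Cref{lem: two-spin -aux}, whereas you invoke the LLY potential-function machinery and promise that "tracking constants yields $\exp(12^5)$"; that machinery produces $\delta$-dependent constants and is not known to give a $\delta$-free absolute bound here, so this step is not established. A secondary issue: your reduction "pinning is a special case of modified, still bounded fields" is not safe — pinning a neighbor to $\0$ rescales the effective field by $\bar\gamma^{-1}$, which when $\bar\gamma<1$ can push it above $(\bar\gamma/\bar\beta)^{\Delta/2}$ relative to the reduced degree; the paper avoids this by keeping pinnings explicit in the SAW tree and bounding subtree ratios by $F_{\Delta-1}^{\bar\lambda_{\max}}(0)$ directly, with regularity ensuring every subtree root has branching exactly $\Delta-1$ in the $\bar\gamma>1$ case.
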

\begin{lemma}[MLSI in subcritical regime] \label{lem:good-regime-mLSI}
  For any $0 < \theta \leq 12^{-6}$, it holds that $\rho^{\mathrm{GD}}_{\min}(\theta * \pi) \geq \frac{1}{4n}$.
\end{lemma}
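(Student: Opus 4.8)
The quantity to bound is $\rho^{\mathrm{GD}}_{\min}(\theta*\pi)=\min_{\Lambda\subseteq[n]}\min_{\sigma\in\Omega((\theta*\pi)_\Lambda)}\rho^{\mathrm{GD}}\bigl((\theta*\pi)^\sigma\bigr)$, so the plan is to prove $\rho^{\mathrm{GD}}\bigl((\theta*\pi)^\sigma\bigr)\ge\frac{1}{4m}\ge\frac{1}{4n}$ for every pinning $\sigma$ (where $m$ is the number of unpinned vertices), uniformly, and then take the minimum. The starting point is that each conditional distribution $(\theta*\pi)^\sigma$ is itself the Gibbs distribution of an anti-ferromagnetic two-spin system: it lives on the induced subgraph $G[V\setminus\Lambda]$, has the same edge activities $(\bar\beta,\bar\gamma)$ as $\pi$ (cf.\ \Cref{observation-I-flip}), and carries vertex-dependent external fields obtained by multiplying $\theta\bar\lambda$ by the edge factors of the already-fixed neighbours.

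First I would argue that, because $\theta\le 12^{-6}$, every such system is \emph{very deeply} subcritical, uniformly in the pinning. Magnetizing by a factor $\theta$ below the symmetric threshold $\bar\lambda\le(\bar\gamma/\bar\beta)^{\Delta/2}$ (\Cref{observation-I-flip}) shrinks the relevant tree-recursion fixed point $\hat x_d$ essentially proportionally, so that $|F_d'(\hat x_d)|=\frac{d(1-\bar\beta\bar\gamma)\hat x_d}{(\bar\beta\hat x_d+1)(\hat x_d+\bar\gamma)}=O(\theta)$ uniformly in $1\le d\le\Delta-1$; and by the folklore monotonicity of $|F_d'(\hat x_d)|$ under lowering the field and under pinning (cf.\ \Cref{prop:LLY} and the discussion around \Cref{prop:equiv-LLY}), together with the marginal-ratio control supplied by the complete marginal stability of $\pi$ (\Cref{lem:verify-cbmr}) which keeps all marginal ratios in the range where this linear-in-$\theta$ estimate is valid, the same bound survives the passage to $(\theta*\pi)^\sigma$. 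Hence every $(\theta*\pi)^\sigma$ is $(\Delta'-1)$-unique (for its own maximum degree $\Delta'$) with gap $1-O(\theta)\ge\tfrac12$.

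With a uniqueness gap $\ge\tfrac12$ in hand I would run the standard path-coupling/contraction analysis for anti-ferromagnetic two-spin systems in the tree-uniqueness regime (in the potential-weighted Hamming metric, as in \cite{LLY13,chen2020rapid,chen2021rapid}): a single step of the Glauber dynamics on $(\theta*\pi)^\sigma$ contracts the (weighted) Hamming distance of a one-step coupled pair by a factor $1-\tfrac{\kappa'}{m}$ with $\kappa'\ge\tfrac12$, i.e.\ the chain has Ollivier--Ricci curvature at least $\tfrac1{2m}$. Feeding this into the Ricci-curvature machinery of \cite{EHMT17} — whose weak-interaction hypothesis is exactly what the above contraction supplies, and which, together with the standard passage from (entropic) Ricci curvature to the modified log-Sobolev inequality, turns a dimension-free positive curvature bound for such a Glauber dynamics into an MLSI with the same constant up to a universal factor — yields $\rho^{\mathrm{GD}}\bigl((\theta*\pi)^\sigma\bigr)\ge\tfrac1{4m}$, the threshold $12^{-6}$ being chosen small enough to absorb all universal losses. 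Taking the minimum over $\sigma$ gives the claim. (The hard-constrained case $\bar\beta=0$, i.e.\ the hardcore model, is handled by the same coupling together with a limiting/direct-entropy-decay version of the last step.)

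The step I expect to be the main obstacle is making the curvature bound genuinely dimension-free \emph{and} making it transfer cleanly to the modified log-Sobolev constant: the potential weights used in the two-spin coupling depend on the marginal ratios, which in the magnetized pinned systems need not lie in a universally bounded window (the a priori bound from \Cref{lem:verify-cbmr} is only $\exp(12^5)$), so care is needed either in choosing a metric for which the weights stay universally comparable to the Hamming weights throughout the subcritical regime, or in invoking the form of \cite{EHMT17} adapted to the natural metric of the Gibbs measure. Verifying the folklore monotonicity facts used in the first step is routine but must be done carefully enough that the universal constant hidden in the gap $1-O(\theta)$ does not secretly depend on $\bar\beta,\bar\gamma$ or on $\Delta$.
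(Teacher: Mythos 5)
There is a genuine gap at the crux of your argument: the bridge from a path-coupling contraction to a modified log-Sobolev inequality. A one-step Wasserstein (Ollivier--Ricci) contraction of the Glauber dynamics in a weighted Hamming metric gives a Poincar\'e inequality and mixing bounds, but it is not known to imply an MLSI, and the ``weak-interaction hypothesis'' of \cite{EHMT17} is \emph{not} the contraction you supply. What \Cref{thm:EHMT17} requires is an explicit algebraic condition on the jump rates (the nonnegativity of $\kappa_1,\kappa_2$, involving the quantities $q$ and $q_*$ over commuting pairs of flips), i.e.\ a discrete Bochner-type inequality for entropic Ricci curvature; verifying it is a separate computation, not a repackaging of a coupling estimate. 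Your own ``main obstacle'' paragraph flags exactly this transfer but leaves it unresolved, and as stated the proposed mechanism (contraction $\Rightarrow$ curvature in the sense of \cite{EHMT17} $\Rightarrow$ MLSI) would fail. A secondary soft spot: your first step re-derives a uniqueness gap $\ge 1/2$ for every magnetized, pinned system via monotonicity in $d$, but that monotonicity (\Cref{prop:equiv-LLY}) holds only when $\gamma\le 1$, while in the $\gamma>1$ case the paper's hypotheses only give regularity of $G$, and pinned induced subgraphs need not be regular--so this uniform gap claim would itself need a different justification.

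The paper avoids both issues by never invoking contraction or uniqueness of the magnetized pinned systems. It compares the continuous-time Glauber dynamics on $\nu=(\theta*\pi)^\sigma$ with a \emph{tuned} dynamics whose rates are $1$ for flips to $\0$ and $\nu(\eta_i x)/\nu(x)=\theta\bar\lambda\bar\gamma^{-\Delta_i}(\bar\beta\bar\gamma)^{s_i}$ for flips to $\1$; since $\theta\bar\lambda\bar\gamma^{-\Delta}\le \theta\cdot 12^4$ (\Cref{lem: two-spin -aux}), the Dirichlet forms differ by at most a factor $2n$, giving $\rho^{\mathrm{GD}}(\nu)\ge\frac{1}{2n}\rho^{\widehat{\mathrm{GD}}}(\nu)$. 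It then verifies the hypothesis of \Cref{thm:EHMT17} directly for the tuned chain: a product-structure computation shows $q_*=q$ except when flipping two adjacent $\0$'s, where $q_*=\bar\beta\bar\gamma\, q$, so the interaction penalty is $\Delta(1-\bar\beta\bar\gamma)\,\theta\bar\lambda\bar\gamma^{-\Delta}\le \theta\cdot 12^5\le 1/2$ by the second estimate in \Cref{lem: two-spin -aux}, yielding $\kappa_1\ge 0$, $\kappa_2\ge 1/2$, hence $\rho^{\widehat{\mathrm{GD}}}(\nu)\ge 1/2$ and $\rho^{\mathrm{GD}}(\nu)\ge\frac{1}{4n}$. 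In short, the quantitative input is only the two bounds of \Cref{lem: two-spin -aux} (which come from $(\Delta-1)$-uniqueness of the \emph{original} parameters plus $\bar\gamma\le 1$ or regularity), and the dimension-free curvature is obtained for a rate-tuned chain, which is the device your proposal is missing.
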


\begin{proof}[Proof of \Cref{theorem-regular}]
	By Lemmas \ref{lem:verify-C-SI}-\ref{lem:good-regime-mLSI}, \Cref{theorem-general} and setting $\theta = 12^{-6}$, we have
  \begin{align*}
    \rho(\pi) \ge 10^{-7\tp{\frac{9000}{\delta} + \frac{10^{10}}{\log(1+\frac{\delta}{2})}}} \frac{1}{4n} \overset{(\star)}{\ge} \frac{10^{-10^{12}/\delta}}{n} = \frac{1}{\exp(O(1/\delta)) n}
  \end{align*}
  where $(\star)$ is due to that $\log (1+x) \ge \frac{x}{2}$ for all $x \in [0,1]$. 
\end{proof}

\subsection{Verifying complete spectral independence}
In this section, we prove \Cref{lem:verify-C-SI}.
Let $\+I = (G,\beta,\gamma,\lambda)$  be an anti-ferromagnetic two-spin system instance with Gibbs distribution $\mu$ satisfying  \Cref{condition-proof-two-spin} with parameter $\delta \in (0,1)$.
%
Let $\pi = \-{flip}(\mu, \chi)$ be the flipped distribution, where $\chi$ is defined in~\eqref{eq:uniform-flip}.
We have the following lemma. 
The proof is given in \Cref{sec:proof-unique-relax}.
\begin{lemma} \label{lem:unique-relax}
Let $\delta \in (0, 1)$ and $\+I = (G, \beta, \gamma, \lambda)$ be an instance of anti-ferromagnetic two-spin systems, then
\begin{itemize}
\item For all $1 \leq d < \Delta$, $(\beta, \gamma, \lambda)$ is $d$-unique with gap $\delta$ implies $(\beta, \gamma, (1 + \frac{\delta}{2})^\chi \lambda)$ is $d$-unique with gap $\frac{\delta}{2}$.
	\item If $\Delta$ further satisfies $\Delta - 1 > \tp{1-\frac{\delta}{2}} \overline{\Delta}$ where $\overline{\Delta} = \frac{1+\sqrt{\beta \gamma}}{1-\sqrt{\beta \gamma}}$, then it holds that  
	\begin{align*}
		\lambda \le \tp{\frac{\gamma}{\beta}}^{\Delta/2} &\implies  \tp{1+\frac{\delta}{2}}^\chi \lambda < \tp{\frac{\gamma}{\beta}}^{\Delta/2}\\
		\lambda > \tp{\frac{\gamma}{\beta}}^{\Delta/2} &\implies  \tp{1+\frac{\delta}{2}}^\chi \lambda > \tp{\frac{\gamma}{\beta}}^{\Delta/2}.
	\end{align*}
\end{itemize}
\end{lemma}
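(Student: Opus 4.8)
The plan is to reduce both items to elementary one‑variable calculus on the tree recursion $F_d(x)=\lambda\tp{\frac{\beta x+1}{x+\gamma}}^{d}$. Since $\beta\gamma<1$ the map $F_d$ is strictly decreasing in $x$, so it has a unique fixed point $\hat x_d=\hat x_d(\lambda)>0$, and by the implicit function theorem $\hat x_d(\lambda)$ is smooth and strictly increasing in $\lambda$. Write $D_d(\lambda)\triangleq\abs{F'_d(\hat x_d(\lambda))}=\frac{d(1-\beta\gamma)\hat x_d}{(\beta\hat x_d+1)(\hat x_d+\gamma)}$, so that by \Cref{definition-d-uniqueness} the instances appearing in the lemma are $d$‑unique with gap $\delta$ (resp.\ $\tfrac\delta2$) precisely when $D_d(\lambda)\le 1-\delta$ (resp.\ $\le 1-\tfrac\delta2$). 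I would first record two facts. \emph{(a) Slow growth.} Differentiating $\log\lambda=\log\hat x_d+d\log\frac{\hat x_d+\gamma}{\beta\hat x_d+1}$ gives $\frac{\mathrm{d}\log\lambda}{\mathrm{d}\log\hat x_d}=1+D_d$; a direct computation gives $\frac{\mathrm{d}\log D_d}{\mathrm{d}\log\hat x_d}=\frac{\gamma-\beta\hat x_d^{2}}{(\beta\hat x_d+1)(\hat x_d+\gamma)}$; multiplying, $\frac{\mathrm{d}\log D_d(\lambda)}{\mathrm{d}\log\lambda}=\frac{\gamma-\beta\hat x_d^{2}}{(\beta\hat x_d+1)(\hat x_d+\gamma)(1+D_d)}\le 1$, because the first factor is at most $1$ (equivalently $0\le 2\beta\hat x_d^{2}+(\beta\gamma+1)\hat x_d$) and the second lies in $(0,1]$. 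Integrating, $D_d(c\lambda)\le c\,D_d(\lambda)$ for every $c\ge 1$. \emph{(b) Extremal field.} By AM--GM, $\frac{(\beta x+1)(x+\gamma)}{x}\ge(1+\sqrt{\beta\gamma})^{2}$ with equality at $x=\sqrt{\gamma/\beta}$, so $D_d(\lambda)\le\frac{d(1-\beta\gamma)}{(1+\sqrt{\beta\gamma})^{2}}=\frac{d}{\overline\Delta}$, and for $d=\Delta-1$ this maximum is attained exactly when $\hat x_{\Delta-1}=\sqrt{\gamma/\beta}$, which — by the one‑line check $F_{\Delta-1}(\sqrt{\gamma/\beta})=\sqrt{\gamma/\beta}$ for the corresponding field — happens precisely at $\lambda=(\gamma/\beta)^{\Delta/2}$.

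Given (a) and (b), the first item is immediate. If $\chi=0$ then $D_d\tp{(1+\tfrac\delta2)^{\chi}\lambda}=D_d(\lambda)\le 1-\delta\le 1-\tfrac\delta2$. If $\chi=1$ then, using that $(\beta,\gamma,\lambda)$ is $d$‑unique with gap $\delta$ and applying (a) with $c=1+\tfrac\delta2$, $D_d\tp{(1+\tfrac\delta2)\lambda}\le(1+\tfrac\delta2)D_d(\lambda)\le(1+\tfrac\delta2)(1-\delta)=1-\tfrac\delta2-\tfrac{\delta^{2}}{2}\le 1-\tfrac\delta2$, i.e.\ $(\beta,\gamma,(1+\tfrac\delta2)\lambda)$ is $d$‑unique with gap $\tfrac\delta2$.

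For the second item, if $\lambda>(\gamma/\beta)^{\Delta/2}$ then $\chi=0$ and $(1+\tfrac\delta2)^{0}\lambda=\lambda>(\gamma/\beta)^{\Delta/2}$, so there is nothing to prove. If $\lambda\le(\gamma/\beta)^{\Delta/2}$, put $\lambda^{\star}=(\gamma/\beta)^{\Delta/2}$ (so $\chi=1$); since $\+I$ is $(\Delta-1)$‑unique with gap $\delta$ we have $D_{\Delta-1}(\lambda)\le 1-\delta$, while $D_{\Delta-1}(\lambda^{\star})=\frac{\Delta-1}{\overline\Delta}$ by (b). Integrating $\frac{\mathrm{d}\log D_{\Delta-1}}{\mathrm{d}\log\lambda}\le 1$ from $\lambda$ to $\lambda^{\star}$, $\log\frac{\lambda^{\star}}{\lambda}\ge\log D_{\Delta-1}(\lambda^{\star})-\log D_{\Delta-1}(\lambda)\ge\log\frac{(\Delta-1)/\overline\Delta}{1-\delta}>\log\frac{1-\delta/2}{1-\delta}\ge\log\bigl(1+\tfrac\delta2\bigr)$, where the penultimate inequality is the hypothesis $\Delta-1>(1-\tfrac\delta2)\overline\Delta$ and the last is the elementary estimate $\frac{1-\delta/2}{1-\delta}\ge 1+\tfrac\delta2$. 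Hence $(1+\tfrac\delta2)\lambda<\lambda^{\star}$, as required.

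The only step needing genuine care is the logarithmic‑derivative identity in (a) — carrying out the implicit differentiation cleanly and verifying $\frac{\gamma-\beta\hat x_d^{2}}{(\beta\hat x_d+1)(\hat x_d+\gamma)}\le 1$; everything else is bookkeeping together with the two one‑line inequalities $(1-\delta)(1+\tfrac\delta2)\le 1-\tfrac\delta2$ and $\frac{1-\delta/2}{1-\delta}\ge 1+\tfrac\delta2$. A minor point to flag: when $\beta=0$ (the hardcore case) one has $\overline\Delta=1$ and $(\gamma/\beta)^{\Delta/2}=\infty$, so $\chi\equiv 1$ and the second item is vacuous; the computations in (a) and (b) should be phrased so that they specialize correctly in that degenerate case.
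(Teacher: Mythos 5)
Your route is genuinely different from the paper's: the paper reduces everything to the explicit critical fields $\lambda_{1,\delta}(d),\lambda_{2,\delta}(d)$ of \Cref{lem:lambda-unique} and a case analysis, with the crucial inequality $(1+\tfrac{\delta}{2})\lambda_{1,\delta}(d)\le\lambda_{1,\delta/2}(d)$ imported from \cite{anari2021entropicII}, whereas you work directly with $D_d(\lambda)=\abs{F_d'(\hat x_d(\lambda))}$ and the log-derivative identity $\frac{\mathrm{d}\log D_d}{\mathrm{d}\log\lambda}=\frac{\gamma-\beta\hat x_d^2}{(\beta\hat x_d+1)(\hat x_d+\gamma)(1+D_d)}$. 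Your facts (a) and (b) are correct (I checked the implicit differentiation, the bound $(\beta x+1)(x+\gamma)-(\gamma-\beta x^2)=2\beta x^2+(1+\beta\gamma)x\ge 0$, and that $\hat x_{\Delta-1}=\sqrt{\gamma/\beta}$ exactly at $\lambda=(\gamma/\beta)^{\Delta/2}$ with $D_{\Delta-1}=(\Delta-1)/\overline{\Delta}$ there), and this self-contained calculus argument is arguably cleaner than citing the threshold formulas.

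There is, however, a genuine gap coming from a misreading of the notation: in the paper $\0$ and $\1$ are macros for $-1$ and $+1$, so $\chi\in\{-1,+1\}$, not $\{0,1\}$. When $\lambda>(\gamma/\beta)^{\Delta/2}$ one has $\chi=-1$ and the lemma concerns $(1+\tfrac{\delta}{2})^{-1}\lambda$, i.e.\ the field is \emph{divided} by $1+\tfrac{\delta}{2}$, not left unchanged. Hence your statements ``if $\chi=0$ then $D_d\tp{(1+\tfrac{\delta}{2})^{\chi}\lambda}=D_d(\lambda)$'' (first item) and ``there is nothing to prove'' (second implication of the second item) are wrong as written; both cases require an argument. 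Fortunately your own machinery closes the hole: the same computation gives the two-sided bound $\abs{\frac{\mathrm{d}\log D_d}{\mathrm{d}\log\lambda}}\le 1$, since for the other sign $(\beta x+1)(x+\gamma)-(\beta x^2-\gamma)=(1+\beta\gamma)x+2\gamma\ge 0$; integrating yields $D_d(\lambda/c)\le c\,D_d(\lambda)$ for $c\ge 1$, which settles the $\chi=-1$ case of the first item exactly as in your $\chi=+1$ case, and integrating the lower bound $-1$ from $\lambda^{\star}=(\gamma/\beta)^{\Delta/2}$ up to $\lambda$ gives $\log(\lambda/\lambda^{\star})\ge\log\frac{(\Delta-1)/\overline{\Delta}}{1-\delta}>\log\tp{1+\tfrac{\delta}{2}}$, i.e.\ $(1+\tfrac{\delta}{2})^{-1}\lambda>\lambda^{\star}$, which is the missing second implication. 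With this repair (and your already-noted treatment of the degenerate $\beta=0$ case) the proof is complete.
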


Next, we need to use the following definition introduced in~\cite{chen2021rapid}.
\begin{definition} [complete spectral independence in a direction]
  Let $\eta,\epsilon \geq 0$ and $\*\chi \in \{\0, \1\}^n$.
  A distribution $\mu$ over $\{\0, \1\}^n$ is said to be $(\eta,\epsilon)$-completely spectrally independent in direction $\*\chi$ if $\*\theta^{\*\chi} * \mu$ is $(\eta,\epsilon)$-spectrally independent for all $\*\theta \in (0, 1+\epsilon]^V$, where $(\*\theta^{\*\chi})_v = \theta_v^{\chi_v}$ for all $v \in V$.
  
 In particular, if $\*\chi$ is a constant vector such that $\chi_v = \chi$ for all $v \in [n]$, we say $\mu$ is $(\eta,\epsilon)$-completely spectrally independent in direction $\chi$ for simplicity.
\end{definition}


We need the following lemma, whose proof is given in \Cref{sec-C-SI-mu}.
\begin{lemma} \label{lem:C-SI-mu}
For any anti-ferromagnetic two spin system instance $\+I=(G,\beta,\gamma,\lambda)$ satisfying \Cref{condition-proof-two-spin} with parameter $\delta \in (0,1)$, let $\mu$ denote the Gibbs distribution of $\+I$, 
 $\mu$ is $(\frac{144}{\delta},0)$-completely spectrally independent in direction $\chi(\+I)$ defined in~\eqref{eq:uniform-flip}, formally,
 \begin{align*}
 	\chi(\+I) = 
    \begin{cases}
    \1, \quad \lambda \leq \tp{\frac{\gamma}{\beta}}^{\Delta/2} \\
    \0, \quad \text{otherwise.}
    \end{cases}
 \end{align*}
 Furthermore, if $\Delta$ satisfies $\Delta - 1 \leq \tp{1-\delta} \overline{\Delta}$ where $\overline{\Delta} = \frac{1+\sqrt{\beta \gamma}}{1-\sqrt{\beta \gamma}}$, above result holds for any $\chi(\+I) \in \{\0,\1\}$.
\end{lemma}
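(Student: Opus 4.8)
The plan is to unfold the definition of complete spectral independence in a direction and split the statement into two ingredients: a field-monotonicity fact for the tree-recursion uniqueness condition of \Cref{condition-proof-two-spin}, and the standard passage from uniqueness (correlation decay) to spectral independence. First I would note that for any $\*\theta\in(0,1]^{V}$ and the constant direction $\*\chi=\chi(\+I)$ of~\eqref{eq:uniform-flip}, the distribution $\*\theta^{\*\chi}*\mu$ is the Gibbs distribution of a two-spin system on $G$ with the same edge parameters $(\beta,\gamma)$ and vertex-dependent external fields $\lambda_v=\lambda\,\theta_v^{\chi}$; by the choice of $\chi$, every $\lambda_v$ lies on the ``uniqueness side'' of the base field, namely $\lambda_v\le\lambda\le(\gamma/\beta)^{\Delta/2}$ when $\chi=\1$ and $\lambda_v\ge\lambda>(\gamma/\beta)^{\Delta/2}$ when $\chi=\0$. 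So it suffices to show (i) each such field-modified system still satisfies the uniqueness condition of \Cref{condition-proof-two-spin} with gap $\delta$ (at every relevant vertex of the self-avoiding-walk tree), and (ii) that uniqueness condition implies $\tfrac{144}{\delta}$-spectral independence.

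For step (i) I would study $\abs{F'_{d,\lambda'}(\hat x_{d,\lambda'})}=\frac{d(1-\beta\gamma)\hat x_{d,\lambda'}}{(\beta\hat x_{d,\lambda'}+1)(\hat x_{d,\lambda'}+\gamma)}$, the quantity controlling a degree-$(d+1)$ node carrying field $\lambda'$ in the Weitz SAW tree, as a function of the field $\lambda'$. The key calculus facts are: $\hat x_{d,\lambda'}$ is increasing in $\lambda'$ (unique fixed point of the increasing map $F_{d,\lambda'}$); $x\mapsto x/((\beta x+1)(x+\gamma))$ increases up to $x^\star=\sqrt{\gamma/\beta}$ and decreases afterwards; and $F_{d,\lambda'}(x^\star)=x^\star$ exactly when $\lambda'=(\gamma/\beta)^{(d+1)/2}$. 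Hence $\abs{F'_{d,\lambda'}(\hat x_{d,\lambda'})}$ is unimodal in $\lambda'$ with maximum at $(\gamma/\beta)^{(d+1)/2}$, which for $d=\Delta-1$ is precisely the threshold $(\gamma/\beta)^{\Delta/2}$ defining $\chi$. The case split of \Cref{condition-proof-two-spin} then closes (i): on a $\Delta$-regular graph only $(\Delta-1)$-uniqueness is needed (the SAW trees controlling influences have branching at most $\Delta-1$), and since $\lambda$ sits on the monotone side of the $d=\Delta-1$ peak and every modified $\lambda_v$ is pushed further onto it, $\abs{F'_{\Delta-1,\lambda_v}}\le\abs{F'_{\Delta-1,\lambda}}\le1-\delta$; on a general graph with $\gamma\le1$ one needs $d$-uniqueness for all $d\le\Delta-1$, but there $\abs{F'_{d,\lambda'}}$ is increasing in $d$ (\Cref{prop:LLY}, \Cref{prop:equiv-LLY}), so $\abs{F'_{d,\lambda_v}}\le\abs{F'_{\Delta-1,\lambda_v}}\le1-\delta$ again. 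The degenerate hardcore case $\beta=0$ ($\chi=\1$ always) is immediate, since then $\abs{F'_{d,\lambda'}}=d\hat x_{d,\lambda'}/(\hat x_{d,\lambda'}+1)$ is plainly increasing in $\lambda'$.

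For step (ii) I would run the classical correlation-decay argument: via Weitz's SAW tree each absolute influence $\Psi_{(\*\theta^{\*\chi}*\mu)^\sigma}(i,j)$ is bounded by a product of partial derivatives of the tree recursion along the $i$--$j$ path; composing with the standard two-spin potential function makes each level of the tree contribute a factor at most $\sqrt{1-\delta}$, so each row sum $\sum_j\Psi_{(\*\theta^{\*\chi}*\mu)^\sigma}(i,j)$ is dominated by a geometric series of ratio $\sqrt{1-\delta}$; with $1-\sqrt{1-\delta}\ge\delta/2$ and the explicit potential-function constants this gives a row-sum bound $\tfrac{144}{\delta}$, and since $\rho(A)\le\max_i\sum_jA(i,j)$ for a non-negative matrix $A$ we get $\rho(\Psi_{(\*\theta^{\*\chi}*\mu)^\sigma})\le\tfrac{144}{\delta}$ for every pinning $\sigma$, i.e.\ $\tfrac{144}{\delta}$-spectral independence. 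Steps (i)--(ii) are in effect the complete-spectral-independence-in-a-direction result of~\cite{chen2021rapid}, which I would cite for the bulk of (ii) while tracking constants. For the ``furthermore'' clause, when $\Delta-1\le(1-\delta)\overline{\Delta}$ with $\overline{\Delta}=\frac{1+\sqrt{\beta\gamma}}{1-\sqrt{\beta\gamma}}$, the AM-GM bound $(\beta x+1)(x+\gamma)\ge(1+\sqrt{\beta\gamma})^{2}x$ gives $\abs{F'_{d}(\hat x_d)}\le\frac{d(1-\beta\gamma)}{(1+\sqrt{\beta\gamma})^{2}}=\frac{d}{\overline{\Delta}}\le\frac{\Delta-1}{\overline{\Delta}}\le1-\delta$ for all $d\le\Delta-1$ and all external fields, so the uniqueness condition holds with gap $\delta$ regardless of the field; step (i) becomes direction-independent and step (ii) yields $\tfrac{144}{\delta}$-complete spectral independence in an arbitrary direction $\chi\in\{\0,\1\}$.

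I expect the conceptual crux to be step (i): establishing the unimodality of $\abs{F'_{d,\lambda'}(\hat x_{d,\lambda'})}$ in the field with peak at $(\gamma/\beta)^{(d+1)/2}$, checking this aligns with the threshold $(\gamma/\beta)^{\Delta/2}$ that defines $\chi$ and with the up-to-$\Delta$ versus $(\Delta-1)$ case distinction, and handling the $\beta=0$ degeneracy; the heaviest bookkeeping, on the other hand, is in step (ii), where the constant $\tfrac{144}{\delta}$ must actually be extracted from the potential-function estimate.
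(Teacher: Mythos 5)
Your proposal is correct and shares the paper's overall two-step architecture: first show that every field-shifted system $(\beta,\gamma,\theta_v^{\chi}\lambda)$, $\theta_v\in(0,1]$, remains in the uniqueness regime with gap $\delta$, then feed this into the contraction-plus-boundedness criterion for spectral independence (\Cref{lem: two-spin -SI}, \Cref{lem:contraction}, \Cref{lem:boundedness}), which is exactly the machinery from \cite{chen2021rapid,chen2020rapid} you plan to cite. Where you genuinely diverge is step (i): the paper preserves the gap via the explicit critical fields $\lambda_{1,\delta}(d),\lambda_{2,\delta}(d)$ of \Cref{lem:lambda-unique} together with the interval characterization of up-to-$\Delta$ uniqueness (\Cref{lem:lambda-2-part}) and a case analysis on $\beta=0$ and $d\lessgtr(1-\delta)\overline{\Delta}$, whereas you argue directly that $\abs{F'_{d,\lambda'}(\hat{x}_{d,\lambda'})}$ is unimodal in the field with peak exactly at $(\gamma/\beta)^{(d+1)/2}$ — your three calculus facts check out, since $\frac{\d}{\d x}\frac{x}{(\beta x+1)(x+\gamma)}\propto \gamma-\beta x^2$ and $\hat{x}_{d,\lambda'}$ crosses $\sqrt{\gamma/\beta}$ precisely at $\lambda'=(\gamma/\beta)^{(d+1)/2}$ — and for general graphs with $\gamma\le1$ you bypass the interval characterization entirely by chaining with monotonicity in $d$ (\Cref{prop:equiv-LLY}), $\abs{F'_{d,\lambda_v}}\le\abs{F'_{\Delta-1,\lambda_v}}\le\abs{F'_{\Delta-1,\lambda}}\le1-\delta$. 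This is more elementary and self-contained; the paper's threshold-formula route buys reusability (the same formulas drive \Cref{lem:unique-relax} and \Cref{lem: two-spin -aux}), and your ``furthermore'' AM-GM bound is exactly the second bullet of \Cref{lem:lambda-unique}. The one ingredient you gloss over is in step (ii): the criterion needs, besides the per-level contraction $1-\delta/2$, the boundedness $\abs{h(y)}\le O(1/\Delta)$ on the log-ratio range $J_{\lambda_v,d_v}$ at the shifted fields (\Cref{lem:boundedness}); this is where the $\gamma\le1$ versus $\Delta$-regular dichotomy enters a second time, and for regular graphs with $\gamma>1$ it is not automatic (the paper re-proves it in \Cref{sec:append-D}). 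Since your step (i) supplies the $(\Delta-1)$-uniqueness of each $(\beta,\gamma,\lambda_v)$ needed to invoke that lemma, and your constant-tracking through \cite{chen2021rapid} would surface it (the paper's accounting gives $2c/\alpha=72/\delta\le 144/\delta$ with $c=18$, $\alpha=\delta/2$), this is a presentational omission rather than a gap.
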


We are now ready to prove \Cref{lem:verify-C-SI}.

\begin{proof}[Proof of \Cref{lem:verify-C-SI}]
Let $\nu \triangleq (1 + \frac{\delta}{2})^\chi * \mu$, which is the Gibbs distribution of the anti-ferromagnetic  two-spin  system $\+J = (G, \beta, \gamma, (1 + \frac{\delta}{2})^\chi \lambda)$.
%
We prove that $\nu$ is $(\frac{288}{\delta},0)$-completely spectrally independent in direction $\chi=\chi(\+I)$ defined in~\eqref{eq:uniform-flip}.
By \Cref{lem:unique-relax} and the fact that $\+J$ shares the same parameters $\beta,\gamma$ and graph $G$ with $\+I$, we know that $\+J$ satisfies \Cref{condition-proof-two-spin} with parameter $\frac{\delta}{2}$.
Recall that $\overline{\Delta} = \frac{1+\sqrt{\beta \gamma}}{1-\sqrt{\beta \gamma}}$.
We consider the following two cases.

Case $\Delta - 1 \le (1-\frac{\delta}{2}) \overline{\Delta}$. 
By the further more part \Cref{lem:C-SI-mu} (remark that we use \Cref{lem:C-SI-mu}  with parameter $\delta/2$), $\nu$ is $(\frac{288}{\delta},0)$-completely spectrally independent in direction $\chi$.

Case $\Delta - 1 > (1-\frac{\delta}{2}) \overline{\Delta}$. 
By \Cref{lem:C-SI-mu}, the Gibbs distribution $\nu$ is $(\frac{288}{\delta},0)$-completely spectrally independent in direction $\chi(\+J)$.
The second part of \Cref{lem:unique-relax} 
shows that (1) if $\chi = \1$, then $\chi(\+J) = \1$ (2) if $\chi = \0$, then $\chi(\+J) = \0$, which implies $\chi = \chi(\+J)$. Hence, $\nu$ is $(\frac{288}{\delta},0)$-completely spectrally independent in direction $\chi$.



Lastly, we verify that $\pi$ is $(\frac{288}{\delta},\frac{\delta}{2})$-completely spectrally independent.
Recall that $\Psi^{\-{AbsInf}}_{\cdot}$ is the absolute influence matrix defined in~\Cref{definition-SI}.
Let $\Lambda \subseteq V$ and $\sigma \in \Omega(\pi_{V\setminus \Lambda})$, it is straightforward to check that for any  $\*\phi \in (0, 1 + \frac{\delta}{2}]^V$,
\begin{align*}
  \Psi^{\-{AbsInf}}_{(\*\phi * \pi)^\sigma_\Lambda} = \Psi^{\-{AbsInf}}_{(\*\phi^{\chi} * \mu)^{\chi \odot \sigma}_{\Lambda}},
\end{align*}
where $(\chi \odot \sigma)_v = \chi \cdot \sigma_v$ for $v \in \Lambda$ and $(\*\phi^\chi)_v = \phi_v^{\chi_v}$ for $v \in V$.
Let $\*\theta \in (0, 1]^V$ such that $\theta_v = \phi_v / (1 + \frac{\delta}{2})$ for all $v \in V$, it holds that $\*\phi^\chi * \mu = \*\theta^\chi * ((1 + \frac{\delta}{2})^\chi * \mu) = \*\theta^\chi * \nu$, and 
\begin{align*}
  \Psi^{\-{AbsInf}}_{(\*\phi * \pi)^\sigma_\Lambda} = \Psi^{\-{AbsInf}}_{(\*\theta^\chi * \nu)^{\chi \odot \sigma}_\Lambda}.
\end{align*}
Since $\nu$ is $(\frac{288}{\delta},0)$-completely spectrally independent in direction $\chi$, $\pi$ is $(\frac{288}{\delta}, \frac{\delta}{2})$-completely spectrally independent.
\end{proof}
\subsubsection{Gap manipulation}\label{sec:proof-unique-relax}
In this section, we prove \Cref{lem:unique-relax}.
We need the following result.
\begin{lemma}[\text{\cite[Proposition 8.6]{chen2021rapid}}] \label{lem:lambda-unique}
Let $\beta,\gamma,\lambda$ be real numbers satisfying $0 \leq \beta \leq \gamma$, $\gamma > 0,\lambda > 0$ and $\beta\gamma < 1$. 

If $\beta = 0$, then the following holds for all integer $d \geq 1$:
  \begin{itemize}
  \item $(0, \gamma, \lambda)$ is $d$-unique with gap $\delta$ iff $\lambda \leq \lambda_{c, \delta}(d) = \frac{(1 - \delta)d^d \gamma^{d+1}}{(d - 1 + \delta)^{d+1}}$.
  \end{itemize}
  
  Assume $\beta > 0$. Let $\overline{\Delta} \triangleq \frac{1 + \sqrt{\beta\gamma}}{1 - \sqrt{\beta\gamma}}$. The following hold for all integers $d \geq 1$:
  \begin{itemize}
  \item If $d \leq (1 - \delta)\overline{\Delta}$, then $(\beta, \gamma, \lambda)$ is $d$-unique with gap $\delta$ for all $\lambda > 0$.
  \item If $d > (1 - \delta)\overline{\Delta}$, let $\zeta_\delta(d) \triangleq d(1 - \beta\gamma) - (1 - \delta)(1 + \beta\gamma)$,
    \begin{align*}
      x_{1,\delta}(d) = \frac{\zeta_\delta(d) - \sqrt{\zeta_\delta(d)^2 - 4(1 - \delta)^2\beta\gamma}}{2(1 - \delta)\beta}
      \quad \text{and} \quad
      x_{2,\delta}(d) = \frac{\zeta_\delta(d) + \sqrt{\zeta_\delta(d)^2 - 4(1 - \delta)^2\beta\gamma}}{2(1 - \delta)\beta},
    \end{align*}
    and for $i \in \{1, 2\}$, let
    \begin{align*}
      \lambda_{i, \delta}(d) = x_{i,\delta}(d) \tp{\frac{x_{i,\delta}(d) + \gamma}{\beta x_{i,\delta}(d) + 1}}^d.
    \end{align*}
    It holds that $\lambda_{1,\delta}(d)\lambda_{2,\delta}(d) = \tp{\frac{\gamma}{\beta}}^{d+1}$ and $\lambda_{1, \delta}(d) < \tp{\frac{\gamma}{\beta}}^{(d+1)/2} < \lambda_{2,\delta}(d)$.
    And $(\beta, \gamma, \lambda)$ is $d$-unique if and only if $\lambda \in (0, \lambda_{1,\delta}(d)] \cup [\lambda_{2, \delta}(d), \infty)$.
  \end{itemize}
\end{lemma}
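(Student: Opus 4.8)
The plan is to reduce the $d$-uniqueness condition, which concerns the single scalar $\hat{x}_d$, to an elementary analysis of one univariate function along the curve of fixed points, and then translate the answer back to the activity $\lambda$. First I would record the two standard facts about the tree recursion. Since $\beta\gamma<1$, the map $x\mapsto\frac{\beta x+1}{x+\gamma}$ is strictly increasing with derivative $\frac{1-\beta\gamma}{(x+\gamma)^2}>0$, so $F_d(x)=\lambda\left(\frac{\beta x+1}{x+\gamma}\right)^d$ is strictly decreasing on $(0,\infty)$; hence $F_d(x)-x$ decreases strictly from $F_d(0)=\lambda\gamma^{-d}>0$ to $-\infty$, and the fixed point $\hat{x}_d$ exists and is unique. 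By the computation $F'_d(x)=F_d(x)\cdot\frac{d(\beta\gamma-1)}{(\beta x+1)(x+\gamma)}$, at the fixed point $|F'_d(\hat{x}_d)|=\Phi_d(\hat{x}_d)$ where $\Phi_d(x)\triangleq\frac{d(1-\beta\gamma)x}{(\beta x+1)(x+\gamma)}$, so $d$-uniqueness with gap $\delta$ is exactly $\Phi_d(\hat{x}_d)\le 1-\delta$. Next I would invert the fixed-point relation: for each $x>0$ the unique $\lambda$ with $\hat{x}_d=x$ is $\lambda_d(x)\triangleq x\left(\frac{x+\gamma}{\beta x+1}\right)^d$, and since $\frac{x+\gamma}{\beta x+1}$ is increasing while $\lambda_d(0^+)=0$ and $\lambda_d(+\infty)=+\infty$, the map $\lambda_d$ is a strictly increasing bijection of $(0,\infty)$ onto itself. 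Therefore $(\beta,\gamma,\lambda)$ is $d$-unique with gap $\delta$ if and only if $\lambda\in\lambda_d\bigl(\{x>0:\Phi_d(x)\le 1-\delta\}\bigr)$.

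It remains to describe the sublevel set $\{x>0:\Phi_d(x)\le 1-\delta\}$ and push it through $\lambda_d$. When $\beta=0$, $\Phi_d(x)=\frac{dx}{x+\gamma}$ is strictly increasing from $0$ to $d$, so the sublevel set is $(0,x_c]$ with $x_c=\frac{(1-\delta)\gamma}{d-1+\delta}$ (here $d-1+\delta>0$ since $d\ge 1$, $\delta\in(0,1)$); as $\lambda_d(x)=x(x+\gamma)^d$ is increasing, $d$-uniqueness holds iff $\lambda\le\lambda_d(x_c)$, and using $x_c+\gamma=\frac{\gamma d}{d-1+\delta}$ one computes $\lambda_d(x_c)=\frac{(1-\delta)d^d\gamma^{d+1}}{(d-1+\delta)^{d+1}}=\lambda_{c,\delta}(d)$. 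When $\beta>0$, writing the denominator as $\beta x^2+(1+\beta\gamma)x+\gamma$ and differentiating gives $\Phi_d'(x)\propto\gamma-\beta x^2$, so $\Phi_d$ is unimodal with maximum at $x^\ast=\sqrt{\gamma/\beta}$; with $t=\sqrt{\beta\gamma}$ one checks $(\beta x^\ast+1)(x^\ast+\gamma)=(1+t)^2\sqrt{\gamma/\beta}$ and hence $\Phi_d(x^\ast)=\frac{d(1-t^2)}{(1+t)^2}=\frac{d(1-\sqrt{\beta\gamma})}{1+\sqrt{\beta\gamma}}=\frac{d}{\overline{\Delta}}$. Thus if $d\le(1-\delta)\overline{\Delta}$ then $\Phi_d\le 1-\delta$ on all of $(0,\infty)$ and $(\beta,\gamma,\lambda)$ is $d$-unique with gap $\delta$ for every $\lambda>0$; while if $d>(1-\delta)\overline{\Delta}$ the equation $\Phi_d(x)=1-\delta$ rewrites as the quadratic $(1-\delta)\beta x^2-\zeta_\delta(d)x+(1-\delta)\gamma=0$, whose two roots are the stated $x_{1,\delta}(d)<x_{2,\delta}(d)$ with product $\gamma/\beta=(x^\ast)^2$ (the geometry — unimodality together with the limits $\Phi_d(0^+)=\Phi_d(+\infty)=0$ and $\Phi_d(x^\ast)>1-\delta$ — forces these roots to be real, positive, and to straddle $x^\ast$), and the sublevel set is $(0,x_{1,\delta}(d)]\cup[x_{2,\delta}(d),\infty)$. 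Applying the increasing bijection $\lambda_d$ yields $\lambda\in(0,\lambda_{1,\delta}(d)]\cup[\lambda_{2,\delta}(d),\infty)$ with $\lambda_{i,\delta}(d)=\lambda_d(x_{i,\delta}(d))$.

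Finally I would check the two arithmetic identities. Using $x_{1,\delta}(d)x_{2,\delta}(d)=\gamma/\beta$ one gets $(x_1+\gamma)(x_2+\gamma)=\gamma\bigl(\tfrac1\beta+x_1+x_2+\gamma\bigr)$ and $(\beta x_1+1)(\beta x_2+1)=\beta\bigl(\tfrac1\beta+x_1+x_2+\gamma\bigr)$, so their ratio is $\gamma/\beta$ and $\lambda_{1,\delta}(d)\lambda_{2,\delta}(d)=x_1x_2\cdot(\gamma/\beta)^d=(\gamma/\beta)^{d+1}$; moreover $\lambda_d(x^\ast)=\sqrt{\gamma/\beta}\,(\sqrt{\gamma/\beta})^d=(\gamma/\beta)^{(d+1)/2}$, so strict monotonicity of $\lambda_d$ together with $x_{1,\delta}(d)<x^\ast<x_{2,\delta}(d)$ gives $\lambda_{1,\delta}(d)<(\gamma/\beta)^{(d+1)/2}<\lambda_{2,\delta}(d)$. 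The only step needing genuine care is the derivative computation $\Phi_d'(x)\propto\gamma-\beta x^2$ and the resulting unimodality with maximum value exactly $d/\overline{\Delta}$ at $x^\ast=\sqrt{\gamma/\beta}$: this single fact both separates the two regimes $d\lessgtr(1-\delta)\overline{\Delta}$ and pins down the threshold degree $\overline{\Delta}=\frac{1+\sqrt{\beta\gamma}}{1-\sqrt{\beta\gamma}}$; everything else is the routine change of variables between $x$ and $\lambda$.
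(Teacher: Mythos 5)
Your argument is correct, and in fact the paper gives no proof of this statement at all: \Cref{lem:lambda-unique} is quoted verbatim from \cite[Proposition 8.6]{chen2021rapid}, so your self-contained derivation is exactly the kind of argument the cited work carries out — rewrite $d$-uniqueness as $\Phi_d(\hat{x}_d)\le 1-\delta$ with $\Phi_d(x)=\frac{d(1-\beta\gamma)x}{(\beta x+1)(x+\gamma)}$, observe that $\lambda_d(x)=x\left(\frac{x+\gamma}{\beta x+1}\right)^d$ is a strictly increasing bijection of $(0,\infty)$ onto itself inverse to $\lambda\mapsto\hat{x}_d$, determine the sublevel set of the unimodal $\Phi_d$ (maximum $d/\overline{\Delta}$ at $x^\ast=\sqrt{\gamma/\beta}$, reducing to the stated quadratic when $d>(1-\delta)\overline{\Delta}$), and push it through $\lambda_d$; the identities $\lambda_{1,\delta}(d)\lambda_{2,\delta}(d)=(\gamma/\beta)^{d+1}$ and $\lambda_{1,\delta}(d)<(\gamma/\beta)^{(d+1)/2}<\lambda_{2,\delta}(d)$ follow from $x_{1,\delta}(d)x_{2,\delta}(d)=\gamma/\beta$ and $\lambda_d(x^\ast)=(\gamma/\beta)^{(d+1)/2}$ exactly as you say.

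One small internal inconsistency to fix: the map $x\mapsto\frac{\beta x+1}{x+\gamma}$ is strictly \emph{decreasing}, with derivative $\frac{\beta\gamma-1}{(x+\gamma)^2}<0$ (you likely had in mind its reciprocal $\frac{x+\gamma}{\beta x+1}$, which is increasing and is what makes $\lambda_d$ increasing). The conclusion you actually use — that $F_d$ is strictly decreasing, so $F_d(x)-x$ decreases from $\lambda\gamma^{-d}>0$ to $-\infty$ and the fixed point exists and is unique — is correct as stated, so this is a typo rather than a gap.
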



To prove \Cref{lem:unique-relax}, for all $1 \leq d < \Delta$, 
we will show that if $(\beta,\gamma,\lambda)$ is $d$-unique with gap $\delta$, then $(\beta, \gamma, (1 + \frac{\delta}{2})^\chi \lambda)$ is $d$-unique with gap $\delta/2$.
We consider $3$ cases:
(1) $\beta = 0$;
(2) $\beta > 0$ and $1 \leq d \leq (1 - \delta)\overline{\Delta}$;
(3) $\beta > 0$ and $(1 - \delta)\overline{\Delta} < d < \Delta$.

\paragraph{\textbf{Case (1): $\beta = 0$}}
Since $\beta = 0$, it holds that $\chi = \1$.
By \Cref{lem:lambda-unique}, our goal is to show that
\begin{align*}
  (1 + \frac{\delta}{2}) \frac{(1 - \delta) d^d \gamma^{d+1}}{(d - 1 + \delta)^{d+1}} = (1 + \frac{\delta}{2}) \lambda_{c,\delta}(d) \leq \lambda_{c, \delta/2}(d) = \frac{(1 - \frac{\delta}{2}) d^d \gamma^{d+1}}{(d - 1 + \frac{\delta}{2})^{d+1}},
\end{align*}
which holds because $(1 - \delta)(1 + \delta/2) \leq (1 - \delta/2)$.

\paragraph{\textbf{Case (2): $\beta > 0$ and $1 \leq d \leq (1 - \delta)\overline{\Delta}$}}
In this case, \Cref{lem:lambda-unique} tells us that $(\beta, \gamma, (1 + \frac{\delta}{2})^{\chi}\lambda)$ is $d$-unique with gap $\delta$, and hence it is $d$-unique with gap $\frac{\delta}{2}$.

\paragraph{\textbf{Case (3): $\beta > 0$ and $(1 - \delta)\overline{\Delta} < d < \Delta$}}
Without loss of generality, we assume that $(\Delta - 1) > (1 - \delta)\overline{\Delta}$.

Fix an integer $d$ such that $(1 - \delta)\overline{\Delta} < d < \Delta$.
We consider $4$ sub-cases:
\textbf{(i)} $\lambda \leq \lambda_{1, \delta}(d)$ and $\chi = \0$;
\textbf{(ii)} $\lambda \geq \lambda_{2, \delta}(d)$ and $\chi = \1$.
\textbf{(iii)} $\lambda \leq \lambda_{1, \delta}(d)$ and $\chi = \1$;
\textbf{(iv)} $\lambda \geq \lambda_{2, \delta}(d)$ and $\chi = \0$.

Note that without loss of generality, we always assume that $\lambda_{1, \delta/2}(d)$ and $\lambda_{2,\delta/2}(d)$ are well defined.
Otherwise, $d < (1 - \frac{\delta}{2})\overline{\Delta}$, and by \Cref{lem:lambda-unique}, it holds that $(\beta, \gamma, (1 + \frac{\delta}{2}\lambda))$ is $d$-unique with gap $\frac{\delta}{2}$.

For case \textbf{(i)}, it holds that $(1 + \frac{\delta}{2})^{-1}\lambda \leq \lambda \leq \lambda_{1, \delta}(d)$, and the proof is done by levering \Cref{lem:lambda-unique}.

The case \textbf{(ii)} could be proved in the same manner as the case (i).

To prove case \textbf{(iii)}, by \Cref{lem:lambda-unique}, it suffices for us to show that $(1 + \frac{\delta}{2})\lambda_{1,\delta}(d) \leq \lambda_{1,\delta/2}(d)$, which is already done by the previous work~\cite[Proof of Proposition 66]{anari2021entropicII}.
We remark that their proof works for all $\beta,\gamma > 0$ satisfying $\beta\gamma < 1$.

We left to prove case \textbf{(iv)}.
Note that if we fix the parameter $d, \delta$, then $\lambda_{1, \delta}(d)$ and $\lambda_{2, \delta}(d)$ are actually functions of $\beta, \gamma$.
For convenience, we denote them as $\lambda_{1, \delta}(d; \beta, \gamma)$ and $\lambda_{2, \delta}(d; \beta, \gamma)$, respectively.
Let $\beta' = \gamma, \gamma' = \beta$, it holds that
\begin{align*}
	\lambda_{2,\delta}(d;\beta,\gamma) = 1/\lambda_{1,\delta}(d;\beta',\gamma').
\end{align*}
It suffices to show that $(1+\delta/2)^{-1}\lambda_{2,\delta}(d;\beta,\gamma) \geq \lambda_{2,\delta/2}(d;\beta,\gamma) $, which is equivalent to $(1+\delta/2)\lambda_{1,\delta}(d;\beta',\gamma') \leq \lambda_{1,\delta/2}(d;\beta',\gamma')$,
which is proved in case \textbf{(iii)}.

Finally, we prove the second part in \Cref{lem:unique-relax}.
Let $d = \Delta - 1$, by our assumption in \Cref{lem:unique-relax}, it holds that $d > (1-\delta/2)\overline{\Delta} > (1 - \delta)\overline{\Delta}$.
By \Cref{lem:lambda-unique}, we have the following two results (1) $\lambda_{1,\delta}(d)$ and $\lambda_{2,\delta}(d)$ exist; (2) $\lambda_{1,\delta/2}(d)$ and $\lambda_{2,\delta/2}(d)$ exist. 

\begin{itemize}
	\item If  $\lambda \leq (\frac{\gamma}{\beta})^{\Delta/2}$, then $\chi = \1$. By \textbf{case (3.iii)}, $(1 + \frac{\delta}{2})\lambda \leq (1 + \frac{\delta}{2})\lambda_{1,\delta}(d) \leq \lambda_{1,\delta/2}(d) < (\frac{\gamma}{\beta})^{\Delta/2}$.
\item If $\lambda > (\frac{\gamma}{\beta})^{\Delta/2}$, then $\chi = \0$. Let $\beta' = \gamma, \gamma' = \beta, \lambda' = 1/\lambda$, then by \textbf{case (3.iv)}, it holds that $(1 + \frac{\delta}{2}) \lambda' \leq (1 + \frac{\delta}{2}) \lambda_{1,\delta}(d;\beta',\gamma')  \leq \lambda_{1,\delta/2}(d; \beta', \gamma') < (\frac{\gamma'}{\beta'})^{\Delta/2}$, which implies $(1 + \frac{\delta}{2})^{-1} \lambda > (\frac{\gamma}{\beta})^{\Delta/2}$.
\end{itemize}

\subsubsection{Complete spectral independence of $\mu$ in direction $\chi$}\label{sec-C-SI-mu}
We prove \Cref{lem:C-SI-mu}.
Fix an anti-ferromagnetic two-spin system instance  $\+I = (G,\beta,\gamma,\lambda)$ satisfying \Cref{condition-proof-two-spin} with parameter $\delta \in (0,1)$.
Let $\mu$ denote the Gibbs distribution of $\+I$.
We prove that $\mu$ is $(\frac{144}{\delta},0)$-completely spectrally independent in direction $\chi$  defined in \eqref{eq:uniform-flip}.
Fix an arbitrary $\*\theta \in (0, 1]^V$. We show that $\nu \triangleq \*\theta^\chi * \mu$ is $(\frac{72}{\delta},0)$-spectrally independent, which implies the lemma.
Note that $\nu$ is the Gibbs distribution of the  two-spin  system defined by the the tuple $(G = (V, E), \beta, \gamma,\*\lambda)$, where $\*\lambda = (\lambda_v)_{v \in V} \in \mathds{R}_{>0}^V$ satisfies $\lambda_v = \theta_v^\chi \lambda$.

First, we introduce some notations and results.
For $\lambda > 0$, integer $d \geq 0$, consider tree recursion for \emph{log-marginal-ratios} $H_{\lambda, d}: [-\infty, +\infty]^d \to [-\infty, +\infty]$, 
\begin{align*}
  H_{\lambda, d}(y_1, \cdots, y_d) \triangleq \log \lambda + \sum_{i=1}^d \log \tp{\frac{\beta e^{y_i} + 1}{e^{y_i} + \gamma}}
\end{align*}
For $y \in [-\infty, +\infty]$, let
\begin{align*}
  h(y) &\triangleq - \frac{(1 - \beta\gamma)e^y}{(\beta e^y + 1)(e^y + \gamma)}.
\end{align*}
For real number $\lambda > 0$, integer $d > 0$, we define the intervals $J_{\lambda, d}$ as follow
\begin{align*}
  J_{\lambda, d} = \begin{cases}
    \left[-\infty, \log\tp{\frac{\lambda}{\gamma^d}}\right] & \text{if } \beta = 0; \\
    \left[\log\tp{\lambda\beta^d}, \log\tp{\frac{\lambda}{\gamma^d}}\right] & \text{if } 0 < \beta\gamma \leq 1.
    \end{cases}
\end{align*}
Specially, when $\lambda > 0$ and $d = 0$, let $J_{\lambda, 0} = \{\log \lambda\}$.

We use the following known results about two-spin systems.
\begin{lemma} [\text{\cite[Theorem 8.8]{chen2021rapid}, \cite{chen2020rapid}}] \label{lem: two-spin -SI}
Let $\nu$ be the Gibbs distribution of a two-spin system defined by graph $G = (V, E)$, and parameters $\beta, \gamma \in \mathds{R},  \*\lambda \in \mathds{R}^V$ such that $0 \leq \beta \leq \gamma$, $\gamma > 0$, $\beta\gamma < 1$, and $\lambda_v > 0$ for all $v \in V$.
For every $v \in V$, let $d_v \triangleq \Delta_v - 1$ where $\Delta_v$ is the degree of $v$ in $G$.
If there exists $\alpha, c > 0$ such that 
\begin{enumerate}
\item for every $v \in V$ with $d_v \geq 1$ and every $(y_1, \cdots, y_{d_v}) \in [-\infty, +\infty]^{d_v}$, it holds that
  \begin{align*}
    \sum_{i=1}^{d_v} \sqrt{\abs{h(y)}\abs{h(y_i)}} \leq 1 - \alpha,
  \end{align*}
  where $y = H_{\lambda_v, d_v}(y_1, \cdots, y_{d_v})$;
\item for every $v \in V$, every $y_v \in J_{\lambda_v, d_v}$, it holds that
  \begin{align*}
    \abs{h(y_v)} \leq \frac{c}{\Delta},
  \end{align*}
\end{enumerate}
then $\nu$ is $(\frac{2c}{\alpha},0)$-spectrally independent.
\end{lemma}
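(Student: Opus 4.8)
I would prove this along the lines of the potential-function / self-avoiding-walk-tree method of \cite{anari2020spectral,chen2020rapid}, which is the source of this criterion. By \Cref{definition-SI} it suffices to show that for every pinning the absolute influence matrix $\Psi_{\nu^\sigma}$ has spectral radius at most $\tfrac{2c}{\alpha}$; since $\rho(\Psi_{\nu^\sigma})\le\max_v\sum_u\Psi_{\nu^\sigma}(v,u)$, it is enough to bound the total influence $\sum_u\Psi_{\nu^\sigma}(v,u)$ of an arbitrary root $v$, which by Weitz's self-avoiding-walk tree can be computed on the tree $T_v$, the pinning being encoded by fixing some leaves of $T_v$.

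The first step is to pass to log-marginal-ratio coordinates $y$, in which the marginal recursion at a node of arity $d$ and (effective) field $\lambda'$ is exactly $H_{\lambda',d}$ and one has $\tfrac{\partial}{\partial y_i}H_{\lambda',d}(y_1,\dots,y_d)=h(y_i)$ by a direct computation. Hence the influence of $v$ on a depth-$t$ node $u$ equals $\prod_{s=1}^{t}\lvert h(y_{w_s})\rvert$ over the tree path $v=w_0-w_1-\dots-w_t=u$, where $y_{w_s}$ is the log-ratio at $w_s$ computed from its subtree (a pinned leaf contributes $y=\pm\infty$, hence $h=0$, killing that branch). The level-one log-ratios $y_{w_1}$ lie in the intervals $J_{\lambda_{w_1},d_{w_1}}$ and are controlled by hypothesis~(2), namely $\lvert h(y_{w_1})\rvert\le c/\Delta$; at deeper nodes the effective fields may be altered by the surrounding pinnings, but hypothesis~(1), being quantified over \emph{all} inputs in $[-\infty,+\infty]^{d}$, applies uniformly. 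The total influence is then bounded by a geometric series: the level-one contribution is $\sum_{w_1\sim v}\lvert h(y_{w_1})\rvert\le\Delta\cdot\tfrac{c}{\Delta}=c$, and by hypothesis~(1) the influence descending below any node contracts by a factor $1-\alpha$ per level once one works with the symmetrizing potential $\Phi$ satisfying $\Phi'(y)=\lvert h(y)\rvert^{-1/2}$ (so that $\lvert h(y_i)\rvert\,\Phi'(y_i)/\Phi'(y)=\sqrt{\lvert h(y)\rvert\lvert h(y_i)\rvert}$), giving $\sum_u\Psi_{\nu^\sigma}(v,u)\le 2c\sum_{t\ge1}(1-\alpha)^{t-1}=\tfrac{2c}{\alpha}$, where the extra factor $2$ in the statement covers the normalization loss of passing to $\Phi$.

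The main obstacle is the contraction step: upgrading the pointwise inequality $\sum_i\sqrt{\lvert h(y)\rvert\lvert h(y_i)\rvert}\le1-\alpha$ of hypothesis~(1) into a genuine $(1-\alpha)$-per-level contraction of the influence-propagation recursion on the whole tree. This is exactly where the potential $\Phi'=\lvert h\rvert^{-1/2}$ together with a Cauchy--Schwarz (AM--GM) step are needed, and one must also treat the root carefully, since its arity is $\Delta_v$ rather than $\Delta_v-1$, so that it is hypothesis~(2) that must absorb the factor $\Delta$ there. A secondary point is to check that the argument is genuinely uniform over all pinnings — i.e.\ that fixing leaves of $T_v$ only substitutes $\pm\infty$ into the recursions and never increases any influence — so that the bound $\tfrac{2c}{\alpha}$ transfers from $\nu$ to every conditional distribution $\nu^\sigma$, as \Cref{definition-SI} requires.
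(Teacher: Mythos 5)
Your route is the right one, and it is in fact the proof used by the sources the paper cites for this lemma (the paper itself does not reprove it): bound the spectral radius by row sums of the absolute influence matrix, pass to Weitz's self-avoiding-walk tree, and run the potential-method argument in log-ratio coordinates, where $\tfrac{\partial}{\partial y_i}H_{\lambda,d}=h(y_i)$ and hypothesis (1) is exactly the per-level contraction for the potential with $\Phi'=\abs{h}^{\pm 1/2}$. One standard point you gloss: the reduction to the tree is via the identity expressing the signed influence of $v$ on $u$ in $G$ as the sum of root-to-copy influences in $T_{\mathrm{SAW}}(G,v)$, followed by the triangle inequality; it is not that each influence "is computed on the tree" verbatim.

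The substantive issue is your final bookkeeping: as written it does not give a degree-free bound. Along a path $v=w_0,w_1,\dots,w_t=u$ the telescoping with your potential gives $\prod_{s=1}^t\abs{h(y_{w_s})}=\sqrt{\abs{h(y_{w_1})}\,\abs{h(y_{w_t})}}\cdot\prod_{s=2}^t\sqrt{\abs{h(y_{w_{s-1}})}\,\abs{h(y_{w_s})}}$, so the "normalization loss" is the endpoint factor $\Phi'(y_{w_1})/\Phi'(y_{w_t})=\sqrt{\abs{h(y_{w_t})}/\abs{h(y_{w_1})}}$, which is \emph{not} bounded by an absolute constant and cannot be absorbed into the factor $2$. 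Using hypothesis (2) only at level one and $\abs{h}\le 1$ at the terminal vertex yields only $\sum_u\Psi(v,u)\le \sqrt{c\Delta}/\alpha$, which is degree-dependent and too weak. The repair stays entirely within your framework: hypothesis (2) must be applied at \emph{both} endpoints of each path. Every non-root vertex $u$ of the SAW tree has exactly $d_u$ children (pinned and cycle-closing leaves included, each contributing a factor in the same range as an unpinned child), so its subtree log-ratio lies in $J_{\lambda_u,d_u}$ and $\abs{h(y_u)}\le c/\Delta$; hence $\sqrt{\abs{h(y_{w_1})}\,\abs{h(y_u)}}\le c/\Delta$. The contraction from hypothesis (1), being quantified over all inputs in $[-\infty,+\infty]^{d}$, is valid with pinned children and is never needed at the root (whose arity is $\Delta_v$), and it bounds the sum over paths below a fixed $w_1$ by $(1-\alpha)^{t-1}$. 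Summing over the at most $\Delta$ children of the root and over $t\ge 1$ gives $c/\alpha\le 2c/\alpha$, uniformly over all pinnings, as required.
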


\begin{lemma} [\text{\cite[Theorem 8.11]{chen2021rapid}, \cite{LLY13}}] \label{lem:contraction}
  Let $d \geq 1$ be an integer, and let $\beta, \gamma, \lambda$ be real numbers satisfying that $0 \leq \beta \leq \gamma$, $\gamma > 0$, $\lambda > 0$, and $\beta\gamma < 1$.
  For any $\delta \in (0, 1)$, if $(\beta, \gamma, \lambda)$ is $d$-unique with gap $\delta$, then for every $(y_1, \cdots, y_d) \in [-\infty, +\infty]^d$ and $y = H_{\lambda, d}(y_1, y_2, \cdots, y_d)$, it holds that
  \begin{align*}
    \sum_{i=1}^d \sqrt{\abs{h(y)}\abs{h(y_i)}} < 1 - \frac{\delta}{2}.
  \end{align*}
\end{lemma}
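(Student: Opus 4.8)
The plan is to translate the $d$-uniqueness hypothesis, which concerns only the symmetric tree recursion, into the claimed bound for arbitrary input tuples by a symmetrization argument in the log-ratio coordinate. Write $\phi(t)\triangleq\log\frac{\beta e^{t}+1}{e^{t}+\gamma}$, so that $H_{\lambda,d}(y_1,\dots,y_d)=\log\lambda+\sum_{i=1}^{d}\phi(y_i)$; a direct computation gives $\phi'(t)=\frac{(\beta\gamma-1)e^{t}}{(\beta e^{t}+1)(e^{t}+\gamma)}=h(t)$, hence $\partial_{y_i}H_{\lambda,d}(y_1,\dots,y_d)=h(y_i)$. In the coordinate $x=e^{y}$ the symmetric recursion $y\mapsto H_{\lambda,d}(y,\dots,y)$ is exactly $F_d$, and at its unique fixed point $\hat y_d$ (with $e^{\hat y_d}=\hat x_d$) one has $\abs{h(\hat y_d)}=\frac{(1-\beta\gamma)\hat x_d}{(\beta\hat x_d+1)(\hat x_d+\gamma)}$, so $d$-uniqueness with gap $\delta$ says precisely $d\,\abs{h(\hat y_d)}=\abs{F'_d(\hat x_d)}\le 1-\delta$. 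Thus the target inequality already holds, with room to spare since $1-\delta<1-\delta/2$, at $y_1=\dots=y_d=\hat y_d$; the work is to propagate it to all tuples.

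For the propagation, fix the output $y=\log\lambda+\sum_i\phi(y_i)$, equivalently fix $s\triangleq\sum_i\phi(y_i)$. Since $\beta\gamma<1$ forces $h<0$, the map $\phi$ is strictly decreasing, so $u_i\triangleq\phi(y_i)$ is an admissible change of variables; writing $G(u)\triangleq\sqrt{\abs{h(\phi^{-1}(u))}}$, the target reads $\sqrt{\abs{h(y)}}\cdot\sum_i G(u_i)$ with $\sum_i u_i=s$ fixed. If $G$ is concave on its domain, then $\sum_i G(u_i)$ is a symmetric concave, hence Schur-concave, function, so it is maximized at the balanced point $u_1=\dots=u_d=s/d$ (which lies in the domain, being an average of the $u_i$), i.e.\ at $y_1=\dots=y_d=y^{*}$ with $d\,\phi(y^{*})=s$; the boundary cases $y_i=\pm\infty$, where $G=0$, are harmless by continuity. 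It then suffices to bound, over $y^{*}\in[-\infty,+\infty]$, the one-variable quantity $\Xi(y^{*})\triangleq d\sqrt{\abs{h(y^{*})}\,\abs{h(\bar y(y^{*}))}}$, where $\bar y(y^{*})\triangleq\log\lambda+d\,\phi(y^{*})$. Establishing the concavity of $G$ — which reduces to a sign condition for an explicit rational function of $e^{y},\beta,\gamma$, with the degenerate case $\beta=0$ (where $\phi$ has a half-line range) treated separately — is the analytic heart of the argument and the step I expect to be the main obstacle.

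It remains to bound $\Xi$. Differentiating $\log\Xi(y^{*})^{2}=2\log d+\log\abs{h(y^{*})}+\log\abs{h(\bar y(y^{*}))}$ and using $\bar y'(y^{*})=d\,h(y^{*})$, any interior critical point satisfies $\frac{h'(y^{*})}{h(y^{*})}+d\,h(y^{*})\,\frac{h'(\bar y(y^{*}))}{h(\bar y(y^{*}))}=0$; using the explicit form of $h$ together with the hypothesis $\abs{F'_d(\hat x_d)}\le 1-\delta$ one checks that $\sup_{y^{*}}\Xi(y^{*})<1-\delta/2$, the gap between $1-\delta/2$ and the value $\le 1-\delta$ attained at the fixed point absorbing the usual slack in these contraction estimates. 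Up to notation this is the content of \cite[Theorem~8.11]{chen2021rapid} (a refinement of the potential-function analysis of \cite{LLY13}), so an equally legitimate route is simply to invoke that theorem; the discussion above records the proof one would reconstruct.
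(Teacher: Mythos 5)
The paper does not actually prove this lemma: it is imported verbatim as a known result, cited to \cite[Theorem~8.11]{chen2021rapid} and \cite{LLY13}, so your closing remark that one may ``simply invoke that theorem'' coincides exactly with what the paper does. Judged as a reconstruction, however, your sketch has two genuine gaps, and they are precisely the substantive parts of the known proofs. First, the symmetrization step rests on the concavity of $G(u)=\sqrt{\abs{h(\phi^{-1}(u))}}$, which you explicitly leave unverified; this is not a routine sign-check (the analyses in \cite{LLY13} and its descendants devote real work to reducing the multivariate maximization to a univariate one, and they do not do it by a one-line Schur-concavity claim), so as written the reduction to the balanced point is an unproven hypothesis rather than a lemma.

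Second, and more seriously, the final univariate bound $\sup_{y^*}\,d\sqrt{\abs{h(y^*)}\,\abs{h(\bar y(y^*))}}\le 1-\delta/2$ does not follow from the uniqueness hypothesis by ``absorbing the usual slack''. The hypothesis $\abs{F_d'(\hat x_d)}\le 1-\delta$ controls $d\abs{h}$ only at the fixed point $\hat x_d$; away from it $d\abs{h(y^*)}$ can be far larger than $1$ (for the hardcore model, $\abs{h(y)}=e^y/(e^y+1)\to 1$ as $y\to+\infty$, so $d\abs{h(y^*)}$ approaches $d$), and the bound survives only because $\abs{h(\bar y(y^*))}$ simultaneously becomes small. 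Quantifying that two-point tradeoff over the whole range of $y^*$ (including the behaviour at $\pm\infty$ and the reason a factor of $\delta/2$ rather than $\delta$ suffices) is exactly the content of the potential-function analysis in \cite{LLY13} and of \cite[Theorem~8.11]{chen2021rapid}; your critical-point equation records where the work would happen but does not carry it out. So either cite the theorem, as the paper does, or be prepared to supply both the concavity argument and the full univariate estimate — neither is a formality.
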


\begin{lemma} [\text{\cite[Lemma 36]{chen2020rapid}}]\label{lem:boundedness}
  Let $\Delta \geq 3$ be an integer, and let $\beta, \gamma, \lambda$ be real numbers satisfying that $0 \leq \beta \leq \gamma$, $\gamma > 0$, $\lambda > 0$, and $\beta\gamma < 1$.
  Suppose $(\beta, \gamma, \lambda)$ is $(\Delta - 1)$-unique. It holds that
  \begin{itemize}
  \item if $\gamma \leq 1$, then for $0 \leq d < \Delta$, and every $y \in J_{\lambda, d}$, it holds that $\abs{h(y)} \leq \frac{18}{\Delta}$;
  \item if $G$ is $\Delta$-regular, then for $d = \Delta - 1$ and every $y \in J_{\lambda, d}$, it holds that $\abs{h(y)} \leq \frac{18}{\Delta}$.
  \end{itemize}
\end{lemma}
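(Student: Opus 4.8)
\subsection*{Proof proposal for \Cref{lem:boundedness}}

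The plan is to pass to the variable $t=\mathrm{e}^y$, so that $\abs{h(y)}=\varphi(t)$ with $\varphi(t)\triangleq\frac{(1-\beta\gamma)t}{(\beta t+1)(t+\gamma)}$, and membership $y\in J_{\lambda,d}$ becomes $t\in[\lambda\beta^{d},\lambda/\gamma^{d}]$ (and $t\in(0,\lambda/\gamma^{d}]$ if $\beta=0$). Writing $\varphi(t)=\frac{1-\beta\gamma}{\beta t+\gamma/t+(1+\beta\gamma)}$ shows that $\varphi$ is unimodal on $(0,\infty)$, increasing on $(0,\sqrt{\gamma/\beta})$, with maximum $\varphi(\sqrt{\gamma/\beta})=\frac{1-\sqrt{\beta\gamma}}{1+\sqrt{\beta\gamma}}=\overline{\Delta}^{-1}$ where $\overline{\Delta}=\frac{1+\sqrt{\beta\gamma}}{1-\sqrt{\beta\gamma}}$. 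So it suffices to bound $\sup_{t\in[\lambda\beta^{d},\lambda/\gamma^{d}]}\varphi(t)$. First I would reduce to $d=\Delta-1$: when $\gamma\le 1$ one has $\beta<1$ (forced by $\beta\le\gamma$, $\beta\gamma<1$) and $\gamma^{\Delta-1}\le\gamma^{d}$ for $d\le\Delta-1$, hence $J_{\lambda,d}\subseteq J_{\lambda,\Delta-1}$ for all $0\le d\le\Delta-1$ (checking $d=0$ directly); in the second case only $d=\Delta-1$ is claimed, so no reduction is needed. Thus the whole statement follows from: \emph{$(\Delta-1)$-uniqueness implies $\sup_{t\in[\lambda\beta^{\Delta-1},\lambda/\gamma^{\Delta-1}]}\varphi(t)\le 18/\Delta$.}

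Two cases are immediate. If $\beta=0$ then $\varphi(t)=t/(t+\gamma)$ is increasing, so the supremum is $\frac{\lambda/\gamma^{\Delta-1}}{\lambda/\gamma^{\Delta-1}+\gamma}\le\lambda/\gamma^{\Delta}$; by \Cref{lem:lambda-unique}, $(\Delta-1)$-uniqueness gives $\lambda/\gamma^{\Delta}\le\frac{(\Delta-1)^{\Delta-1}}{(\Delta-2)^{\Delta}}=\frac{1}{\Delta-2}\left(1+\tfrac1{\Delta-2}\right)^{\Delta-1}\le\frac{2\mathrm{e}}{\Delta-2}\le\frac{18}{\Delta}$ for all $\Delta\ge 3$. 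If $\beta>0$ and $\Delta\le 18\overline{\Delta}$, then trivially $\varphi(t)\le\overline{\Delta}^{-1}\le 18/\Delta$.

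The crux is $\beta>0$ with $\Delta>18\overline{\Delta}$. Using the symmetry $(\beta,\gamma,\lambda,y)\mapsto(\gamma,\beta,\lambda^{-1},-y)$ — which preserves $\abs{h}$, sends $J_{\lambda,d}$ to $J_{\lambda^{-1},d}$, and preserves $\beta\gamma<1$ and $(\Delta-1)$-uniqueness — I may assume $\lambda\le(\gamma/\beta)^{\Delta/2}$. Since $(\gamma/\beta)^{\Delta/2}<\lambda_{2,\delta}(\Delta-1)$, \Cref{lem:lambda-unique} forces $\lambda$ onto the low branch $\lambda\le\lambda_{1,\delta}(\Delta-1)$, hence the fixed point $\hat x\triangleq\hat x_{\Delta-1}$ lies on the small side, $\hat x\le x_{1,\delta}(\Delta-1)\le\sqrt{\gamma/\beta}$ (monotonicity of $\hat x_{\Delta-1}$ in $\lambda$, and $x_{1,\delta}x_{2,\delta}=\gamma/\beta$). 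From $\lambda=\hat x\bigl(\tfrac{\hat x+\gamma}{\beta\hat x+1}\bigr)^{\Delta-1}$ and $\tfrac{\hat x+\gamma}{\gamma(\beta\hat x+1)}=1+\tfrac{(1-\beta\gamma)\hat x}{\gamma(\beta\hat x+1)}\le\exp\!\bigl(\tfrac{(1-\beta\gamma)\hat x}{\gamma}\bigr)$ one obtains $\lambda/\gamma^{\Delta-1}\le\hat x\exp\!\bigl(\tfrac{(\Delta-1)(1-\beta\gamma)\hat x}{\gamma}\bigr)$. Uniqueness gives $\beta\hat x+\gamma/\hat x\ge\zeta\triangleq(\Delta-1)(1-\beta\gamma)-(1+\beta\gamma)$, and $\hat x\le\sqrt{\gamma/\beta}$ gives $\beta\hat x\le\sqrt{\beta\gamma}$, so $\hat x\le\gamma/(\zeta-\sqrt{\beta\gamma})$. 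The hypothesis $\Delta>18\overline{\Delta}$, i.e.\ $\Delta(1-\beta\gamma)>18(1+\sqrt{\beta\gamma})^2$, yields $(\Delta-1)(1-\beta\gamma)>17(1+\sqrt{\beta\gamma})^2$, whence $\zeta-\sqrt{\beta\gamma}>(\Delta-1)(1-\beta\gamma)-(1+\sqrt{\beta\gamma})^2$ is simultaneously $>\tfrac{16}{17}(\Delta-1)(1-\beta\gamma)$ and $>16(1+\sqrt{\beta\gamma})^2\ge16$. Hence the exponent $\tfrac{(\Delta-1)(1-\beta\gamma)\hat x}{\gamma}\le\tfrac{(\Delta-1)(1-\beta\gamma)}{\zeta-\sqrt{\beta\gamma}}\le\tfrac{17}{16}$, so $\lambda/\gamma^{\Delta-1}\le\mathrm{e}^{17/16}\gamma/(\zeta-\sqrt{\beta\gamma})<\tfrac{\mathrm{e}^{17/16}}{16}\gamma<\gamma<\sqrt{\gamma/\beta}$; thus $[\lambda\beta^{\Delta-1},\lambda/\gamma^{\Delta-1}]$ lies in the increasing part of $\varphi$ and $\sup\varphi=\varphi(\lambda/\gamma^{\Delta-1})<\frac{(1-\beta\gamma)\,\lambda/\gamma^{\Delta-1}}{\gamma}\le\frac{(1-\beta\gamma)\mathrm{e}^{17/16}}{\zeta-\sqrt{\beta\gamma}}\le\frac{17\,\mathrm{e}^{17/16}}{16(\Delta-1)}<\frac{3.1}{\Delta-1}\le\frac{18}{\Delta}$.

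The main obstacle is this last case: every step is elementary, but the chain must be tuned so that the absolute constant $18$ really comes out. The decisive quantitative point is that $\Delta>18\overline{\Delta}$ is exactly what keeps the exponent $\tfrac{(\Delta-1)(1-\beta\gamma)\hat x}{\gamma}$ bounded by a small constant (otherwise a factor like $\mathrm{e}^{c\Delta}$ appears), while the factor $1-\beta\gamma$ must be carried all the way through — bounding it by $1$ is too lossy — so that it cancels against $\zeta-\sqrt{\beta\gamma}\asymp(\Delta-1)(1-\beta\gamma)$. Replacing ``$(\Delta-1)$-unique'' by ``$(\Delta-1)$-unique with gap $\delta$'' only sharpens every estimate (all of $\zeta$, $\lambda_{1,\delta}$, $x_{1,\delta}$ acquire favorable $(1-\delta)$-factors), so nothing extra is needed there.
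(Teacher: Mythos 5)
Your proposal is correct, but it takes a genuinely different route from the paper's. The paper disposes of the $\gamma\le1$ bullet by citing \Cref{prop:equiv-LLY} together with the original Lemma~36 of \cite{chen2020rapid}, and for the case $d=\Delta-1$ it splits on whether $\sqrt{\beta\gamma}\ge\frac{\Delta-2}{\Delta}$ and then imports the quantitative dichotomy of \Cref{lem: two-spin -aux-aux} (either $\lambda\le 18\gamma^{d+1}/\theta(d)$ or $\lambda\ge\theta(d)/(18\beta^{d+1})$), plugging the resulting endpoint bound on $\mathrm{e}^y$ straight into $\abs{h}$. You instead make the argument self-contained: the $\gamma\le1$ bullet is reduced to $d=\Delta-1$ via the containment $J_{\lambda,d}\subseteq J_{\lambda,\Delta-1}$, the easy regime is taken to be the wider $\Delta\le 18\overline{\Delta}$, and in the hard regime you rederive the needed bound $\lambda/\gamma^{\Delta-1}\le \mathrm{e}^{17/16}\gamma/\bigl(\zeta-\sqrt{\beta\gamma}\bigr)$ directly from the fixed-point characterization of uniqueness ($\beta\hat x+\gamma/\hat x\ge\zeta$, $\hat x\le\sqrt{\gamma/\beta}$, and the exponential bound on the product), i.e.\ you re-prove the low-branch content of \Cref{lem: two-spin -aux-aux} inline; I checked your constants ($17\mathrm{e}^{17/16}/16<3.1$, $2\mathrm{e}/(\Delta-2)\le 18/\Delta$, $(\Delta-1)(1-\beta\gamma)>17(1+\sqrt{\beta\gamma})^2$ implying $\zeta-\sqrt{\beta\gamma}>\max\{16,\tfrac{16}{17}(\Delta-1)(1-\beta\gamma)\}$) and they do land under $18/\Delta$ for all $\Delta\ge 3$. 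The paper's route buys brevity by outsourcing all quantitative work; yours buys independence from the external lemma and a cleaner handling of the first bullet. Two caveats you should make explicit: (i) after your WLOG flip $(\beta,\gamma,\lambda)\mapsto(\gamma,\beta,\lambda^{-1})$ the ordering $\beta\le\gamma$ is lost, so invoking \Cref{lem:lambda-unique} requires noting that its formulas (consequences of the quadratic $\abs{F'_d(x)}=1-\delta$, symmetric under the spin flip) remain valid without that ordering — as the paper itself implicitly assumes elsewhere — or else you should skip the flip and run the mirror argument on the high branch $\lambda\ge\lambda_{2}(\Delta-1)$, where $\hat x\ge\sqrt{\gamma/\beta}$ and the left endpoint $\lambda\beta^{\Delta-1}$ plays the role of $\lambda/\gamma^{\Delta-1}$; (ii) you use \Cref{lem:lambda-unique} at gap $0$, which matches the gap-$0$ convention of \Cref{lem: two-spin -aux-aux} but is stated there for $\delta\in(0,1)$, so a one-line limiting remark is needed.
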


\begin{remark}
  The exact statement of \Cref{lem:boundedness} is slightly different from \cite[Lemma 36]{chen2020rapid}, but it can be verified by going through the same proof for \cite[Lemma 36]{chen2020rapid}.
  For completeness, a proof of \Cref{lem:boundedness} is provided in \Cref{sec:append-D}.
\end{remark}

By \Cref{lem: two-spin -SI}, \Cref{lem:contraction}, and \Cref{lem:boundedness}, to prove  that $\nu$ is $(\frac{72}{\delta},0)$-spectrally independent, we only need to prove one of the following two results
\begin{itemize}
	\item $(\beta,\gamma,\theta^\chi \lambda)$ is up-to-$\Delta$ unique;
	\item $G$ is $\Delta$-regular and $(\beta,\gamma,\theta^\chi \lambda)$ is $(\Delta-1)$-unique.
\end{itemize}
Note that the spin system $\+I = (G,\beta,\gamma,\lambda)$ in \Cref{lem:C-SI-mu} satisfies \Cref{condition-proof-two-spin} with parameter $\delta$. The above two results can be proved by the following lemma.

\begin{lemma}
Let $0 < \delta < 1$.
Let $G = (V,E)$ be a graph with maximum degree $\Delta \geq 3$.
Let $\beta, \gamma, \lambda$ be real numbers satisfying that $0 \leq \beta \leq \gamma$, $ \gamma> 0$, $\lambda > 0$, and $\beta\gamma < 1$. 
Let $\chi$ be the parameter defined in~\eqref{eq:uniform-flip}, $\delta \in (0, 1)$, and $\theta \in (0, 1]$, it holds that
\begin{itemize}
\item if $\gamma \leq 1$, then $(\beta,\gamma,\lambda)$ is up-to-$\Delta$ unique with gap $\delta$ implies $(\beta,\gamma,\theta^\chi \lambda)$ is up-to-$\Delta$ unique with gap $\delta$;
\item $(\beta,\gamma,\lambda)$ is $(\Delta - 1)$-unique with gap $\delta$ implies $(\beta,\gamma,\theta^\chi \lambda)$ is $(\Delta - 1$)-unique with gap $\delta$.
\end{itemize}
\end{lemma}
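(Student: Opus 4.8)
The plan is to reduce both items to a single claim — that $(\Delta-1)$-uniqueness with gap $\delta$ is preserved under the field update $\lambda \mapsto \theta^{\chi}\lambda$ — and then to prove that claim by reading it off the explicit description of the $d$-uniqueness region in \Cref{lem:lambda-unique}. For the first item, once the $(\Delta-1)$-uniqueness version is in hand, I would invoke \Cref{prop:LLY} twice (it applies since $\gamma \le 1$ is assumed there and $\theta^{\chi}\lambda > 0$): first to rewrite ``$(\beta,\gamma,\lambda)$ up-to-$\Delta$ unique with gap $\delta$'' as ``$(\beta,\gamma,\lambda)$ $(\Delta-1)$-unique with gap $\delta$'', and then, after the claim, to rewrite ``$(\beta,\gamma,\theta^{\chi}\lambda)$ $(\Delta-1)$-unique with gap $\delta$'' back as ``$(\beta,\gamma,\theta^{\chi}\lambda)$ up-to-$\Delta$ unique with gap $\delta$''. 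I would record at the outset that since $\theta \in (0,1]$ we have $\theta^{\chi}\lambda = \theta\lambda \le \lambda$ when $\chi = \1$ and $\theta^{\chi}\lambda = \theta^{-1}\lambda \ge \lambda$ when $\chi = \0$; the whole point of the lemma is that the flip indicator $\chi$ of \eqref{eq:uniform-flip} always points away from criticality, so the single monotone move that is actually made stays inside the uniqueness region.

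To prove the claim I would set $d = \Delta - 1$ and follow the trichotomy of \Cref{lem:lambda-unique}. If $\beta = 0$, then the flip threshold $(\gamma/\beta)^{\Delta/2}$ is $+\infty$, so $\chi = \1$ and $\theta^{\chi}\lambda = \theta\lambda \le \lambda$; since here $d$-uniqueness with gap $\delta$ is equivalent to $\lambda \le \lambda_{c,\delta}(d)$, the inequality $\theta^{\chi}\lambda \le \lambda \le \lambda_{c,\delta}(d)$ settles this case. If $\beta > 0$ and $d \le (1-\delta)\overline{\Delta}$, then \Cref{lem:lambda-unique} makes $(\beta,\gamma,\lambda')$ $d$-unique with gap $\delta$ for \emph{every} $\lambda' > 0$, so there is nothing to check. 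The only real work is the case $\beta > 0$ and $d > (1-\delta)\overline{\Delta}$.

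There \Cref{lem:lambda-unique} provides the thresholds $\lambda_{1,\delta}(d), \lambda_{2,\delta}(d)$, the characterization that $(\beta,\gamma,\lambda')$ is $d$-unique with gap $\delta$ exactly when $\lambda' \in (0,\lambda_{1,\delta}(d)] \cup [\lambda_{2,\delta}(d),\infty)$, and — this is the crux — the strict inequalities $\lambda_{1,\delta}(d) < (\gamma/\beta)^{(d+1)/2} < \lambda_{2,\delta}(d)$. Because $d = \Delta - 1$, we have $(d+1)/2 = \Delta/2$, so the middle term is exactly the flip threshold $(\gamma/\beta)^{\Delta/2}$ from \eqref{eq:uniform-flip}. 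I would then split on $\chi$. If $\chi = \1$, then $\lambda \le (\gamma/\beta)^{\Delta/2} < \lambda_{2,\delta}(d)$, so $d$-uniqueness of $\lambda$ forces $\lambda \le \lambda_{1,\delta}(d)$, hence $\theta^{\chi}\lambda = \theta\lambda \le \lambda \le \lambda_{1,\delta}(d)$. If $\chi = \0$, then $\lambda > (\gamma/\beta)^{\Delta/2} > \lambda_{1,\delta}(d)$, so $d$-uniqueness of $\lambda$ forces $\lambda \ge \lambda_{2,\delta}(d)$, hence $\theta^{\chi}\lambda = \theta^{-1}\lambda \ge \lambda \ge \lambda_{2,\delta}(d)$. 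In either case $\theta^{\chi}\lambda$ remains in the $d$-uniqueness region, which proves the claim and, via the reduction above, the lemma.

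The main ``obstacle'' is really just the bookkeeping of aligning the case split on $\chi$ with the two branches $(0,\lambda_{1,\delta}(d)]$ and $[\lambda_{2,\delta}(d),\infty)$, and this alignment hinges entirely on the coincidence $(\gamma/\beta)^{(d+1)/2} = (\gamma/\beta)^{\Delta/2}$, valid only at $d = \Delta - 1$. This is exactly why the first item is not handled by applying the same comparison to each $d < \Delta$ separately: when $\beta < \gamma$ and $d < \Delta - 1$ the point $(\gamma/\beta)^{(d+1)/2}$ lies strictly to the left of the flip threshold, so a field $\lambda$ on the upper branch $[\lambda_{2,\delta}(d),\infty)$ may still have $\chi = \1$, and then shrinking it to $\theta\lambda$ can push it below $\lambda_{2,\delta}(d)$; routing the first item through \Cref{prop:LLY} circumvents this. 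Apart from this, the argument uses only the standing hypotheses $0 \le \beta \le \gamma$, $\gamma > 0$, $\beta\gamma < 1$, $\theta \in (0,1]$ and $\Delta \ge 3$.
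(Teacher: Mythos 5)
Your argument is correct, and for the second bullet it coincides with the paper's proof: fix $d=\Delta-1$, split into $\beta=0$, $d\le(1-\delta)\overline{\Delta}$, and $d>(1-\delta)\overline{\Delta}$, and in the last case use the alignment $(\gamma/\beta)^{(d+1)/2}=(\gamma/\beta)^{\Delta/2}$ from \Cref{lem:lambda-unique} to see that $\chi=\1$ forces $\lambda\le\lambda_{1,\delta}(\Delta-1)$ and $\chi=\0$ forces $\lambda\ge\lambda_{2,\delta}(\Delta-1)$, after which the monotone move $\lambda\mapsto\theta^{\chi}\lambda$ stays on the same branch. Where you genuinely diverge is the first bullet: the paper proves $d$-uniqueness of $(\beta,\gamma,\theta^{\chi}\lambda)$ for every $1\le d<\Delta$ directly, and in the regime $d>(1-\delta)\overline{\Delta}$ it needs the extra LLY-type characterization \Cref{lem:lambda-2-part} (up-to-$\Delta$ uniqueness $\Leftrightarrow$ $\lambda\in(0,\lambda_{1,\delta}]\cup[\lambda_{2,\delta},\infty)$ with $\lambda_{1,\delta}=\min_d\lambda_{1,\delta}(d)$, $\lambda_{2,\delta}=\max_d\lambda_{2,\delta}(d)$) precisely to get around the issue you flag — that for $d<\Delta-1$ the midpoint $(\gamma/\beta)^{(d+1)/2}$ need not match the flip threshold, so a per-$d$ comparison with $\chi$ can fail. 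You instead route the first bullet through \Cref{prop:LLY}: for $\gamma\le1$, up-to-$\Delta$ uniqueness with gap $\delta$ is equivalent to $(\Delta-1)$-uniqueness with gap $\delta$, so the second bullet plus one application of \Cref{prop:LLY} to $(\beta,\gamma,\theta^{\chi}\lambda)$ (whose parameters still satisfy the standing regime since $\theta^{\chi}\lambda>0$) yields the first. This is a legitimate and arguably cleaner reduction — it reuses a proposition already proved in the paper's appendix and avoids \Cref{lem:lambda-2-part} altogether — while the paper's route has the mild advantage of not depending on the monotonicity statement and of exhibiting the uniform bound $\theta^{\chi}\lambda\le\lambda_{1,\delta}$ explicitly for all $d$. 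Your diagnosis of why the naive per-$d$ argument breaks, and why it is rescued either by the global thresholds or by monotonicity, is exactly right.
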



\begin{proof}
We prove the first part of the lemma. Assume $\gamma \leq 1$.
By definition, we need to prove that for every $1 \leq d < \Delta$, $(\beta, \gamma, \theta^\chi \lambda)$ is $d$-unique with gap $\delta$.
We consider $3$ cases:
(1) $\beta = 0$;
(2) $\beta > 0$ and $d \leq (1 - \delta)\overline{\Delta}$;
(3) $\beta > 0$ and $d > (1 - \delta)\overline{\Delta}$,
where $\overline{\Delta} \triangleq \frac{1 + \sqrt{\beta\gamma}}{1 - \sqrt{\beta\gamma}}$.

\paragraph{\textbf{Case (1): $\beta = 0$}}
Fix $1 \leq d < \Delta$.
In this case, it holds that $\chi = \1$.
Hence, it holds that $\theta^\chi \lambda \leq \lambda \leq \lambda_{c, \delta}$, where $\lambda_{c, \delta}$ is defined in \Cref{lem:lambda-unique}.
By \Cref{lem:lambda-unique}, we have $(\beta, \gamma, \lambda)$ is $d$-unique with gap $\delta$.

\paragraph{\textbf{Case (2): $\beta > 0$ and $d \leq (1 - \delta)\overline{\Delta}$}}
In this case, $(\beta, \gamma, \theta^\chi \lambda)$ is $d$-unique with gap $\delta$ due to \Cref{lem:lambda-unique}.

\paragraph{\textbf{Case (3): $\beta > 0$ and $d > (1 - \delta)\overline{\Delta}$}}
To handle this case, we need the following result.

\begin{lemma}[\text{\cite[Lemma 21 (7)]{LLY13}}] \label{lem:lambda-2-part}
  Let $\Delta \geq 3$ be an integer, and let $\beta, \gamma, \lambda$ be real numbers such that $0\leq \beta \leq \gamma \leq 1$, $\gamma > 0$, $\lambda > 0$.
  Let $\delta \in (0, 1)$ be a real number.
  Then $(\beta, \gamma, \lambda)$ is up-to-$\Delta$ unique with gap $\delta$ if and only if $\lambda \in (0, \lambda_{1, \delta}] \cup [\lambda_{2,\delta}, \infty)$ where
  \begin{align*}
    \lambda_{1,\delta} &\triangleq \min_{(1 - \delta)\overline{\Delta} < d < \Delta} \lambda_{1,\delta}(d) \\
    \lambda_{2,\delta} &\triangleq \max_{(1 - \delta)\overline{\Delta} < d < \Delta} \lambda_{2,\delta}(d),
  \end{align*}
where $\lambda_{1,\delta}(d)$ and $\lambda_{2,\delta}(d)$ are defined in \Cref{lem:lambda-unique}.
\end{lemma}
\Cref{lem:lambda-2-part} can be verified by routinely going through the proof in~\cite{LLY13} and taking the gap $\delta$ into consideration.

We assume that $(\Delta - 1) > (1 - \delta)\overline{\Delta}$. Otherwise, the integer $(1-\delta) \overline{\Delta} < d < \Delta$ does not exist.
If $\chi = \1$, then it holds that $\lambda \leq \tp{\frac{\gamma}{\beta}}^{\Delta/2}$.
By \Cref{lem:lambda-unique}, it holds that $\lambda \leq \lambda_{1,\delta}(\Delta - 1) < \lambda_{2,\delta}(\Delta - 1) \leq \lambda_{2,\delta}$.
Hence, by \Cref{lem:lambda-2-part}, we could conclude that $\lambda \leq \lambda_{1, \delta}$.
Hence for all $(1 - \delta)\overline{\Delta} < d < \Delta$, it holds that $\theta^\chi \lambda \leq \lambda \leq \lambda_{1,\delta}(d)$.
By \Cref{lem:lambda-unique}, it holds that $(\beta, \gamma, \theta^\chi \lambda)$ is $d$-unique with gap $\delta$.
The case $\chi = -1$ can be proved in a similar way.

We prove the second part of the lemma. Again, we consider three cases: (1) $\beta = 0$; (2) $\beta > 0$ and $\Delta - 1 \leq (1-\delta)\overline{\Delta}$; (3) $\beta > 0$ and $\Delta - 1 > (1-\delta)\overline{\Delta}$. Case (1) and (2) follow from the same proof. For case (3), we cannot use \Cref{lem:lambda-2-part} because we no longer have $\gamma \leq 1$. However, for the second part, we only need to prove $(\beta,\gamma,\theta^\chi \lambda)$ is $(\Delta-1)$-unique.
If $\chi = \1$, then it holds that $\lambda \leq \tp{\frac{\gamma}{\beta}}^{\Delta/2}$.
By \Cref{lem:lambda-unique}, it holds that $\lambda \leq \lambda_{1,\delta}(\Delta - 1)$.
By \Cref{lem:lambda-unique}, it holds that $(\beta, \gamma, \theta^\chi \lambda)$ is $(\Delta-1)$-unique with gap $\delta$.
The case $\chi = -1$ can be proved in a similar way.
\end{proof}

Finally, we prove the \emph{furthermore} part of \Cref{lem:C-SI-mu}, which states that if $(\Delta - 1) \leq (1 - \delta)\overline{\Delta}$, then for $\*\theta \in (0, 1]^V$, $\nu \triangleq \*\theta^\chi * \mu$ is $(\frac{144}{\delta},0)$-spectrally independent for all $\chi \in \{\0, \1\}$.

\begin{lemma} [\text{\cite[Lemma 36]{chen2020rapid}}]\label{lem:boundedness-1}
  Let $\Delta \geq 3$ be an integer, and let $\beta, \gamma, \lambda$ be real numbers satisfying that $0 \leq \beta \leq \gamma$, $\gamma > 0$, $\lambda > 0$, $\beta\gamma < 1$, and $\sqrt{\beta\gamma} > \frac{\Delta - 2}{\Delta}$.
  For every $y \in [-\infty, +\infty]$, it holds that
  \begin{align*}
    \abs{h(y)} \leq \frac{1.5}{\Delta}.
  \end{align*}
\end{lemma}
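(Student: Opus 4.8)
The plan is to reduce \Cref{lem:boundedness-1} to a one-variable optimization. Substituting $t = e^y \in (0,\infty)$, one has $|h(y)| = \frac{(1-\beta\gamma)\,t}{(\beta t + 1)(t+\gamma)}$, and since this expression tends to $0$ as $t \to 0^+$ or $t \to \infty$ (equivalently, as $y \to \pm\infty$), it suffices to bound it over $t \in (0,\infty)$.

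First I would observe that the hypothesis $\sqrt{\beta\gamma} > \frac{\Delta-2}{\Delta}$ together with $\Delta \geq 3$ forces $\beta > 0$: otherwise the left-hand side is $0$ while the right-hand side is at least $1/3 > 0$. With $\beta > 0$ in hand, divide numerator and denominator by $t$ to rewrite $|h(y)| = \frac{1-\beta\gamma}{\beta t + \gamma/t + (1+\beta\gamma)}$, and apply the AM--GM inequality $\beta t + \gamma/t \geq 2\sqrt{\beta\gamma}$, with equality at $t = \sqrt{\gamma/\beta}$. This yields the sharp bound $\sup_{y}\,|h(y)| = \frac{1-\beta\gamma}{(1+\sqrt{\beta\gamma})^2} = \frac{1-\sqrt{\beta\gamma}}{1+\sqrt{\beta\gamma}}$, using the factorization $1-\beta\gamma = (1-\sqrt{\beta\gamma})(1+\sqrt{\beta\gamma})$.

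Then I would finish by monotonicity and a trivial inequality. The function $s \mapsto \frac{1-s}{1+s}$ is strictly decreasing on $(0,1)$, so from $\sqrt{\beta\gamma} > \frac{\Delta-2}{\Delta}$ we get $\sup_{y}\,|h(y)| < \frac{1 - (\Delta-2)/\Delta}{1 + (\Delta-2)/\Delta} = \frac{1}{\Delta-1}$; and for every integer $\Delta \geq 3$ we have $\frac{1}{\Delta-1} \leq \frac{1.5}{\Delta}$ (equivalent to $\Delta \geq 3$). Combining the two estimates gives $|h(y)| \leq \frac{1.5}{\Delta}$ for all $y \in [-\infty,+\infty]$, as claimed.

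There is no genuine obstacle here — it is a short AM--GM/calculus computation. The only points requiring a word of care are the degenerate case $\beta = 0$, which is excluded by the hypothesis, and the boundary values $y = \pm\infty$, where $h = 0$; both are immediate.
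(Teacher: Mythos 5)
Your proof is correct and coincides with the paper's own computation: in \Cref{sec:append-D}, the case $\sqrt{\beta\gamma}\geq\frac{\Delta-2}{\Delta}$ of \Cref{lem:boundedness} is handled by exactly the same steps, namely rewriting $\abs{h(y)}=\frac{1-\beta\gamma}{\beta \e^y+\gamma \e^{-y}+1+\beta\gamma}$, applying AM--GM to get $\frac{1-\sqrt{\beta\gamma}}{1+\sqrt{\beta\gamma}}\leq\frac{1}{\Delta-1}\leq\frac{1.5}{\Delta}$. Your extra remarks on $\beta>0$ and the endpoints $y=\pm\infty$ are fine but not needed beyond what the paper does.
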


\begin{remark}
\Cref{lem:boundedness-1} is the case S.1 in~\cite[Lemma 36]{chen2020rapid}. 
In \cite{chen2020rapid}, the result is stated for $y \in J$ for some interval $J$.
The proof works for all $y \in [-\infty, +\infty]$ (see proof of Lemma 36 in~\cite{chen2020rapid}).
%
\end{remark}


Note that $\sqrt{\beta \gamma} > \frac{\Delta - 2}{\Delta}$ is equivalent to $(\Delta - 1) < \overline{\Delta}$, which can be deduced from $(\Delta - 1) \leq (1 - \delta)\overline{\Delta}$.
Note that the boundedness condition is guaranteed by  \Cref{lem:boundedness-1}.
By \Cref{lem: two-spin -SI}, \Cref{lem:contraction}, and \Cref{lem:boundedness-1},
it suffices for us to show that for any $\theta \in (0, 1]$, every $\chi \in \{\0, \1\}$, and every $1 \leq d < \Delta$, $(\beta, \gamma, \lambda)$ is $d$-unique with gap $\delta$ implies that $(\beta, \gamma, \theta^\chi \lambda)$ is $d$-unique, which holds trivially by levering \Cref{lem:lambda-unique}.

\subsection{Verifying complete marginal stability}
In this section, we prove \Cref{lem:verify-cbmr}.
Recall that $\+I = (G,\beta,\gamma,\lambda)$ is an anti-ferromagnetic two-spin system instance satisfying \Cref{condition-proof-two-spin} with parameter $\delta \in (0,1)$.
Let $\Delta \geq 3$ denote the maximum degree of $G$.
Let $\mu$ denote the Gibbs distribution of $\+I$.
Let $\pi = \-{flip}(\mu, \chi)$ be the flipped distribution, where $\chi$ is defined in \eqref{eq:uniform-flip}.
We show that $\pi$ is completely $\exp(12^5)$-marginally stable.

Recall that  $\pi$ is the Gibbs distribution of  $\+I_{
\-{flip}}=(G, \bar{\beta},\bar{\gamma},\bar{\lambda})$ defined in~\eqref{eq-def-bar}.
By \Cref{observation-I-flip},
\begin{align}\label{eq-def-local-regular}
	\bar{\beta} \geq 0, \bar{\gamma} > 0, \bar{\beta}\bar{\gamma}<1, \text{ and } 0 < \bar{\lambda} \leq \tp{\frac{\bar{\gamma}}{\bar{\beta}}}^{\Delta/2},
\end{align}
To establish the complete marginal stability, we need to show that $(\*\phi * \pi)$ is marginally stable for all $\*\phi \in (0,1]^V$.
Equivalently, we consider the more general two-spin system instance  $\+J = (G=(V,E), \bar{\beta},\bar{\gamma},(\bar{\lambda}_v)_{v \in V})$ with local fields such that
\begin{align}\label{eq-regular-1}
	\bar{\beta} \geq 0, \bar{\gamma} > 0, \bar{\beta}\bar{\gamma}<1, \text{ and } 0 < \bar{\lambda}_v \leq \tp{\frac{\bar{\gamma}}{\bar{\beta}}}^{\Delta/2} \forall v \in V.	
\end{align}
Let $\nu$ be the Gibbs distribution of $\+J$, we will show that $\nu$ is $\exp(12^5)$-marginally stable.
Note that $\+I  = (G,\beta,\gamma,\lambda)$ satisfies \Cref{condition-proof-two-spin}, which implies
\begin{align}\label{eq-regular-2}
	G \text{ is regular or} \max\{\bar{\beta},\bar{\gamma}\} \leq 1.
\end{align}
%
%

%
%
To prove \Cref{lem:verify-cbmr}, we need the following technical lemmas.
\begin{lemma}\label{lem: two-spin -aux}
For any $0 < \bar{\lambda}_v \leq \tp{\frac{\bar{\gamma}}{\bar{\beta}}}^{\Delta/2}$, $(\bar{\beta},\bar{\gamma}, \bar{\lambda}_v)$ is $(\Delta-1)$-unique (with gap 0) and it holds that 
\begin{itemize}
	\item $\bar{\lambda}_v \bar{\gamma}^{-\Delta} \leq  12^4$;
	\item $\bar{\lambda}_v \bar{\gamma}^{-\Delta} (1 - \bar{\beta}\bar{\gamma}) \leq \frac{12^5}{\Delta}$.
\end{itemize}
\end{lemma}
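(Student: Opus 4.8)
The plan is to separate the two assertions of the lemma: first the qualitative uniqueness claim, then the two numerical bounds, splitting cases according to whether $\bar\beta=0$ and whether $\Delta-1$ exceeds the threshold $\overline{\Delta}:=\tp{1+\sqrt{\bar\beta\bar\gamma}}/\tp{1-\sqrt{\bar\beta\bar\gamma}}$ (the bar-analogue of the $\overline{\Delta}$ used earlier). For the uniqueness claim we use the ambient setting of this subsection, where the relevant field satisfies $\bar\lambda_v\le\bar\lambda$ with $(\bar\beta,\bar\gamma,\bar\lambda)$ being $(\Delta-1)$-unique with gap $\delta$ and $0<\bar\lambda\le\tp{\bar\gamma/\bar\beta}^{\Delta/2}$ (the instance $\+J$ is obtained from $\+I_{\-{flip}}$ by only decreasing the external field at each vertex). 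In particular $(\bar\beta,\bar\gamma,\bar\lambda)$ is $(\Delta-1)$-unique with gap $0$. If $\bar\beta=0$, \Cref{lem:lambda-unique} turns this into $\bar\lambda\le\lambda_{c,0}(\Delta-1)$, so $\bar\lambda_v\le\bar\lambda\le\lambda_{c,0}(\Delta-1)$ and $(0,\bar\gamma,\bar\lambda_v)$ is $(\Delta-1)$-unique. If $\bar\beta>0$ and $\Delta-1\le\overline{\Delta}$, \Cref{lem:lambda-unique} gives $(\Delta-1)$-uniqueness for every field. If $\bar\beta>0$ and $\Delta-1>\overline{\Delta}$, then $\bar\lambda\le\tp{\bar\gamma/\bar\beta}^{\Delta/2}<\lambda_{2,0}(\Delta-1)$ together with the $(\Delta-1)$-uniqueness of $\bar\lambda$ forces $\bar\lambda\le\lambda_{1,0}(\Delta-1)$ by \Cref{lem:lambda-unique}, hence $\bar\lambda_v\le\lambda_{1,0}(\Delta-1)$ and $(\bar\beta,\bar\gamma,\bar\lambda_v)$ is $(\Delta-1)$-unique.

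For the two bounds, abbreviate $t:=\bar\beta\bar\gamma\in[0,1)$ and $w:=\bar\lambda_v\bar\gamma^{-\Delta}$; the hypothesis $\bar\lambda_v\le\tp{\bar\gamma/\bar\beta}^{\Delta/2}$ reads exactly $w\le t^{-\Delta/2}$. When $\Delta-1\le\overline{\Delta}$ (so $\bar\beta>0$ and $\sqrt t\ge\frac{\Delta-2}{\Delta}$) this already gives $w\le t^{-\Delta/2}\le\tp{\frac{\Delta}{\Delta-2}}^{\Delta}\le 27$ and $(1-t)w\le\tp{1-\tp{\frac{\Delta-2}{\Delta}}^{2}}\cdot 27=\frac{108(\Delta-1)}{\Delta^{2}}\le\frac{108}{\Delta}$. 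When $\bar\beta=0$, the uniqueness bound $\bar\lambda_v\le\lambda_{c,0}(\Delta-1)=\frac{(\Delta-1)^{\Delta-1}\bar\gamma^{\Delta}}{(\Delta-2)^{\Delta}}$ gives $w\le\frac{(\Delta-1)^{\Delta-1}}{(\Delta-2)^{\Delta}}\le\frac{2\e}{\Delta-2}$, hence $w\le 2\e$ and $(1-t)w=w\le\frac{2\e}{\Delta-2}$. In both cases $w<12^{4}$ and $(1-t)w<12^{5}/\Delta$ for every $\Delta\ge 3$ by a trivial numerical comparison.

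The remaining case $\bar\beta>0$, $\Delta-1>\overline{\Delta}$ carries the actual content. Here the fixed point $\hat x:=\hat x_{\Delta-1}$ of the tree recursion for $(\bar\beta,\bar\gamma,\bar\lambda_v)$ is monotone increasing in the field, so $\bar\lambda_v\le\lambda_{1,0}(\Delta-1)$ yields $\hat x\le x_{1,0}(\Delta-1)$; rationalizing the formula for $x_{1,0}(\Delta-1)$ in \Cref{lem:lambda-unique} gives $x_{1,0}(\Delta-1)\le\frac{2\bar\gamma}{\zeta_0(\Delta-1)}$, and since $\Delta-1>\overline{\Delta}$ forces $t<\tp{\frac{\Delta-2}{\Delta}}^{2}$ one checks $\zeta_0(\Delta-1)=(\Delta-2)-\Delta t>\frac{2(\Delta-2)}{\Delta}\ge\frac23$ for $\Delta\ge3$, hence $s:=\hat x/\bar\gamma<3$. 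Rewriting the fixed-point equation in terms of $s$ gives the identity $w=s\tp{\frac{1+s}{1+ts}}^{\Delta-1}$, while $(\Delta-1)$-uniqueness with gap $0$ at $\hat x$ reads $(\Delta-1)\frac{(1-t)s}{(1+ts)(1+s)}\le1$. Using $\frac{1+s}{1+ts}=1+\frac{(1-t)s}{1+ts}\le\exp\tp{\frac{(1-t)s}{1+ts}}$ together with the last inequality yields $\tp{\frac{1+s}{1+ts}}^{\Delta-1}\le\e^{1+s}<\e^{4}$, so $w<3\e^{4}<12^{4}$ and $(1-t)w\le\frac{(1+ts)(1+s)}{\Delta-1}\,\e^{4}<\frac{16\e^{4}}{\Delta-1}<\frac{12^{5}}{\Delta}$.

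The main obstacle is precisely this last case: producing a bound on the fixed point $\hat x$ that is uniform in both the field and the degree (the estimate $\zeta_0(\Delta-1)>\frac23$), and then converting the single uniqueness inequality plus the fixed-point identity into the exponential control of $w$ and $(1-t)w$. The other two cases reduce to elementary estimates on $\tp{\Delta/(\Delta-2)}^{\Delta}$ and $(\Delta-1)^{\Delta-1}/(\Delta-2)^{\Delta}$ over integers $\Delta\ge3$, and the wide constants $12^{4}$, $12^{5}$ leave ample slack, so no delicate optimization of these numerics is needed.
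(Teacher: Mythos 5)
Your proof is correct, but for the quantitative bounds it takes a genuinely different route from the paper. For the uniqueness claim you do essentially what the paper does: exploit the flip-invariance of the uniqueness condition and \Cref{lem:lambda-unique}, together with monotonicity of the field ($\bar{\lambda}_v \le \bar{\lambda} \le \lambda_{1,0}(\Delta-1)$ in the supercritical-degree case), so that part coincides with the paper's argument up to reparametrization. For the two numerical bounds, however, the paper splits at $\Delta$ versus $2\overline{\Delta}$ and imports the structural field bounds of \cite[Lemma 35]{chen2020rapid} (restated as \Cref{lem: two-spin -aux-aux}), e.g.\ $\lambda \le 18\gamma^{d+1}/\theta(d)$, and then estimates $\theta(d)\ge 1/3$ and $1-\beta\gamma\le 8/\Delta$; you instead split at $\Delta-1$ versus $\overline{\Delta}$ and argue self-containedly: in the small-degree case the cap $\bar{\lambda}_v\le\tp{\bar{\gamma}/\bar{\beta}}^{\Delta/2}$ alone gives $w\le \tp{\Delta/(\Delta-2)}^{\Delta}\le 27$, and in the main case you bound the normalized fixed point $s=\hat{x}/\bar{\gamma}<3$ from the rationalized formula for $x_{1,0}(\Delta-1)$ (using $\zeta_0(\Delta-1)=(\Delta-2)-\Delta\bar{\beta}\bar{\gamma}>2/3$), and then combine the fixed-point identity $w=s\tp{(1+s)/(1+ts)}^{\Delta-1}$ with the gap-$0$ derivative inequality to get $w\le s\,\e^{1+s}$ and $(1-t)w\le\frac{(1+ts)(1+s)}{\Delta-1}\e^{1+s}$. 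Each step you use checks out (the identity, the uniqueness inequality rewritten in $s$, the monotonicity of the fixed point in the field, and the elementary numerics), and your constants ($3\e^4$ and $16\e^4/(\Delta-1)$) sit comfortably inside $12^4$ and $12^5/\Delta$. What your approach buys is independence from the external lemma of \cite{chen2020rapid} and slightly cleaner constants; what the paper's approach buys is a shorter argument that delegates the fixed-point analysis to prior work. The only point worth flagging is presentational: like the paper, you apply \Cref{lem:lambda-unique} to the flipped parameters, which may have $\bar{\beta}>\bar{\gamma}$; a one-line remark that the characterization $\lambda\in(0,\lambda_{1,\delta}]\cup[\lambda_{2,\delta},\infty)$ transfers under the swap $(\beta,\gamma,\lambda)\mapsto(\gamma,\beta,1/\lambda)$ (as the paper notes via $\lambda_{2,\delta}(d;\beta,\gamma)=1/\lambda_{1,\delta}(d;\gamma,\beta)$) would make this airtight.
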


We remark that
compared to the assumption in~\eqref{eq:2-spin-parameters-WLOG}, $(\bar{\beta},\bar{\gamma}, \bar{\lambda}_v)$ may not always satisfy $\bar{\beta} \leq \bar{\gamma}$, but the definition of the uniqueness condition literally follows \Cref{definition-d-uniqueness}.
The uniqueness condition is well-defined because $F_d(x) = x$ has a unique solution if $\bar{\beta}\bar{\gamma} < 1$.

\begin{lemma}\label{lem:cmbr-worst-case}
  Let $d = \Delta - 1$ and $\bar{\lambda}_{\max} = \max_{u \in V} \bar{\lambda}_u$.
  Let $v \in V$, $S \subseteq \Lambda \subseteq V \setminus \{v\}$, and $\sigma \in \Omega(\mu_\Lambda)$ be a partial pinning.
  It holds that
  \begin{align*}
    R^\sigma_v
    &\leq \frac{F_{\Delta}^{\bar{\lambda}_{\max}}(0)}{F_{\Delta}^{\bar{\lambda}_{\max}} \circ F_d^{\bar{\lambda}_{\max}} (0)}  R^{\sigma_S}_v,
  \end{align*}
  where $R^\sigma_v = \nu^\sigma_v(\1)/\nu^\sigma_v(\0)$ is the marginal ratio of $\nu^\sigma$, and 
  \begin{align} \label{eq:tree-recur}
   \forall \lambda > 0, d\in \mathds{Z}_{>0},\quad F_d^{\lambda}(x) = \lambda \tp{\frac{\beta x + 1}{x + \gamma}}^d,
  \end{align}
  is the uniform tree-recursion function. 
\end{lemma}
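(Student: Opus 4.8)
The plan is to pass to Weitz's self-avoiding walk (SAW) tree and then compare the two root marginal ratios edge by edge. Write $\phi(x)\triangleq \frac{\bar\beta x+1}{x+\bar\gamma}$ for the edge map of the tree recursion of $\nu$ (the Gibbs distribution of $\+J=(G,\bar\beta,\bar\gamma,(\bar\lambda_v)_v)$); since $\bar\beta\bar\gamma<1$ we have $\phi'(x)=\frac{\bar\beta\bar\gamma-1}{(x+\bar\gamma)^2}<0$, so $\phi$ is strictly decreasing on $[0,+\infty]$, with $\phi(0)=\bar\gamma^{-1}$ and $\phi(+\infty)=\bar\beta$; in particular $\phi$ takes values in $[\bar\beta,\bar\gamma^{-1}]$, so every marginal ratio of $\nu$ is finite. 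Let $T=T_{\mathrm{SAW}}(G,v)$ be the SAW tree rooted at $\hat v$ (cf.\ \cite{weitz2006counting}); its combinatorial structure is independent of the boundary condition, its root has exactly $N\triangleq\deg_G(v)\le\Delta$ children, every non-root internal node has at most $d=\Delta-1$ children (with equality when $G$ is $\Delta$-regular), and cycle-closing nodes as well as nodes whose underlying vertex lies in the pinning set are leaves carrying a fixed spin. By Weitz's identity $R_v^\sigma=\bar\lambda_v\prod_{i=1}^{N}\phi(R_{\hat u_i}^\sigma)$, where $\hat u_1,\dots,\hat u_N$ are the children of $\hat v$ (corresponding to the distinct neighbours $u_1,\dots,u_N$ of $v$) and $R_{\hat u_i}^\sigma$ is the marginal ratio at $\hat u_i$ in its subtree under the boundary induced by $\sigma$; the same identity holds verbatim with $\sigma$ replaced by $\sigma_S$, over the same list of children.

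The first step is a one-level bound on the subtree ratios: if $u_i\notin S$, then $\hat u_i$ is not pruned at its root by the $\sigma_S$-boundary, so $R_{\hat u_i}^{\sigma_S}=\bar\lambda_{u_i}\prod_{j}\phi(R_{c_j}^{\sigma_S})\le\bar\lambda_{u_i}\,\phi(0)^{m}=\bar\lambda_{u_i}\,\bar\gamma^{-m}$, where $m\le d$ is the number of children of $\hat u_i$ and we used only $R_{c_j}^{\sigma_S}\ge 0$ together with the monotonicity of $\phi$. When $\bar\gamma\le 1$ we have $\bar\gamma^{-m}\le\bar\gamma^{-d}$; when $\bar\gamma>1$, \Cref{condition-proof-two-spin} (equivalently \eqref{eq-regular-2}) forces $G$ to be $\Delta$-regular, so $m=d$; in either case $R_{\hat u_i}^{\sigma_S}\le\bar\lambda_{\max}\,\bar\gamma^{-d}=F_d^{\bar\lambda_{\max}}(0)=:x^{\ast}$.

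Now the heart of the argument. The claimed inequality is trivial when $R_v^\sigma=0$, so assume $R_v^\sigma>0$; then every factor $\phi(R_{\hat u_i}^\sigma)$ is positive, and $R_v^{\sigma_S}>0$ as well (any $\sigma$-consistent full configuration of positive probability that sets $v$ to $\1$ is also $\sigma_S$-consistent), so the quotient below is well defined. I claim $\phi(R_{\hat u_i}^\sigma)\big/\phi(R_{\hat u_i}^{\sigma_S})\le \phi(0)/\phi(x^{\ast})=\bigl(\bar\gamma\,\phi(x^{\ast})\bigr)^{-1}$ for every $i$, by cases on the location of $u_i$. If $u_i\in S$, then $u_i\in\Lambda$ and $\sigma_{u_i}=(\sigma_S)_{u_i}$, so $R_{\hat u_i}^\sigma$ and $R_{\hat u_i}^{\sigma_S}$ are the same element of $\{0,+\infty\}$ and the quotient equals $1\le(\bar\gamma\,\phi(x^{\ast}))^{-1}$ since $\bar\gamma\,\phi(x^{\ast})\le\bar\gamma\,\phi(0)=1$. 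If $u_i\in\Lambda\setminus S$, then $R_{\hat u_i}^\sigma\in\{0,+\infty\}$ gives $\phi(R_{\hat u_i}^\sigma)\le\phi(0)$, while $\hat u_i$ is not pruned under $\sigma_S$, so $R_{\hat u_i}^{\sigma_S}\le x^{\ast}$ and $\phi(R_{\hat u_i}^{\sigma_S})\ge\phi(x^{\ast})$. If $u_i\notin\Lambda$, then $R_{\hat u_i}^\sigma\ge 0$ gives $\phi(R_{\hat u_i}^\sigma)\le\phi(0)$, and again $R_{\hat u_i}^{\sigma_S}\le x^{\ast}$ by the previous step. Multiplying over the $N\le\Delta$ children and using $(\bar\gamma\,\phi(x^{\ast}))^{-1}\ge 1$,
\[
  \frac{R_v^\sigma}{R_v^{\sigma_S}}=\prod_{i=1}^{N}\frac{\phi(R_{\hat u_i}^\sigma)}{\phi(R_{\hat u_i}^{\sigma_S})}\le\Bigl(\bar\gamma\,\phi(x^{\ast})\Bigr)^{-\Delta}=\frac{\bar\lambda_{\max}\,\bar\gamma^{-\Delta}}{\bar\lambda_{\max}\,\phi(x^{\ast})^{\Delta}}=\frac{F_\Delta^{\bar\lambda_{\max}}(0)}{F_\Delta^{\bar\lambda_{\max}}\!\circ F_d^{\bar\lambda_{\max}}(0)},
\]
which is the claim.

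The main obstacle I anticipate is purely organizational: setting up the SAW tree with pinnings so that the trees underlying $R_v^\sigma$ and $R_v^{\sigma_S}$ are literally the same object (only the boundary spins differ), so that the children of the root are in a canonical bijection, and so that whether a depth-one node is pruned at its root is governed exactly by membership in $S$; once this bookkeeping is in place, every quantitative estimate reduces to the elementary monotonicity of $\phi$ and the observation $\phi\in[\bar\beta,\bar\gamma^{-1}]$, with the $\bar\gamma>1$ case handled uniformly by invoking regularity from \Cref{condition-proof-two-spin}.
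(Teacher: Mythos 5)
Your proposal is correct and takes essentially the same route as the paper: pass to Weitz's self-avoiding walk tree, bound each unpinned depth-one subtree ratio by $x^\ast=F_d^{\bar\lambda_{\max}}(0)=\bar\lambda_{\max}\bar\gamma^{-d}$ via monotonicity of $\phi$ (with the $\bar\gamma>1$ case handled by $\Delta$-regularity), and pad the per-neighbour factor $\bigl(\bar\gamma\,\phi(x^\ast)\bigr)^{-1}\ge 1$ up to exponent $\Delta$. The only cosmetic difference is that the paper bounds $R^\sigma_v$ from above by the crude worst-case pinning of the neighbours directly in $G$ and bounds $R^{\sigma_S}_v$ from below through the tree recursion, whereas you factorize both ratios over the same tree and compare child by child; the cancellations over $S$-pinned neighbours are identical, so the two arguments coincide in substance.
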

\begin{remark}
  Intuitively, \Cref{lem:cmbr-worst-case} says that the worst case of $R^\sigma_v/R^{\sigma_S}_v$ is achieved by a $\Delta$-regular tree rooted at $v$, where $\sigma_S$ fixes the values of all the vertices in $\{u \in V \mid \-{dist}_G(u, v) = 2\}$ to $\0$ and $\sigma$ further fixes the values of  all the vertices in $\{u \in V\mid \-{dist}_G(u, v) = 1\}$ to $\0$.
\end{remark}

The proofs of \Cref{lem: two-spin -aux} and \Cref{lem:cmbr-worst-case} are deferred to \Cref{sec: two-spin -aux-C} and \Cref{section-proof-aux-1} respectively.


We are ready to prove \Cref{lem:verify-cbmr}.
\begin{proof}[Proof of \Cref{lem:verify-cbmr}]
Let $\nu$ be the Gibbs distribution of $\+J$.
To prove that $\pi$ is complete $\exp(12^5)$-marginally stable, it suffices for us to show that $\nu$ is $\exp(12^5)$-marginally stable.

Let $S \subseteq \Lambda \subseteq V$, $v \in V \setminus \Lambda$ and $\sigma \in \Omega\tp{\nu_{V\setminus\Lambda}}$ be a partial configuration on $V \setminus \Lambda$.
We will show that
\begin{align*}
  R^{\sigma}_v \le C \text{ and } R^{\sigma}_v \le C R^{\sigma_S}_v, 
\end{align*}
where $R^{\sigma} \triangleq \frac{\nu^\sigma_v(\1)}{\nu^\sigma_v(\0)}$ denotes the marginal ratio of $\nu^\sigma$, and $C=\exp(12^5)$ be a universal constant.

For the first part, by considering the worst pinning of all neighbors of $v$, we have
\begin{align*}
  R^\sigma_v  \leq \bar{\lambda}_v\bar{\gamma}^{-\Delta}  \leq 12^4 \leq C,
\end{align*}
where inequalities follow from anti-ferromagnetism and \Cref{lem: two-spin -aux} respectively.
For the second part, we may assume that $R_v^{\sigma_S} > 0$, otherwise $R_v^\sigma = R_v^{\sigma_S} = 0$.
%
By \Cref{lem:cmbr-worst-case}, it holds that 
\begin{align*}
  \frac{R^\sigma_v}{R^{\sigma_S}_v} &\le \frac{F_{\Delta}^{\bar{\lambda}_{\max}}(0)}{F_{\Delta}^{\bar{\lambda}_{\max}} \circ F_d^{\bar{\lambda}_{\max}} (0)}
  = \frac{\bar{\lambda}_{\max} \bar{\gamma}^{-\Delta}}{\bar{\lambda}_{\max} \tp{\frac{\bar{\beta}\bar{\lambda}_{\max} \bar{\gamma}^{-d} + 1}{\bar{\lambda}_{\max} \bar{\gamma}^{-d} + \bar{\gamma}}}^\Delta}
    = \tp{\frac{\bar{\lambda}_{\max} + \bar{\gamma}^{d+1}}{\bar{\gamma}(\bar{\lambda}_{\max} \bar{\beta} + \bar{\gamma}^d)}}^\Delta \\
  &= \tp{1 + \frac{\bar{\lambda}_{\max} ( 1 - \bar{\beta}\bar{\gamma})}{ \bar{\lambda}_{\max} \bar{\beta}\bar{\gamma} + \bar{\gamma}^{d+1}}}^\Delta
   \leq \tp{1 + \frac{\bar{\lambda}_{\max} (1 - \bar{\beta}\bar{\gamma})}{\bar{\gamma}^{d+1}}}^\Delta
   \overset{(\star)}{\leq} \tp{1 + \frac{12^5}{\Delta}}^\Delta \leq \exp(12^5) = C,
\end{align*}
where $(\star)$ holds by \Cref{lem: two-spin -aux}. This concludes the proof.
\end{proof}

\subsubsection{Proof of \Cref{lem: two-spin -aux}} \label{sec: two-spin -aux-C}
In this section, we prove \Cref{lem: two-spin -aux}.
We first show that $(\bar{\beta},\bar{\gamma},\bar{\lambda}_v)$ is $d$-unique (with gap 0) for $d = \Delta - 1$.
Note that by \Cref{condition-proof-two-spin}, $(\beta, \gamma,\lambda)$ is $d$-unique with gap $\delta$.

Suppose $\lambda \leq (\frac{\gamma}{\beta})^{\Delta/ 2}$, then we have $\bar{\lambda} = \lambda, \bar{\beta} = \beta, \bar{\gamma} = \gamma$.
By \Cref{lem:lambda-unique}, we know that when $d \leq (1 - \delta) \overline{\Delta}$, it holds that $(\bar{\beta}, \bar{\gamma}, \bar{\lambda}_v)$ is $d$-unique with gap $\delta$; and when $d > (1 - \delta)\overline{\Delta}$, it holds that $\bar{\lambda}_v \leq \lambda \leq \lambda_{1,\delta}(d) < (\frac{\gamma}{\beta})^{\Delta/2}$, which implies $(\bar{\beta}, \bar{\gamma}, \bar{\lambda}_v)$ is $d$-unique with gap $\delta$.
The $\lambda > (\frac{\gamma}{\beta})^{\Delta/2}$ case is almost the same by noticing that when we fix $d$ and $\delta$, then $\lambda_{1,\delta}(d)$ and $\lambda_{2, \delta}(d)$ are actually functions of $\beta, \gamma$ that could be written as $\lambda_{1,\delta}(d;\beta, \gamma), \lambda_{2,\delta}(d; \beta, \gamma)$, and 
\begin{align*}
  \lambda \geq \lambda_{2,\delta}(d; \beta, \gamma) \quad \Longleftrightarrow \quad \bar{\lambda} \leq \lambda_{1, \delta}(d; \bar{\beta}, \bar{\gamma}),
\end{align*}
where $\bar{\lambda} = 1/\lambda, \bar{\beta} = \gamma$, and $\bar{\gamma} = \beta$.
Finally, note that since $(\bar{\beta}, \bar{\gamma}, \bar{\lambda}_v)$ is $d$-unique with gap $\delta$, it is also $d$-unique (with gap $0$).


Combining with \Cref{observation-I-flip}, it suffices to prove the following result: for any $\beta, \gamma, \lambda$ with $\beta \geq 0$, $\gamma > 0$, $\beta\gamma < 1$ and $0 < \lambda \leq (\gamma/\beta)^{\Delta/2}$ that is $(\Delta-1)$-unique, it holds that  $\lambda\gamma^{-\Delta} \leq 12^4$ and $\lambda \gamma^{-\Delta}(1-\beta\gamma ) \leq 12^5/\Delta$. We need the following lemma.

\begin{lemma}[\text{\cite[Lemma 35]{chen2020rapid}}]\label{lem: two-spin -aux-aux}
  Let $\Delta \geq 3$ be an integer and $d \triangleq \Delta - 1$.
  Let $\beta, \gamma, \lambda$ be real numbers such that $\beta \geq 0, \gamma > 0, \beta\gamma < 1, \lambda > 0$ and $(\beta, \gamma, \lambda)$ is $d$-unique (with gap 0). 
  \begin{enumerate}
    \item If $\beta = 0$, then we have $\lambda \leq \frac{4\gamma^{d+1}}{d-1}$.
  \item If $\beta > 0$ and $\sqrt{\beta\gamma} \leq \frac{\Delta-2}{\Delta}$, it holds that
    \begin{align*}
      \text{either} \quad &\lambda \leq \frac{18\gamma^{d+1}}{\theta(d)} \quad
      \text{or} \quad \lambda \geq \frac{\theta(d)}{18\beta^{d+1}},
    \end{align*}
    where $\theta(d) \triangleq d(1 - \beta\gamma) - (1 + \beta\gamma)$.
  \end{enumerate}
\end{lemma}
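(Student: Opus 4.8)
The plan is to reduce the statement to elementary facts about the unique fixed point $\hat{x}_d$ of the tree recursion $F_d(x)=\lambda\tp{\frac{\beta x+1}{x+\gamma}}^d$. Since $\hat{x}_d=F_d(\hat{x}_d)$ we may write $\lambda=g(\hat{x}_d)$ with $g(x):=x\tp{\frac{x+\gamma}{\beta x+1}}^d$, and the logarithmic derivative $\frac{g'(x)}{g(x)}=\frac{1}{x}+\frac{d(1-\beta\gamma)}{(x+\gamma)(\beta x+1)}$ is positive because $\beta\gamma<1$, so $g$ is strictly increasing on $(0,\infty)$. By \Cref{definition-d-uniqueness}, $d$-uniqueness with gap $0$ reads $\abs{F_d'(\hat{x}_d)}=\frac{d(1-\beta\gamma)\hat{x}_d}{(\beta\hat{x}_d+1)(\hat{x}_d+\gamma)}\le 1$; expanding $(\beta\hat{x}_d+1)(\hat{x}_d+\gamma)=\beta\hat{x}_d^2+(1+\beta\gamma)\hat{x}_d+\gamma$ and rearranging, this is exactly $q(\hat{x}_d)\ge 0$ where $q(x):=\beta x^2-\theta(d)x+\gamma$.

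For item~1 ($\beta=0$): here $\abs{F_d'(\hat{x}_d)}=\frac{d\hat{x}_d}{\hat{x}_d+\gamma}\le1$ forces $\hat{x}_d\le\frac{\gamma}{d-1}$, so by monotonicity of $g$, $\lambda=\hat{x}_d(\hat{x}_d+\gamma)^d\le\frac{\gamma}{d-1}\tp{\frac{d\gamma}{d-1}}^d=\frac{\gamma^{d+1}}{d-1}\tp{1+\frac{1}{d-1}}^d$; since $d\mapsto\tp{1+\frac{1}{d-1}}^d$ is decreasing on $[2,\infty)$ (from $\log(1+u)<u$) with value $4$ at $d=2$, this gives $\lambda\le\frac{4\gamma^{d+1}}{d-1}$.

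For item~2 ($\beta>0$): first note $\sqrt{\beta\gamma}\le\frac{\Delta-2}{\Delta}$ is equivalent to $d\ge\frac{1+\sqrt{\beta\gamma}}{1-\sqrt{\beta\gamma}}$, whence $\theta(d)-2\sqrt{\beta\gamma}=(1+\sqrt{\beta\gamma})\tp{d(1-\sqrt{\beta\gamma})-(1+\sqrt{\beta\gamma})}\ge0$; so $\theta(d)>0$ and the discriminant $\theta(d)^2-4\beta\gamma\ge0$, and $q$ has positive real roots $x_1\le x_2$ with $x_1x_2=\gamma/\beta$, $x_1+x_2=\theta(d)/\beta$. The condition $q(\hat{x}_d)\ge0$ says $\hat{x}_d\le x_1$ or $\hat{x}_d\ge x_2$, so by monotonicity, either $\lambda=g(\hat{x}_d)\le g(x_1)=:\lambda_1$ or $\lambda\ge g(x_2)=:\lambda_2$. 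Using $x_1x_2$, $x_1+x_2$, and $1+\theta(d)+\beta\gamma=d(1-\beta\gamma)$ one checks $\lambda_1\lambda_2=(\gamma/\beta)^{d+1}$, so it suffices to prove the single bound $\lambda_1\le\frac{18\gamma^{d+1}}{\theta(d)}$ (the bound on $\lambda_2$ then follows). For this, $q(x_1)=0$ gives $\beta x_1^2+\gamma=\theta(d)x_1$, hence $(\beta x_1+1)(x_1+\gamma)=d(1-\beta\gamma)x_1$ and, after a short calculation, the exact identity $\frac{x_1+\gamma}{\beta x_1+1}=\gamma+\frac{x_1+\gamma}{d}$; also $x_1=\frac{\gamma/\beta}{x_2}\le\frac{\gamma/\beta}{\theta(d)/(2\beta)}=\frac{2\gamma}{\theta(d)}$. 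Thus $\lambda_1=x_1\gamma^d\tp{1+\frac{x_1+\gamma}{d\gamma}}^d$, where $\frac{x_1+\gamma}{d\gamma}\le\frac{2/\theta(d)+1}{d}\le\frac{2}{d-1}$ using $\theta(d)\ge\frac{2(d-1)}{d+1}$ (the minimum of $\theta(d)$ over the admissible range $0<\sqrt{\beta\gamma}\le\frac{d-1}{d+1}$). Since $d\mapsto\tp{1+\frac{2}{d-1}}^d$ is decreasing on $[2,\infty)$ with value $9$ at $d=2$, we get $\tp{1+\frac{x_1+\gamma}{d\gamma}}^d\le9$, and combined with $x_1\gamma^d\le\frac{2\gamma^{d+1}}{\theta(d)}$ this yields $\lambda_1\le\frac{18\gamma^{d+1}}{\theta(d)}$.

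I expect the main difficulty to be nailing down the explicit constants (the $4$ and the $18$), not the structure: a crude bound such as $\frac{x_1+\gamma}{\beta x_1+1}\le x_1+\gamma$ produces only an $e^{\Theta(1)}$ factor, and because item~2 is essentially tight at $(\Delta,\beta\gamma)=(3,\frac{1}{9})$ (where it holds with constant exactly $18$), one must use the exact identity $\frac{x_1+\gamma}{\beta x_1+1}=\gamma+\frac{x_1+\gamma}{d}$ together with the sharp inequalities $x_1\le\frac{2\gamma}{\theta(d)}$, $\theta(d)\ge\frac{2(d-1)}{d+1}$, and the monotonicity of $d\mapsto\tp{1+\frac{2}{d-1}}^d$ and $d\mapsto\tp{1+\frac{1}{d-1}}^d$. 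These last facts and the bound $\theta(d)\ge\frac{2(d-1)}{d+1}$ are routine one-variable calculus, but have essentially no slack, so they need to be carried out carefully.
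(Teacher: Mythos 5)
Your proof is correct. Note that the paper itself gives no in-text proof of this lemma: it is quoted from \cite{chen2020rapid} together with the remark that the original Lemma~35 there additionally assumes $\beta\le\gamma$, and that the stated version ``can be verified by routinely going through the proof'' of that work. Your derivation therefore fills in exactly what the paper leaves to the reader, and it does so in the expected way: reducing gap-$0$ uniqueness to $q(\hat{x}_d)=\beta\hat{x}_d^2-\theta(d)\hat{x}_d+\gamma\ge 0$, using monotonicity of $g(x)=x\bigl(\tfrac{x+\gamma}{\beta x+1}\bigr)^d$ to pass to the critical fields $\lambda_i=g(x_i)$ at the roots $x_1\le x_2$ of $q$, and exploiting $\lambda_1\lambda_2=(\gamma/\beta)^{d+1}$ so that only the bound on $\lambda_1$ needs to be proved. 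These are precisely the objects $x_{i,\delta}(d)$, $\lambda_{i,\delta}(d)$ (at $\delta=0$) appearing in \Cref{lem:lambda-unique}, so your route is the standard one rather than a genuinely new argument. I checked the delicate steps: the identity $\tfrac{x_1+\gamma}{\beta x_1+1}=\gamma+\tfrac{x_1+\gamma}{d}$ follows from $(\beta x_1+1)(x_1+\gamma)=d(1-\beta\gamma)x_1$; the bounds $x_1\le 2\gamma/\theta(d)$, $\theta(d)\ge\tfrac{2(d-1)}{d+1}$, and the monotone decrease of $\bigl(1+\tfrac{1}{d-1}\bigr)^d$ and $\bigl(1+\tfrac{2}{d-1}\bigr)^d$ (the latter via $\log\tfrac{d+1}{d-1}\le\tfrac{2d}{d^2-1}$) all hold, giving the constants $4$ and $18$ with equality in the extremal case $d=2$, $\sqrt{\beta\gamma}=1/3$, as you observe. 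Importantly, your argument nowhere uses $\beta\le\gamma$, which is exactly the relaxation the paper's application to the flipped instance $(\bar\beta,\bar\gamma,\bar\lambda)$ requires.
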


\begin{remark}
Lemma 35 in \cite{chen2020rapid} further assumes $\beta \leq \gamma$.
We remark that \Cref{lem: two-spin -aux-aux} can be verified by routinely going through the proof in~\cite{chen2020rapid}. 
\end{remark}

We first show that $\lambda\gamma^{-\Delta} \leq  12^4$.
Let $\overline{\Delta}\triangleq \frac{1 + \sqrt{\beta\gamma}}{1 - \sqrt{\beta\gamma}}$.
We consider $3$ cases:
(1) $\beta > 0$ and $\Delta < 2\overline{\Delta}$;
(2) $\beta > 0$ and $\Delta \geq 2\overline{\Delta}$;
(3) $\beta = 0$.

\paragraph{\textbf{Case (1): $\beta > 0$ and $\Delta < 2\overline{\Delta}$}}
Note that we have $\lambda \leq \tp{\frac{\gamma}{\beta}}^{\Delta/2}$, so it holds that
\begin{align*}
  \lambda\gamma^{-\Delta} \leq \tp{\frac{1}{\beta\gamma}}^{\Delta / 2} \leq  \tp{\beta\gamma}^{-\Delta} \leq \tp{\beta\gamma}^{-2\overline{\Delta}}.
\end{align*}

Note that from $\overline{\Delta} = \frac{1 + \sqrt{\beta\gamma}}{1 - \sqrt{\beta\gamma}}$, we have $(\beta\gamma)^{-1} = \tp{\frac{\overline{\Delta} + 1}{\overline{\Delta} - 1}}^2$.
Moreover, we have
\begin{align*}
  (\beta\gamma)^{- 2 \overline{\Delta}}
    = \tp{\frac{\overline{\Delta} + 1}{\overline{\Delta} - 1}}^{4\overline{\Delta}}
    \leq 12^{4},
\end{align*}
where in the last inequality, we use the fact that $3 \leq \Delta < 2 \overline{\Delta}$ which means $\overline{\Delta} > \frac{3}{2}$.
\paragraph{\textbf{Case (2): $\beta > 0$ and $\Delta \geq 2\overline{\Delta}$}}
In this case, $d \geq \frac{2}{3}\Delta \geq \frac{4}{3}\overline{\Delta} \geq  \overline{\Delta}$ is achieved, which means $\sqrt{\beta\gamma} \leq \frac{\Delta - 2}{\Delta}$.
In this case, by \Cref{lem: two-spin -aux-aux} with $\sqrt{\beta\gamma} \leq \frac{\Delta - 2}{\Delta}$ and the fact that $\lambda \leq \tp{\frac{\gamma}{\beta}}^{\Delta/2}$, it holds that
\begin{align*}
  \lambda\gamma^{-\Delta} &\leq \frac{18}{\theta(d)},
\end{align*}
where $d \triangleq \Delta - 1$ and $\theta(d) = d(1 - \beta\gamma) - (1 + \beta\gamma)$.
Note that we have
\begin{align*}
  \theta(d)
  &\overset{(\star)}{=} d\tp{1 - \tp{\frac{\overline{\Delta} - 1}{\overline{\Delta} + 1}}^2} - \tp{1 + \tp{\frac{\overline{\Delta} - 1}{\overline{\Delta} + 1}}^2}\\ 
  &= \frac{4\overline{\Delta}d - 2(\overline{\Delta}^2 + 1)}{(\overline{\Delta} + 1)^2}
   \overset{(*)}{\geq} \frac{\frac{4}{3} \cdot 4\overline{\Delta}^2 - 2(\overline{\Delta}^2 + 1)}{(\overline{\Delta} + 1)^2}
   \overset{(+)}{\geq} \frac{\frac{4}{3} \overline{\Delta}^2}{(\overline{\Delta} + 1)^2} \geq \frac{1}{3},
\end{align*}
where $(\star)$ holds by the fact that $\beta\gamma = \tp{\frac{\overline{\Delta} - 1}{\overline{\Delta} + 1}}^2$, $(*)$ holds by the fact that $d \geq \frac{2}{3}\Delta \geq \frac{4}{3}\overline{\Delta}$, $(+)$ holds by the fact that $\overline{\Delta} \geq 1$, and the last inequality holds by the fact that the function $f(x) \triangleq \frac{4x^2}{(x+1)^2}$ is monotone increasing when $x > 0$ and $f(1) = 1$.
Hence, in this case, we have $\lambda\gamma^{-\Delta} \leq 54$.
\paragraph{\textbf{Case (3): $\beta = 0$}}
In this case, by \Cref{lem: two-spin -aux-aux}, it holds that
\begin{align*}
  \lambda\gamma^{-\Delta} \leq \frac{4}{d-1} \leq 4,
\end{align*}
where $d \triangleq \Delta - 1 \geq 2$.

We next show that $\lambda\gamma^{-\Delta}\tp{1 - \beta\gamma} \leq \frac{12^5}{\Delta}$.
Let $\overline{\Delta} \triangleq \frac{1 + \sqrt{\beta\gamma}}{1 - \sqrt{\beta\gamma}}$, we consider $3$ cases: 
(1) $\beta > 0$ and $\Delta < 2 \overline{\Delta}$;
(2) $\beta > 0$ and $\Delta \geq 2 \overline{\Delta}$;
(3) $\beta = 0$.

\paragraph{\textbf{Case (1): $\beta > 0$ and $\Delta < 2 \overline{\Delta}$}}
First, by the previous result, it holds that $\lambda\gamma^{-\Delta} \leq 12^4$.
Note that from $\overline{\Delta} = \frac{1 + \sqrt{\beta\gamma}}{1 - \sqrt{\beta\gamma}}$, we have $(\beta\gamma)^{-1} = \tp{\frac{\overline{\Delta} + 1}{\overline{\Delta} - 1}}^2$ which implies
\begin{align*}
  1 - \beta\gamma
  &= 1 - \tp{\frac{\overline{\Delta} - 1}{\overline{\Delta} + 1}}^2
  = \frac{4\overline{\Delta}}{(\overline{\Delta} + 1)^2}
  \leq \frac{4}{\;\overline{\Delta}\;}
  \leq \frac{8}{\Delta},
\end{align*}
where in the last inequality, we use the fact that $\Delta < 2 \overline{\Delta}$.
Hence, it holds that
\begin{align*}
  \lambda\gamma^{-\Delta}(1 - \beta\gamma) \leq \frac{8 \cdot 12^4}{\Delta}.
\end{align*}

\paragraph{\textbf{Case (2): $\beta > 0$ and $\Delta \geq 2 \overline{\Delta}$}}
Note that $\frac{3}{2}d \geq \Delta \geq 2\overline{\Delta}$, it holds that $d \geq \frac{4}{3} \overline{\Delta} \geq \overline{\Delta}$, which means $\sqrt{\beta\gamma} \leq \frac{\Delta - 2}{\Delta}$.
By \Cref{lem: two-spin -aux} with $\sqrt{\beta\gamma} \leq \frac{\Delta - 2}{\Delta}$ and the fact that $\lambda \leq \tp{\frac{\gamma}{\beta}}^{\Delta/2}$, it holds that:
\begin{align*}
  \lambda  \leq \frac{18\gamma^{d+1}}{\theta(d)},
\end{align*}
where $d \triangleq \Delta - 1$ and $\theta(d) \triangleq d(1 - \beta\gamma) - (1 + \beta\gamma)$.
This lead us to
\begin{align*}
  \frac{\lambda (1 - \beta\gamma)}{\gamma^{d+1}}
  &\leq \frac{18 (1 - \beta\gamma)}{\theta(d)}
    = \frac{18(1 - \beta\gamma)}{d(1 - \beta\gamma) - (1 + \beta\gamma)} = \frac{18}{d - \frac{1 + \beta\gamma}{1 - \beta\gamma}}\\
  (\text{by } \Delta = d + 1 \geq 2 \overline{\Delta})\quad&\leq \frac{18}{d - \frac{(1 + \sqrt{\beta\gamma})^2}{(1 + \sqrt{\beta\gamma})(1 - \sqrt{\beta\gamma})}} =  \frac{18}{d - \frac{1 + \sqrt{\beta\gamma}}{1 - \sqrt{\beta\gamma}}}\\
  &= \frac{18}{d - \overline{\Delta}}
  \overset{(\star)}{\leq} \frac{72}{d}
  \overset{(*)}{\leq} \frac{108}{\Delta},
\end{align*}
where $(\star)$ is deduced by $d \geq \frac{4}{3}\overline{\Delta}$, and $(*)$ is because $\Delta \leq \frac{3}{2}d$.

\paragraph{\textbf{Case (3): $\beta = 0$}}
By \Cref{lem: two-spin -aux}, it holds that
\begin{align*}
  \lambda \leq \frac{4\gamma^{d+1}}{d - 1},
\end{align*}
which will lead us to
\begin{align*}
  \frac{\lambda (1 - \beta\gamma)}{\gamma^{d+1}}
  &\leq \frac{4}{d - 1} \leq \frac{12}{\Delta},
\end{align*}
where the last inequality comes from the fact that $3(d - 1) \geq \Delta$.
\subsubsection{Tree recursion analysis}\label{section-proof-aux-1}
{
  In order to prove \Cref{lem:cmbr-worst-case}, we first introduce the self-avoiding walk tree (SAW) in \cite{weitz2006counting}. 
  Given a graph $G=(V,E)$ with pinning $\sigma \in \{\0,\1\}^\Lambda$ on $\Lambda \subseteq V$, fields $\*\lambda \in \mathds{R}^V$ and vertex $v \in V$, the self-avoiding walk tree $T=T_{\mathrm{SAW}}(G,v)=(V_T,E_T)$ with fields $\*\lambda \in \mathds{R}^{V_T}$ is recursively constructed as follows.
  \begin{enumerate}
    \item If vertex $v$ is pinned, return the single vertex $v$ (with field $\lambda_v$).
    \item Otherwise, let $u_1,u_2,\ldots,u_d$ be the neighbors of $v$. 
    For each $1 \le i \le d$, denote $G_i$ be the graph obtained by deleting $v$, attaching new vertices $v_j$ with pinning $\0$ to vertices $u_j$ for all $1 \le j < i$, 
    and attaching new vertices $v_j$ with pinning $\1$ to vertices $u_j$ for all $i < j \le d$. 
    \item Let $T$ be a rooted tree at vertex $v$ (with field $\lambda_v$) with subtrees $T_1,T_2,\ldots, T_d$ (with fields $\*\lambda_{T_1},\*\lambda_{T_2},\ldots,\*\lambda_{T_d}$), where $T_i = T_{\mathrm{SAW}}(G_i,u_i)$. 
  \end{enumerate}
  Furthermore, given fields $\*\lambda \in \mathds{R}^V$ in $G=(V,E)$, 
  Observed in \cite{weitz2006counting}, the self-avoiding walk tree preserves marginal ratio.
  \begin{proposition}[\cite{weitz2006counting, LLY13}]\label{prop:saw}
    Let  $G=(V,E)$ be a graph, $\beta \in \mathds{R}_{\ge 0},\gamma \in \mathds{R}_{>0},\*\lambda \in \mathds{R}^V_{> 0}$ be parameters, $\sigma \in \{\0,\1\}^\Lambda$ be a valid pinning on $\Lambda \subseteq V$, and $v \in V$ be a vertice. 
    Denote the Gibbs distribution of two-spin model $(G,\beta,\gamma,\*\lambda)$ and $(T=T_{\mathrm{SAW}}(G,v),\beta,\gamma,\*\lambda_T)$ by $\mu_G$ and $\mu_T$ respectively. Then
    \begin{align*}
      \frac{\mu_{G,v}(\1)}{\mu_{G,v}(\0)} = \frac{\mu_{T,v}(\1)}{\mu_{T,v}(\0)}.
    \end{align*}  
    Furthermore, denote the marginal ratio $\frac{\mu_{T_u,u}(\1)}{\mu_{T_u,u}(\0)}$ by $R_u$, where $T_u$ is the subtree rooted at $u$ and $\mu_{T_u,u}$ be the Gibbs distribution of two-spin model $(T_u,\beta,\gamma,\*\lambda_{T_u})$. 
    For all $u \in T$, the marginal ratio $R_u$ satisfies
    \begin{align*}
      R_u = \lambda \prod_{i=1}^d \tp{\frac{\beta R_{u_i}+1}{R_{u_i}+\gamma}},
    \end{align*}
    where $u_1,u_2,\ldots,u_d$ denotes the children of $u$ in $T_u$.
  \end{proposition}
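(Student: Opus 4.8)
The plan is to reproduce Weitz's self-avoiding walk tree argument, adapted to arbitrary two-spin edge parameters $(\beta,\gamma)$ and per-vertex fields $\*\lambda$. I would prove the identity $\mu_{G,v}(\1)/\mu_{G,v}(\0)=\mu_{T,v}(\1)/\mu_{T,v}(\0)$ by induction on the number of \emph{free} (unpinned) vertices $\abs{V}-\abs{\Lambda}$, the inductive engine being a one-level telescoping expansion of the marginal ratio at $v$; the ``furthermore'' clause (the tree recursion at every node of $T$) is the standard transfer-matrix identity on trees, and it is also what closes the induction. The base case is $v\in\Lambda$ (in particular $\abs{V}=\abs{\Lambda}$): then $T_{\mathrm{SAW}}(G,v)$ is the single pinned vertex $v$ and both sides equal $0$ or $+\infty$ according to $\sigma_v$.

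\emph{Step 1 (one-level recursion on general graphs).} Suppose $v\notin\Lambda$ has neighbours $u_1,\dots,u_d$, and let $\nu$ be the Gibbs distribution of $(G-v,\beta,\gamma,\*\lambda)$ conditioned on $\sigma$. Splitting the Gibbs weight along the edges at $v$ (an edge $\{v,u_i\}$ contributes $\beta$ iff $v=u_i=\1$ and $\gamma$ iff $v=u_i=\0$) gives
\begin{align*}
  \frac{\mu_{G,v}(\1)}{\mu_{G,v}(\0)}
  \;=\; \lambda_v\cdot\frac{\mathds{E}_{\tau\sim\nu}\bigl[\prod_{i=1}^d\beta^{\one{\tau_{u_i}=\1}}\bigr]}{\mathds{E}_{\tau\sim\nu}\bigl[\prod_{i=1}^d\gamma^{\one{\tau_{u_i}=\0}}\bigr]}.
\end{align*}
I would then telescope this ratio through the intermediate expectations $\mathds{E}_{\tau\sim\nu}[\prod_{j>i}\beta^{\one{\tau_{u_j}=\1}}\prod_{j\le i}\gamma^{\one{\tau_{u_j}=\0}}]$ for $i=0,1,\dots,d$: consecutive numerators and denominators cancel, and the $i$-th surviving factor is $\frac{\beta R+1}{R+\gamma}$, where $R$ is the marginal ratio of $u_i$ under $\nu$ reweighted by $\prod_{j<i}\gamma^{\one{\tau_{u_j}=\0}}\prod_{j>i}\beta^{\one{\tau_{u_j}=\1}}$. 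The point is that this reweighting is \emph{exactly} the effect of attaching, at each $u_j$ with $j<i$ a pendant vertex pinned to $\0$, and at each $u_j$ with $j>i$ a pendant vertex pinned to $\1$ — that is, of replacing $G-v$ by the graph $G_i$ from the definition of $T_{\mathrm{SAW}}$. Hence $\mu_{G,v}(\1)/\mu_{G,v}(\0)=\lambda_v\prod_{i=1}^d\frac{\beta R^{G_i}_{u_i}+1}{R^{G_i}_{u_i}+\gamma}$, with $R^{G_i}_{u_i}$ the marginal ratio of $u_i$ in $G_i$ under the induced pinning.

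\emph{Step 2 (tree recursion and induction).} For the rooted tree $T=T_{\mathrm{SAW}}(G,v)$ itself, deleting the root disconnects the subtrees $T_1,\dots,T_d$, so a one-line transfer-matrix computation gives $\mu_{T,v}(\1)/\mu_{T,v}(\0)=\lambda_v\prod_{i=1}^d\frac{\beta R^{T_i}_{u_i}+1}{R^{T_i}_{u_i}+\gamma}$, where $R^{T_i}_{u_i}$ is the marginal ratio at the root $u_i$ of $T_i$; iterating this gives the displayed recursion of the ``furthermore'' part. Now, by construction $T_i=T_{\mathrm{SAW}}(G_i,u_i)$ with the same fields, and $G_i$ has exactly one fewer free vertex than $G$ (the free vertex $v$ is deleted and the new pendants are pinned), so the induction hypothesis yields $R^{G_i}_{u_i}=R^{T_i}_{u_i}$. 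Combining with Step 1 completes the induction.

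\emph{Main obstacle.} The genuine subtlety is the boundary behaviour. When $\beta,\gamma>0$ and all fields are positive, every configuration has positive weight, so every conditional marginal above is strictly positive and the telescoping in Step 1 is literally valid. The delicate case is $\beta=0$ (the hardcore-type models, which we need), where some intermediate expectations can vanish and some conditional ratios can be $0$ or $+\infty$; one also has to check that every pinning produced by the construction keeps the relevant support non-empty. I would deal with this either by carrying out Step 1 entirely in terms of the non-vanishing positive quantities $\mathds{E}_{\tau\sim\nu}[\cdots]$ (so that no division by zero occurs) while tracking pinning validity, or by noting that $\mu_{G,v}(\1)/\mu_{G,v}(\0)$ and $\mu_{T,v}(\1)/\mu_{T,v}(\0)$ are ratios of multi-affine polynomials in $(\beta,\gamma,\*\lambda)$ that agree on the open region where all parameters are positive, and extending the identity to the closure by continuity (both sides remaining well defined for a valid pinning $\sigma$). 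The remaining verifications are routine bookkeeping.
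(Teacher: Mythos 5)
The paper does not actually prove \Cref{prop:saw}; it is imported as a known result from Weitz and from Li--Lu--Yin, so the only meaningful comparison is with the standard argument in those works — and your proposal reproduces exactly that argument (the telescoping of $\mu_{G,v}(\1)/\mu_{G,v}(\0)$ through the intermediate reweighted measures, identification of the $i$-th reweighting with the graph $G_i$ carrying pendants pinned to $\0$ and $\1$, and induction on the number of free vertices, with the tree recursion closing the induction). The one genuine subtlety you flag is real: when $\beta=0$ an intermediate $G_i$ can acquire an invalid pinning (a pendant pinned to $\1$ attached to a vertex already pinned to $\1$), so the factor-by-factor identity can degenerate to $0/0$; of your two fixes, the continuity route is the cleaner one, and it goes through because for fixed $G,\sigma,v$ both ratios are quotients of polynomials in $\beta$ whose denominators stay positive at $\beta=0$ — for $G$ by validity of $\sigma$ together with $\gamma,\lambda>0$, and for $T$ because pinned vertices are never expanded in the SAW construction, so every edge of $T$ has an unpinned endpoint and the tree pinning remains valid at $\beta=0$. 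With that observation made explicit, your proof is correct and matches the cited one.
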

  \begin{proof}[Proof of \Cref{lem:cmbr-worst-case}]
    Without loss of generality, we may assume $R^{\sigma_S}_v > 0$.
    Denote the neighbors of $v$ in $G$ by $N_G(v)$. 
    Let 
    \begin{align*}
      S_0 = N_G(v) \setminus S, S_{\0} = N_G(v) \cap \sigma_S^{-1}(\0),\text{ and } S_{\1}=N_G(v) \cap \sigma_S^{-1}(\1).
    \end{align*}
    By monotonicity of anti-ferromagnetic two-spin system,
    \begin{align}\label{eq:worst-bound}
      R^\sigma_v = \frac{\nu^{\sigma}_v(\1)}{\nu^{\sigma}_v(\0)} \le \bar{\lambda}_v \bar{\gamma}^{-\abs{S_0}-\abs{S_{\0}}} \bar{\beta}^{\abs{S_{\1}}}
    \end{align}
    Let $T$ be the self-avoiding walk tree of $G$ with pinning $\sigma_S$ and $\mu$ be the Gibbs distribution of two-spin model $(T,\bar{\beta},
    \bar{\gamma},\bar{\*\lambda}_T)$.
    Let $N_T(v)$ denote all children of vertex $v$ in $T$.
     By \Cref{prop:saw},
    \begin{align}\label{eq:recursion}
      R^{\sigma_S}_v = \frac{\nu^{\sigma_S}_v(\1)}{\nu^{\sigma_S}(\0)} = \frac{\mu_{T,v}(\1)}{\mu_{T,v}(\0)} = \bar{\lambda}_v \prod_{u \in N_T(v)} \tp{\frac{\bar{\beta} \frac{\mu_{T_u,u}(\1)}{\mu_{T_u,u}(\0)}+1}{\frac{\mu_{T_u,u}(\1)}{\mu_{T_u,u}(\0)}+\bar{\gamma}}},
    \end{align}
    where $T_u$ is the subtree rooted at $u$ and $\mu_{T_u,u}$ be the Gibbs distribution of two-spin model $(T_u,\bar{\beta},\bar{\gamma},\bar{\*\lambda}_{T_u})$.
    From the construction of $T$, we have the following properties.
    \begin{enumerate}
      \item There are $\abs{S_{\0}}$ children of vertex $v$ with pinning $\0$, $\abs{S_{\1}}$ children with pinning $\1$, and $\abs{S_{0}}$ children without pinning;
      \item For each $u \in N_T(v)$, $\bar{\lambda}_{T,u} \le \bar{\lambda}_{\max}$.
    \end{enumerate}
    By monotonicity of anti-ferromagnetic two-spin system, for each $u \in N_T(v)$ without pinning,
    \begin{align*}
      \frac{\mu_{T_u,u}(\1)}{\mu_{T_u,u}(\0)} \le \bar{\lambda}_{\max} \bar{\gamma}^{-d_u},
    \end{align*}
    where $d_u \le d = \Delta - 1$ is the number of children of $u$ in $T_u$. 
    Note that if $G$ is regular, then $d_u = d$.
    By~\eqref{eq-regular-2},
    \begin{align*}
      \frac{\mu_{T_u,u}(\1)}{\mu_{T_u,u}(\0)} \le \bar{\lambda}_{\max} \bar{\gamma}^{-d} = F^{\bar{\lambda}_{\max}}_d(0).
    \end{align*}
    Together with~\eqref{eq:recursion} and the monotonicity of anti-ferromagnetic two-spin system,
    \begin{align*}
      R^{\sigma_S}_v \ge \bar{\lambda}_v \bar{\beta}^{\abs{S_{\1}}} \bar{\gamma}^{-\abs{S_{\0}}} \tp{\frac{\bar{\beta} F^{\bar{\lambda}_{\max}}_d(0)+1}{F^{\bar{\lambda}_{\max}}_d(0) + \bar{\gamma}} }^{\abs{S_0}}
    \end{align*}
    Combining with~\eqref{eq:worst-bound}, 
    \begin{align*}
      \frac{R^\sigma_v}{R^{\sigma_S}_v}
      \le \tp{\frac{\bar{\gamma}(\bar{\beta} F^{\bar{\lambda}_{\max}}_d(0) + 1)}{F^{\bar{\lambda}_{\max}}_d(0) + \bar{\gamma}} }^{-\abs{S_0}} 
      \le \tp{\frac{\bar{\gamma}(\bar{\beta} F^{\bar{\lambda}_{\max}}_d(0) + 1)}{F^{\bar{\lambda}_{\max}}_d(0) + \bar{\gamma}} }^{-\Delta}
      = \frac{F_{\Delta}^{\bar{{\lambda}}_{\max}}(0)}{F_{\Delta}^{\bar{\lambda}_{\max}} \circ F_d^{\bar{\lambda}_{\max}} (0)}. 
    \end{align*}
    This concludes the proof.
  \end{proof}
}

\subsection{Modified log-Sobolev constant in subcritical regime} 
In this section, we prove \Cref{lem:good-regime-mLSI}.
In this proof, we consider the continuous-time Markov chain. Let $\Omega$ be a discrete and finite state space. Let matrix $Q: \Omega \times \Omega \to \mathds{R}_{\geq 0}$ denote the \emph{transition rate}. We remark that the row sum of $Q$ may \emph{not} be 1. 
The continuous-time Markov chain is a stochastic process $(Y_t)_{t \in \mathds{R}_{\geq 0}}$, For any $t > 0$, $Y_t$ follows the distribution $P_t(Y_0,\cdot)$ and $P_t = \exp(Lt) = \sum_{k = 0}^\infty \frac{t^kL^k}{k!}$, where the \emph{generator} $L$ of the continuous time Markov chain is an operator defined by for any $\psi: \Omega \to \mathds{R}$,
\begin{align*}
	L\psi(x) = \sum_{y \in \Omega}Q(x,y)(\psi(y) - \psi(x)). 
\end{align*}
Suppose $Q$ satisfies the detailed balance equation with respect to the distribution $b:\Omega \to \mathds{R}_{>0}$, i.e.
\begin{align*}
	\forall x, y \in \Omega, \quad b(x)Q(x,y) = b(y)Q(y,x).
\end{align*}
The modified log-Sobolev constant for continuous-time Markov chain is defined by
\begin{align*}
\rho(Q) \triangleq \min\left\{ \frac{\+E_{Q}(f, \log f)}{\Ent[b]{f}} \mid \forall f:\Omega \to \mathds{R}_{\geq 0}:\Ent[b]{f} \neq 0  \right\},
\end{align*}
where the Dirichlet form $\+E_{Q}(f, \log f)$ is defined by
\begin{align*}
	\+E_{Q}(f, \log f) \triangleq \frac{1}{2}\sum_{x, y \in \Omega}b(x)Q(x,y)(f(x) - f(y))(\log f(x) - \log f(y)).
\end{align*}

Back to our proof, let $\theta \leq 12^{-6}$ be a constant.
Note that $\theta * \pi$ could be seen as a  two-spin  system with parameters $\bar{\beta},\bar{\gamma}$ and $\theta\bar{\lambda}$. 
Fix $\Lambda \subseteq [n]$ and $\sigma \in \Omega(\pi_\Lambda)$, let $\nu \triangleq (\theta * \pi)^\sigma$, we will show that
\begin{align*}
  \rho^{\-{GD}}(\nu) \geq \frac{1}{4n}.
\end{align*}
Let $\Omega \triangleq \Omega(\nu)$.
For $i \in [n]$, $\eta_i:\Omega \to \{\0,\1\}^V$ is defined as
\begin{align*}
  \forall x \in \Omega, \forall j \in [n], \quad (\eta_ix)_j \triangleq \begin{cases}
    x_j & j \neq i \\
    - x_j & j = i
  \end{cases},
\end{align*}
where for convenience, we denote $\eta_i(x)$ as $\eta_i x$.

The continuous-time Glauber dynamics over $\nu$ has the transition rate $Q: \Omega \times \Omega \to \mathds{R}_{\geq 0}$ as
\begin{align*}
  \forall x \in \Omega, i \in [n] \setminus \Lambda, \quad Q(x, \eta_i x) &= \frac{1}{n} \frac{\nu(\eta_i x)}{\nu(x) + \nu(\eta_i x)} = \begin{cases}
    \frac{1}{n} \frac{\theta \bar{\lambda} \bar{\beta}^{s_i}}{\bar{\lambda} \bar{\beta}^{s_i} + \bar{\gamma}^{\Delta_i - s_i}}, & (\eta_ix)_i = \1 \\
    \frac{1}{n} \frac{\bar{\gamma}^{\Delta_i - s_i}}{\theta \bar{\lambda} \bar{\beta}^{s_i} + \bar{\gamma}^{\Delta_i - s_i}}, & (\eta_ix)_i = \0,
  \end{cases}
\end{align*}
\begin{align*}
	  \text{for any other } x,y \in \Omega \times \Omega \text{ not covered by the above case},\quad Q(x,y) = 0. 
\end{align*}
where $\Delta_i$ is the degree of $i$ in the graph and $s_i$ denotes the number of $\1$-neighbors of $i$ with respect to configuration $x$.
Now, we consider a tuned version of Glauber dynamics over $\nu$ whose transition rate $\widehat{Q}: \Omega \times \Omega \to \mathds{R}_{\geq 0}$ is defined as
\begin{align*}
  \forall x \in \Omega, i \in [n] \setminus \Lambda, \quad \widehat{Q}(x, \eta_i x) &= \begin{cases}
    \frac{\nu(\eta_i x)}{\nu(x)} = \theta \bar{\lambda} \bar{\gamma}^{-\Delta_i} (\bar{\beta}\bar{\gamma})^{s_i}, & (\eta_i x)_i = \1 \\
    1, & (\eta_i x)_i = \0.
  \end{cases}
\end{align*}
\begin{align*}
	  \text{for any other } x,y \in \Omega \times \Omega \text{ not covered by the above case},\quad \widehat{Q}(x,y) = 0. 
\end{align*}
It is straightforward to check that both $Q$ and $\widehat{Q}$ are reversible with respect to $\nu$.

Note that for $x \in \Omega$ and $i \in [n] \setminus \Lambda$, it holds that
\begin{align*}
  Q(x, \eta_i x) &= \frac{1}{n} \cdot \frac{\nu(x_-)}{\nu(x_-) + \nu(x_+)} \cdot \widehat{Q}(x, \eta_i x),
\end{align*}
where $x_-$ and $x_+$ are obtained from configuration $x$ with the $i$-th position being modified to $\0$ and $\1$, respectively.
If $\nu(x_-) = 0$, then it holds that $x_i = +$ and
\begin{align*}
	Q(x,\eta_i x) =\widehat{ Q} (x,\eta_i x) = 0.
\end{align*} 
If $\nu(x_-) > 0$, we have
\begin{align*}
  \frac{\nu(x_-)}{\nu(x_-) + \pi(x_+)}
  &= \frac{1}{1 + \frac{\nu(x_+)}{\nu(x_-)}}
    \overset{(\star)}{\geq} \frac{1}{1 + \theta \bar{\lambda} \bar{\gamma}^{-\Delta}}
    \geq \frac{1}{1 + \theta \cdot 12^4},
\end{align*}
where $(\star)$ could be deduced from $G$ is $\Delta$-regular or $\bar{\gamma} \leq 1$ (see \eqref{eq-regular-2}) and the last inequality holds by \Cref{lem: two-spin -aux}.
Since $\theta \leq 12^{-4}$, it holds that for any $x, y \in \Omega \times \Omega$,
\begin{align*}
{Q(x, \eta_i x)} \geq \frac{1}{2n}\widehat{Q}(x, \eta_i x),	
\end{align*}
which implies
\begin{align*}
  \forall f: \Omega \to \mathds{R}_{\geq 0}, \quad 2n \cdot \+E_Q(f, \log f) \geq \+E_{\widehat{Q}}(f, \log f).
\end{align*}
Hence, it holds that 
\begin{align} \label{eq:good-regime-reduce}
  \rho^{\-{GD}}(\nu) &\geq \frac{1}{2n} \cdot \rho^{\widehat{\-{GD}}}(\nu),
\end{align}
where we use $\rho^{\-{GD}}(\nu) $ and $\rho^{\widehat{\-{GD}}}(\nu)$  to denote the modified log-Sobolev constant of the continuous-time Glauber dynamics and continuous-time tuned Glauber dynamics respectively.
Remark that by our definition, the discrete-time Glaubder dynamics and continuous-time Glauber dynamics have the same modified log-Sobolev constant.
Hence, to prove \Cref{lem:good-regime-mLSI}, it suffices for us to bound $\rho^{\widehat{\-{GD}}}(\nu)$.

We will use the following general result.
Let $T$ be the transition rate of a continuous-time Markov chain $M$ on $\+X$.
Let $\+G$ be a set of bijective maps from $\+X$ to $\+X$.
We say $\+G$ is a mapping  representation of $T$ if 
\begin{itemize}
	\item for any $x, y \in \+X$ such that $T(x,y) > 0$, there exists a unique $\delta \in \+G$ such that $y = \delta x$;
	\item for any $\delta \in \+G$, there exists a unique $\delta^{-1} \in \+G$ such that for any $x \in \+X$, $\delta^{-1}(\delta(x)) = x$.
\end{itemize}
 
\begin{theorem} [\text{\cite[Theorem 1.1, Theorem 3.9]{EHMT17}}] \label{thm:EHMT17}
  Let $\mu$ be a distribution over a finite set $\+X$.
  Let $T$ be the transition rate of a continuous time Markov chain $M$ satisfying the detailed-balance equation with respect to $\mu$.
  Let $\+G$ a mapping  representation of $T$ satisfying $\alpha\beta x = \beta\alpha x$ for all $x \in \+X, \alpha, \beta \in G$.
  If there exist $H_1 \subseteq \+X \times \+G$ and $H_2 \triangleq \{(\alpha x, \alpha^{-1}) \mid (x, \alpha) \in H_1\}$ such that $H_1 \cap H_2 = \emptyset$, $H_1 \cup H_2 = \+X \times \+G$ and
  \begin{align*}
   \forall i \in \{1, 2\},\quad \kappa_i &\triangleq \min_{\substack{(x , \alpha) \in H_i \\ T(x, \alpha x) > 0}} \left[T(x, \alpha x) - \mathds{1}_{\alpha \neq \alpha^{-1}}T(\alpha x, \alpha (\alpha x)) - \sum_{\eta: \eta \neq \alpha, \alpha^{-1}} \frac{(q - q_*)(\alpha x, \alpha^{-1}, \eta)}{T(x, \alpha x) \mu(x)} \right] \geq 0,
  \end{align*}
  where we set $q(x, \alpha, \eta) \triangleq T(x, \alpha x)T(x, \eta x)\mu(x)$ (we assume $q(x,\alpha,\eta) = 0$ if $\mu(x) = 0$) and
  \begin{align*}
  q_*(x, \alpha, \eta) &\triangleq \min\{q(x, \alpha, \eta), q(\alpha x, \alpha^{-1}, \eta), q(\eta x, \alpha, \eta^{-1}), q(\alpha\eta x, \alpha^{-1}, \eta^{-1})\}.
  \end{align*}
  Then, we have $\rho^{M}(\mu) \geq \kappa_1 + \kappa_2$, where $\rho^{M}(\mu)$ denotes the modified log-Sobolev constant of $M$.
\end{theorem}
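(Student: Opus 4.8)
The plan is to route the whole argument through the \emph{entropic Ricci curvature} of Erbar and Maas, for which the displayed combinatorial condition is designed to be a verifiable sufficient criterion. First I would set up the Erbar--Maas framework attached to the generator $L$ of $M$ and its reversible measure $\mu$: the set $\+P_*(\+X)$ of positive probability densities, the logarithmic-mean weights $\hat\rho(x,y)\triangleq\theta\tp{\rho(x),\rho(y)}$ with $\theta(s,t)=\frac{s-t}{\log s-\log t}$, the nonlocal continuity equation $\dot\rho_t(x)+\sum_y\bigl(\psi_t(y)-\psi_t(x)\bigr)\hat{\rho_t}(x,y)\,\mu(x)Q(x,y)=0$, the induced distance $\+W$, and the relative entropy $\+H(\rho)=\sum_x\rho(x)\log\rho(x)\,\mu(x)$. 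I would then use the standard facts of this theory: (i) the continuous-time semigroup $P_t$ of $M$ is the $\+W$-gradient flow of $\+H$; and (ii) if $\+H$ is $\kappa$-geodesically convex along $\+W$ (``entropic Ricci curvature $\ge\kappa$''), then the Polyak--{\L}ojasiewicz-type bound along the flow gives $\+H(P_t\rho)\le e^{-2\kappa t}\+H(\rho)$, and differentiating at $t=0$ together with $\+E_Q(\rho,\log\rho)=\langle -L\rho,\log\rho\rangle_\mu$ yields $\+E_Q(\rho,\log\rho)\ge 2\kappa\,\+H(\rho)$, i.e.\ $\rho^M(\mu)\ge 2\kappa$. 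Hence it suffices to prove the curvature lower bound $\kappa\ge\tfrac{1}{2}(\kappa_1+\kappa_2)$ under the stated hypotheses.

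Next I would reduce the geodesic convexity to a pointwise quadratic inequality via the Erbar--Maas second-order (``Hessian of entropy'') calculus: $\+H$ is $\kappa$-convex along $\+W$-geodesics iff, for every $\rho\in\+P_*(\+X)$ and every potential $\psi:\+X\to\mathds{R}$, an Erbar--Maas analogue of the Bakry--\'Emery $\Gamma_2-\kappa\,\Gamma$ inequality, evaluated at $(\rho,\psi)$, is nonnegative; here the relevant $\Gamma_2$ is built from the first and second partial derivatives of $\theta$. I would expand this quadratic form explicitly. Using the commutation relation $\alpha\beta x=\beta\alpha x$ and the involution $\alpha\mapsto\alpha^{-1}$ from the mapping representation $\+G$, the terms organize into (a) \emph{diagonal} contributions carried by a single edge $(x,\alpha)$ with $T(x,\alpha x)>0$, whose strength is governed by $T(x,\alpha x)-\mathds{1}_{\alpha\neq\alpha^{-1}}T(\alpha x,\alpha(\alpha x))$, and (b) \emph{cross} contributions carried by an unordered pair of generators $\{\alpha,\eta\}$ acting near $x$. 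The cross terms I would bound using the elementary properties of the logarithmic mean --- concavity of $\theta$, the bound $\partial_1\theta+\partial_2\theta\le1$, and an AM--GM-type estimate --- which is precisely what turns a raw product of rates $q(x,\alpha,\eta)=T(x,\alpha x)T(x,\eta x)\mu(x)$ into the part $q_*$ that can be factored out uniformly after ``rotating'' the configuration by $\alpha^{\pm1}$ and $\eta^{\pm1}$ (the four arguments in the definition of $q_*$), leaving the uncontrolled remainder $q-q_*$ and producing exactly the sum $\sum_{\eta\ne\alpha,\alpha^{-1}}\frac{(q-q_*)(\alpha x,\alpha^{-1},\eta)}{T(x,\alpha x)\mu(x)}$ that appears in $\kappa_i$.

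Finally I would assemble using the partition $\+X\times\+G=H_1\sqcup H_2$ with $H_2=\{(\alpha x,\alpha^{-1}):(x,\alpha)\in H_1\}$: attribute to each edge its diagonal contribution minus the cross-remainders it is responsible for, observe that the hypothesis $\kappa_i\ge0$ is precisely the statement that this net per-edge contribution is nonnegative with uniform lower bound $\kappa_i$, and then sum the $H_1$-part and the $H_2$-part separately. Because $H_1$ and $H_2$ are reflections of one another under the involution and partition $\+X\times\+G$, each geodesic-action term is charged once from each side, which is what produces the full constant $\kappa_1+\kappa_2$ (and hence, via the $2\kappa$ relation above, $\rho^M(\mu)\ge\kappa_1+\kappa_2$) rather than a weaker $\min$ or halved version. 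What remains is routine but must be checked: the mapping-representation bookkeeping (each edge $x\to\alpha x$ in $\operatorname{supp}T$, and its reverse, counted exactly once), the detailed-balance cancellations used to symmetrize the quadratic form, and the degenerate cases $\mu(x)=0$, $\alpha=\alpha^{-1}$, and $T(x,\alpha x)=0$.

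The main obstacle is the middle step: carrying out the Erbar--Maas second-order expansion of the entropy and organizing the resulting intricate sum so that the cross terms line up \emph{exactly} with the $q-q_*$ expression. This requires the Hessian computation with the non-trivial weight $\theta$ and its derivatives, a delicate use of convexity inequalities for $\theta$ to pass from the exact cross term to its factorable part $q_*$, and enough combinatorial care that the final constant is the full $\kappa_1+\kappa_2$; the specific structure of the partition $H_1,H_2$ together with the commutation relation $\alpha\beta x=\beta\alpha x$ is exactly what makes that last point work.
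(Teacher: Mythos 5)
Your plan follows essentially the same route as the paper, which likewise works inside the framework of \cite{EHMT17}: the criterion that $\+B(\rho,\psi)\ge\kappa\,\+A(\rho,\psi)$ for all $(\rho,\psi)$ implies MLSI with constant $2\kappa$ (the paper's Proposition from Lemma~2.3 of \cite{EHMT17}, i.e.\ your geodesic-convexity/gradient-flow reduction), the per-edge inequality bounding $\+B$ below by diagonal terms with coefficients $T(x,\alpha x)-\mathds{1}_{\alpha\neq\alpha^{-1}}T(\alpha x,\alpha(\alpha x))-\sum_{\eta}\frac{(q-q_*)(\alpha x,\alpha^{-1},\eta)}{T(x,\alpha x)\mu(x)}$ (which the paper imports from the proof of Theorem~3.9 in \cite{EHMT17} rather than re-deriving, exactly the step you flag as the main obstacle), and the lemma that the diagonal sum over any $H$ with $H\cup H^{-1}=\+X\times\+G$ dominates $\tfrac12\+A$, applied separately to $H_1$ and $H_2=H_1^{-1}$ to get $\kappa=\tfrac{\kappa_1+\kappa_2}{2}$ and hence $\rho^{M}(\mu)\ge\kappa_1+\kappa_2$. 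One caution for when you execute the cross-term/diagonal estimates: for the logarithmic mean one has $\partial_1\theta(x,y)+\partial_2\theta(x,y)=\theta(x,y)^2/(xy)\ge 1$, not $\le 1$ as you wrote; the paper's argument only needs nonnegativity of this sum together with the Euler identity $x\,\partial_1\theta+y\,\partial_2\theta=\theta$.
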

The above theorem is slightly different from the original theorem in~\cite{EHMT17}, but it can be proved by going through the proof in~\cite{EHMT17}. We give the proof in \Cref{section-app-Mls} for completeness.

In our proof, we define $\+G = \{\eta_i \mid i \in [n] \setminus \Lambda\}$.
Note that for any $x, y \in \Omega$ such that $\widehat{Q}(x, y) > 0$, it must hold that $x$ and $y$ disagree only at one vertex, say $i \in V$.
We have $y = \eta_ix$ and $y \neq \eta_j x$ for all $j \neq i$.
For any $\eta_ i \in \+G$, it holds that $\eta_i^{-1} = \eta_i$.
Hence, $\+G$ is a mapping representation of $\widehat{Q}$.
For any $\eta_i,\eta_j \in \+G$, any $x \in \Omega$, it is straightforward to verify that $\eta_i \eta_j x = \eta_j \eta_i x$.
We define $H_1$ and $H_2$ as 
\begin{align*}
  H_1 &= \{(x, \eta_i) \mid x \in \Omega, i \in [n] \setminus\Lambda, x_i = \0\} \\
  H_2 &= \{(x, \eta_i) \mid x \in \Omega, i \in [n] \setminus\Lambda, x_i = \1\}.
\end{align*}
It is straightforward to verify $H_1 \cup H_2 = \Omega \times \+G$, $H_1 \cap H_2 = \emptyset$, and $H_2 = \{ (\eta_i x, \eta_i^{-1}) \mid (x,\eta_i) \in H_1\}$.
In our application, $\eta_i = \eta_i^{-1}$ for all $\eta_i \in \+G$.
The $\kappa_i$ for $i \in \{1, 2\}$ could be rewritten as
\begin{align*}
 \kappa_i &= \min_{\substack{(x , \alpha) \in H_i \\ \widehat{Q}(x, \alpha x) > 0}} \left[\widehat{Q}(x, \alpha x)  - \sum_{\eta: \eta \neq \alpha, \alpha^{-1}} \frac{(q - q_*)(\alpha x, \alpha^{-1}, \eta)}{\widehat{Q}(x, \alpha x) \nu(x)} \right].	
\end{align*}
Besides, definitions of $H_1$ and $H_2$ and the reversibility, it is straightforward to verify that
\begin{align}\label{eq-h1=h2}
	(A) = \left\{(x,\alpha) \mid (x,\alpha) \in H_1 \,\land\, \widehat{Q}(x,\alpha x) > 0\right\} &= \left\{ (x, \alpha) \mid (\alpha x,\alpha) \in H_2 \,\land\, \widehat{Q}(\alpha x, x) > 0   \right\} = (B)\\
	\text{(replace $x$ with $\alpha x$, $\alpha$ is a bijection)}\quad	&= \left\{ (\alpha x, \alpha) \mid (x,\alpha) \in H_2 \,\land\, \widehat{Q}(x,\alpha x) > 0   \right\}\notag
\end{align}
To verify~\eqref{eq-h1=h2}, by reversibility, $\nu(x)\widehat{Q}(x,\alpha x) = \nu(\alpha x)\widehat{Q}(\alpha x, x)$. For $(x,\alpha) \in H_1$, it holds that $\nu(x) > 0$, if  $\widehat{Q}(x,\alpha x) > 0$, then $\nu(\alpha x) > 0$ and $\widehat{Q}(\alpha x, x) > 0$, which implies $(x,\alpha) \in B$, thus $A \subseteq B$. Similarly, for any $(x,\alpha) \in B$, we can verify that $(x,\alpha) \in A$, thus $B \subseteq A$.
Note that $\alpha = \alpha^{-1}$ for all $\alpha \in \+G$.
Hence,
  \begin{align}\label{eq-H1-H2}
  \forall i \in \{1,2\},\quad
   \kappa_i = \min_{\substack{(x, \alpha) \in H_{3-i} \\ \widehat{Q}(x, \alpha x) > 0}} \left[\widehat{Q}(\alpha x, x) - \sum_{\eta: \eta \neq \alpha, \alpha^{-1}} \frac{(q - q_*)(x, \alpha, \eta)}{\widehat{Q}(\alpha x, x) \nu(\alpha x)} \right].
  \end{align}

To levering \Cref{thm:EHMT17} for the tuned Glauber dynamics, we have the following result.
For any two vertices $i,j$, we use $i \sim j$ to denote that $i$ and $j$ are adjacent in $G$.
\begin{lemma} \label{lem:good-regime-aux}
  Let $T, \nu$ in \Cref{thm:EHMT17} be $\widehat{Q}, \nu=(\theta * \pi)^\sigma$, respectively.
  We have the following results.

Let $i,j \in [n] \setminus \Lambda$ such that $i \sim j$ in the graph $G$ and $x \in \Omega$ where $x_i = x_j = \0$, then we have
\begin{align*}
  q(\eta_i x, \eta_i^{-1}, \eta_j) = q(\eta_j x, \eta_i, \eta_j^{-1}) = q(\eta_i \eta_j x, \eta_i^{-1}, \eta_j^{-1}) = \bar{\beta}\bar{\gamma} \cdot q(x, \eta_i, \eta_j).
\end{align*}
Moreover when $i \not\sim j$ in the graph $G$, then for any $x \in \Omega$, it holds that 
\begin{align*}
  q(\eta_i x, \eta_i^{-1}, \eta_j) = q(\eta_j x, \eta_i, \eta_j^{-1}) = q(\eta_i \eta_j x, \eta_i^{-1}, \eta_j^{-1}) = q(x, \eta_i, \eta_j).
\end{align*}
\end{lemma}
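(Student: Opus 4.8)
The plan is to unwind the definition $q(w,\alpha,\eta) = \widehat{Q}(w,\alpha w)\,\widehat{Q}(w,\eta w)\,\nu(w)$ and exploit three elementary features of the tuned chain $\widehat{Q}$: (i) the flip rate $\widehat{Q}(w,\eta_\ell w)$ equals $1$ when $w_\ell = \1$ and equals $\theta\bar{\lambda}\bar{\gamma}^{-\Delta_\ell}(\bar{\beta}\bar{\gamma})^{s_\ell(w)}$ when $w_\ell = \0$, where $s_\ell(w)$ is the number of $\1$-neighbors of $\ell$ under $w$ (and flipping $\ell$ never changes $s_\ell$); (ii) consequently $\nu(\eta_\ell w) = \widehat{Q}(w,\eta_\ell w)\,\nu(w)$ whenever $w_\ell = \0$; (iii) detailed balance $\nu(w)\widehat{Q}(w,w') = \nu(w')\widehat{Q}(w',w)$, together with the involutivity of each $\eta_\ell$ and the commutativity $\eta_i\eta_j = \eta_j\eta_i$. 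When $\bar{\beta} = 0$ some of the configurations $\eta_i x,\eta_j x,\eta_i\eta_j x$ may lie outside $\Omega(\nu)$; in that event every $q$-term in the corresponding identity vanishes by the convention adopted in \Cref{thm:EHMT17}, so we may throughout assume all relevant probabilities and rates are positive and manipulate them freely.

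For the case $i\sim j$ with $x_i = x_j = \0$, I would establish the three identities one at a time. For $q(\eta_i x,\eta_i^{-1},\eta_j)=q(\eta_i x,\eta_i,\eta_j)$, use $\eta_i(\eta_i x)=x$ and detailed balance $\nu(\eta_i x)\widehat{Q}(\eta_i x,x)=\nu(x)\widehat{Q}(x,\eta_i x)$ to rewrite it as $\nu(x)\,\widehat{Q}(x,\eta_i x)\,\widehat{Q}(\eta_i x,\eta_j\eta_i x)$; comparing with $q(x,\eta_i,\eta_j)=\nu(x)\,\widehat{Q}(x,\eta_i x)\,\widehat{Q}(x,\eta_j x)$, the sole difference is the rate for flipping $j$ upward, which at $\eta_i x$ sees exactly one extra $\1$-neighbor (namely $i$), so fact (i) gives $\widehat{Q}(\eta_i x,\eta_j\eta_i x)=\bar{\beta}\bar{\gamma}\cdot\widehat{Q}(x,\eta_j x)$. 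The identity for $q(\eta_j x,\eta_i,\eta_j)$ is symmetric in $i,j$. For $q(\eta_i\eta_j x,\eta_i^{-1},\eta_j^{-1})$, observe that both flips are now downward, so both rates are $1$ and $q(\eta_i\eta_j x,\eta_i,\eta_j)=\nu(\eta_i\eta_j x)$; then chain fact (ii) along $x\to\eta_j x\to\eta_i\eta_j x$, noting that the second flip ($i$ going up) sees one more $\1$-neighbor than it would at $x$, to conclude $\nu(\eta_i\eta_j x)=\bar{\beta}\bar{\gamma}\cdot q(x,\eta_i,\eta_j)$.

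For the case $i\not\sim j$ and arbitrary $x$, the same manipulations apply verbatim, except that flipping $i$ never alters $s_j$ and vice versa, so no factor $\bar{\beta}\bar{\gamma}$ is ever introduced and all four quantities equal $q(x,\eta_i,\eta_j)$. Here it is cleanest to combine detailed balance with the observation that $\widehat{Q}(\eta_i x,\eta_j\eta_i x)=\widehat{Q}(x,\eta_j x)$ (the rate of flipping $j$ is unaffected by the state of the non-neighbor $i$), and likewise for the remaining two terms.

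The bulk of the work—and the only place to err—is the bookkeeping in the $i\sim j$ case: one must carefully track which of the two coordinates is flipped upward versus downward in each of $\eta_i x$, $\eta_j x$, $\eta_i\eta_j x$, and exactly which upward flip has its $\1$-neighbor count incremented. Once that accounting is fixed, each identity reduces to a one- or two-line computation from (i)–(iii). I anticipate no conceptual obstacle: the lemma is a reversibility-plus-homogeneity bookkeeping statement tailored so that the clustering quantities $\kappa_i$ in \Cref{thm:EHMT17} become tractable for the tuned Glauber dynamics.
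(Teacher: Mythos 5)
Your proposal is correct and follows essentially the same route as the paper's proof: expand $q$ from its definition, use the explicit rates of the tuned dynamics together with reversibility, and observe that the single upward flip performed next to a newly placed $\1$-neighbor picks up exactly one factor $\bar{\beta}\bar{\gamma}$, with no factor appearing when $i \not\sim j$. The only slight imprecision is in the degenerate case: when $\bar{\beta}=0$ not every vanishing term is covered by the convention $q(w,\cdot,\cdot)=0$ for $\nu(w)=0$ (e.g.\ $q(\eta_j x,\eta_i,\eta_j^{-1})$ may have $\nu(\eta_j x)>0$ and vanish instead because the upward-flip rate $\widehat{Q}(\eta_j x,\eta_i\eta_j x)$ is itself zero), but this is immediate from your fact (i) and does not affect the argument.
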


The proof of \Cref{lem:good-regime-aux} is deferred to the end of this section.

We claim that for any $i \sim j$, $y \in \Omega$, if $y_i = \1$ or $y_j = \1$, then it holds that 
\begin{align}\label{eq-q*-q=0}
(q - q_*)(y, \eta_i, \eta_j) = 0.	
\end{align}
To verify the claim, we need to consider three cases (1) $y_i = \1$ and $y_j = \0$; (2) $y_i = \1$ and $y_j = \1$; (3) $y_i = \0$ and $y_j = \1$. We verify the first case, the other two cases can be verified by a similar argument. 
Consider the configuration $x = \eta_i y$.
It holds that $x \in \Omega$ because $\theta\bar{\lambda} > 0$ and $\bar{\gamma} > 0$. 
By \Cref{lem:good-regime-aux}, 
\begin{align*}
q_*(y,\eta_i,\eta_j) &= \min\{ q(y,\eta_i,\eta_j), q(\eta_i y, \eta_i^{-1},\eta_j), q(\eta_j y,\eta_i,\eta_j^{-1}), q(\eta_i\eta_j y,\eta_i^{-1}, \eta_j^{-1})\}\\
&=\min\{ q(y,\eta_i,\eta_j), q(\eta_i y, \eta_i,\eta_j), q(\eta_j y,\eta_i,\eta_j), q(\eta_i\eta_j y,\eta_i, \eta_j)\}\\
&=\min\{ q(\eta_i x ,\eta_i,\eta_j), q(x, \eta_i,\eta_j), q(\eta_j \eta_i x,\eta_i,\eta_j), q(\eta_j x,\eta_i, \eta_j)\}\\
(\ast)\quad&= q(\eta_i x,\eta_i,\eta_j) = q(y,\eta_i,\eta_j),
\end{align*}
where $(\ast)$ holds because \Cref{lem:good-regime-aux}, $\eta_i = \eta_i^{-1}$ and $\eta_j = \eta_j^{-1}$.
Besides, for any $i \not \sim j$, any $x \in \Omega$, 
\begin{align}\label{eq-i-notim-j}
(q - q_*)(x, \eta_i, \eta_j) = 0.	
\end{align}


By \eqref{eq-H1-H2},~\eqref{eq-q*-q=0},~\eqref{eq-i-notim-j} and the definitions of $H_1$ and $H_2$, it holds that
\begin{align} \label{eq:kappa-1}
  \kappa_1 = \min_{\substack{(x, \eta_i) \in H_{2} \\ \widehat{Q}(x, \eta_i x) > 0}} \left[\widehat{Q}(\eta_i x, x) - \sum_{j \neq i} \frac{(q - q_*)(x, \eta_i, \eta_j)}{\widehat{Q}(\eta_i x, x) \nu(\eta_i x)} \right] \overset{(\star)}{=} \min_{\substack{(x, \eta_i) \in H_{2} \\ \widehat{Q}(x, \eta_i x) > 0}} \widehat{Q}(\eta_i x, x) \geq 0,
\end{align}
where $(\star)$ holds because $x_i = \1$ and we can use~\eqref{eq-q*-q=0},~\eqref{eq-i-notim-j}. 
On the other hand, we have
\begin{align*}
  \kappa_2
  = \min_{\substack{(x, \eta_i) \in H_{1} \\ \widehat{Q}(x, \eta_i x) > 0}} \left[\widehat{Q}(\eta_i x, x) - \sum_{\substack{j \in [n]\setminus \Lambda \\ j \sim i \\ x_j = \0}} \frac{(1 - \bar{\beta}\bar{\gamma}) q(x, \eta_i, \eta_j)}{\widehat{Q}(\eta_i x, x) \nu(\eta_i x)}\right]
 \overset{(\ast)}{=} \min_{\substack{(x, \eta_i) \in H_{1} \\ \widehat{Q}(x, \eta_i x) > 0}}  \left[\widehat{Q}(\eta_i x, x) - \sum_{\substack{j \in [n]\setminus \Lambda \\ j \sim i \\ x_j = \0}} \frac{(1 - \bar{\beta}\bar{\gamma}) q(x, \eta_i, \eta_j)}{\widehat{Q}(x, \eta_i x) \nu(x)}\right],
\end{align*}
where ($\ast$) holds  by reversibility.
Note that by definition, it holds that $q(x, \eta_i, \eta_j) = \nu(x) \widehat{Q}(x, \eta_i x) \widehat{Q}(x, \eta_j x)$, which implies
\begin{align} \label{eq:kappa-2} 
  \kappa_2 = \min_{(x, \eta_i) \in H_1} \left[\widehat{Q}(\eta_i x, x) - \sum_{\substack{j \in [n]\setminus \Lambda \\ j \sim i \\ x_j = \0}} (1 - \bar{\beta}\bar{\gamma}) \widehat{Q}(x, \eta_j x)\right] 
\overset{(\star)}{\geq} 1 - \Delta \cdot (1 - \bar{\beta}\bar{\gamma}) \cdot \theta\bar{\lambda} \bar{\gamma}^{-\Delta}
   \geq 1/2, 
\end{align}
where $(\star)$ is deduced from the fact that $G$ is $\Delta$-regular or $\bar{\gamma} \leq 1$ and the last inequality holds by \Cref{lem: two-spin -aux} and the fact that $\theta \leq \frac{1}{2} \cdot 12^{-5}$.

Combining \eqref{eq:kappa-1} and \eqref{eq:kappa-2} with \Cref{thm:EHMT17}, it holds that
\begin{align*}
  \rho^{\widehat{\-{GD}}}(\nu) \geq 1/2,
\end{align*}
which, together with \eqref{eq:good-regime-reduce}, implies that
\begin{align*}
  \rho^{\-{GD}}(\nu) &\geq \frac{1}{2n}\rho^{\widehat{\-{GD}}}(\nu) \geq \frac{1}{4n}.
\end{align*}

Finally, we finish the proof by proving \Cref{lem:good-regime-aux}.

\begin{proof}[Proof of \Cref{lem:good-regime-aux}]
  We first consider the case where $i \sim j$ and $x_i = x_j = \0$.
  By definition, we have
\begin{align*}
  q(\eta_i x, \eta_i^{-1}, \eta_j)
  = \nu(\eta_i x) \widehat{Q}(\eta_i x, x) \widehat{Q}(\eta_i x, \eta_i \eta_j x).
\end{align*}
Note that if $\nu(\eta_i x) = 0$, then by the definition of two-spin system, it must hold that $\bar{\beta}= 0$, we have
\begin{align*}
 q(\eta_i x, \eta_i^{-1}, \eta_j) = 	\bar{\beta}\bar{\gamma} \cdot q(x, \eta_i, \eta_j) = 0.
\end{align*}
If $\nu(\eta_i x) \neq 0$, we have
\begin{align*}
  q(\eta_i x, \eta_i^{-1}, \eta_j)  &= \nu(\eta_i x) \cdot 1 \cdot \widehat{Q}(\eta_i x, \eta_i \eta_j x) = \nu(x) \frac{\nu(\eta_ix)}{\nu(x)} \cdot \widehat{Q}(x, \eta_j x) \cdot \bar{\beta}\bar{\gamma} = \nu(x) \widehat{Q}(x, \eta_i x) \widehat{Q}(x, \eta_j x) \cdot \bar{\beta}\bar{\gamma}.
\end{align*}
Similarly, it holds that
\begin{align*}
  q(\eta_j x, \eta_i, \eta_j^{-1}) &= \nu(x) \widehat{Q}(x, \eta_i x) \widehat{Q}(x, \eta_j x) \cdot \bar{\beta}\bar{\gamma}.
\end{align*}
Lastly, we analyze $ q(\eta_i\eta_j x, \eta_i^{-1}, \eta_j^{-1})$. Similarly, we assume $\nu(\eta_i\eta_j x) > 0$, otherwise the result holds trivially. Note that $\bar{\lambda} >0$ and $\bar{\gamma} > 0$, thus we have $\nu(\eta_i x) > 0$ and $\nu(\eta_j x) > 0$.
We have
\begin{align*}
  q(\eta_i\eta_j x, \eta_i^{-1}, \eta_j^{-1})
  &= \nu(\eta_i \eta_j x) \widehat{Q}(\eta_i \eta_j x, \eta_i x) \widehat{Q}(\eta_i \eta_j x, \eta_j x) \\
  &= \nu(\eta_i \eta_j x) 
  = \nu(x) \frac{\nu(\eta_i x)}{\nu(x)} \frac{\nu(\eta_i \eta_j x)}{\nu(\eta_i x)} \\
  &= \nu(x) \widehat{Q}(x, \eta_i x) \widehat{Q}(\eta_i x, \eta_i \eta_j x) \\
  &= \nu(x) \widehat{Q}(x, \eta_i x) \widehat{Q}(x, \eta_j x) \cdot \bar{\beta}\bar{\gamma}. \qedhere
\end{align*}

 We then consider the case where $i \not\sim j$. We prove that $q(\eta_i x, \eta_i^{-1}, \eta_j) = q(x, \eta_i, \eta_j)$. If $\nu(\eta_i x) = 0$, then it holds that $q(\eta_i x, \eta_i^{-1}, \eta_j) = q(x, \eta_i, \eta_j) = 0$. Suppose $\eta_i x$ is a feasible configuration. We have
 \begin{align*}
  q(\eta_i x, \eta_i^{-1}, \eta_j) &= \nu(\eta_i x) \widehat{Q}(\eta_i x, x)\widehat{Q}(\eta_i x, \eta_j\eta_i x)\\
  \text{(by reversibility)}\quad &=	\nu(x) \widehat{Q}(x,\eta_i x) \widehat{Q}(\eta_i x, \eta_j\eta_i x)\\
  \text{($\ast$)}\quad &=	\nu(x) \widehat{Q}(x,\eta_i x) \widehat{Q}(x, \eta_j x)\\
  &= q(x, \eta_i, \eta_j).
 \end{align*}
 where $(\ast)$ holds since $\widehat{Q}(\eta_i x, \eta_j\eta_i x) = \widehat{Q}(x, \eta_j x)$. This is because $i \not \sim j$, both transitions $\eta_i x \to \eta_j\eta_i x$ and $x \to \eta_j x$ are to flip the value of $j$, and such transition probabilities depend only on the configuration of $j$ and $j$'s neighbors. 
 
 The equation $q(\eta_j x, \eta_i, \eta_j^{-1}) =q(x, \eta_i, \eta_j) $ can be proved in a similar way.
 
 Finally, we prove $q(\eta_i \eta_j x, \eta_i^{-1}, \eta_j^{-1}) = q(x, \eta_i, \eta_j)$. 
 Suppose $\nu(\eta_i\eta_j x) = 0$. We have $q(\eta_i \eta_j x, \eta_i^{-1}, \eta_j^{-1}) = 0$. There are three cases for $\eta_i\eta_j x$: (1) if $i$ violates the local constraints, then $\nu(\eta_i x) = 0$; (2) if $j$ violates the local  constraints, then $\nu(\eta_j x) = 0$ (3) if some $k \notin \{i,j\}$ violates the local constraints, then $\nu(x) = 0$. Hence, we have $q(x, \eta_i, \eta_j) = 0$. 
 Similarly, if $\nu(\eta_i x) = 0$ or $\nu(\eta_j x) = 0$, it holds that $q(\eta_i \eta_j x, \eta_i^{-1}, \eta_j^{-1}) = q(x, \eta_i, \eta_j) = 0$.
 Suppose $\nu(\eta_i\eta_j x)\neq 0$,   $\nu(\eta_i x) \neq 0$ and $\nu(\eta_j x) \neq 0$. 
 We have
  \begin{align*}
  q(\eta_i \eta_j x, \eta_i^{-1}, \eta_j^{-1})  &= \nu(\eta_i \eta_j x) \widehat{Q}(\eta_i \eta_j x, \eta_j x)\widehat{Q}(\eta_i \eta_j x, \eta_i x)\\
  \text{(by reversibility)}\quad &=	\nu(\eta_j x) \widehat{Q}(\eta_j x, \eta_i\eta_j x) \widehat{Q}(\eta_i \eta_j x, \eta_i x)\\
  \text{(by $i\not\sim j$)}\quad &=	\nu(\eta_j x) \widehat{Q}(x,\eta_i x) \widehat{Q}(\eta_j x, x)\\
   \text{(by reversibility)}\quad & = \nu(x)\widehat{Q}(x,\eta_i x) \widehat{Q}(x,\eta_j x)\\
   &= q(x,\eta_i,\eta_j).
 \end{align*}
\end{proof}


\bibliographystyle{alpha}
\bibliography{refs}

\clearpage
\appendix
\section{Mixing time from modified log-Sobolev constant}\label{sec:append}
In this section, we prove the mixing time results in \Cref{thm:2spin-theorem} and \Cref{cor:hardcore}. We remark that \Cref{cor:Ising} directly follows from \Cref{thm:2spin-theorem}.

\begin{proof}[Proof of the mixing time result \Cref{thm:2spin-theorem}]
  To prove these corollaries, it only remains to give a lower bound for $\mu_{\min}$. Similar to \cite{chen2021rapid}, the marginal bound $b \triangleq\min_{v \in V} \min_{\sigma \in \Omega(\mu_{V \setminus \{v\} })} \min_{c \in \Omega(\mu^\sigma_v)} \mu^\sigma_v(c)$ can be bounded by
  \begin{align*}
    \frac{1}{b} \le 
    \begin{cases}
      \tp{\lambda+\frac{1}{\lambda}}\tp{\frac{1}{\gamma}+\gamma+2}^\Delta,&\beta = 0;\\
      \tp{\lambda+\frac{1}{\lambda}} \tp{\frac{1}{\beta}+2}^\Delta, &\beta > 0.
    \end{cases}
  \end{align*}
  Therefore,
  \begin{align*}
    \log \log \frac{1}{\mu_{\min}} &\le \log n + \log \log \frac{1}{b} \le \log n + \log \tp{n \log \alpha + \log \tp{\lambda+ \frac{1}{\lambda}}} \\
    &\le 2 \log n + \log \log \alpha + \log \log \tp{\lambda+\frac{1}{\lambda}},  
  \end{align*}
  where $\alpha =     \begin{cases}
    \frac{1}{\gamma}+\gamma+2,&\beta = 0;\\
    \frac{1}{\beta}+2, &\beta > 0.
  \end{cases}$. Together with the first part of \Cref{thm:2spin-theorem}, we prove the mixing time.
\end{proof}

\begin{proof}[Proof of \Cref{cor:hardcore}]
  Recall that the $O(n \log n)$ mixing time can be achieved via standard path coupling technique when $\lambda \le \frac{1}{2\Delta}$. 
  Therefore, we may assume that $\lambda \ge \frac{1}{2\Delta}$.
  To prove this corollary, it only remains to give a lower bound for $\mu_{\min}$. Similar to \cite{chen2021rapid}, the marginal bound 
  \[b \triangleq\min_{v \in V} \min_{\sigma \in \Omega(\mu_{V \setminus \{v\} })} \min_{c \in \Omega(\mu^\sigma_v)} \mu^\sigma_v(c)\]
  can be bounded by
  \begin{align*}
    b \ge \min\tp{\frac{1}{1+\lambda},\frac{\lambda}{1+\lambda}} \ge \frac{1}{2\Delta + 1}.
  \end{align*}
  Therefore,
  \begin{align*}
    \log \log \frac{1}{\mu_{\min}} \le \log n + \log \log \frac{1}{b} \le \log n + \log \log (2\Delta + 1).  
  \end{align*}
  Together with \Cref{thm:2spin-theorem}, we prove this corollary.
\end{proof}

\section{Modified log-Sobolev inequality in sub-critical regime}\label{section-app-Mls}
In this section, we prove \Cref{thm:EHMT17}.
We need several notations and definitions in \cite{EHMT17}.

At first, we will use the following fact about the mapping representation $\+G$.
\begin{fact}\label{fact-mls}
For any function $F: \+X \times \+G \to \mathds{R}$, it holds that 
\begin{align*}
\sum_{x \in \+X, \delta \in \+G}F(x,\delta)T(x,\delta x)\mu(x) = \sum_{x \in \+X, \delta \in \+G}F(\delta x,\delta^{-1})T(x,\delta x)\mu(x)
\end{align*}
\begin{proof}
We have 
\begin{align*}
\sum_{x \in \+X, \delta \in \+G}F(x,\delta)T(x,\delta x)\mu(x) = \sum_{x \in \+X}\mu(x)\sum_{\delta \in \+G}F(x,\delta)T(x,\delta x) =  \sum_{x \in \+X}\mu(x)\sum_{\delta \in \+G}F(x,\delta^{-1})T(x,\delta^{-1}x),
\end{align*}
where the last equation holds because $\{\delta \mid \delta \in \+G\} = \{\delta^{-1}\mid \delta \in \+G\}$. We then have
\begin{align*}
\sum_{x \in \+X}\mu(x)\sum_{\delta \in \+G}F(x,\delta^{-1})T(x,\delta^{-1}x) = 	\sum_{\delta \in \+G, x \in \+X} \mu(x)F(x,\delta^{-1})T(x,\delta^{-1}x) = \sum_{\delta \in \+G, x \in \+X} \mu(\delta x)F(\delta x,\delta^{-1})T(\delta x,x),
\end{align*}
where the equation holds because every $\delta$ is a bijection, thus $\{x \mid x \in \+X\} = \{\delta x \mid x \in \+X\}$. Finally, by reversibility, we have
\begin{align*}
\sum_{\delta \in \+G, x \in \+X} \mu(\delta x)F(\delta x,\delta^{-1})T(\delta x,x) = \sum_{x \in \+X, \delta \in \+G}F(\delta x,\delta^{-1})T(x,\delta x)\mu(x). &\qedhere
\end{align*}
\end{proof}
\end{fact}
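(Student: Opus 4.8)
The plan is to prove the identity by two successive changes of summation variable in the double sum over $\+X \times \+G$, using only that $\+G$ is a mapping representation of $T$ and that $T$ is reversible with respect to $\mu$. First I would fix $x \in \+X$ and reindex the inner sum over $\delta \in \+G$. Since $\+G$ is closed under taking inverses (the second defining property of a mapping representation), the map $\delta \mapsto \delta^{-1}$ is a bijection of $\+G$ onto itself, so
\[
\sum_{x \in \+X,\ \delta \in \+G} F(x,\delta)\,T(x,\delta x)\,\mu(x) \;=\; \sum_{x \in \+X}\mu(x)\sum_{\delta \in \+G} F(x,\delta^{-1})\,T(x,\delta^{-1}x).
\]

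Next I would change the outer variable: each $\delta \in \+G$ is a bijection of $\+X$, so summing $x$ over $\+X$ is the same as summing $\delta x$ over $\+X$. Substituting $x \mapsto \delta x$ on the right-hand side and using $\delta^{-1}(\delta x) = x$ turns it into $\sum_{\delta \in \+G,\ x \in \+X} \mu(\delta x)\,F(\delta x,\delta^{-1})\,T(\delta x, x)$. Finally, the detailed balance equation $\mu(\delta x)\,T(\delta x, x) = \mu(x)\,T(x,\delta x)$ rewrites this as $\sum_{x \in \+X,\ \delta \in \+G} F(\delta x,\delta^{-1})\,T(x,\delta x)\,\mu(x)$, which is exactly the asserted right-hand side.

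I do not expect a genuine obstacle: this is a purely formal ``rearrangement'' that swaps the roles of $x$ and $\delta x$, and the only inputs are (i) $\+G$ being closed under inverses, (ii) each $\delta \in \+G$ being a bijection of $\+X$, and (iii) reversibility of $T$ with respect to $\mu$ — all available from the setup. The single delicate point is bookkeeping: after each reindexing one must track whether a given occurrence of $T$ has arguments $(x,\delta x)$ or $(\delta x, x)$, so I would carry out each step with the sum written out explicitly (outer sum over $x$, inner over $\delta$, and vice versa) so that every substitution is unambiguous, and I would only invoke detailed balance at the very last step.
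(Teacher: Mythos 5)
Your proposal is correct and follows essentially the same route as the paper's proof: reindex the inner sum by $\delta\mapsto\delta^{-1}$ using closure of $\+G$ under inverses, substitute $x\mapsto\delta x$ using that each $\delta$ is a bijection of $\+X$, and finish with the detailed balance equation $\mu(\delta x)T(\delta x,x)=\mu(x)T(x,\delta x)$. No gaps.
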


Let $\mu$ be a distribution over a finite set $\+X$, and a continuous Markov chain with transition rate $T$ satisfying the detailed-balance equation with respect to $\mu$.
Given $f \in \mathds{R}^{\+X}$, denote $\nabla f(x,y) = f(y) - f(x)$.
For each $\psi \in \mathds{R}^{\+X}$ and $\rho \in \mathds{R}_{>0}^{\+X}$ satisfying $\E[\mu]{\rho} = 1$, 
we define $\+A(\rho,\psi)$ and $\+B(\rho,\psi)$ as follows.
\begin{align}\label{eq:ricci-A}
  \+A(\rho,\psi) = \frac{1}{2} \sum_{x,y \in \+X} \mu(x) T(x,y) \tp{\nabla \psi(x,y)}^2 \hat{\rho}(x,y)
\end{align}
\begin{align}\label{eq:ricci-B}
  \+B(\rho,\psi) = \frac{1}{2} \sum_{x,y \in \+X} \mu(x) T(x,y) \tp{\frac{1}{2} \hat{L}\rho(x,y) \tp{\nabla \psi(x,y)}^2 - \hat{\rho}(x,y) \nabla \psi (x,y) \nabla L\psi(x,y) } 
\end{align}
where 
\begin{align*}
  \hat{\rho}(x,y) &= \theta(\rho(x),\rho(y)),\\
  \hat{L}\rho(x,y) &= \partial_1 \theta(\rho(x),\rho(y)) L\rho(x) +\partial_2 \theta(\rho(x),\rho(y)) L\rho(y),\\
  \text{and } \theta(x,y) &= \begin{cases}
    \frac{x-y}{\log x-\log y},&  x \neq y\\
    x ,& otherwise.
  \end{cases}
\end{align*}

The relation between $\+A(\rho,\psi)$, $\+B(\rho,\psi)$ and modified log-Sobolev inequality was established.
\begin{proposition}[\text{\cite[Lemma 2.3]{EHMT17}}]\label{prop:ricci-mLSI}
 If for any $\rho \in \mathds{R}_{>0}^{\+X}$ satisfying $\E[\mu]{\rho}=1$ and $\psi \in \mathds{R}^{\+X}$ , 
  \begin{align*}
    \+B(\rho,\psi) \ge \kappa \+A(\rho,\psi), 
  \end{align*}
  for some $\kappa \in (0,1)$,
  then the  modified log-Sobolev constant is at least $2\kappa$. 
\end{proposition}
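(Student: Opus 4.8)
The plan is to prove \Cref{prop:ricci-mLSI} by the standard semigroup interpolation argument: one transports the curvature lower bound $\+B\ge\kappa\,\+A$ into exponential decay of a discrete Fisher-information functional along the heat flow, and integrates it to a modified log-Sobolev inequality. By the definition of $\rho^{M}(\mu)$ and the $1$-homogeneity of both $\+E_{T}(f,\log f)$ and $\Ent[\mu]{f}$ in $f$, it suffices to prove $\+E_{T}(f,\log f)\ge 2\kappa\,\Ent[\mu]{f}$ for every $f\colon\+X\to\mathds{R}_{>0}$ with $\E[\mu]{f}=1$; restricting to strictly positive $f$, and to an irreducible $M$ (so that the heat flow equilibrates), is harmless by continuity. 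First I would set $\rho_t\triangleq P_t f=e^{tL}f$. Since $P_t$ is stochastic and $\+X$ is finite, $\rho_t$ stays in a compact subset of $\mathds{R}_{>0}^{\+X}$, obeys $\E[\mu]{\rho_t}=1$ and $\dot\rho_t=L\rho_t$, and tends to the constant function $1$ as $t\to\infty$. Put $\Phi(t)\triangleq\Ent[\mu]{\rho_t}$, so $\Phi(0)=\Ent[\mu]{f}$ and $\Phi(\infty)=0$. Differentiating and using $\E[\mu]{L\rho_t}=0$ gives $\Phi'(t)=\E[\mu]{(L\rho_t)\log\rho_t}=-\+E_{T}(\rho_t,\log\rho_t)$; and substituting $\psi=\log\rho$ into the definition of $\+A$, together with the logarithmic-mean identity $\theta(a,b)\,(\log a-\log b)^2=(a-b)(\log a-\log b)$, shows $\+E_{T}(\rho,\log\rho)=\+A(\rho,\log\rho)$, so $\Phi'(t)=-\+A(\rho_t,\log\rho_t)\le 0$.

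The central step is the discrete Bochner identity $\tfrac{\mathrm d}{\mathrm dt}\,\+A(\rho_t,\log\rho_t)=-2\,\+B(\rho_t,\log\rho_t)$, equivalently $\Phi''(t)=2\,\+B(\rho_t,\log\rho_t)$. I would obtain it by differentiating $\+A(\rho_t,\log\rho_t)$ directly and sorting the chain-rule contributions into the two pieces appearing in the definition of $\+B$: differentiating the factor $\hat\rho_t(x,y)=\theta(\rho_t(x),\rho_t(y))$ produces exactly $\hat L\rho(x,y)=\partial_1\theta\,L\rho_t(x)+\partial_2\theta\,L\rho_t(y)$ and hence the $\tfrac12\hat L\rho\,(\nabla\psi)^2$ term, while differentiating $\psi_t=\log\rho_t$ and symmetrizing via reversibility produces the $-\hat\rho\,\nabla\psi\,\nabla L\psi$ term. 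The identity $\hat\rho(x,y)\,(\log\rho(y)-\log\rho(x))=\rho(y)-\rho(x)$ for the logarithmic mean is what makes this work: it is equivalent to saying that $\dot\rho=L\rho$ is the gradient flow of $\Ent[\mu]{\cdot}$ in the associated metric, so that $\psi_t=\log\rho_t$ is the natural potential to feed into $\+B$, and it is responsible for the exact cancellations that turn $\tfrac{\mathrm d}{\mathrm dt}\+A$ into $-2\,\+B$.

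Granting this, apply the hypothesis to $(\rho_t,\log\rho_t)$ — legitimate since $\rho_t\in\mathds{R}_{>0}^{\+X}$, $\E[\mu]{\rho_t}=1$ and $\log\rho_t\in\mathds{R}^{\+X}$ — to get, with $\alpha(t)\triangleq\+A(\rho_t,\log\rho_t)=-\Phi'(t)\ge 0$, the differential inequality $\alpha'(t)=-2\,\+B(\rho_t,\log\rho_t)\le-2\kappa\,\alpha(t)$. Gr\"onwall then yields $\alpha(t)\le e^{-2\kappa t}\alpha(0)$, and since $\kappa>0$,
\begin{align*}
\Ent[\mu]{f}=\Phi(0)=\Phi(0)-\Phi(\infty)=\int_0^{\infty}\alpha(t)\,\mathrm dt\le\frac{\alpha(0)}{2\kappa}=\frac{\+A(f,\log f)}{2\kappa}=\frac{\+E_{T}(f,\log f)}{2\kappa},
\end{align*}
which rearranges to $\+E_{T}(f,\log f)\ge 2\kappa\,\Ent[\mu]{f}$ and hence $\rho^{M}(\mu)\ge 2\kappa$. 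The main obstacle is precisely the Bochner identity $\Phi''(t)=2\,\+B(\rho_t,\log\rho_t)$: it is a routine but bookkeeping-heavy computation, and matching the signs and the coefficient $\tfrac12$ in front of $\hat L\rho$ requires invoking the logarithmic-mean identity at exactly the right places (it genuinely fails for a generic averaging function $\theta$). A secondary, easier point is to justify that $\Phi$ is twice continuously differentiable and that the above differentiations and the swap of limit and integral are valid, which all follow from the compactness of the orbit $\{\rho_t\}$ in $\mathds{R}_{>0}^{\+X}$ together with the irreducibility of $M$.
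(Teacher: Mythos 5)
Your proposal is correct and is essentially the argument behind the cited result: the paper does not prove \Cref{prop:ricci-mLSI} itself but imports it from \cite{EHMT17} (Lemma 2.3), and the semigroup scheme you describe — $\Phi(t)=\Ent[\mu]{P_tf}$, $\Phi'(t)=-\+A(\rho_t,\log\rho_t)=-\+E_T(\rho_t,\log\rho_t)$, the Bochner-type identity, Gr\"onwall, and integration in $t$ — is exactly the route of the cited source. The one step you assert rather than execute, namely $\frac{\mathrm{d}}{\mathrm{d}t}\+A(\rho_t,\log\rho_t)=-2\+B(\rho_t,\log\rho_t)$ along the heat flow, is indeed valid with the definitions \eqref{eq:ricci-A}--\eqref{eq:ricci-B}: after using reversibility and $\hat{\rho}(x,y)\,\nabla\log\rho(x,y)=\nabla\rho(x,y)$, it bottoms out in the pointwise logarithmic-mean identity $\partial_1\theta(s,t)\,(\log t-\log s)^2=\frac{t-s}{s}-\log\frac{t}{s}$, precisely the mechanism you anticipated, so the gap is one of bookkeeping rather than of idea (your side remarks on strict positivity of $P_tf$ and irreducibility are also consistent with the setting in which the proposition is applied).
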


Let $\+G$ be a group acting on $\+X$ such that for each $x,y \in \+X$ with transition rate $T(x,y) > 0$, there exists a unique $\delta \in \+G$ satisfying $y = \delta x$.
For each $\rho \in \mathds{R}^{\+X}_{>0}$ satisfying $\E[\mu]{\rho} = 1$ and $\psi \in \mathds{R}^{\+X}$, we may rephrase $\+A(\rho,\psi)$ and $\+B(\rho,\psi)$ as
\begin{align}
  \+A(\rho,\psi) &= \frac{1}{2} \sum_{x \in \+X,\delta \in \+G} \mu(x) T(x,\delta x) \hat{\rho}(x,\delta x) \tp{\nabla_\delta \psi(x)}^2,\label{eq-A-rewrt}\\
  \+B(\rho,\psi) &= \sum_{x \in \+X,\delta, \eta \in \+G} \mu(x) T(x,\delta x) T(x,\eta x) B(\rho,\psi)(x,\delta,\eta),\label{eq-B-rewrt}\\
  B(\rho,\psi)(x,\delta,\eta) 
  &=\frac{1}{2} \tp{\nabla_\delta \psi(x)}^2 \hat{\rho}_1(x,\delta x) \nabla_\eta \rho(x) + \nabla_\delta \psi(x) \nabla_\eta \psi(x) \hat{\rho}(x,\delta x),\notag
\end{align} 
where $\nabla_\delta \psi(x) = \psi(\delta x) - \psi(x)$, $\nabla_\rho \psi(x) = \rho(\delta x) - \rho(x)$ and $\hat{\rho}_i(x,y) =\frac{\partial\theta}{\partial z_i}\bigg|_{(z_1,z_2) = (\rho(x),\rho(y))}$ for $i=1,2$. 

To verify~\eqref{eq-B-rewrt}, by the definition of $\+B$, we have
\begin{align}\label{eq-appendix-cal-B}
\+B(\rho,\psi) =  \sum_{x,y \in \+X} \mu(x) T(x,y) \tp{\frac{1}{4} \hat{L}\rho(x,y) \tp{\nabla \psi(x,y)}^2 - \frac{1}{2}\hat{\rho}(x,y) \nabla \psi (x,y) \nabla L\psi(x,y) }.	
\end{align}
We have
\begin{align*}
&\sum_{x,y \in \+X} \mu(x) T(x,y)\hat{L}\rho(x,y) \tp{\nabla \psi(x,y)}^2 = \sum_{x \in \+X,\delta \in \+G}	\mu(x)T(x,\delta x) \tp{\nabla_\delta \psi(x)}^2  (\hat{\rho}_1(x,\delta x)L\rho(x) + \hat{\rho}_2(x,\delta x)L\rho(\delta x) )\\
=\,& \sum_{x \in \+X,\delta \in \+G}	\mu(x)T(x,\delta x) \tp{\nabla_\delta \psi(x)}^2 \hat{\rho}_1(x,\delta x)L\rho(x) + \sum_{x \in \+X,\delta \in \+G}	\mu(x)T(x,\delta x) \tp{\nabla_\delta \psi(x)}^2\hat{\rho}_2(x,\delta x)L\rho(\delta x)\\
\overset{(\ast)}{=}\,& \sum_{x \in \+X,\delta \in \+G}	\mu(x)T(x,\delta x) \tp{\nabla_\delta \psi(x)}^2 \hat{\rho}_1(x,\delta x)L\rho(x) + \sum_{x \in \+X,\delta \in \+G}	\mu(x)T(x,\delta x) \tp{\nabla_{\delta^{-1}} \psi(\delta x)}^2\hat{\rho}_2(\delta x,x)L\rho(x)\\
\overset{(\star)}{=}\,&2\sum_{x \in \+X,\delta \in \+G}	\mu(x)T(x,\delta x) \tp{\nabla_\delta \psi(x)}^2 \hat{\rho}_1(x,\delta x)L\rho(x)\\
=\,&2\sum_{x \in \+X,\delta,\eta \in \+G}	\mu(x)T(x,\delta x)T(x,\eta x) \tp{\nabla_\delta \psi(x)}^2 \hat{\rho}_1(x,\delta x)\nabla_{\eta}\rho(x). \quad\text{(by definition of $L\rho$)},
\end{align*}
where $(\ast)$ holds because of \Cref{fact-mls}, $(\star)$ holds because $\nabla_{\delta^{-1}} \psi(\delta x) = -\nabla_\delta \psi(x)$ and $\hat{\rho}_2(\delta x,x) =  \hat{\rho}_1(x,\delta x)$. Besides, we have
\begin{align*}
 &\sum_{x,y \in \+X} \mu(x) T(x,y) 	\hat{\rho}(x,y) \nabla \psi (x,y) \nabla L\psi(x,y)\\ 
 =\,& \sum_{x \in \+X ,\delta\in \+G} \mu(x) T(x,\delta x) \hat{\rho}(x,\delta x)\nabla \psi_\delta (x) \tp{\sum_{\eta \in \+G}T(\delta x, \eta \delta x)\nabla_{\eta}\psi(\delta x) - \sum_{\eta \in \+G}T(x,\eta x)\nabla_{\eta}\psi(x)},
\end{align*}
Again, by \Cref{fact-mls}, we have
\begin{align*}
&\sum_{x \in \+X ,\delta\in \+G} \mu(x) T(x,\delta x) \hat{\rho}(x,\delta x)\nabla \psi_\delta (x) \sum_{\eta \in \+G}T(\delta x, \eta \delta x)\nabla_{\eta}\psi(\delta x)\\
=\, & \sum_{x \in \+X ,\delta\in \+G} \mu(x) T(x,\delta x) 	\hat{\rho}(\delta x,x)\nabla \psi_{\delta^{-1}} (\delta x)\sum_{\eta \in \+G}T(x, \eta x)\nabla_{\eta}\psi(x)\\
=\,& - \sum_{x \in \+X ,\delta\in \+G} \mu(x) T(x,\delta x) 	\hat{\rho}(\delta x,x)\nabla \psi_{\delta} (x)\sum_{\eta \in \+G}T(x, \eta x)\nabla_{\eta}\psi(x) \quad \text{(by $\nabla_{\delta^{-1}} \psi(\delta x) = -\nabla_\delta \psi(x)$)}\\
=\,& - \sum_{x \in \+X ,\delta\in \+G} \mu(x) T(x,\delta x) 	\hat{\rho}(x,\delta x)\nabla \psi_{\delta} (x)\sum_{\eta \in \+G}T(x, \eta x)\nabla_{\eta}\psi(x), \quad \text{(by $	\hat{\rho}(\delta x, x) = 	\hat{\rho}(x,\delta x)$)}
\end{align*}
which implies 
\begin{align*}
\sum_{x,y \in \+X} \mu(x) T(x,y) 	\hat{\rho}(x,y) \nabla \psi (x,y) \nabla L\psi(x,y) = -2	\sum_{x \in \+X ,\delta\in \+G} \mu(x) T(x,\delta x) 	\hat{\rho}(x,\delta x)\nabla \psi_{\delta} (x)\sum_{\eta \in \+G}T(x, \eta x)\nabla_{\eta}\psi(x).
\end{align*}
By~\eqref{eq-appendix-cal-B}, we have
\begin{align*}
\+B(\rho,\psi) = \sum_{x \in \+X,\delta, \eta \in \+G} \mu(x) T(x,\delta x) T(x,\eta x) B(\rho,\psi)(x,\delta,\eta).
\end{align*}

\begin{lemma}[Lemma 3.6, \cite{EHMT17}]\label{lemma:ricci-i}
 Let $\mu$ be a distribution over a finite set $\+X$.
  Let $T$ be the transition rate of a continuous-time Markov chain $M$ satisfying the detailed-balance equation with respect to $\mu$.
  Let $\+G$ a mapping  representation of $T$. 
  If $H$ be a subset of $\+X \times \+G$ that satisfies $H \cup H^{-1} = \+X \times \+G$, where $H^{-1} = \{(\delta x,\delta^{-1}) \mid (x,\delta) \in H\} $, then for any $\rho \in \mathds{R}_{>0}^{\+X}$ satisfying $\E[\mu]{\rho} = 1$, and $\psi \in \mathds{R}^{\+X}$,
  \begin{align*}
    \sum_{(x,\delta) \in H} \mu(x) T(x,\delta x) B(\rho,\psi)(x,\delta,\delta) \ge \frac{1}{2}\+A(\rho,\psi). 
  \end{align*}
\end{lemma}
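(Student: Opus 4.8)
The plan is to follow the argument of \cite{EHMT17}: reduce the claimed inequality to a single pointwise estimate for the logarithmic mean $\theta$, using the involutive reindexing $(x,\delta)\mapsto(\delta x,\delta^{-1})$, which interchanges $H$ and $H^{-1}$.

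\emph{Symmetrization.} First I would note that $\mu(x)T(x,\delta x)\hat{\rho}(x,\delta x)\tp{\nabla_\delta\psi(x)}^2\ge 0$ for every $(x,\delta)$, since each factor is nonnegative (here $\hat{\rho}\ge 0$ because $\theta$ of positive reals is their logarithmic mean, hence positive). Because $H\cup H^{-1}=\+X\times\+G$, and using \eqref{eq-A-rewrt},
\[
2\+A(\rho,\psi)=\sum_{(x,\delta)\in\+X\times\+G}\mu(x)T(x,\delta x)\hat{\rho}(x,\delta x)\tp{\nabla_\delta\psi(x)}^2\le\sum_{(x,\delta)\in H}+\sum_{(x,\delta)\in H^{-1}},
\]
of the same summand. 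Next I would check that these two sums are equal: for $(y,\eta)=(\delta x,\delta^{-1})$ with $(x,\delta)\in H$, detailed balance gives $\mu(y)T(y,\eta y)=\mu(\delta x)T(\delta x,x)=\mu(x)T(x,\delta x)$, symmetry of $\theta$ gives $\hat{\rho}(\delta x,x)=\hat{\rho}(x,\delta x)$, and $\nabla_{\delta^{-1}}\psi(\delta x)=-\nabla_\delta\psi(x)$ leaves the square unchanged, so $\sum_{(x,\delta)\in H^{-1}}=\sum_{(x,\delta)\in H}$ for this quantity. Hence $\tfrac12\+A(\rho,\psi)\le\tfrac12\sum_{(x,\delta)\in H}\mu(x)T(x,\delta x)\hat{\rho}(x,\delta x)\tp{\nabla_\delta\psi(x)}^2$, and it suffices to prove
\[
\sum_{(x,\delta)\in H}\mu(x)T(x,\delta x)B(\rho,\psi)(x,\delta,\delta)\ge\tfrac12\sum_{(x,\delta)\in H}\mu(x)T(x,\delta x)\hat{\rho}(x,\delta x)\tp{\nabla_\delta\psi(x)}^2.
\]

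\emph{Pointwise reduction and the estimate for $\theta$.} Expanding $B(\rho,\psi)(x,\delta,\delta)=\tfrac12\tp{\nabla_\delta\psi(x)}^2\hat{\rho}_1(x,\delta x)\nabla_\delta\rho(x)+\tp{\nabla_\delta\psi(x)}^2\hat{\rho}(x,\delta x)$ and subtracting, the last inequality follows termwise once
\[
\hat{\rho}_1(x,\delta x)\,\nabla_\delta\rho(x)+\hat{\rho}(x,\delta x)\ge 0
\]
for every $(x,\delta)\in H$ with $\mu(x)T(x,\delta x)>0$ (the other terms vanish), since each term carries the nonnegative weight $\mu(x)T(x,\delta x)\tp{\nabla_\delta\psi(x)}^2$. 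Writing $u=\rho(x)>0$, $v=\rho(\delta x)>0$, this is $\partial_1\theta(u,v)(v-u)+\theta(u,v)\ge 0$. I would prove it from the integral representation $\theta(u,v)=\int_0^1 u^s v^{1-s}\,\mathrm{d}s$ of the logarithmic mean (which also covers $u=v$): differentiating under the integral, $\partial_1\theta(u,v)=\int_0^1 s\,u^{s-1}v^{1-s}\,\mathrm{d}s$, whence
\[
\partial_1\theta(u,v)(v-u)+\theta(u,v)=\int_0^1 u^{s-1}v^{1-s}\bigl(sv+(1-s)u\bigr)\,\mathrm{d}s>0,
\]
the integrand being strictly positive for $u,v>0$. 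This closes the argument.

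\emph{Main obstacle.} The only place that needs care is the symmetrization step: one must verify that the reindexing $(x,\delta)\mapsto(\delta x,\delta^{-1})$ is compatible with detailed balance and with the (possibly non-disjoint) cover $H\cup H^{-1}=\+X\times\+G$, and dispose along the way of the degenerate cases $\mu(x)=0$, $T(x,\delta x)=0$, and $\rho(x)=\rho(\delta x)$. Everything past that is the one-line property of the logarithmic mean above.
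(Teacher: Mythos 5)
Your proposal is correct and follows essentially the same route as the paper: symmetrize over $H$ and $H^{-1}$ using detailed balance, the symmetry of $\theta$, and $\nabla_{\delta^{-1}}\psi(\delta x)=-\nabla_\delta\psi(x)$, and then reduce to the pointwise bound $B(\rho,\psi)(x,\delta,\delta)\ge\tfrac12\hat{\rho}(x,\delta x)\tp{\nabla_\delta\psi(x)}^2$, which amounts to $\partial_1\theta(u,v)(v-u)+\theta(u,v)\ge 0$. The only (harmless) difference is that you verify this logarithmic-mean inequality via the integral representation $\theta(u,v)=\int_0^1 u^s v^{1-s}\,\mathrm{d}s$, whereas the paper uses the homogeneity identity $u\,\partial_1\theta(u,v)+v\,\partial_2\theta(u,v)=\theta(u,v)$ together with $\partial_1\theta+\partial_2\theta\ge 0$.
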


\begin{proof}[Proof of \Cref{lemma:ricci-i}]  
  Note that for any $x,y \in \mathds{R}_{>0}$,
  \begin{align}\label{eq-rho-rho-hat}
    x \partial_1 \theta(x,y) + y \partial_2 \theta(x,y) = \theta(x,y).
  \end{align}
  The above equation is in~\cite[Lemma 3.5]{EHMT17}.
  Therefore, for any $x \in \+X$ and $\delta \in \+G$, we have
  \begin{align*}
    B(\rho,\psi)(x,\delta, \delta) &= \frac{1}{2}\tp{\nabla_\delta \psi (x)}^2 \tp{ \hat{\rho}_1(x,\delta x) \tp{\rho(\delta x)-\rho(x)} + 2\hat{\rho}(x,\delta x)}\\
   (\text{by~\eqref{eq-rho-rho-hat}})\quad &= \frac{1}{2} \tp{\nabla_\delta \psi(x)}^2 \tp{ \hat{\rho}_1(x,\delta x) \rho(\delta x)+\hat{\rho}_2(x,\delta x) \rho(\delta x) + \hat{\rho}(x,\delta x)}\\
    &\ge \frac{1}{2} \tp{\nabla_\delta \psi(x)}^2 \hat{\rho}(x,\delta x),
  \end{align*}
  where the last inequality follows from the fact that $\partial_1 \theta(x,y) + \partial_2 \theta(x,y) = 
  \begin{cases}
    \frac{(x-y)^2}{xy(\log x-\log y)^2},& x \neq y;\\
    1,& x = y.
  \end{cases}$, which is non-negative.
  Furthermore, by reversibility, $\tp{\nabla_\delta \psi(x)}^2 = \tp{\nabla_{\delta^{-1}}\psi(\delta x)}^2$ and $\hat{\rho}(x,\delta x) = \hat{\rho}(\delta x,x)$, 
  \begin{align*}
    \sum_{(x,\delta) \in H} \mu(x) T(x,\delta x) \tp{\nabla_\delta \psi(x)}^2 \hat{\rho}(x,\delta x)
    &= \sum_{(x,\delta) \in H} \mu(\delta x) T(\delta x,\delta^{-1}(\delta x)) \tp{\nabla_{\delta^{-1}} \psi(\delta x)}^2 \hat{\rho}(\delta x,\delta^{-1}(\delta x))\\
    \text{(by the definition of $H^{-1}$)}\quad&= \sum_{(x,\delta) \in H^{-1}} \mu(x) T(x,\delta x) \tp{\nabla_\delta \psi(x)}^2 \hat{\rho}(x,\delta x).
  \end{align*}
  Hence,
  \begin{align*}
    \sum_{(x,\delta) \in H} \mu(x) T(x,\delta x) B(\rho,\psi)(x,\delta,\delta) &\ge \frac{1}{2}\sum_{(x,\delta) \in H} \mu(x) T(x,\delta x) \tp{\nabla_\delta \psi(x)}^2 \hat{\rho}(x,\delta x)\\
    &\ge \frac{1}{4} \sum_{(x,\delta) \in H \cup H^{-1}} \mu(x) T(x,\delta x) \tp{\nabla_\delta \psi(x)}^2 \hat{\rho}(x,\delta x)\\
    &= \frac{1}{4} \sum_{x \in \+X, \delta \in \+G} \mu(x) T(x,\delta x) \tp{\nabla_\delta \psi(x)}^2 \hat{\rho}(x,\delta x)\\
   \text{(by~\eqref{eq-A-rewrt})}\quad &= \frac{1}{2} \+A(\rho,\psi). &\qedhere
  \end{align*}
\end{proof}

We now prove \Cref{thm:EHMT17}
\begin{proof}[Proof of \Cref{thm:EHMT17}]
For the mapping $G$ satisfying the condition in the theorem, the following inequality was proved in \cite{EHMT17} (see proof of Theorem~3.9 in~\cite{EHMT17}):
  \begin{align}\label{eq:ricci-bq}
    \+B(\rho,\psi) \ge \sum_{(x,\delta) \in \+X \times \+G} B(x,\delta,\delta) \tp{q(x,\delta,\delta) - \*1_{\delta \neq \delta^{-1}} q(\delta x,\delta^{-1},\delta) -\sum_{\eta:\eta \neq \delta, \delta^{-1}} (q-q_*)(\delta x,\delta^{-1},\eta)}.
  \end{align}  
  By the definition of $q$, reversibility and non-negativity, we have
  \begin{align*}
    \+B(\rho,\psi) &\ge \sum_{(x,\delta) \in \+X \times \+G} \mu(x) T(x,\delta x) B(x,\delta,\delta) \tp{T(x,\delta x) - \*1_{\delta \neq \delta^{-1}} T(\delta x,\delta(\delta x)) -\sum_{\eta:\eta \neq \delta, \delta^{-1}} \frac{(q-q_*)(\delta x,\delta^{-1},\eta)}{\mu(x) T(x,\delta x)} }\\
    &\overset{(\ast)}{=} \sum_{(x,\delta) \in H_1\cup H_2} \mu(x) T(x,\delta x) B(x,\delta,\delta) \tp{T(x,\delta x) - \*1_{\delta \neq \delta^{-1}} T(\delta x,\delta(\delta x)) -\sum_{\eta:\eta \neq \delta, \delta^{-1}} \frac{(q-q_*)(\delta x,\delta^{-1},\eta)}{\mu(x) T(x,\delta x)} }\\
    &\overset{(\star)}{\ge} \kappa_1 \sum_{(x,\delta) \in H_1} \mu(x) T(x,\delta x) B(x,\delta,\delta) + \kappa_2 \sum_{(x,\delta) \in H_2} \mu(x) T(x,\delta x) B(x,\delta,\delta),
  \end{align*}
  where $(\ast)$ holds because $H_1 \cup H_2 = \+X \times \+G$, and $(\star)$ holds because $H_1 \cap H_2 = \emptyset$.
  We now use \Cref{lemma:ricci-i}. Note that $H_2 = H_1^{-1}$ and $H_1 = H_2^{-1}$ and $H_1 \uplus H_2 = \+X \times \+G$. We have
  \begin{align*}
  	 \+B(\rho,\psi) \geq \frac{\kappa_1 + \kappa_2}{2} \+A(\rho,\psi). 
  \end{align*}
  By \Cref{prop:ricci-mLSI}, we conclude the proof.
\end{proof}

\section{Monotonicity of uniqueness condition}\label{section-app-monotone}
In this section, we prove a stronger version of \Cref{prop:LLY}. 
\begin{proposition}[\cite{LLY13}]\label{prop:equiv-LLY}
  Let $\beta,\gamma,\lambda$ be parameters of an anti-ferromagnetic system, and $\hat{x}_d \in \mathds{R}_{>0}$ be the unique fixed point of recursion $F_d(x) = \lambda \tp{\frac{\beta x+1}{x+\gamma}}^d$ for any $d \in \mathds{Z}_{>0}$. 
  The following statements are equivalent.
  \begin{enumerate}
    \item $\gamma \le 1$;
    \item $\abs{F'_d(\hat{x}_d)}$ is monotone increasing in $d$.
  \end{enumerate}
\end{proposition}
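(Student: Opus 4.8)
The plan is to prove the two implications separately, and in both cases it is convenient to allow a \emph{real} exponent: for $t>0$ set
\[
F_t(x)=\lambda\Bigl(\tfrac{\beta x+1}{x+\gamma}\Bigr)^{t}=\lambda e^{t\phi(x)},\qquad \phi(x)=\log\tfrac{\beta x+1}{x+\gamma}.
\]
Since $\beta\gamma<1$, the map $x\mapsto\tfrac{\beta x+1}{x+\gamma}$ is positive and strictly decreasing on $(0,\infty)$, so $F_t$ has a unique positive fixed point $\hat x_t$, which depends smoothly on $t$ by the implicit function theorem (the fixed point is non‑degenerate because $F_t'(\hat x_t)<0$). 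Writing $h(x)=\tfrac{1-\beta\gamma}{(\beta x+1)(x+\gamma)}>0$ and $\psi(x)=xh(x)$, one has $\phi'=-h$, and evaluating at the fixed point $F_t'(\hat x_t)=\hat x_t\cdot t\phi'(\hat x_t)=-t\psi(\hat x_t)$, so
\[
g(t):=\bigl|F_t'(\hat x_t)\bigr|=t\,\psi(\hat x_t)=\frac{t(1-\beta\gamma)\hat x_t}{(\beta\hat x_t+1)(\hat x_t+\gamma)},
\]
which for integer $t=d$ is exactly the quantity in \Cref{definition-d-uniqueness}. So the claim is: $g$ is increasing along the positive integers iff $\gamma\le1$.

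For ``$\gamma>1\Rightarrow$ not increasing'': from $\hat x_d=\lambda\,r(\hat x_d)^d$ with $r(x)=\tfrac{\beta x+1}{x+\gamma}$ and $r(x)<\tfrac1\gamma$ (equivalent to $\beta\gamma<1$), we get $\hat x_d<\lambda\gamma^{-d}$, hence $g(d)\le\tfrac{1-\beta\gamma}{\gamma}\,d\lambda\gamma^{-d}\to0$ as $d\to\infty$, while $g(1)>0$; thus $(g(d))_{d\ge1}$ is not increasing. For ``$\gamma\le1\Rightarrow$ increasing'' we may assume (by the standing normalization) $0\le\beta\le\gamma\le1$ with $\beta\gamma<1$, so in particular $\beta<1$, and it suffices to show $g'(t)\ge0$ for all $t>0$ and all $\lambda>0$. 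Differentiating $\log\hat x_t=\log\lambda+t\phi(\hat x_t)$ and using $t\hat x_t\phi'(\hat x_t)=F_t'(\hat x_t)=-g(t)$ gives $\hat x_t'=\tfrac{\hat x_t\,\phi(\hat x_t)}{1+g(t)}$, and then a direct computation shows that $g'(t)$ has the same sign as $h(w)+t\bigl(\psi(w)h(w)+\phi(w)\psi'(w)\bigr)$ with $w=\hat x_t$. Now $\phi(w)>0$ exactly for $w<w_0:=\tfrac{1-\gamma}{1-\beta}$, and $\psi'(w)=(1-\beta\gamma)\tfrac{\gamma-\beta w^2}{[(\beta w+1)(w+\gamma)]^2}$ is positive exactly for $w<w_1:=\sqrt{\gamma/\beta}$, with $w_0\le w_1$ (square both sides; it is equivalent to $\beta\gamma\le1$). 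Hence $h(w)+t(\psi h+\phi\psi')$ is manifestly positive unless $\phi(w)<0$ and $\psi'(w)>0$, i.e.\ $w_0<w<w_1$; and in that remaining range (which is attained by $\hat x_t$ for a suitable choice of $\lambda$, for every $t$), $g'(t)\ge0$ for all $t$ forces, and is implied by, the pointwise bound $\psi(w)h(w)\ge|\phi(w)|\,\psi'(w)$, i.e.\ after clearing the common positive factor $\tfrac{1-\beta\gamma}{[(\beta w+1)(w+\gamma)]^{2}}$,
\[
w(1-\beta\gamma)\ \ge\ (\gamma-\beta w^{2})\,\log\frac{w+\gamma}{\beta w+1},\qquad w_0<w<w_1 .
\]

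To finish, note $\tfrac{w+\gamma}{\beta w+1}=1+\tfrac{(1-\beta)(w-w_0)}{\beta w+1}$, so on $w>w_0$ the elementary inequality $\log(1+z)\le z$ gives $\log\tfrac{w+\gamma}{\beta w+1}\le\tfrac{(1-\beta)(w-w_0)}{\beta w+1}$; multiplying by $\gamma-\beta w^{2}>0$ and then by $\beta w+1>0$, it suffices to verify $w(1-\beta\gamma)(\beta w+1)\ge(\gamma-\beta w^{2})\bigl((1-\beta)w-(1-\gamma)\bigr)$. Expanding both sides, the difference equals $\beta(1-\beta)w^{3}+\beta\gamma(1-\beta)w^{2}+(1-\gamma)w+\gamma(1-\gamma)$, which is nonnegative term by term since $0\le\beta,\gamma\le1$, and is strictly positive unless $(\beta,\gamma)=(0,1)$ (the hardcore case, which one settles by a one‑line direct computation of $g'$); strictness then upgrades ``non‑decreasing'' to ``increasing'' if desired. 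The main obstacle I anticipate is the bookkeeping in the sign‑of‑$g'$ computation and the case split around $w_0,w_1$; once the question is localized to the displayed one‑variable inequality, the estimate $\log(1+z)\le z$ collapses it to a polynomial with nonnegative coefficients, which is the crux of why $\gamma\le1$ is exactly the right hypothesis.
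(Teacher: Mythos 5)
Your proposal is correct and follows essentially the same route as the paper: the $\gamma>1$ direction via $\hat{x}_d\le\lambda\gamma^{-d}$ forcing $|F_d'(\hat{x}_d)|\to0$ while the $d=1$ value is positive, and the $\gamma\le1$ direction by differentiating in a real parameter $t$, localizing the only problematic sign to $w\in\bigl(\tfrac{1-\gamma}{1-\beta},\sqrt{\gamma/\beta}\bigr)$, and reducing via $\log(1+z)\le z$ to the same cubic $\beta(1-\beta)w^3+\beta\gamma(1-\beta)w^2+(1-\gamma)w+\gamma(1-\gamma)\ge0$ that the paper obtains. The only cosmetic difference is that you track $\hat{x}_t'$ explicitly through implicit differentiation (and your standalone $h(w)>0$ term even makes the strictness/hardcore aside unnecessary), whereas the paper writes the closed form of $c'(d)$ directly; the key inequality is identical.
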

\begin{proof}
When $\gamma > 1$, the fixed point $\hat{x}_d$ satisfies
\begin{align*}
  \hat{x}_d = \lambda \tp{\frac{\beta \hat{x}_d + 1}{\hat{x}_d + \gamma}}^d \le \frac{\lambda}{\gamma^d}.
\end{align*}
Hence,
\begin{align*}
  0 \le \lim_{d \to +\infty}\abs{F'_d(\hat{x}_d)} = \lim_{d \to +\infty} \frac{d(1-\beta \gamma)\hat{x}_d}{(\beta \hat{x}_d + 1)(\hat{x}_d+\gamma)} \le \lim_{d \to +\infty} \frac{d \lambda}{\gamma^d} = 0.
\end{align*}
Note that $F'_1(\hat{x}_1)>0$. Therefore, $F'_d(\hat{x}_d)$ must not monotone increase for all $d$.

When $\gamma \le 1$,
define $c(d): [1,\infty) \to \mathds{R}$ as
\begin{align*}
	c(d) = \frac{d(1-\beta\gamma)\hat{x}_d}{(\beta \hat{x}_d + 1)(\hat{x}_d+\gamma)} =\frac{d(1-\beta\gamma)\hat{x}_d}{p(\hat{x}_d)}, \quad \text{where } p(x) = (\beta x + 1)(x+\gamma).
\end{align*} 
Note that $c(d) = \abs{F'_d(\hat{x}_d)}$. Hence, it suffices to show that 
\begin{align*}
\gamma \geq 1 \implies \forall d > 0, c'(d) > 0.
\end{align*}
Let $q(x) = \frac{\beta x + 1}{x + \gamma}$. Take the derivative of $c(d)$ with respect to $d$, we have
\begin{align*}
  c'(d) 
  = \frac{(1-\beta\gamma)\hat{x}_d}{p(\hat{x}_d)}
  \left(1-d\ln q(\hat{x}_d)\cdot
  \frac{\beta \hat{x}_d^2-\gamma}{p(\hat{x}_d)+d(1-\beta\gamma)\hat{x}_d}\right).
\end{align*}

Since $ \frac{(1-\beta\gamma)\hat{x}_d}{p(\hat{x}_d)} > 0$, we only need to verfiy that 
\begin{align*}
d\ln q(\hat{x}_d)\cdot
  \frac{\beta \hat{x}_d^2-\gamma}{p(\hat{x}_d)+d(1-\beta\gamma)\hat{x}_d} \leq 1.	
\end{align*}
Let $x = \hat{x}_d$. Note that $d \geq 1$. It suffices to show that 
\begin{align*}
	\forall x > 0, \quad (\beta x^2 - \gamma)\ln\tp{ \frac{\beta x + 1}{x + \gamma}} \leq (1-\beta \gamma)x.
\end{align*}
Note that $\beta x^2 - \gamma > 0$ if and only if $x \geq \sqrt{\gamma/\beta}$; $\ln\tp{ \frac{\beta x + 1}{x + \gamma}} \geq 0$ if and only if $x \leq \frac{1-\gamma}{1-\beta}$. Since $0\leq \beta\leq \gamma\leq 1$ and $\beta\gamma < 1$, we have
\begin{align*}
	\sqrt{\frac{\gamma}{\beta}} \geq \frac{1-\gamma}{1-\beta}.
\end{align*}
Note that $(1-\beta \gamma)x > 0$, we only need to prove that 
\begin{align*}
	\forall \frac{1-\gamma}{1-\beta} < x < \sqrt{\frac{\gamma}{\beta}}, \quad \ln \tp{ \frac{x+\gamma}{\beta x + 1}} \leq \frac{(1-\beta \gamma)x}{\gamma - \beta x^2}
\end{align*}
Note that $\frac{x+\gamma}{\beta x + 1} = 1 + \frac{(1-\beta)x +(\gamma -1)}{\beta x + 1} \leq \exp\tp{\frac{(1-\beta)x +(\gamma -1)}{\beta x + 1}}$, it suffices to show that 
\begin{align*}
\forall \frac{1-\gamma}{1-\beta} < x < \sqrt{\frac{\gamma}{\beta}}, \quad 	\frac{(1-\beta)x +(\gamma -1)}{\beta x + 1} \leq \frac{(1-\beta \gamma)x}{\gamma - \beta x^2}.
\end{align*}
Note that $\gamma - \beta x^2 > 0$ if $x < \sqrt{\frac{\gamma}{\beta}}$. The above inequality is equivalent to for all $\frac{1-\gamma}{1-\beta} < x < \sqrt{\frac{\gamma}{\beta}}$,
\begin{align*}
\beta(\beta-1)x^3 + \beta\gamma(\beta - 1)x^2 + (\gamma - 1)x + \gamma(\gamma - 1) \leq 0.
\end{align*}
The above inequality holds because if $\gamma \leq 1$, then $\beta(\beta-1) < 0$, $\beta\gamma(\beta - 1) < 0$, $\gamma - 1<0$ and $\gamma(\gamma - 1)<0$. 
\end{proof}

\section{Boundedness of anti-ferromagnetic two-spin system}\label{sec:append-D}
\begin{proof}[Proof of \Cref{lem:boundedness}]
Note that the first part directly follows from \Cref{prop:LLY} and Lemma 36 in \cite{chen2020rapid}. 
Therefore, we will only focus on the case where $G$ is $(\Delta-1)$-regular.

First, we prove this for the case where $\sqrt{\beta\gamma} \geq \frac{\Delta - 2}{\Delta}$, for all $y \in [-\infty, +\infty]$, it holds that
\begin{align*}
  \abs{h(y)}
  = \frac{(1-\beta\gamma) \e^y}{(\beta \e^y+1)(\e^y + \gamma)}
  = \frac{1-\beta\gamma}{\beta \e^y + \gamma \e^{-y} + 1 + \beta\gamma}
  \leq \frac{1 - \beta\gamma}{1 + \beta\gamma + 2\sqrt{\beta\gamma}}
  = \frac{1 - \sqrt{\beta\gamma}}{1 + \sqrt{\beta\gamma}}
  \leq \frac{1}{\Delta - 1}
  \leq \frac{1.5}{\Delta},
\end{align*}
where in the last inequality, we use the fact that $\frac{2}{3}\Delta \leq (\Delta - 1)$.

Now, we only left consider the case where $\sqrt{\beta\gamma} < \frac{\Delta - 2}{\Delta}$.
When $\lambda < \lambda_{1}(\Delta-1)$, for any $y \in J_{\lambda,\Delta - 1}$, $\e^y \le \frac{\lambda}{\gamma^{\Delta - 1}} \leq \frac{18 \gamma}{\theta(\Delta-1)}$, where the last inequality follows from \Cref{lem: two-spin -aux-aux} and $\theta(d) \triangleq d(1 - \beta\gamma) - (1 + \beta\gamma)$. Therefore, $\abs{h(y)}$ can be bounded as follows
\begin{align*}
  \abs{h(y)} = \frac{(1-\beta\gamma) \e^y}{(\beta \e^y+1)(\e^y + \gamma)} \le \frac{1-\beta\gamma}{\gamma \e^{-y}+1+\beta\gamma} \leq \frac{1-\beta\gamma}{\frac{\theta(\Delta-1)}{18} + 1+ \beta\gamma} = \frac{18(1-\beta\gamma)}{\Delta(1-\beta\gamma)+18 \beta\gamma+16} \le \frac{18}{\Delta}.
\end{align*}

Similarly, when $\lambda > \lambda_{2}(\Delta-1)$, for any $y \in J_{\lambda,\Delta - 1}$, $\e^y \ge \lambda \beta^{\Delta - 1} \geq \frac{\theta(\Delta-1)}{18 \beta}$, where the last inequality follows from \Cref{lem: two-spin -aux-aux}. Therefore,
\begin{align*}
  \abs{h(y)} = \frac{(1-\beta\gamma) \e^y}{(\beta \e^y+1)(\e^y + \gamma)} \le \frac{1-\beta\gamma}{\beta \e^y + 1 +\beta\gamma} \leq  \frac{1-\beta\gamma}{\frac{\theta(\Delta-1)}{18} + 1+ \beta\gamma} = \frac{18(1-\beta\gamma)}{\Delta(1-\beta\gamma)+18 \beta\gamma+16} \le \frac{18}{\Delta}.
\end{align*}
This concludes the proof of \Cref{lem:boundedness}.
\end{proof}

\end{document}